\documentclass[english]{article}
\usepackage[T1]{fontenc}
\usepackage[latin9]{inputenc}
\usepackage{geometry}
\geometry{verbose,lmargin=2cm,rmargin=2cm}
\usepackage{color}
\usepackage{babel}
\usepackage{float}
\usepackage{units}
\usepackage{mathrsfs}
\usepackage{url}
\usepackage{algorithm2e}
\usepackage{amsmath}
\usepackage{amsthm}
\usepackage{amssymb}
\usepackage{stmaryrd}
\usepackage{graphicx}
\usepackage[pdfusetitle,
 bookmarks=true,bookmarksnumbered=true,bookmarksopen=true,bookmarksopenlevel=3,
 breaklinks=true,pdfborder={0 0 0},pdfborderstyle={},backref=false,colorlinks=true]
 {hyperref}

\makeatletter

\providecommand{\tabularnewline}{\\}

\newcommand{\lyxaddress}[1]{
	\par {\raggedright #1
	\vspace{1.4em}
	\noindent\par}
}
\theoremstyle{plain}
\newtheorem{thm}{\protect\theoremname}
\theoremstyle{definition}
\newtheorem{example}[thm]{\protect\examplename}
\theoremstyle{plain}
\newtheorem{prop}[thm]{\protect\propositionname}
\theoremstyle{definition}
\newtheorem{defn}[thm]{\protect\definitionname}
\theoremstyle{plain}
\newtheorem{lem}[thm]{\protect\lemmaname}
\theoremstyle{plain}
\newtheorem{cor}[thm]{\protect\corollaryname}
\theoremstyle{remark}
\newtheorem{rem}[thm]{\protect\remarkname}

\LinesNumbered
\usepackage{accents}
\usepackage{tikz}
\usepackage{circuitikz}

\ifdefined\showcaptionsetup
 \PassOptionsToPackage{caption=false}{subfig}
\fi
\usepackage{subfig}
\makeatother

\providecommand{\corollaryname}{Corollary}
\providecommand{\definitionname}{Definition}
\providecommand{\examplename}{Example}
\providecommand{\lemmaname}{Lemma}
\providecommand{\propositionname}{Proposition}
\providecommand{\remarkname}{Remark}
\providecommand{\theoremname}{Theorem}

\begin{document}
\title{Monte Carlo sampling with integrator snippets}
\author{Christophe Andrieu, Mauro Camara Escudero and Chang Zhang}
\maketitle

\lyxaddress{School of Mathematics, University of Bristol}
\begin{abstract}
Assume interest is in sampling from a probability distribution $\mu$
defined on $(\mathsf{Z},\mathscr{Z})$. We develop a framework for
sampling algorithms which takes full advantage of ODE numerical integrators,
say $\psi\colon\mathsf{Z}\rightarrow\mathsf{Z}$ for one integration
step, to explore $\mu$ efficiently and robustly. The popular Hybrid
Monte Carlo (HMC) algorithm \cite{duane1987hybrid,neal2011mcmc} and
its derivatives are examples of such a use of numerical integrators.
A key idea developed here is that of sampling integrator snippets,
that is fragments of the orbit of an ODE numerical integrator $\psi$,
and the definition of an associated probability distribution $\bar{\mu}$
such that expectations with respect to $\mu$ can be estimated from
integrator snippets distributed according to $\bar{\mu}$. The integrator
snippet target distribution $\bar{\mu}$ takes the form of a mixture
of pushforward distributions which suggests numerous generalisations
beyond mappings arising from numerical integrators, e.g. normalising
flows. Very importantly this structure also suggests new principled
and robust strategies to tune the parameters of integrators, such
as the discretisation stepsize, effective integration time, or number
of integration steps, in a Leapfrog integrator. 

We focus here primarily on Sequential Monte Carlo (SMC) algorithms,
but the approach can be used in the context of Markov chain Monte
Carlo algorithms. We illustrate performance and, in particular, robustness
through numerical experiments and provide preliminary theoretical
results supporting observed performance. 

\newpage{}
\end{abstract}
\tableofcontents{}

\global\long\def\otimeswapped{\accentset{\curvearrowleft}{\otimes}}%

\newpage{}

\section{Overview and motivation: SMC sampler with HMC\protect\label{sec:Overview-and-motivation:}}

Assume interest is in sampling from a probability distribution $\mu$
on a probability space $(\mathsf{Z},\mathscr{Z})$. The main ideas
of sequential Monte Carlo (SMC) samplers to simulate from $\mu$ are
(a) to define a sequence of probability distributions $\{\mu_{n},n\in\llbracket0,P\rrbracket\}$
on $(\mathsf{Z},\mathscr{Z})$ where $\mu_{P}=\mu$, $\mu_{0}$ is
chosen by the user, simple to sample from and the sequence $\{\mu_{n},n\in\llbracket P-1\rrbracket\}$
``interpolates'' $\mu_{0}$ and $\mu_{P}$, (b) and to propagate
a cloud of samples $\{z_{n}^{(i)}\in\mathsf{Z},i\in\llbracket N\rrbracket\}$
for $n\in\llbracket0,P\rrbracket$ to represent $\{\mu_{n},n\in\llbracket0,P\rrbracket\}$
using an importance sampling/resampling mechanism \cite{del2006sequential}. 

After initialisation, for $n\in\llbracket P\rrbracket$, samples $\{z_{n-1}^{(i)},i\in\llbracket N\rrbracket\}$,
representing $\mu_{n-1}$, are propagated thanks to a user-defined
mutation Markov kernel $M_{n}\colon\mathsf{Z}\times\mathscr{Z}\rightarrow[0,1]$,
as follows. For $i\in\llbracket N\rrbracket$ sample $\tilde{z}_{n}^{(i)}\sim M_{n}(z_{n-1}^{(i)},\cdot)$
and compute the importance weights, assumed to exist for the moment,
\begin{equation}
\omega_{n}^{(i)}=\frac{{\rm d}\mu_{n}\otimeswapped L_{n-1}}{{\rm d}\mu_{n-1}\otimes M_{n}}\big(z_{n-1}^{(i)},\tilde{z}_{n}^{(i)}\big)\,,\label{eq:importance-w-SMC}
\end{equation}
where $L_{n-1}\colon\mathsf{Z}\times\mathscr{Z}\rightarrow[0,1]$
is a user-defined ``backward'' Markov kernel required to define
importance sampling on $\mathsf{Z}\times\mathsf{Z}$ and swap the
rôles of $z_{n-1}^{(i)}$ and $\tilde{z}_{n}^{(i)}$, in the sense
that for $f\colon\mathsf{Z}\rightarrow\mathbb{R}$ $\mu_{n}$-integrable,
\[
\int f(z')\frac{{\rm d}\mu_{n}\otimeswapped L_{n-1}}{{\rm d}\mu_{n-1}\otimes M_{n}}\big(z,z'\big)\mu_{n-1}({\rm d}z)M_{n}(z,{\rm d}z')=\mu_{n}(f)\,.
\]
More details are provided in Appendices~\ref{sec:app-Notation}-\ref{sec:One-measure-theoretic}
concerning the existence and definition of these weights, but can
be omitted on a first reading. 

The mutation step is followed by a selection step where for $i\in\llbracket N\rrbracket$,
$z_{n}^{(i)}=\tilde{z}_{n}^{(a_{i})}$ for $a_{i}$ the random variable
taking values in $\llbracket N\rrbracket$ with $\mathbb{P}(a_{i}=k)\propto\omega_{n}^{(k)}$.
The procedure is summarized in Alg.~\ref{alg:generic-SMC}. 

Given $\{M_{n},n\in\llbracket P\rrbracket\}$, theoretically optimal
choice of $\{L_{n-1},n\in\llbracket P\rrbracket\}$ is well understood
but tractability is typically obtained by assuming that $M_{n}$ is
$\mu_{n-1}$-invariant, or considering approximations of $\{L_{n-1},n\in\llbracket P\rrbracket\}$
and that $M_{n}$ is $\mu_{n}$-invariant. This makes Markov chain
Monte Carlo (MCMC) kernels very attractive choices for $M_{n}$.

\begin{algorithm}
\For{ $i\in\llbracket N\rrbracket$}{Sample $z_{0}^{(i)}\sim\mu_{0}(\cdot)$\;Set
$\omega_{0}^{(i)}=1$

}\For{ $n\in\llbracket P\rrbracket$}{\For{ $i\in\llbracket N\rrbracket$}{Sample
$\tilde{z}_{n}^{(i)}\sim M_{n}\big(z_{n-1}^{(i)},\cdot\big)$\;Compute
$w_{n}^{(i)}$ as in (\ref{eq:importance-w-SMC}).}\For{$i\in\llbracket N\rrbracket$}{Sample
$a_{i}\sim{\rm Cat}\big(\omega_{n}^{(1)},\ldots,\omega_{n}^{(N)}\big)$ 

Set $z_{n}^{(i)}=\tilde{z}_{n}^{(a_{i})}$}

}

\caption{Generic SMC sampler}
\label{alg:generic-SMC}
\end{algorithm}

A possible choice of MCMC kernel is the Hybrid Monte Carlo (HMC) method,
a Metropolis-Hastings (MH) update using a discretisation of Hamilton's
equations \cite{duane1987hybrid,neal2011mcmc} to update the state,
a particular instance of the use of numerical integrators of ODEs
in this context. More specifically, assume that interest is in sampling
$\pi$ defined on $(\mathsf{X},\mathscr{X})$. First the problem is
embedded into that of sampling from the joint distribution $\mu({\rm d}z):=\pi\otimes\varpi\big({\rm d}z\big)=\pi({\rm d}x)\varpi({\rm d}v)$
defined on $(\mathsf{Z},\mathscr{Z})=(\mathsf{X}\times\mathsf{V},\mathscr{X}\otimes\mathscr{V})$,
where $v$ is an auxiliary variable facilitating sampling. Following
the SMC framework we set $\mu_{n}({\rm d}z):=\pi_{n}\otimes\varpi_{n}\big({\rm d}z\big)$
for $n\in\llbracket0,P\rrbracket,$ a sequence of distributions on
$(\mathsf{Z},\mathscr{Z})$ with $\pi_{P}=\pi$, $\{\pi_{n},n\in\llbracket0,P-1\rrbracket\}$
probabilities on $\big(\mathsf{X},\mathscr{X}\big)$ and $\{\varpi_{n},n\in\llbracket0,P\rrbracket\}$
on $\big(\mathsf{V},\mathscr{V}\big)$. With $\psi\colon\mathsf{Z}\rightarrow\mathsf{Z}$
an integrator of an ODE of interest, one can use $\psi^{k}(z)$ for
some $k\in\mathbb{N}$ as a proposal in a MH update mechanism; $v$
is a source of randomness allowing ``exploration'', resampled every
now and then. Again, hereafter we let $z=:(x,v)\in\mathsf{X}\times\mathsf{V}$
be the corresponding components of $z$.
\begin{example}[Leapfrog integrator of Hamilton's equations]
\label{exa:verlet-HMC} Assume that $\mathsf{X}=\mathsf{V}=\mathbb{R}^{d}$,
that $\{\pi_{n},\varpi_{n},n\in\llbracket0,P\rrbracket\}$ have densities,
denoted $\pi_{n}(x)$ and $\varpi_{n}(v)$, with respect to the Lebesgue
measure and let $x\mapsto U_{n}(x):=-\log\pi_{n}(x)$. For $n\in\llbracket0,P\rrbracket$
and $U_{n}$ differentiable, Hamilton's equations for the potential
$H_{n}(x,v):=U_{n}(x)+\frac{1}{2}|v|^{2}$ are 
\begin{equation}
\dot{x}_{t}=v_{t},\dot{v}_{t}=-\nabla U_{n}(x_{t})\,,\label{eq:hamiltons-equations}
\end{equation}
 and possess the important property that $H_{n}(x_{t},v_{t})=H_{n}(x_{0},v_{0})$
for $t\geq0$. The corresponding leapfrog integrator is given, for
some $\epsilon>0$, by
\begin{gather}
\psi_{n}(x,v)={}_{\textsc{b}}\psi\circ{}_{\textsc{a}}\psi_{n}\circ{}_{\textsc{b}}\psi(x,v)\label{eq:leapfrog-integrator}\\
_{\textsc{b}}\psi(x,v):=\big(x,v-\tfrac{1}{2}\epsilon\nabla U_{n}(x)\big),\quad{}_{\textsc{a}}\psi_{n}(x,v)=(x+\epsilon\,v,v)\,.\nonumber 
\end{gather}
We point out that, with the exception of the first step, only one
evaluation of $\nabla U(x)$ is required per integration step since
the rightmost $_{\textsc{b}}\psi$ in (\ref{eq:leapfrog-integrator})
recycles the last computation from the last iteration.  Let $\sigma\colon\mathsf{Z}\rightarrow\mathsf{Z}$
be such that for any $f\colon\mathsf{Z}\rightarrow\mathsf{Z}$, $f\circ\sigma(x,v)=f(x,-v)$,
then in its most basic form the HMC update leaving $\mu_{n}$ invariant
proceeds as follows, for $(z,A)\in\mathsf{Z}\times\mathscr{Z}$
\begin{equation}
M_{n+1}(z,A)=\int\varpi_{n}({\rm d}v')\big[\alpha_{n}(x,v';T)\mathbf{1}\{\psi_{n}^{T}(x,v')\in A\}+\bar{\alpha}_{n}(x,v';T)\mathbf{1}\{\sigma(x,v)\in A\}\big]\,,\label{eq:HMC-kernel-deterministic}
\end{equation}
with, for some user defined $T\in\mathbb{N},$
\begin{equation}
\alpha_{n}(z;T):=1\wedge\frac{\mu_{n}\circ\psi_{n}^{T}(z)}{\mu_{n}(z)}\,,\label{eq:def-alpha-T}
\end{equation}
and $\bar{\alpha}_{n}(z;T)=1-\alpha_{n}(z;T)$.
\end{example}

Other ODEs, capturing other properties of the target density, or other
types of integrators are possible. However a common feature of integrator
based updates is the need to compute recursively an integrator snippet
$\mathsf{z}:=\big(z,\psi(z),\psi^{2}(z),\ldots,\psi^{T}(z)\big)$,
for a given mapping $\psi\colon\mathsf{Z}\rightarrow\mathsf{Z}$,
of which only the endpoint $\psi^{T}(z)$ is used. This raises the
question of recycling intermediate states, all the more so that computation
of the snippet often involves quantities shared with the evaluation
of $U_{n}$. In Example~\ref{exa:verlet-HMC}, for instance, expressions
for $\nabla U_{n}(x)$ and $U_{n}(x)$ often involve the same computationally
costly quantities and evaluation of the density $\mu_{n}(x)$ where
$\nabla U_{n}(x)$ has already been evaluated is therefore often virtually
free; consider for example $U(x)=x^{\top}\Sigma^{-1}x$ for a covariance
matrix $\Sigma$, then $\nabla U(x)=2\Sigma^{-1}x$.

In turn these quantities offer the promise of being able to exploit
points used to generate the snippet while preserving accuracy of the
estimators of interest, through importance sampling or reweighting.
For example an algorithm exploiting the snippet fully could use the
following mixture of $\mu_{n}$-invariant Markov chain transition
kernels \cite{mackenze1989improved,neal2011mcmc,hoffman2022tuning},
\[
\mathfrak{M}_{n+1}(z,A)=\frac{1}{T+1}\sum_{k=0}^{T}\int\varpi_{n}({\rm d}v')P_{n,k}(x,v';A)\,,
\]
where for $(z,A)\in\mathsf{Z}\times\mathscr{Z}$ and $k\in\llbracket P-1\rrbracket$
\begin{equation}
P_{n,k}(z,A):=\alpha_{n}(z;k)\mathbf{1}\{\psi_{n}^{k}(z)\in A\}+\bar{\alpha}_{n}(z;k)\mathbf{1}\{\sigma(z)\in A\}\,.\label{eq:def-Pnk-intro}
\end{equation}
As we shall see our work shares the same objective but we adopt a
strategy more closely related to \cite{neal1994improved} (see Appendix~\ref{sec:MCMC-with-integrator}
for a more detailed discussion) which however leads to a comparison
between states along a snippet. Further, an advantage of SMC samplers
is that they provide population wide information about tuning parameters,
in particular of numerical integrators, therefore suggesting robust
self-tuning procedures. 

The manuscript is organised as follows. In Section~\ref{sec:A-simple-example:}
we first provide a high level description of particular instances
of the class of algorithms considered and provide a justification
through reinterpretation as standard SMC algorithms in Subsection~\ref{subsec:Outline-justification}.
In Subsection~\ref{subsec:Straightforward-generalizations} we discuss
direct extensions of our algorithms, some of which we explore in the
present manuscript. This work has links to related recent attempts
in the literature \cite{rotskoff2019dynamical,dau2020waste,thin2021neo};
these links are discussed and contrasted with our work in Subsection~\ref{subsec:Computational-considerations}
where some of the motivations behind these algorithms are also discussed.
Early exploratory simulations demonstrating the interest of our approach
are provided in Appendices~\ref{app:early-experimental-HMC}-\ref{sec:Early-experimental-results-dilamentary}.
Our main focus is however on the development of adaptive algorithms
in Section~\ref{sec:Adaptation-with-Integrator} where, focussing
on the Leapfrog integrator, we develop novel criteria and updating
strategies to adapt the discretisation stepsize and integration time
by taking advantage of the population-wide information available;
in simulation the resulting algorithms display remarkable robustness
on the examples we have considered We draw some conclusions in Section~\ref{sec:Discussion}.

The manuscript contains a number of appendices, containing supporting
material and additional developments. Notation, definitions and basic
mathematical background can be found in Appendices~\ref{sec:app-Notation}-\ref{sec:One-measure-theoretic}.
Proofs for Section~\ref{sec:A-simple-example:} are provided in Appendix~\ref{app:Proofs-for-Section}.
In Appendix~\ref{sec:Sampling-a-mixture:} we introduce the more
general framework of Markov snippets Monte Carlo and associated formal
justifications; the algorithms used in Appendix~\ref{subsec:Numerical-illustration:-orthant}
rely on some of these extensions. Appendix~\ref{sec:Sampling-HMC-trajectories}
details the link with the scenario considered in Section~\ref{sec:A-simple-example:}.
In Appendix~\ref{subsec:Sampling-randomized-integrator} we provide
general results facilitating the practical calculation of some of
the Radon-Nikodym involved, highlighting why some of the usual constraints
on mutation and backward kernels in SMC can be lifted here. Appendix~\ref{app-sec:details-imp-adapt-epsilon}
contains ancillary supporting details for Section~\ref{sec:Adaptation-with-Integrator}.
In Appendix~\ref{sec:Preliminary-theoretical-characte} we provide
elements of a theoretical analysis explaining expected properties
of the algorithms proposed in this manuscript, although a fully rigorous
theoretical analysis is beyond the present methodological contribution.
In Appendix~\ref{sec:MCMC-with-integrator} we explore the use of
some of the ideas developed here in the context of MCMC algorithms
and establish links with earlier suggestions, such as ``windows of
states'' techniques proposed in the context of HMC \cite{neal1994improved,neal2011mcmc}. 

A Python implementation of the algorithms developed in this paper
is available at \url{https://github.com/MauroCE/IntegratorSnippets}.

\section{An introductory example \protect\label{sec:A-simple-example:}}

We still aim to sample from a probability distribution $\mu$ on $(\mathsf{Z},\mathscr{Z})$
as described above Example~\ref{exa:verlet-HMC} using an SMC sampler
relying on the leapfrog integrator of Hamilton's equations. As in
the previous section we introduce an interpolating sequence of distributions
$\{\mu_{n},n\in\llbracket0,P\rrbracket\}$ on $(\mathsf{Z},\mathscr{Z})\}$
and assume for now the existence of densities for $\{\mu_{n},n\in\llbracket0,P\rrbracket\}$
with respect to a common measure $\upsilon$, say the Lebesgue measure
on $\mathbb{R}^{2d}$, denoted $\mu_{n}(z):={\rm d}\mu_{n}/{\rm d}\upsilon(z)$
for $z\in\mathsf{Z}$ and $n\in\llbracket0,P\rrbracket$ and denote
$\psi_{n}$ the corresponding integrator, which again is measure preserving
in this setup.

\subsection{An SMC-like algorithm}

Primary interest in this paper is in algorithms of the type given
in Alg.~\ref{alg:Unfolded-PDMP-SMC-1} and variations thereof; throughout
$T\in\mathbb{N}\setminus\{0\}$. 

\begin{algorithm}
Sample $z_{0}^{(i)}\overset{{\rm iid}}{\sim}\mu_{0}$ for $i\in\llbracket N\rrbracket$.

\For{$n\in\llbracket P\rrbracket$}{

\For{$i\in\llbracket N\rrbracket$}{

\For{$k\in\llbracket0,T\rrbracket$}{

Compute $z_{n-1,k}^{(i)}:=\psi_{n}^{k}(z_{n-1}^{(i)})$ and

\[
\bar{w}_{n,k}\big(z_{n-1}^{(i)}\big):=\frac{\mu_{n}\big(z_{n-1,k}^{(i)}\big)}{\mu_{n-1}\big(z_{n-1}^{(i)}\big)}=\frac{\mu_{n}\circ\psi_{n}^{k}\big(z_{n-1}^{(i)}\big)}{\mu_{n-1}\big(z_{n-1}^{(i)}\big)}\,,
\]

}

}

\For{$j\in\llbracket N\rrbracket$}{

Sample $\llbracket N\rrbracket\times\llbracket0,T\rrbracket\ni(b_{j},a_{j})\sim{\rm Cat}\big(\{\bar{w}_{n,k}(z_{n-1}^{(i)}),(i,k)\in\llbracket N\rrbracket\times\llbracket0,T\rrbracket\}\big)$

Set $\bar{z}_{n}^{(j)}:=(\bar{x}_{n-1}^{(j)},\bar{v}_{n-1}^{(j)})=z_{n-1,a_{j}}^{(b_{j})}$

Rejuvenate the velocities $z_{n}^{(j)}=(\bar{x}_{n-1}^{(j)},v_{n}^{(j)})$
with $v_{n}^{(j)}\sim\varpi_{n}$.

}

}

\caption{Unfolded Hamiltonian Snippet SMC algorithm \protect\label{alg:Unfolded-PDMP-SMC-1}}
\end{algorithm}

The SMC sampler-like algorithm in Alg.~\ref{alg:Unfolded-PDMP-SMC-1}
therefore involves propagating $N$ ``seed'' particles $\{z_{n-1}^{(i)},i\in\llbracket N\rrbracket\}$,
with a mutation mechanism consisting of the generation of $N$ integrator
snippets $\mathsf{z}:=\big(z,\psi_{n}(z),\psi_{n}^{2}(z),\ldots,\psi_{n}^{T}(z)\big)$
started at every seed particle $z\in\{z_{n-1}^{(i)},i\in\llbracket N\rrbracket\}$,
resulting in $N\times(T+1)$ particles which are then whittled down
to a set of $N$ seed particles using a standard resampling scheme;
after rejuvenation of velocities this yields the next generation of
seed particles--this is illustrated in Fig.~\ref{fig:snippet-illustration};
a more realistic depiction of Alg.~\ref{alg:Unfolded-PDMP-SMC-1}
in the context of Example~\ref{exa:verlet-HMC} with $\pi$ a two
dimensional distribution is presented in Fig.~\ref{fig:illustration-HMC-snippet}.
This algorithm should be contrasted with standard implementations
of SMC samplers where, after resampling, a seed particle normally
gives rise to a single particle in the mutation step, in Fig.~~\ref{fig:snippet-illustration}
the last or first state on the snippet when using an MH as mutation
kernel. Intuitively validity of the algorithm follows from the fact
that if $\big\{(z_{n-1}^{(i)},1),i\in\llbracket N\rrbracket\big\}$
represent $\mu_{n-1}$, then $\big\{\big(z_{n-1,k}^{(i)},\bar{w}_{n,k}(z_{n-1}^{(i)})\big)),(i,k)\in\llbracket N\rrbracket\times\llbracket0,T\rrbracket\big\}$
represents $\mu_{n}$ in the sense that for $f\colon\mathsf{Z}\rightarrow\mathbb{R}$
summable, one can use the approximation
\begin{equation}
\mu_{n}(f)\approx\sum_{i=1}^{N}\sum_{k=0}^{T}f\circ\psi_{n}^{k}\big(z_{n-1}^{(i)}\big)\frac{\mu_{n}\circ\psi_{n}^{k}\big(z_{n-1}^{(i)}\big)/\mu_{n-1}\big(z_{n-1}^{(i)}\big)}{\sum_{j=1}^{N}\sum_{l=0}^{T}\mu_{n}\circ\psi_{n}^{l}\big(z_{n-1}^{(j)}\big)/\mu_{n-1}\big(z_{n-1}^{(j)}\big)}\,,\label{eq:snippet-estimator-I}
\end{equation}
where the self-normalization of the weights is only required in situations
where the ratio $\mu_{n}(z)/\mu_{n-1}(z)$ is only known up to a constant. 

Simple numerical experiments demonstrating robustness and efficiency
of our approach are provided in Appendices~\ref{app:early-experimental-HMC}-\ref{sec:Early-experimental-results-dilamentary}.
However our primary objective, beyond establishing correctness, is
to show that our proposed framework naturally lends itself to novel
adaptation strategies, for example of the stepsize $\epsilon$ and
integration time $T$; see Section~\ref{sec:Adaptation-with-Integrator}.
To that purpose we first provide justification for the correctness
of Alg.~\ref{alg:Unfolded-PDMP-SMC-1} and the estimator (\ref{eq:snippet-estimator-I})
by recasting the procedure as a standard SMC sampler targetting a
particular sequence of distributions in Section~\ref{subsec:Outline-justification}
and using properties of mixtures. Direct generalizations are provided
in Section~\ref{subsec:Straightforward-generalizations} and the
adaptation strategies of Section~\ref{sec:Adaptation-with-Integrator}
then follow.

\begin{figure}[H]
\begin{centering}
\includegraphics[width=0.75\textwidth]{./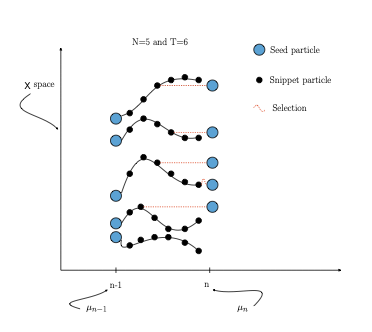}
\par\end{centering}
\caption{Illustration of the transition from $\mu_{n-1}$ to $\mu_{n}$ with
integrator snippet SMC. A snippet grows (black dots) out of each seed
particle (blue). The middle snippet gives rises through selection
(dashed red) to two seed particles while the bottom snippet does not
produce any seed particle.}
\label{fig:snippet-illustration}

\end{figure}

\begin{figure}[H]
\includegraphics[width=1\textwidth]{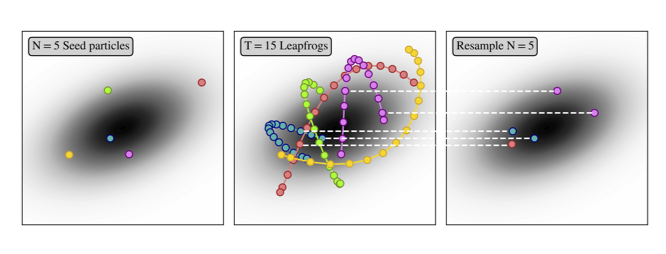}\caption{Evolution of the $x$ components of $5$ seed particles and their
integrator snippets ($T=15$) targetting a normal distribution $\pi$
defined on $\mathbb{R}^{2}$, using the leapfrog integrator of Example~\ref{exa:verlet-HMC}.}
\label{fig:illustration-HMC-snippet}

\end{figure}

\subsection{Justification \protect\label{subsec:Outline-justification}}

We now outline the main ideas underpinning the theoretical justification
of Alg.~\ref{alg:Unfolded-PDMP-SMC-1}. Key to this is establishing
that Alg.~\ref{alg:Unfolded-PDMP-SMC-1} is a standard SMC sampler
targetting a particular sequence of probability distributions $\big\{\bar{\mu}_{n},n\in\llbracket0,P\rrbracket\big\}$
from which samples can be processed to approximate expectations with
respect to $\big\{\mu_{n},n\in\llbracket0,P\rrbracket\big\}$. This
has the advantage that no fundamentally new theory is required and
that standard methodological ideas can be re-used in the present scenario,
while the particular structure of $\big\{\bar{\mu}_{n},n\in\llbracket0,P\rrbracket\big\}$
can be exploited for new developments. This section focuses on identifying
$\big\{\bar{\mu}_{n},n\in\llbracket0,P\rrbracket\big\}$ and establishing
some of their important properties. Similar ideas are briefly touched
upon in \cite{dau2020waste}, but we will provide full details and
show how these ideas can be pushed further, in interesting directions.

First for $(n,k)\in\llbracket0,P\rrbracket\times\llbracket0,T\rrbracket$
let $\psi_{n,k}\colon\mathsf{Z}\rightarrow\mathsf{Z}$ be measurable
and invertible mappings, define $\mu_{n,k}({\rm d}z):=\mu_{n}^{\psi_{n,k}^{-1}}({\rm d}z)$,
i.e. the distribution of $\psi_{n,k}^{-1}(z)$ when $z\sim\mu_{n}$.
It is worth pointing out that invertibility of these mappings is not
required theoretically, but important practially and facilitates interpretation
throughout. Earlier we have focused on the scenario where $\psi_{n,k}=\psi_{n}^{k}$
for an integrator $\psi_{n}\colon\mathsf{Z}\rightarrow\mathsf{Z}$,
but this turns out not to be a requirement, although it is our main
motivation. Useful applications of this general perspective can be
found in Subsection~\ref{subsec:Straightforward-generalizations}.
Introduce the probability distributions on $\big(\mathbb{\llbracket}0,T\rrbracket\times\mathsf{Z},\mathscr{P}(\mathbb{\llbracket}0,T\rrbracket)\otimes\mathscr{Z}\big)$
\[
\bar{\mu}_{n}(k,{\rm d}z)=\frac{1}{T+1}\mu_{n,k}({\rm d}z)\,,
\]
for $n\in\llbracket0,P\rrbracket$. We will show that Alg.~\ref{alg:Unfolded-PDMP-SMC-1}
can be interpreted as an SMC sampler targeting the sequence of marginal
distributions on $\big(\mathsf{Z},\mathscr{Z}\big)$
\begin{equation}
\bar{\mu}_{n}({\rm d}z)=\frac{1}{T+1}\sum_{k=0}^{T}\mu_{n,k}({\rm d}z)\,,\label{eq:bar-mu-n-marginal}
\end{equation}
which we may refer to as a mixture. Note that the conditional distribution
is
\begin{align}
\bar{\mu}_{n}(k & \mid z)=\frac{1}{T+1}\frac{{\rm d}\mu_{n,k}}{{\rm d}\bar{\mu}_{n}}(z)\,,\label{eq:mu-bar-k-given-z}
\end{align}
which can be computed in most scenarios of interest. 
\begin{example}
\label{exa:densities-mu-n-k} For $n\in\llbracket0,P\rrbracket$ assume
the existence of a density $\mu_{n}(z)$ w.r.t. the Lebesgue measure.
Assuming that $\psi_{n,k}$ is ``volume preserving'', then for $k\in\llbracket0,T\rrbracket$,
\[
\mu_{n,k}(z):=\frac{{\rm d}\mu_{n,k}}{{\rm d}\upsilon}(z)=\mu_{n}\circ\psi_{n,k}(z)\text{ and }\bar{\mu}_{n}(z):=\frac{{\rm d}\bar{\mu}_{n}}{{\rm d}\upsilon}(z)=\frac{1}{T+1}\sum_{k=0}^{T}\mu_{n}\circ\psi_{n,k}(z)\,,
\]
and therefore 
\[
w_{n,k}(z):=\frac{{\rm d}\mu_{n,k}}{{\rm d}\big(\sum_{l=0}^{T}\mu_{n,l}\big)}(z)=\frac{\mu_{n}\circ\psi_{n,k}(z)}{\sum_{l=0}^{T}\mu_{n}\circ\psi_{n,l}(z)}\,.
\]
When $\{\psi_{n,k},k\in\llbracket P\rrbracket\}$ are not volume preserving,
additional multiplicative Jacobian determinant-like terms may be required
(see Lemma~\ref{thm:measure-theoretic-transform} and the additional
requirement that $\{\psi_{n,k},k\in\llbracket P\rrbracket\}$ be differentiable). 
\end{example}

A central point throughout this paper is how samples from $\mu_{n}$
can be used to obtain samples from the marginal $\bar{\mu}_{n}$ and
vice-versa, thanks to the mixture structure relating the two distributions.
Naturally, given $z\sim\mu_{n}$, sampling $k\sim\mathcal{U}\big(\llbracket0,T\rrbracket\big)$
and returning $\psi_{n,k}^{-1}(z)$ yields a sample from the marginal
$\bar{\mu}_{n}$. More importantly, we have
\begin{prop}[Knitting-Tinking'']
\label{prop:knitting-tinking} For $n\in\llbracket0,P\rrbracket$,
\begin{enumerate}
\item Assume $z\sim\bar{\mu}_{n}$ and sample $k\sim\bar{\mu}_{n}(\cdot\mid z)$.
Then $z\mid k\sim\mu_{n}^{\psi_{n,k}^{-1}}({\rm d}z)$ and $\psi_{n,k}(z)\sim\mu_{n}$. 
\item For $f\colon\mathsf{Z}\rightarrow\mathbb{R}$ $\mu_{n}$-integrable
and $k\in\llbracket0,T\rrbracket$
\begin{align}
\int f(z)\mu_{n}({\rm d}z) & =\int\Big\{\sum_{k=0}^{T}f\circ\psi_{n,k}(z)\bar{\mu}_{n}(k\mid z)\Big\}\bar{\mu}_{n}({\rm d}z)\,,\label{eq:unfolding-HMC-setup}
\end{align}
\end{enumerate}
\end{prop}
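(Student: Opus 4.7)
The plan is to exploit the two factorisations of the joint probability $\bar{\mu}_n(k, {\rm d}z) = (T+1)^{-1} \mu_{n,k}({\rm d}z)$ on $\llbracket 0, T\rrbracket \times \mathsf{Z}$: on the one hand, its $z$-marginal is $\bar{\mu}_n({\rm d}z)$ from (\ref{eq:bar-mu-n-marginal}) with conditional $\bar{\mu}_n(k\mid z)$ from (\ref{eq:mu-bar-k-given-z}); on the other hand, its $k$-marginal is uniform on $\llbracket 0, T\rrbracket$ with conditional $\mu_{n,k}({\rm d}z)$. Part~1 is essentially a Bayes-rule rearrangement between these two factorisations, and Part~2 follows by the tower property.

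For Part~1, sampling $z \sim \bar{\mu}_n$ and then $k \sim \bar{\mu}_n(\cdot \mid z)$ produces a pair $(k,z)$ of joint law $\bar{\mu}_n$ by construction of the conditional. Swapping the order of conditioning shows that the law of $z$ given $k$ is $\mu_{n,k} = \mu_n^{\psi_{n,k}^{-1}}$, which is the first assertion. For the second, the invertibility of $\psi_{n,k}$ yields the pushforward identity $\big(\mu_n^{\psi_{n,k}^{-1}}\big)^{\psi_{n,k}} = \mu_n$, verifiable by testing against a bounded measurable $g$ through the change-of-variables rule; hence $\psi_{n,k}(z) \sim \mu_n$ whenever $z \sim \mu_{n,k}$. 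Part~2 then follows by drawing $(K,Z) \sim \bar{\mu}_n$ jointly, applying Part~1 after marginalising over $K$ to deduce that $\psi_{n,K}(Z) \sim \mu_n$, and conditioning on $Z$ and summing over $k$ to obtain, for any $\mu_n$-integrable $f$,
\[
\int f(z)\, \mu_n({\rm d}z) = \mathbb{E}\big[f\circ\psi_{n,K}(Z)\big] = \int \bar{\mu}_n({\rm d}z) \sum_{k=0}^T \bar{\mu}_n(k \mid z)\, f\circ\psi_{n,k}(z),
\]
which is (\ref{eq:unfolding-HMC-setup}).

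The one technical point is checking that the conditional $\bar{\mu}_n(k \mid z)$ is well defined, i.e.~$\mu_{n,k} \ll \bar{\mu}_n$ for each $k$; this is immediate because $\bar{\mu}_n$ is a mixture containing $(T+1)^{-1}\mu_{n,k}$ as a summand, so $\bar{\mu}_n \geq (T+1)^{-1}\mu_{n,k}$. No deeper obstacle arises: the proof is a conditional-probability manipulation enabled entirely by the mixture structure relating $\mu_n$ and $\bar{\mu}_n$ and the invertibility of the mappings $\psi_{n,k}$.
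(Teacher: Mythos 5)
Your proof is correct and follows essentially the same route as the paper's: both rest on the change-of-variables identity $\int f\circ\psi_{n,k}\,{\rm d}\mu_{n,k}=\int f\,{\rm d}\mu_n$ together with the mixture decomposition of $\bar{\mu}_n$ and the tower property. The only difference is presentational — you establish the distributional statement first and deduce the integral identity, whereas the paper proves the integral identity and recovers the first part via indicator functions — and your explicit check that $\mu_{n,k}\ll\bar{\mu}_n$ (so the conditional is well defined) is a small but welcome addition.
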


\begin{proof}
For $f\colon\mathsf{Z}\rightarrow\mathbb{R}$ $\mu_{n}$-integrable
and $k\in\llbracket0,T\rrbracket$, a change of variable (see Theorem~\ref{thm:change-of-variables})
yields
\begin{align*}
\int f\circ\psi_{n,k}(z)\mu_{n,k}({\rm d}z)=\int f\circ\psi_{n,k}(z)\mu_{n}^{\psi_{n,k}^{-1}}({\rm d}z) & =\int f(z)\mu_{n}({\rm d}z)\,,
\end{align*}
Therefore

\begin{align*}
\int f(z)\mu_{n}({\rm d}z) & =\frac{1}{T+1}\sum_{k=0}^{T}\int f\circ\psi_{n,k}(z)\mu_{n,k}({\rm d}z)\\
 & =\sum_{k=0}^{T}\int f\circ\psi_{n,k}(z)\bar{\mu}_{n}(k,{\rm d}z)\\
 & =\int\Big\{\sum_{k=0}^{T}f\circ\psi_{n,k}(z)\bar{\mu}_{n}(k\mid z)\Big\}\bar{\mu}_{n}({\rm d}z)\,,
\end{align*}
Using $f=\mathbf{1}_{A}$ for $A\in\mathscr{Z}$ formally establishes
the earlier claim that if $(k,z)\sim\bar{\mu}_{n}$ then $\psi_{n,k}(z)\sim\mu_{n}$. 
\end{proof}
Note that, as suggested by Example~\ref{exa:densities-mu-n-k}, construction
of the estimator corresponding to (\ref{eq:unfolding-HMC-setup})
will only require evaluations of the density $\mu_{n}$ and function
$f$ at $z,\psi_{n,1}(z),\psi_{n,2}(z),\ldots,\psi_{n,T}(z)$. It
should be clear that the results of Proposition~\ref{prop:knitting-tinking}
extend well beyond the leapfrog integrator setup, a fact we discuss
and exploit in the remainder of the manuscript.

We now turn to the description of an SMC algorithm targeting $\big\{\bar{\mu}_{n},n\in\llbracket0,P\llbracket\big\}$,
Alg.~\ref{alg:Folded-PDMP-SMC-1}, and then establish that it is
probabilistically equivalent to Alg.~\ref{alg:Unfolded-PDMP-SMC-1},
in a sense made precise below. With $z=(x,v)$ for $n\in\llbracket P\rrbracket$
we introduce the following mutation kernel
\begin{equation}
\bar{M}_{n}(z,{\rm d}z'):=\sum_{k=0}^{T}\bar{\mu}_{n-1}(k\mid z)\,R_{n}(\psi_{n-1,k}(z),{\rm d}z'),\quad R_{n}(z,{\rm d}z'):=(\delta_{x}\otimes\varpi_{n-1})({\rm d}z')\,,\label{eq:def-bar-M}
\end{equation}
where we note that the refreshment kernel has the property that $\mu_{n-1}R_{n}=\mu_{n-1}$.
With the assumption $\mu_{n-1}\gg\bar{\mu}_{n}$, from Lemma~\ref{lem:bar-weight-derivation}
one can identify the corresponding near optimal kernel $\bar{L}_{n-1}\colon\mathsf{Z}\times\mathscr{Z}\rightarrow[0,1]$,
which leads to the SMC sampler importance weights, at step $n\in\llbracket P\rrbracket$,
\begin{equation}
\bar{w}_{n}(z,z'):=\frac{{\rm d}\bar{\mu}_{n}\otimeswapped\bar{L}_{n-1}}{{\rm d}\bar{\mu}_{n-1}\otimes\bar{M}_{n}}(z,z')=\frac{{\rm d}\bar{\mu}_{n}}{{\rm d}\mu_{n-1}}(z')\,.\label{eq:weight-folded-HMC}
\end{equation}
that is for $f\colon\mathsf{Z}\rightarrow\mathbb{R}$ such that $\bar{\mu}_{n}(|f|)<\infty$,
\[
\int f(z')\frac{{\rm d}\bar{\mu}_{n}}{{\rm d}\mu_{n-1}}(z')\bar{\mu}_{n-1}\otimes\bar{M}_{n}\big({\rm d}(z,z')\big)=\bar{\mu}_{n}(f)\,.
\]
The corresponding standard SMC sampler is given in Alg.~\ref{alg:Folded-PDMP-SMC-1}
where,
\begin{itemize}
\item the weighted particles $\{(\check{z}_{n}^{(i)},1),i\in\llbracket N\rrbracket\}$
represent $\bar{\mu}_{n}$ and $\{(\check{z}_{n,k}^{(i)},w_{n,k}(\check{z}_{n}^{(i)})),(i,k)\in\llbracket N\rrbracket\times\llbracket0,T\rrbracket\}$
represent $\mu_{n}$ from (\ref{eq:unfolding-HMC-setup}),
\item steps \ref{alg:item:folded-barM-kernel-begin}-\ref{alg:item:folded-barM-kernel-end}
correspond to sampling from the mutation kernel $\bar{M}_{n+1}$ in
(\ref{eq:def-bar-M}),
\item $\{(z_{n}^{(i)},1),i\in\llbracket N\rrbracket\}$ represent $\mu_{n}$, 
\item $\{\big(z_{n}^{(i)},\bar{w}_{n+1}(z_{n}^{(i)})),i\in\llbracket N\rrbracket\}$
represent $\bar{\mu}_{n+1}$, and so do $\{(\check{z}_{n+1}^{(i)},1),i\in\llbracket N\rrbracket\}$.
\end{itemize}
\begin{algorithm}[h]
sample $\check{z}_{0}^{(i)}\overset{{\rm iid}}{\sim}\bar{\mu}_{0}=\mu_{0}$
for $i\in\llbracket N\rrbracket$.

\For{$n=0,\ldots,P-1$}{

\For{$i\in\llbracket N\rrbracket$}{

\label{alg:item:folded-barM-kernel-begin}\For{$k\in\llbracket0,T\rrbracket$}{

\label{alg:item:folded:compute-state-weights}compute $\check{z}_{n,k}^{(i)}=(\check{x}_{n,k}^{(i)},\check{v}_{n,k}^{(i)}):=\psi_{n,k}(\check{z}_{n}^{(i)}),w_{n,k}\big(\check{z}_{n}^{(i)}\big)$

}

sample $a_{i}\sim{\rm Cat}\left(w_{n,0}\big(\check{z}_{n}^{(i)}\big),w_{n,1}\big(\check{z}_{n}^{(i)}\big),w_{n,2}\big(\check{z}_{n}^{(i)}\big),\ldots,w_{n,T}\big(\check{z}_{n}^{(i)}\big)\right)$

set $z_{n}^{(i)}=(\check{x}_{n,a_{i}}^{(i)},v_{n}^{(i)})$ with $v_{n}^{(i)}\sim\varpi_{n}$

\label{alg:item:folded-barM-kernel-end}

\[
\bar{w}_{n+1}\big(z_{n}^{(i)}\big)=\frac{{\rm d}\bar{\mu}_{n+1}}{{\rm d}\mu_{n}}\big(z_{n}^{(i)}\big)\,,
\]
\label{alg:item:folded:compute-bar-w}}

\For{$j\in\llbracket N\rrbracket$}{

sample $b_{j}\sim{\rm Cat}\left(\bar{w}_{n+1}\big(z_{n}^{(1)}\big),\ldots,\bar{w}_{n+1}\big(z_{n}^{(N)}\big)\right)$

set $\check{z}_{n+1}^{(j)}=z_{n}^{(b_{j})}$

}

}

\caption{Folded Hamiltonian Snippet SMC algorithm \protect\label{alg:Folded-PDMP-SMC-1}}
\end{algorithm}

Notice that we assume here $\psi_{0}={\rm Id}$, hence that $\bar{\mu}_{0}=\mu_{0}$,
and that the weights $w_{n,k}$ appear as being computed twice in
step \ref{alg:item:folded:compute-state-weights} and step \ref{alg:item:folded:compute-bar-w}
when evaluating the resampling weights at the previous iteration,
for the only reason that it facilitates exposition. The identities
(\ref{eq:unfolding-HMC-setup}) and (\ref{eq:mu-bar-k-given-z}) suggest,
for any $n\in\llbracket P\rrbracket$, the estimator of $\mu_{n}(f)$
for $f\colon\mathsf{Z}\rightarrow\mathbb{R}$ $\mu_{n}$-integrable,
\begin{align}
\check{\mu}_{n}(f) & =\frac{1}{N}\sum_{i=1}^{N}\sum_{k=0}^{T}\frac{1}{T+1}\frac{{\rm d}\mu_{n,k}}{{\rm d}\bar{\mu}_{n}}(\check{z}_{n}^{(i)})f\circ\psi_{n,k}\big(\check{z}_{n}^{(i)}\big).\nonumber \\
 & =\frac{1}{N}\sum_{i=1}^{N}\sum_{k=0}^{T}\frac{\mu_{n}\circ\psi_{n,k}}{\sum_{l=0}^{T}\mu_{n}\circ\psi_{n,l}}(\check{z}_{n}^{(i)})f\circ\psi_{n,k}\big(\check{z}_{n}^{(i)}\big)\,,\label{eq:mu-check}
\end{align}
where the second line is correct under the assumptions of Example~\ref{exa:densities-mu-n-k}.
Further, when $\mu_{n-1}\gg\mu_{n,k}$ for $k\in\llbracket0,T\rrbracket$
one can also write (\ref{eq:unfolding-HMC-setup}) for $f\colon\mathsf{Z}\rightarrow\mathbb{R}$
summable as
\begin{align*}
\mu_{n}(f) & =\int\Big\{\frac{1}{T+1}\sum_{k=0}^{T}f\circ\psi_{n,k}(z)\frac{{\rm d}\mu_{n,k}}{{\rm d}\mu_{n-1}}(z)\Big\}\mu_{n-1}({\rm d}z)\,,
\end{align*}
which can be estimated, using self-renormalization when required,
with 
\begin{equation}
\hat{\mu}_{n}(f)=\sum_{i=1}^{N}\sum_{k=0}^{T}\frac{\frac{{\rm d}\mu_{n,k}}{{\rm d}\mu_{n-1}}(z_{n-1}^{(i)})}{\sum_{j=1}^{N}\sum_{l=0}^{T}\frac{{\rm d}\mu_{n,l}}{{\rm d}\mu_{n-1}}(z_{n-1}^{(j)})}f\circ\psi_{n,k}\big(z_{n-1}^{(i)}\big)\,,\label{eq:mu-hat}
\end{equation}
therefore justifying the estimator suggested in (\ref{eq:snippet-estimator-I})
in the particular case where the conditions of Example~\ref{exa:densities-mu-n-k}
are satisfied, once we establish the equivalence of Alg.~\ref{alg:Folded-PDMP-SMC-1}
and Alg.~\ref{alg:Unfolded-PDMP-SMC-1} in Proposition~\ref{prop:unfolded-folded-equivalent}.
In fact it can be shown (Proposition~\ref{prop:unfolded-folded-equivalent})
that, with $\mathbb{E}_{i}$ referring to the expectation of the probability
distribution underpinning Alg.~$i$,
\[
\mathbb{E}_{3}\bigl(\check{\mu}_{n}(f)\mid z_{n-1}^{(j)},j\in\llbracket N\rrbracket\bigr)=\hat{\mu}_{n}(f)\,,
\]
a form of Rao-Blackwellization implying lower variance for $\hat{\mu}_{n}(f)$
while the two estimators share the same bias. Interestingly a result
we establish later in the paper, Proposition~\ref{prop:variance-reduc-barmu},
suggests that the variance $\check{\mu}_{n}(f)$ is smaller than that
of the standard Monte Carlo estimator that assumes $N$ samples $z_{n}^{(i)}\overset{{\rm iid}}{\sim}\mu_{n}$,
due to the control variate nature of integrator snippets estimators.
Another point is that computation of the weight $\mu_{n}\circ\psi_{n,k}/\bar{\mu}_{n}$
only requires knowledge of $\mu_{n}$ up to a normalizing constant,
that is the estimator is unbiased if $\check{z}_{n}^{(i)}\sim\bar{\mu}_{n}$
for $i\in\llbracket N\rrbracket$ even if $\mu_{n}$ is not completely
known, while the estimator (\ref{eq:snippet-estimator-I}) will most
often require self-normalisation, hence inducing a bias. 

We now provide the probabilistic argument justifying Alg.~\ref{alg:Unfolded-PDMP-SMC-1}
and the shared notation $\{z_{n}^{(i)},i\in\llbracket N\rrbracket\}$
in Alg.~\ref{alg:Unfolded-PDMP-SMC-1} and Alg.~\ref{alg:Folded-PDMP-SMC-1}.
The result is illustrated in Fig.~\ref{fig:unfolded-folded-equivalent}.
\begin{prop}
\label{prop:unfolded-folded-equivalent} Alg.~\ref{alg:Unfolded-PDMP-SMC-1}
and Alg.~\ref{alg:Folded-PDMP-SMC-1} are probabilistically equivalent.
More precisely, letting $\mathbb{P}_{i}$ refer to the probability
of Algorithm $i$ for $i\in\{2,3\}$,
\begin{enumerate}
\item for $n\in\llbracket P-1\rrbracket$ the distributions of $\{z_{n}^{(i)},i\in\llbracket N\rrbracket\}$
conditional upon $\big\{ z_{n-1}^{(i)},i\in\llbracket N\rrbracket\big\}$
in Alg.~\ref{alg:Folded-PDMP-SMC-1} and Alg.~\ref{alg:Unfolded-PDMP-SMC-1}
are the same, 
\item the joint distributions of $\{z_{n}^{(i)},i\in\llbracket N\rrbracket,n\in\llbracket0,P-1]\}$
are the same under $\mathbb{P}_{2}$ and $\mathbb{P}_{3}$,
\item for $n\in\llbracket P\rrbracket$, any $f\colon\mathsf{Z}\rightarrow\mathbb{R}$
$\mu_{n}$-integrable with $\check{\mu}_{n}(f)$ and $\hat{\mu}_{n}(f)$
as in (\ref{eq:mu-check}) and (\ref{eq:mu-hat}),
\[
\mathbb{E}_{3}\bigl(\check{\mu}_{n}(f)\mid z_{n-1}^{(j)},j\in\llbracket N\rrbracket\bigr)=\hat{\mu}_{n}(f)\,.
\]
\end{enumerate}
\end{prop}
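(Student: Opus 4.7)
The plan is to exploit the fact that the two algorithms produce the same transition $\{z_{n-1}^{(i)}\}\mapsto\{z_n^{(j)}\}$ up to whether the two categorical draws are made jointly (Alg.~\ref{alg:Unfolded-PDMP-SMC-1}) or sequentially (Alg.~\ref{alg:Folded-PDMP-SMC-1}), combined with the fact that the sequential weights factorise into the joint weight. Items 2 and 3 will follow from item 1, the former by induction and the latter by a direct Rao--Blackwellisation-style conditional expectation.

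\textbf{Item 1.} I would trace both algorithms from $\{z_{n-1}^{(i)}\}$ to $\{z_n^{(j)}\}$. In Alg.~\ref{alg:Unfolded-PDMP-SMC-1} this is a single step: sample $(b_j,a_j)$ jointly from the categorical with weights $\bar{w}_{n,k}(z_{n-1}^{(i)})=\mu_n\circ\psi_n^k(z_{n-1}^{(i)})/\mu_{n-1}(z_{n-1}^{(i)})$, then form $z_n^{(j)}$ from the $x$-component of $\psi_n^{a_j}(z_{n-1}^{(b_j)})$ and an independent $v_n^{(j)}\sim\varpi_n$. In Alg.~\ref{alg:Folded-PDMP-SMC-1} the same transition straddles iterations $n-1$ and $n$: first resample $b_j$ using $\bar{w}_n(z_{n-1}^{(i)})={\rm d}\bar{\mu}_n/{\rm d}\mu_{n-1}(z_{n-1}^{(i)})$ to produce $\check{z}_n^{(j)}=z_{n-1}^{(b_j)}$, then select $a_j$ using $w_{n,a_j}(\check{z}_n^{(j)})=\mu_n\circ\psi_{n,a_j}(\check{z}_n^{(j)})/\sum_l\mu_n\circ\psi_{n,l}(\check{z}_n^{(j)})$, then rejuvenate the velocity. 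Under the identification $\psi_{n,k}=\psi_n^k$ and the density setting of Example~\ref{exa:densities-mu-n-k}, the key identity
\[
\bar{w}_n(z)\,w_{n,k}(z)=\frac{1}{T+1}\cdot\frac{\mu_n\circ\psi_{n,k}(z)}{\mu_{n-1}(z)}\propto \bar{w}_{n,k}(z)
\]
shows that the joint laws of $(b_j,a_j)$ coincide after normalisation; since both algorithms then place an independent $v_n^{(j)}\sim\varpi_n$ on the same selected position, the conditional law of $z_n^{(j)}$ given $\{z_{n-1}^{(i)}\}$ is identical.

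\textbf{Items 2 and 3.} For item 2, the base case $n=0$ holds because $z_0^{(i)}\overset{{\rm iid}}{\sim}\mu_0$ in Alg.~\ref{alg:Unfolded-PDMP-SMC-1} by construction, while in Alg.~\ref{alg:Folded-PDMP-SMC-1} the assumption $\psi_0={\rm Id}$ forces $w_{0,k}\equiv 1/(T+1)$, so the unfold--select step preserves the position drawn from $\check{z}_0^{(i)}\sim\mu_0$ and the fresh velocity draw returns $z_0^{(i)}\sim\pi_0\otimes\varpi_0=\mu_0$; iterating item 1 then yields the joint distribution statement. For item 3, I will condition $\check{\mu}_n(f)$ on $\{z_{n-1}^{(j)}\}$. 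Under $\mathbb{P}_3$, each $\check{z}_n^{(i)}$ is an independent draw from the empirical distribution on $\{z_{n-1}^{(j)}\}$ with probabilities $\bar{w}_n(z_{n-1}^{(j)})/\sum_l\bar{w}_n(z_{n-1}^{(l)})$. Interchanging expectation with the sum over $i$ in (\ref{eq:mu-check}) pools $N$ identical terms into a single factor of $N$, cancelling the $1/N$ prefactor; applying the product identity above then collapses the double sum to exactly (\ref{eq:mu-hat}).

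The main obstacle I anticipate is purely notational: aligning the staggered indexing of the two algorithms, since the ``one step'' of Alg.~\ref{alg:Unfolded-PDMP-SMC-1} between $z_{n-1}$ and $z_n$ spans the boundary between two successive iterations of Alg.~\ref{alg:Folded-PDMP-SMC-1}. Once this indexing and the weight-factorisation identity are fixed, each item reduces to an elementary calculation.
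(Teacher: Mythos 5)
Your proposal is correct and follows essentially the same route as the paper's proof: the central step in both is the cancellation $\frac{{\rm d}\bar{\mu}_{n}}{{\rm d}\mu_{n-1}}(z)\,\frac{{\rm d}\mu_{n,k}}{{\rm d}\bar{\mu}_{n}}(z)=\frac{{\rm d}\mu_{n,k}}{{\rm d}\mu_{n-1}}(z)$, which shows the sequential draws $(b_{j},a_{j})$ of Alg.~\ref{alg:Folded-PDMP-SMC-1} have the same joint law as the single categorical draw of Alg.~\ref{alg:Unfolded-PDMP-SMC-1}, with items 2 and 3 then following by the same induction and conditional-expectation computations as in Appendix~\ref{app:Proofs-for-Section}. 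The only difference is presentational: the paper writes out the full conditional law of the vector $z_{n}^{(1:N)}$ over all index configurations $(\alpha,\beta)$, whereas you argue particle-by-particle using conditional independence, which is equivalent.
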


Since the justification of the latter interpretation of the algorithm
is straightforward, as a standard SMC sampler targeting instrumental
distributions $\big\{\bar{\mu}_{n},n\in\llbracket0,P\llbracket\big\}$,
and allows for further easy generalisations we will adopt this perspective
in the remainder of the manuscript for simplicity. 

\begin{figure}
\centering
\includegraphics[width=0.85\textwidth]{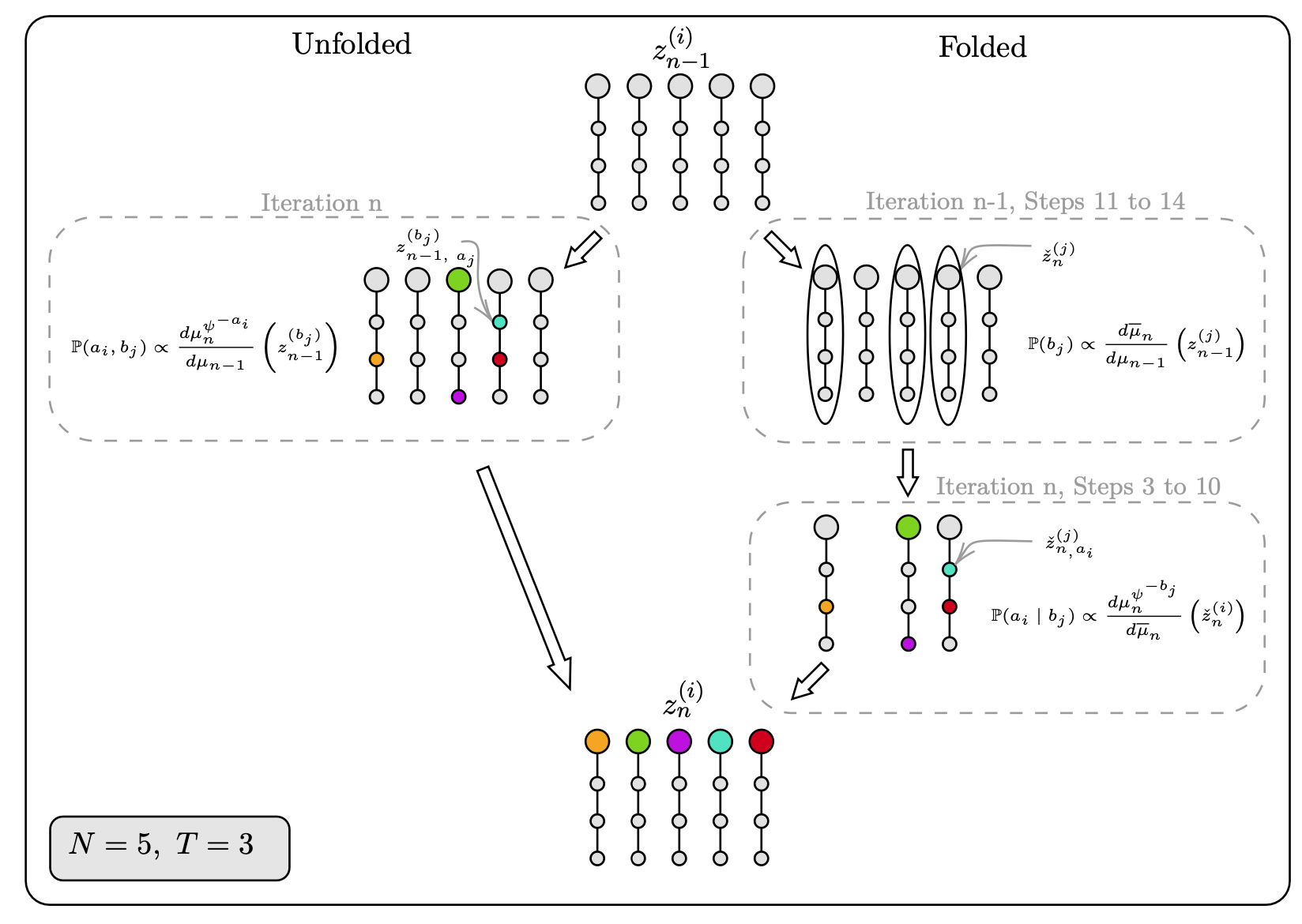}

\caption{Alg.~\ref{alg:Unfolded-PDMP-SMC-1} and Alg.~\ref{alg:Folded-PDMP-SMC-1}
are equivalent.}
\label{fig:unfolded-folded-equivalent}

\end{figure}

This reinterpretation also allows the use of known facts about SMC
sampler algorithms. For example it is well known that the output of
SMC samplers can be used to estimate unbiasedly unknown normalising
constants by virtue of the fact that, in the present scenario,
\[
\prod_{n=0}^{P-1}\Big[\frac{1}{N}\sum_{i=1}^{N}\frac{{\rm d}\bar{\mu}_{n+1}}{{\rm d}\mu_{n}}\big(z_{n}^{(i)}\big)\Big]
\]
has expectation $1$ under $\mathbb{P}_{3}$. Now assume that the
densities of $\{\mu_{n},n\in\llbracket0,P\rrbracket\}$ are known
up to a constant only, say ${\rm d}\mu_{n}/{\rm d}\upsilon(z)=\tilde{\mu}_{n}(z)/Z_{n}$
and $\upsilon\gg\upsilon^{\psi_{n}^{-k}}$ for $k\in\llbracket0,T\rrbracket$,
then
\[
Z_{n}=Z_{n}\int\mu_{n}^{\psi_{n,k}^{-1}}({\rm d}z)=\int\tilde{\mu}_{n}\circ\psi_{n}^{k}(z)\tfrac{{\rm d}\upsilon^{\psi_{n,k}^{-1}}}{{\rm d}\upsilon}(z)\,\upsilon({\rm d}z)\,,
\]
and the measure $Z_{n}\bar{\mu}_{n}({\rm d}z)$ shares the same normalising
constant as $\tilde{\mu}_{n}(z){\rm \upsilon}({\rm d}z)$,
\[
Z_{n}=\int Z_{n}\bar{\mu}_{n}({\rm d}z)=\sum_{k=1}^{T}\frac{1}{T}\int\tilde{\mu}_{n}\circ\psi_{n}^{k}(z)\tfrac{{\rm d}\upsilon^{\psi_{n,k}^{-1}}}{{\rm d}\upsilon}(z)\,\upsilon({\rm d}z)\,.
\]
Consequently 
\[
\prod_{n=0}^{P-1}\Big[\frac{1}{N(T+1)}\sum_{i=1}^{N}\sum_{k=0}^{N}\frac{\tilde{\mu}_{n+1}\circ\psi_{n+1,k}}{\tilde{\mu}_{n}}\big(z_{n}^{(i)}\big)\frac{{\rm d}\upsilon^{\psi_{n+1,k}^{-1}}}{{\rm d}\upsilon}\big(z_{n}^{(i)}\big)\Big]\,,
\]
is an unbiased estimator of $Z_{P}/Z_{0}$.

\subsection{Integrator snippets and variance reduction}

The aim in this section is to establish that integrator snippets naturally
come with variance reduction properties. Variance reduction is only
one of the benefits of our methodology but the result below provides
us with a crucial principled criterion to determine the parameters
of the integrator used in our algorithms. We use the following simplified
notation throughout. For $\{\psi_{k}\colon\mathsf{Z}\rightarrow\mathbb{R},k\in\llbracket0,T\rrbracket\}$,
invertible mappings, define
\[
\bar{\mu}(k,{\rm d}z):=\frac{1}{T+1}\mu_{k}({\rm d}z)\,,
\]
with $\mu_{k}:=\mu^{\psi_{k}^{-1}}$. 
\begin{prop}
\label{prop:variance-reduc-barmu} For any $f\colon\mathsf{Z}\rightarrow\mathbb{R}$
$\mu$-integrable, with $\bar{f}(k,z):=f\circ\psi_{k}(z)$, we have
\begin{enumerate}
\item unbiasedness
\[
\mathbb{E}_{\bar{\mu}}\left(\mathbb{E}_{\bar{\mu}}(\bar{f}(\mathrm{T},\check{Z})\mid\check{Z})\right)=\mathbb{E}_{\bar{\mu}}\left(\bar{f}(\mathrm{T},\check{Z})\right)=\mathbb{E}_{\mu}\left(f(Z)\right)\,,
\]
\item variance reduction
\[
{\rm var}_{\bar{\mu}}\left(\mathbb{E}_{\bar{\mu}}\big(\bar{f}(\mathrm{T},\check{Z})\mid\check{Z}\big)\right)={\rm var}_{\mu}\left(f(Z)\right)-\mathbb{E}_{\bar{\mu}}\big({\rm var}_{\bar{\mu}}(\bar{f}(\mathrm{T},\check{Z})\mid\check{Z})\big)\,.
\]
\end{enumerate}
\end{prop}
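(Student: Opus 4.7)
The plan is to derive both assertions directly from Proposition~\ref{prop:knitting-tinking}, combined with the tower property and the law of total variance. No new machinery is needed; the content lies entirely in recognising the distribution of $\bar{f}(\mathrm{T},\check{Z})$ under $\bar{\mu}$.

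For part 1, the leftmost equality $\mathbb{E}_{\bar{\mu}}\bigl(\mathbb{E}_{\bar{\mu}}(\bar{f}(\mathrm{T},\check{Z})\mid\check{Z})\bigr)=\mathbb{E}_{\bar{\mu}}\bigl(\bar{f}(\mathrm{T},\check{Z})\bigr)$ is the tower property. For the second equality I would expand
\[
\mathbb{E}_{\bar{\mu}}\bigl(\bar{f}(\mathrm{T},\check{Z})\bigr)=\sum_{k=0}^{T}\int f\circ\psi_{k}(z)\,\bar{\mu}(k,\mathrm{d}z)=\frac{1}{T+1}\sum_{k=0}^{T}\int f\circ\psi_{k}(z)\,\mu_{k}(\mathrm{d}z),
\]
and apply the change-of-variables identity $\int f\circ\psi_{k}(z)\,\mu^{\psi_{k}^{-1}}(\mathrm{d}z)=\mu(f)$ already used in the proof of Proposition~\ref{prop:knitting-tinking}, to conclude that the sum telescopes to $\mu(f)$. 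Equivalently, one may simply specialise equation~(\ref{eq:unfolding-HMC-setup}) to the present notation.

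For part 2, the crucial identification is that $\bar{f}(\mathrm{T},\check{Z})\stackrel{d}{=}f(Z)$ with $Z\sim\mu$. Indeed, by part~1 of Proposition~\ref{prop:knitting-tinking}, if $(\mathrm{T},\check{Z})\sim\bar{\mu}$ then $\psi_{\mathrm{T}}(\check{Z})\sim\mu$, so $\bar{f}(\mathrm{T},\check{Z})=f\circ\psi_{\mathrm{T}}(\check{Z})$ has law $\mu^{f^{-1}}$, matching that of $f(Z)$. In particular ${\rm var}_{\bar{\mu}}\bigl(\bar{f}(\mathrm{T},\check{Z})\bigr)={\rm var}_{\mu}\bigl(f(Z)\bigr)$. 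The law of total variance applied to the pair $(\bar{f}(\mathrm{T},\check{Z}),\check{Z})$ under $\bar{\mu}$ then reads
\[
{\rm var}_{\bar{\mu}}\bigl(\bar{f}(\mathrm{T},\check{Z})\bigr)={\rm var}_{\bar{\mu}}\bigl(\mathbb{E}_{\bar{\mu}}(\bar{f}(\mathrm{T},\check{Z})\mid\check{Z})\bigr)+\mathbb{E}_{\bar{\mu}}\bigl({\rm var}_{\bar{\mu}}(\bar{f}(\mathrm{T},\check{Z})\mid\check{Z})\bigr),
\]
and rearranging yields the announced identity.

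Both arguments are purely manipulative: unbiasedness is a repackaging of Proposition~\ref{prop:knitting-tinking}, and the variance decomposition is Rao--Blackwell / total variance once one has identified the marginal law of $\bar{f}(\mathrm{T},\check{Z})$. There is no genuine obstacle to this plan; the only point where one must be attentive is the invocation of Proposition~\ref{prop:knitting-tinking} to equate the two marginal laws, since this is what justifies replacing ${\rm var}_{\bar{\mu}}(\bar{f}(\mathrm{T},\check{Z}))$ by ${\rm var}_{\mu}(f(Z))$ in the final step.
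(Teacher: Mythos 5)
Your proposal is correct and follows essentially the same route as the paper: tower property plus the unfolding identity of Proposition~\ref{prop:knitting-tinking} for unbiasedness, then the law of total variance for the second claim. The only cosmetic difference is that you obtain ${\rm var}_{\bar{\mu}}\bigl(\bar{f}(\mathrm{T},\check{Z})\bigr)={\rm var}_{\mu}\bigl(f(Z)\bigr)$ by noting the equality in law of $\bar{f}(\mathrm{T},\check{Z})$ and $f(Z)$, whereas the paper applies the unbiasedness statement to $f$ and $f^{2}$ separately; these are equivalent observations.
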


\begin{proof}
The first statement follows directly from (\ref{sec:Discussion})
by adaptation of the notation. The variance decomposition identity
yields
\[
{\rm var}_{\bar{\mu}}\left(\bar{f}(\mathrm{T},\check{Z})\right)={\rm var}_{\bar{\mu}}\left(\mathbb{E}_{\bar{\mu}}\big(\bar{f}(\mathrm{T},\check{Z})\mid\check{Z}\big)\right)+\mathbb{E}_{\bar{\mu}}\big({\rm var}_{\bar{\mu}}(\bar{f}(\mathrm{T},\check{Z})\mid\check{Z})\big)\,,
\]
but from the first statement,
\begin{align*}
{\rm var}_{\bar{\mu}}\left(\bar{f}(\mathrm{T},\check{Z})\right) & =\mathbb{E}_{\bar{\mu}}\left(\bar{f}(\mathrm{T},\check{Z})^{2}\right)-\mathbb{E}_{\bar{\mu}}\left(\bar{f}(\mathrm{T},\check{Z})\right)^{2}\\
 & =\mathbb{E}_{\mu}\left(f(Z)^{2}\right)-\mathbb{E}_{\mu}\left(f(Z)\right)^{2}\\
 & ={\rm var}_{\mu}\left(f(Z)\right).
\end{align*}
\end{proof}
The result is clearly relevant when using the ``Rao-Blackwellized
estimator,''
\[
\check{\mu}(f)=N^{-1}\sum_{i=1}^{N}\mathbb{E}_{\bar{\mu}}\big(\bar{f}(\mathrm{T},\check{Z}^{(i)})\mid\check{Z}^{(i)}\big)\,,
\]
which corresponds to the estimator suggested by (\ref{eq:unfolding-HMC-setup}).
For $\check{Z}^{(1)},\ldots,\check{Z}^{(N)}\overset{{\rm iid}}{\sim}\bar{\mu}$
the variance of $\check{\mu}(f)$ is naturally given by ${\rm var}_{\bar{\mu}}\left(\mathbb{E}_{\bar{\mu}}\big(\bar{f}(\mathrm{T},\check{Z})\mid\check{Z}\big)\right)/N$
and the second statement of Proposition~\ref{prop:variance-reduc-barmu}
tells us that using $T>1$ is in general better than setting $T=1$.
The term $\mathbb{E}_{\bar{\mu}}\big(\bar{f}(\mathrm{T},\check{Z})\mid\check{Z}\big)-f(K,\check{Z})$
for $K\sim\bar{\mu}(k\mid\check{Z})$ can be thought of as a control
variate for an estimator relying on copies of $f(K,\check{Z})$.

The distributional assumption on $\check{Z}^{(1)},\ldots,\check{Z}^{(N)}$
is naturally not satisfied in the context of algorithm Alg.~\ref{alg:Folded-PDMP-SMC-1}.
However the variance terms considered above are precisely the terms
appearing in the analysis of SMC samplers \cite[Chapter 9]{del2004feynman}.
Indeed the variance of estimators obtained from an SMC can be decomposed
as the sum of local variances at each iterations, which involve precisely
the terms above, as $N\rightarrow\infty$.

An important implication of the variance identity is that for given
$\mu$ and $f$ an (ODE, integrator) pair, or more generally transformations,
should be chosen or tuned to maximize $\mathbb{E}_{\bar{\mu}}\big({\rm var}_{\bar{\mu}}(\bar{f}(\mathrm{T},\check{Z})\mid\check{Z})\big)$,
that is deterministic integration should be performed along the roughest
possible directions $k\mapsto f\circ\psi_{k}(\check{Z})$ in order
to absorb as much Monte Carlo variability as possible. 

It is worth noting that these ideas are not specific to the SMC context
and could be used, for example, in the context of PISA (Pushforward
Importance SAmpling) discussed in Appendix~\ref{subsec:PISA}.

\subsection{Direct extensions \protect\label{subsec:Straightforward-generalizations}}

It should be clear that the algorithm we have described lends itself
to numerous generalizations, which we briefly discuss below. This
can be skipped on a first reading.

There is no reason to limit the number of snippets arising from a
seed particle to one, which could be of interest on parallel machines.
For example the velocity of a given seed particle can be refreshed
multiple times, resulting in partial copies of the seed particle from
which integrator snippets can be grown.

The main scenario motivating this work, corresponds to the choice,
for $n\in\llbracket0,P\rrbracket$, of$\{\psi_{n,k}=\psi_{n}^{k},k\in\llbracket0,T\rrbracket\}$
for a given $\psi_{n}\colon\mathsf{Z}\rightarrow\mathsf{Z}$. As should
be apparent from the theoretical justification this can be replaced
by a general family of invertible mappings $\{\psi_{n,k}\colon\mathsf{Z}\rightarrow\mathsf{Z},k\in\llbracket0,T\rrbracket\}$
where the $\psi_{n,k}$'s are now not necessarily required to be measure
preserving in general, in which case the expression for $w_{n,k}(z)$
may involve additional terms of the ``Jacobian'' type. These mappings
may correspond to integrators other than those of Hamilton's equations
but may be more general deterministic mappings; $T$ may not have
any temporal meaning anymore and only represent the number of deterministic
transformations used in the algorithm. More specifically, asssuming
that $\upsilon\gg\mu_{n}$ and $\upsilon\gg\upsilon^{\psi_{n,k}^{-1}}$
for some $\sigma$-finite dominating measure $\upsilon$ and letting
$\mu_{n}(z):={\rm d}\mu_{n}/{\rm d}\upsilon(z)$ the required weights
are now of the form (see Lemmas~\ref{thm:measure-theoretic-transform}-\ref{lem:Billingsley-problem324})
\[
\bar{w}_{n,k}(z):=\frac{1}{T+1}\frac{\mu_{n}\circ\psi_{n,k}(z)}{\mu_{n-1}(z)}\frac{{\rm d}\upsilon^{\psi_{n,k}^{-1}}}{{\rm d}\upsilon}(z)\,.
\]
Again when $\upsilon$ is the Lebesgue measure and $\psi_{n,k}$ a
diffeomorphism, then ${\rm d}\upsilon^{\psi_{n,k}^{-1}}/{\rm d}\upsilon$
is the absolute value of the determinant of the Jacobian of $\psi_{n,k}$.
Non uniform weights may be ascribed to each of these transformations
in the definition of $\bar{\mu}$ (\ref{eq:nu-detailed-balance}).
A more useful application of this generality is in the situation where
it is believed that using multiple integrators, specialised in capturing
different geometric features of the target density, could be beneficial.
Hereafter we simplify notation by setting $\nu\leftarrow\mu_{n}$
and $\mu\leftarrow\mu_{n+1}$. For the purpose of illustration consider
two distinct integrators $\psi_{i}$, $i\in\llbracket2\rrbracket$
(again $n$ disappears from the notation) we wish to use, each for
$T_{i}\in\mathbb{N}$ time steps with proportions $\gamma_{i}\geq0$
such that $\gamma_{1}+\gamma_{2}=1$. Again with $\mu=\pi\otimes\varpi$
define the mixture
\[
\bar{\mu}(i,k,{\rm d}z):=\frac{\gamma_{i}}{T_{i}+1}\mu^{\psi_{i,k}^{-1}}({\rm d}z)\mathbf{1}\{k\in\llbracket0,T_{i}\rrbracket\}\,,
\]
which still possesses the fundamental property that if $z\sim\bar{\mu}$
(resp. $(i,z)\sim\bar{\mu}$), then with $(i,k)\sim\bar{\mu}(k,i\mid z)$
(resp. $k\sim\bar{\mu}(k\mid i,z)$) we have $\psi_{i,k}(z)\sim\mu$.
 It is possible to aim to sample from $\bar{\mu}({\rm d}z)$, in
which case the pair $(i,k)$ plays the rôle played by $k$ in the
earlier sections. However, in order to introduce the sought persistency,
that is use either $\psi_{1}$ or $\psi_{2}$ when constructing a
snippet, we focus on the scenario where the pair $(i,z)$ plays the
rôle played by $z$ earlier. In other words the target distribution
is now $\bar{\mu}(i,{\rm d}z)$, which is to $\mu(i,{\rm d}z):=\gamma_{i}\cdot\mu({\rm d}z)$
what $\bar{\mu}_{n}({\rm d}z)$ in (\ref{eq:nu-detailed-balance})
was to $\mu_{n}({\rm d}z)$. The mutation kernel corresponding to
(\ref{eq:nu-detailed-balance}) is given by
\[
\bar{M}_{\nu,\mu}(i,z;j,{\rm d}z'):=\sum_{k=0}^{T_{i}}\bar{\nu}(k\mid i,z)R(i,\psi_{i,k}(z);j,{\rm d}z')\,,
\]
with now the requirement that $\bar{\nu}R(i,{\rm d}z)=\nu(i,{\rm d}z)=\gamma_{i}\cdot\nu({\rm d}z)$.
A straightforward choice is $R(i,z;j,{\rm d}z')=\gamma_{j}\cdot R_{0}(z,{\rm d}z')$
with $\nu R_{0}({\rm d}z')=\nu({\rm d}z')$. With these choices and
using the near-optimal backward kernel simple substitutions $(i,z)\leftarrow z$
and $(j,z')\leftarrow z'$ yields
\[
\frac{{\rm d}\bar{\mu}\otimeswapped\bar{L}_{\nu,\mu}}{{\rm d}\bar{\nu}\otimes\bar{M}_{\nu,\mu}}(i,z;j,z')=\frac{{\rm d}\bar{\mu}}{{\rm d}\nu}(j,z')\,,
\]
and justifies our abstract presentation. This idea will be exploited
later on in order to design adaptive algorithms.

Another direct extension consists of generalizing the definition of
$\bar{\mu}_{n}$ by assigning non-uniform and possibly state-dependent
weights to the integrator snippet particles as follows
\[
\bar{\mu}_{n}(k,{\rm d}z)=\omega_{n,k}\circ\psi_{n,k}(z)\,\mu_{n,k}({\rm d}z)\,,
\]
with $\omega_{n,k}\colon\mathsf{Z}\rightarrow\mathbb{R}_{+}$ for
$k\in\llbracket0,T\rrbracket$ and such that $\sum_{k=0}^{T}\omega_{n,k}(z)=1$
for any $z\in\mathsf{Z}$; this should be contrasted with (\ref{eq:bar-mu-n-marginal}).
 In Appendix~\ref{subsec:Optimal-weights} we show how such weighting
can be optimised to reduce variance of expectation estimators, with
potentially negative weights.

We leave exploration of some of these generalisations for future work.

\subsection{Rational and computational considerations \protect\label{subsec:Computational-considerations}}

Our initial motivation for this work was that computation of $\{z_{n,k},k\in\llbracket T\rrbracket\}$
and $\{w_{n,k},k\in\llbracket T\rrbracket\}$ most often involve common
quantities, leading to negligible computational overhead, and reweighting
offers the possibility to use all the states of a snippet in a simulation
algorithm rather than the endpoint only. There are other reasons for
which this approach may be beneficial. 

The benefit of using all states along integrator snippets in sampling
algorithms has been noted in the literature. For example, in the context
of Hamiltonian integrators, with $H_{n}(z):=-\log\mu_{n}(z)$ and
$\psi_{n,k}=\psi_{n}^{k}$, it is known that for $z\in\mathsf{Z}$
the mapping $k\mapsto H_{n}\circ\psi_{n}^{k}(z)$ is typically oscillatory,
motivating for example the x-tra chance algorithm of \cite{campos2015extra}.
The ``windows of state'' strategy of \cite{neal1994improved,neal2011mcmc}
effectively makes use of the mixture $\bar{\mu}$ as an instrumental
distribution and improved performance is noted, with performance seemingly
improving with dimension on particular problems; see Appendix~\ref{sec:MCMC-with-integrator}
for a more detailed discussion; see also \cite{betancourt2017conceptual}.
Further averaging is well known to address scenarios where components
of $x_{t}$ evolve on different scales and no choice of a unique integration
time $\tau:=T\times\epsilon$ can accommodate all scales \cite[section 3.2]{neal2011mcmc,hoffman2022tuning};
averaging addresses this issue effectively, see Example~\ref{exa:hamilton-different-scales}.
We also note that, keeping $T\times\epsilon$ constant, in the limit
as $\epsilon\rightarrow0$ the average in (\ref{eq:weight-folded-HMC})
corresponds to the numerical integration, Riemann like, along the
contour of $H_{n}(z)$, effectively leading to some form of Rao-Blackwellization
of this contour; this is discussed in detail in Appendix~\ref{sec:Preliminary-theoretical-characte}.

Another benefit we have observed with the SMC context is the following.
Use of a WF-SMC strategy involves comparing particles within each
Markov snippet arising from a single seed particle, while our strategy
involves comparing all particles across snippets, which proves highly
beneficial in practice. This seems to bring particular robustness
to the choice of the integrator parameters $\epsilon$ and $T$ and
can be combined with highly robust adaptation schemes taking advantage
of the population of samples, in the spirit of \cite{hoffman2021adaptive,hoffman2022tuning};
see Subsections~\ref{subsec:first-example-simulations} and \ref{subsec:example-filamentary-distributions}.

At a computational level we note that, in contrast with standard SMC
or WF-SMC implementations relying on an MH mechanism, the construction
of integrator snippets does not involve an accept reject mechanism,
therefore removing control flow operations and enabling lockstep computations
on GPUs -- actual implementation of our algorithms on such architectures
is, however, left to future work.

Finally, when using integrators of Hamilton's equations we naturally
expect similar benefits to those enjoyed by HMC. Let $d\in\mathbb{N}$
and let $\mathsf{X=\mathbb{R}}^{d}$. We know that in certain scenarios
\cite{beskos2013optimal,calvo2021hmc}, the distributions $\{\mu_{n,d},d\in\mathbb{N},n\in\llbracket T(d)\rrbracket\}$
are such that for $n\in\mathbb{N}$, $\log\big(\nicefrac{\mu_{n,d}\circ\psi_{n,d}^{k}(z)}{\mu_{n,d}(z)}\big)\rightarrow_{d\rightarrow\infty}\mathcal{N}(\mu_{n},\sigma_{n}^{2})$,
that is the weights do not degenerate to zero or one: in the context
of SMC this means that the part of the importance weight (\ref{eq:weight-folded-HMC})
arising from the mutation mechanism does not degenerate. Further,
with an appropriate choice of schedule, i.e. sequence $\{\mu_{n,d},n\in\llbracket T(d)\rrbracket\}$
for $d\in\mathbb{N}$, ensures that the contribution $\nicefrac{\mu_{n,d}(z)}{\mu_{n-1,d}(z)}$
to the importance weights (\ref{eq:weight-folded-HMC}) is also stable
as $d\rightarrow\infty$. As shown in \cite{beskos2013optimal,andrieu2018sampling,calvo2021hmc},
while direct important sampling may require an exponential number
of samples as $d$ grows, the use of such a schedule may reduce complexity
to a polynomial order.

\subsection{Links to the literature}

Alg.~\ref{alg:Unfolded-PDMP-SMC-1}, and its reinterpretation Alg.~\ref{alg:Folded-PDMP-SMC-1},
are reminiscent of various earlier contributions and we discuss here
parallels and differences. This work was initiated in \cite{chang-2022}
and pursued and extended in \cite{mauro-phd}.

Readers familiar with the ``waste-free SMC'' algorithm (WF-SMC)
of \cite{dau2020waste} may have noticed a connection. Similarly to
Alg.~\ref{alg:Unfolded-PDMP-SMC-1}, seed particles are extended
with an MCMC kernel leaving $\mu_{n-1}$ invariant at iteration $n$,
yielding $N\times(T+1)$ particles subsequently whittled down to $N$
new seed particles. A first difference is that while generation of
an integrator snippet can be interpreted as applying a sequence of
deterministic Markov kernels (see Appendix~\ref{sec:Sampling-HMC-trajectories}
for a detailed discussion) what we show is that the mutation kernels
involved do not have to leave $\mu_{n-1}$ invariant; in fact we show
in Appendix~\ref{sec:Sampling-a-mixture:} that this indeed is not
a requirement, therefore offering more freedom. Further, it is instructive
to compare our procedure with the following two implementations of
WF-SMC using an HMC kernel (\ref{eq:HMC-kernel-deterministic}) for
the mutation. A first possibility for the mutation stage is to run
$T$ steps of an HMC update in sequence, where each update uses one
integrator step. Assuming no velocity refreshment along the trajectory
this would lead to the exploration of a random number of states of
our integrator snippet due to the accept/reject mechanism involved;
incorporating refreshment or partial refreshment would similarly lead
to a random number of useful samples. Alternatively one could consider
a mutation mechanism consisting of an HMC build around $T$ integration
steps and where the endpoint of the trajectory would be the mutated
particle; this would obviously mean discarding $T-1$ potentially
useful candidates. To avoid possible reader confusion, we note apparent
typos in \cite[Proposition 1]{dau2020waste}, provided without proof,
where the intermediate target distribution of the SMC algorithms,
$\bar{\mu}_{n}$ in our notation (see (\ref{eq:bar-mu-markov-snippet})),
seems improperly stated. The statement is however not used further
in the paper.

Our work shares, at first sight, similarities with \cite{rotskoff2019dynamical,thin2021neo},
but differs in several respects. In the discrete time setup considered
in \cite{thin2021neo} a mixture of distributions similar to ours
is also introduced. Specifically for a sequence $\{\omega_{k}\geq0,k\in\mathbb{Z}\}$
such that $\#\{\omega_{k}\neq0,k\in\mathbb{Z}\}<\infty$ and $\sum_{l\in\mathbb{Z}}\omega_{k}=1$,
the following mixture is considered (in the notation of \cite{thin2021neo}
their transformation is $T=\psi^{-1}$ and we stick to our notation
for the sake of comparison and avoid confusion with our $T$),
\[
\bar{\mu}({\rm d}z)=\sum_{k\in\mathbb{Z}}\omega_{k}\mu_{k}({\rm d}z)\,.
\]
The intention is to use the distribution $\bar{\mu}$ as an importance
sampling proposal to estimate expectations with respect to $\mu$,
\begin{align*}
\int f(z)\frac{{\rm d}\mu}{{\rm d}\bar{\mu}}(z)\bar{\mu}({\rm d}z) & =\sum_{k\in\mathbb{Z}}\omega_{k}\int f(z)\frac{{\rm d}\mu}{{\rm d}\bar{\mu}}(z)\mu^{\psi^{-k}}({\rm d}z)\\
 & =\sum_{k\in\mathbb{Z}}\omega_{k}\int f\circ\psi^{-k}(z)\frac{{\rm d}\mu}{{\rm d}\bar{\mu}}\circ\psi^{-k}(z)\mu({\rm d}z)\\
 & =\sum_{k\in\mathbb{Z}}f\circ\psi^{-k}(z)\omega_{k}\frac{\mu\circ\psi^{-k}(z)}{\sum_{l\in\mathbb{Z}}\omega_{l}\,\mu\circ\psi^{l-k}(z)}\mu({\rm d}z)\,,
\end{align*}
where the last line holds when $\upsilon\gg\mu$, $\upsilon{}^{\psi}=\upsilon$
and $\mu(z)={\rm d}\mu/{\rm d}\upsilon(z)$. The rearrangement on
the second and third line simply capture the fact noted earlier in
the present paper that with $z\sim\mu$ and $k\sim{\rm Cat}\big(\{\omega_{l}:l\in\mathbb{Z}\}\big)$
then $\psi^{-k}(z)\sim\bar{\mu}$. As such NEO relies on generating
samples from $\mu$ first, typically exact samples, which then undergo
a transformation and are then properly reweighted. In contrast we
aim to sample from a mixture of the type $\bar{\mu}$ directly, typically
using an iterative method, and then exploit the mixture structure
to estimate expectations with respect to $\mu$. Note also that some
of the $\psi^{l-k}(z)$ terms involved in the renormalization may
require additional computations beyond terms for which $\omega_{l-k}\neq0$.
Our approach relies on a different important sampling identity
\[
\int f\circ\psi^{k}(z)\frac{{\rm d}\mu^{\psi^{-k}}}{{\rm d}\bar{\mu}}(z)\bar{\mu}({\rm d}z)=\mu(f)\,.
\]
We note however that the conformal Hamiltonian integrator used in
\cite{rotskoff2019dynamical,thin2021neo} could be used in our framework,
which we leave for future investigations. Their NEO-MCMC targets a
``posterior distribution'' $\mu'({\rm d}z)=Z^{-1}\mu({\rm d}z)L(z)$
and inspired by the identity
\begin{align*}
\int f(z)L(z)\frac{{\rm d}\mu}{{\rm d}\bar{\mu}}(z)\bar{\mu}({\rm d}z) & =\sum_{k\in\mathbb{Z}}\omega_{k}\int f\circ\psi^{-k}(z)L\circ\psi_{k}^{-1}(z)\frac{{\rm d}\mu}{{\rm d}\bar{\mu}}\circ\psi^{-k}(z)\mu({\rm d}z)\,,
\end{align*}
which is an expection of $\bar{f}(k,z)=f\circ\psi^{-k}(z)$ with respect
to
\[
\bar{\mu}'(k,{\rm d}z)\propto\mu({\rm d}z)\cdot\omega_{k}\frac{{\rm d}\mu}{{\rm d}\bar{\mu}}\circ\psi^{-k}(z)\cdot L\circ\psi^{-k}(z)
\]
and they consider algorithms targetting $\bar{\mu}'({\rm d}z)$ which
relying on exact samples from $\mu$ to construct proposals, resulting
in a strategy which shares the weaknesses of an independent MH algorithm.

Much closer in spirit to our work is the ``window of states'' idea
proposed in \cite{neal1994improved,neal2011mcmc}, developed in \cite{betancourt2017conceptual},
in the context of HMC algorithms. Although the MCMC algorithms of
\cite{neal1994improved,neal2011mcmc} ultimately target $\mu$ and
are not reversible, they involve a reversible MH update with respect
to $\bar{\mu}$ and additional transitions permitting transitions
from $\mu$ to $\bar{\mu}$ and $\bar{\mu}$ to $\mu$; see Section~\ref{sec:MCMC-with-integrator}
for full details. Note that using these MCMC updates as mutation kernels
in an SMC would not lead to integrator snippet SMC as they target
$\mu$, not $\bar{\mu}$.

A link we have not explored in the present manuscript is that to normalizing
flows \cite{tabak2010density,Marzouk2016} and related literature.
In particular the ability to tune the parameter of the leapfrog integrator
suggests that our methodology lends itself to learning normalising
flows. We finally note an analogy of such approaches with umbrella
sampling \cite{thiede2016eigenvector,TORRIE1977187}, although the
targetted mixture is not of the same form, and the ``Warp Bridge
Sampling'' idea of \cite{c61adb08-fb3c-3ba4-9867-a07d61c50217}.

\section{Adaptation with Integrator Snippet SMC \protect\label{sec:Adaptation-with-Integrator}}

\subsection{Need for adaptation and review of classical criteria}

Monte Carlo algorithms relying on discretization of Hamilton's equation
for the exploration of the support of a distribution of interest offer
the promise of very efficient algorithms, as supported by practical
\cite{neal2011mcmc,carpenter2017stan} and theoretical evidence \cite{beskos2013optimal,calvo2021hmc}
in the MCMC scenario. However such procedures are known for their
brittleness e.g. \cite{livingstone2022barker} in the particular case
of the Langevin algorithm, and require precise calibration of their
parameters in order to achieve their potential. In their simplest
implementation these procedures rely on three, related, parameters:
the discretisation stepsize $\epsilon>0$, $T\in\mathbb{N}$ the number
of integration steps and $\tau=\epsilon T\in\mathbb{R}_{+}$ the effective
integration time, that is the time horizon we would have liked to
use if Hamilton's equations were tractable. Qualitatively the effect
of a poor choice of these parameters is as follows. 

For fixed $\tau>0$, too large a stepsize $\epsilon$ will result
in departure of the integrator from the Hamiltonian flow, therefore
defeating the purpose of the method and empirically affecting performance,
while too small an $\epsilon>0$ requires $T$ to be large, therefore
increasing computational complexity to reach a vicinity of the Hamilton
flow at time $\tau$. This issue has motivated a substantial body
of work e.g. \cite{beskos2013optimal,calvo2021hmc}, which essentially
provide guidelines on the choice of $\epsilon$ based on statistical
properties of the Markov chain at stationarity; in particular monitoring
the expected acceptance ratio is shown to be a suitable criterion.
In practice the algorithm is run for a value of $\epsilon$, the empirical
expected acceptance probability estimated in preliminary runs or online,
and $\epsilon$ adjusted to reach a nominal value suggested by theory.
More specifically with the acceptance ratio $r(z;\epsilon,T)=\mu\circ\psi^{T}(z)/\mu(z)$
and acceptance ratio $\alpha(z;\epsilon,T)=1\wedge r(z;\epsilon,T)$
theory suggests finding $\epsilon$ such that
\[
\chi(\epsilon,T):=\mathbb{E}_{\mu}\big(\alpha(z;\epsilon,T)\big)\approx0.67,\,\text{subject to}\,\tau=T\epsilon\,.
\]

Poor choice of the effective integration time $\tau$ is also important.
Too large a $\tau$ may result in the end point of the integrator
snippet being close to the initial state, which should be clear in
the two-dimensional setting, therefore wasting computations. Too small
a $\tau$ may not take full advantage of such updates. However choosing
$\tau$ optimally in general is a largely open problem with, to the
best of our knowledge, very few theoretical results available. An
exception is \cite[Theorem 1.3]{chen2022optimal} where, assuming
the log-concavity ot the target probability density and tractability
of Hamilton's equations, an optimal value for $\tau$ in terms of
a coupling argument, which does not seem to have been exploited in
practice; we discuss this further in the context of IS2MC later on.
A popular and pragmatic approach was suggested in \cite{hoffman2014no}
to determine a good value of $\tau$. The aim set in \cite{hoffman2014no}
is to identify the time $\tau_{{\rm u}}$ (or $k_{\mathrm{u}}$ the
corresponding number of steps for a given $\epsilon$) where $t\mapsto\|\psi_{t,x}(z)-x\|^{2}$,
with again $\psi_{t,x}$ indicates the first component of $\psi_{t}$,
reaches its first maximum. While not a foolproof criterion, the no-U-turn
criterion has been shown to be useful in practice \cite{carpenter2017stan}.

SMC algorithms require additional calibration. Indeed, in the context
of SMC we are further required to choose the interpolating sequence
$\{\pi_{n},n\in\llbracket0,P\rrbracket\}$ which we hereafter assume
to arise as the discretization of a family of probability distributions
$\{\pi(\cdot;\gamma),\gamma\in[0,1]\}$ on $(\mathsf{X},\mathscr{X})$,
or ``probability path'', such that $\pi(\cdot;1)=\pi(\cdot)$. For
example in a Bayesian context one may consider
\begin{equation}
\pi({\rm d}x;\gamma)\propto L^{\gamma}(x)\eta({\rm d}x)\,,\label{eq:def-pi-smc-adaptive}
\end{equation}
where $\eta$ is a probability distribution, the prior, and $L(\cdot)$
is the likelihood function, which is combined with the prior in a
gradual way. Defining $\mu({\rm d}z;\gamma):=\pi({\rm d}x;\gamma)\varpi({\rm d}v)$
we therefore need to choose $\Gamma:=(P,\{\gamma_{i}\in[0,1],i\in\llbracket0,P\rrbracket\})\in\cup_{k=1}^{\infty}\{k\}\times\mathbb{R}_{+}^{k+1}$
to determine $\{\mu_{n},n\in\llbracket0,P\rrbracket\}$, and subsequently
$\{\bar{\mu}_{n},n\in\llbracket0,P\rrbracket\}$ in the integrator
snippet framework. For an SMC running along a probability path $\{\mu(\cdot;\gamma),\gamma\in[0,1]\}$
and using an optimal backward kernel, $\Gamma$ is usually constructed
in a sequential manner as follows. For $n\geq0$ define
\begin{equation}
\chi_{\gamma,n+1}(\gamma):=\mathbb{E}_{\mu_{n}}\left\{ \left(\frac{{\rm d}\mu(\cdot;\gamma)}{{\rm d}\mu_{n}}(z)\right)^{2}\right\} \,,\label{eq:chi-gamma-only}
\end{equation}
then we have $\chi_{\gamma,n+1}(\gamma_{n})=1$, $\chi_{n}(\gamma)\geq1$
from Jensen's inequality and in typical scenarios one expects $\gamma\mapsto\chi_{\gamma,n+1}(\gamma)$
to be increasing for $\gamma\geq\gamma_{n}$ Note that the indexed
$\gamma$ should not be thought of as a value of the corresponding
parameter, but as its name only; we use $\chi_{\epsilon},\chi_{\theta},\ldots$
for other parameters and criteria below. This leads to the recursive
definition
\begin{equation}
\gamma_{n+1}=\sup\{\gamma\colon\gamma_{n}\leq\gamma\leq1,\chi_{\gamma,n+1}(\gamma)\leq\bar{\chi}_{\gamma,n+1}\}\wedge1\,,\label{eq:update-gamma}
\end{equation}
for some user defined $\bar{\chi}_{\gamma,n+1}>1$, a user defined
tolerance, hence defining $\{\mu_{n},n\in\llbracket0,P\rrbracket\}$,
and subsequently $\{\bar{\mu}_{n},n\in\llbracket0,P\rrbracket\}$.
This criterion is justified by results on self-normalized importance
sampling bounds similar to those developed in Theorem~(\ref{thm:PISA-ESS}).
In practice this criterion is approximated with samples from $\mu_{n}$
obtained at the previous iteration of the SMC algorithm, defining
the function estimate $\gamma\mapsto\hat{\chi}_{n+1,\gamma}(\gamma)$.
Traditionally the above is formulated in terms of the so-called effective
sample size (ESS) $N/\hat{\chi}_{n+1,\gamma}(\gamma)$ and the user
defines a tolerance on the ESS scale.

In what follows we re-use and adapt some of these ideas in the context
of ${\rm IS^{2}MC}$ in order to define $(\gamma_{n},T_{n})$ and
the parameter of a distribution distribution on $\epsilon_{n}$.

Integrator snippet SMC naturally provide extensive population level
experimental information on the statistical properties arising from
the use of an integrator in an SMC step, as illustrated in Fig.~\ref{fig:snippets-population-information}.
This seems very promising to calibrate tuning parameters robustly.
However, crucially, we show below that the mixture structure of the
distributions targetted by integrator snippets, $\{\bar{\mu}_{n},n\in\llbracket0,P\rrbracket\}$,
presents another opportunity to develop novel robust and computationally
efficient adaptive SMC strategies to tune $(\tau,\epsilon,T)$. 

\begin{figure}
\centering{}\includegraphics[width=0.9\textwidth]{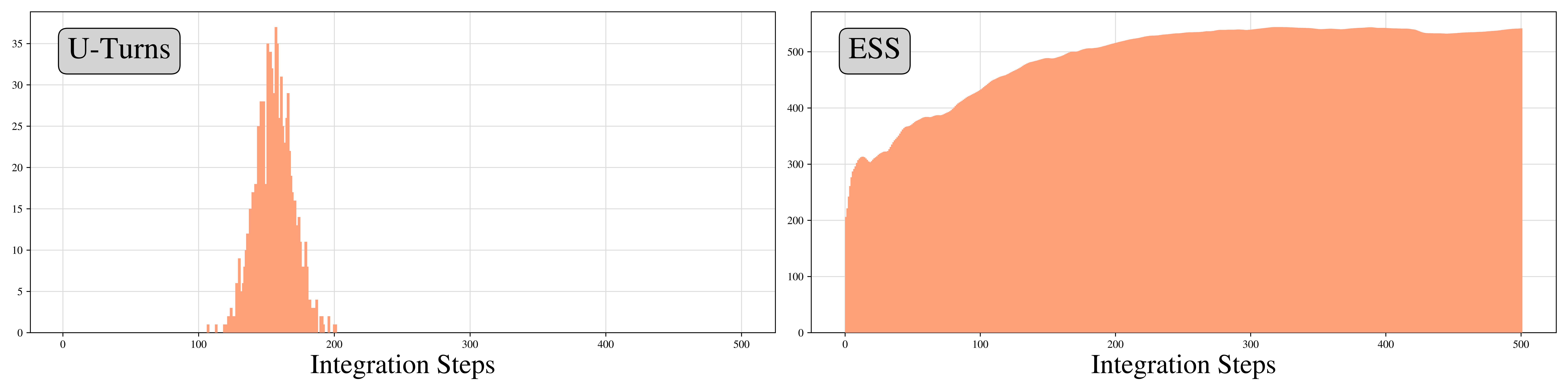}
\caption{Illustration of information available about a the Leapfrog integrator
in the particle population. For an SMC iteration, Left: histogram
of the U-turn positions ($k_{u}$) across snippets, Right: average
ESS along snippets for varying $k\in\llbracket T\rrbracket$.}
\label{fig:snippets-population-information}
\end{figure}

\subsection{Adapting $\gamma$ only \protect\label{subsec:Adapting-gamma-only}}

\paragraph{Logistic regression for the Sonar dataset \protect\label{par:logistic_regression_sonar_no_adaptation}}

We consider sampling from the posterior distribution of a logistic
regression model with a focus on estimating the normalizing constant,
or evidence. Following \cite{dau2020waste} we consider the Sonar
dataset, previously used as testbed in \cite{chopin2015leavepimaindiansalone},
which with covariates $\{\xi_{i}\in\mathbb{R}^{61},i\in\llbracket208\rrbracket\}$
(containing intercept terms) and responses $\big\{ y_{i}\in\{-1,1\},i\in\llbracket208\rrbracket\big\}$
has likelihood function
\begin{align}
\mathbb{R}^{61}\ni x\mapsto L(x)=\prod_{i=1}^{208}\left(1+\exp(-y_{i}\cdot\xi_{i}^{\top}x)\right)^{-1}.\label{eqn:likelihood_sonar}
\end{align}
The prior distribution $\eta(x)$ is chosen to be a product of independent
normal distributions with mean zero and standard deviation equal to
$20$ for the intercept and $5$ for the other parameters. The probability
path $\gamma\mapsto\pi(\cdot;\gamma)$ is then defined as in (\ref{eq:def-pi-smc-adaptive})
and $\Gamma$ determined sequentially using the standard approach
described earlier.

We compare an Integrator Snippet and the implementation of Waste-Free
SMC of \cite{dau2020waste}, which uses a random walk Metropolis-Hastings
kernel with covariance adaptively computed as $2.38^{2}\hat{\Sigma}/d$,
where $\hat{\Sigma}$ is the empirical covariance matrix obtained
from the particles in the previous SMC step. For the Integrator Snippet
we choose $\psi_{n}$ to be the one-step Leapfrog integrator and $\varpi_{n}=\mathcal{N}(0,{\rm Id})$. 

HS selects the next tempering parameter using a commonly used ESS-based
update \cite{chopin2020introduction}, closely related to (\ref{eq:chi-gamma-only})-(\ref{eq:update-gamma}),
and for both algorithms we target an ESS of $0.8N$ at each iteration.
HS uses a stepsize of $\epsilon=0.1$, selected using graduate descent
to give good results. We will show in Section~\ref{sec:Adapting-the-stepsize}
that this value is automatically recovered by our adaptation strategy.

\paragraph{Fixed Budget Comparison}

We investigate the ability of the two algorithms to estimate the log-normalizing
constant for a fixed total budget $N(T+1)=10,000$, where for HS (resp.
WF) $N$ the number of seed particles (resp. number of Markov chain
snippets) and $T$ is the number of Leapfrog steps (resp. length of
the Markov chain snippets). An important point is that both algorithms
have comparable computational costs. Indeed the gradient $\nabla_{x}\log L(x)$
required by the Leapfrog integrator and $\log L(x)$ required for
the evaluation of the weights $\bar{w}_{n,k}$ share the same costly
terms as
\begin{align}
\nabla_{x}\log L(x)=-\sum_{i=1}^{208}\frac{\exp(-y_{i}\cdot\xi_{i}^{\top}x)}{1+\exp(-y_{i}\cdot\xi_{i}^{\top}x)}y_{i}\xi_{i}\,,\label{eq:grad-log-L}
\end{align}
while the Metropolis-Hastings ratio of WF-SMC also requires the evaluation
of $\log L(x)$. 

Fig.~\ref{fig:sonar-normalizing-constants-boxplot} displays box-plots
illustrating the variability and bias of the log-normalizing constant
estimates across $100$ independent runs of the two algorithms. We
observe that for this fixed computational budget, what we interpret
to be bias seems to increase together with variability for WF as the
Markov snippets get shorter, whereas HS produces broadly consistent
estimates for all $(N,T)$ pairs considered. A similar bias in the
WF estimates was observed with three independent code implementations:
two separate implementations by the authors MCE and CZ and the original
implementation available in the Particles package \cite{chopin_github_repo}. 

\begin{figure}
\begin{centering}
\includegraphics[width=0.9\textwidth]{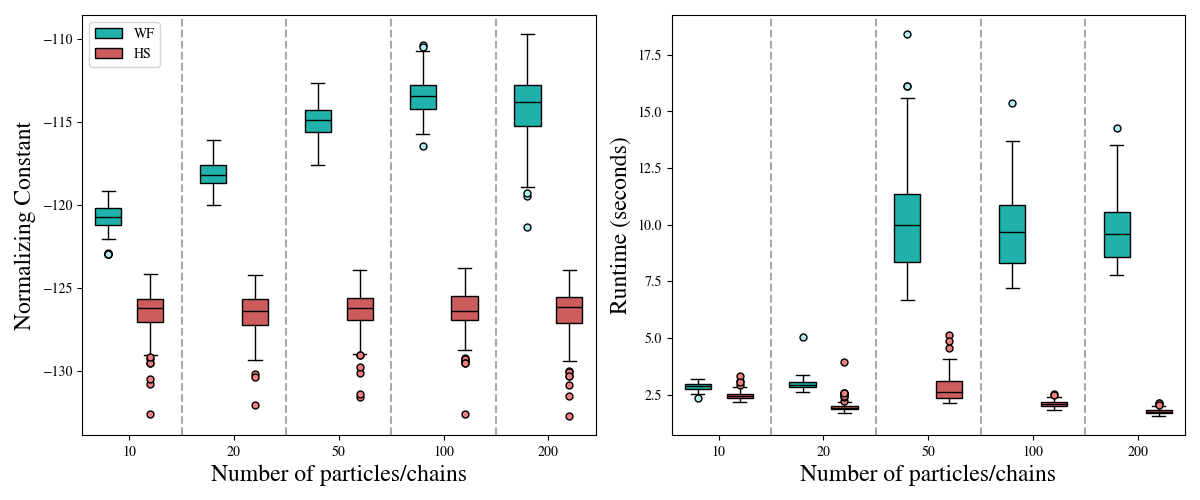}
\par\end{centering}
\centering{}\caption{Left: Variability of the estimates of the log-normalizing constant
for Waste-Free SMC (WF) and Hamiltonian Snippet (HS) for a fixed budget
$N(T+1)=10000$. Right: Variability of runtime in seconds for the
two algorithms. The lower runtime for HS is due to two factors: the
lack of if-else statements to perform the MH accept-reject step, and
no covariance estimation.}
\label{fig:sonar-normalizing-constants-boxplot}
\end{figure}

\subsection{Adapting $\gamma$ and $\epsilon$ for fixed $T$ \protect\label{sec:Adapting-the-stepsize}}

We now focus on the scenario where $T$, i.e. the computational budget
per iteration is fixed. Full adaptation of $(\epsilon,\gamma,T)$
is addressed in Section~\ref{sec:Adaptation-of-tau} and uses the
update of this section.

\subsubsection{Methodology}

\paragraph{A mixture of competing integrator snippets}

In order to define and adaptive SMC able to select suitable stepsizes
we first introduce the following instrumental ``mother'' probability
distribution on which the target sequence $\{\bar{\mu}_{n}=\bar{\mu}({\rm \cdot};\gamma_{n},\theta_{n}),n\geq0\}$
builds on. For a user defined family of probability distributions
for $\epsilon$, $\{\nu_{\theta},\theta\in\Theta\}$ defined on $\big(\mathbb{R}_{+},\mathcal{B}(\mathbb{R}_{+})\big)$,
we consider $\bar{\mu}$ on $\big(\mathsf{Z}\times\mathbb{R}_{+},\mathscr{Z}\otimes\mathcal{B}(\mathbb{R}_{+})\big)$
such that for $(\theta,\gamma)\in\Theta\times[0,1]$ 
\begin{equation}
\bar{\mu}\big(k,{\rm d}(\epsilon,z);\gamma,\theta\big):=\frac{1}{T+1}\mu({\rm d}z;\epsilon,\gamma,k)\nu_{\theta}({\rm d}\epsilon)\,,\label{eq:target-mother-adaptive}
\end{equation}
where $\mu({\rm d}z;\epsilon,\gamma,k):=\mu^{\psi_{\epsilon}^{-k}}({\rm d}z;\gamma)$,
with $\mu(\cdot;\gamma)$ as defined below (\ref{eq:update-gamma}),
and $\psi_{\epsilon}$ is the integrator using $\epsilon>0$ as stepsize.
As we shall see the idea is to let a population of stepsizes $\epsilon$
of distribution $\nu_{\theta}$ compete, for a fitness criterion to
be determined, and adjust this distribution, i.e. $\theta\in\Theta$,
to focus on the most efficient values. As we shall see the conditional
independence structure of $\bar{\mu}$ plays a central rôle at several
stages in our algorithm; additional details are provided in Appendix~\ref{appsubsec:structure-mother-dist}
if needed. First, note that since for any $(k,\epsilon,\gamma)\in\mathbb{N}\times\mathbb{R}_{+}\times[0,1]$
\[
\int f\circ\psi_{\epsilon}^{k}(z)\mu^{\psi_{\epsilon}^{-k}}({\rm d}z;\gamma)=\mu(f;\gamma)\,,
\]
we still retain the property that $\mathbb{E}_{\bar{\mu}(\cdot;\gamma,\theta)}\big(f\circ\psi_{\epsilon}^{K}(Z)\big)=\mu(f;\gamma)=\mathbb{E}_{\bar{\mu}(\cdot;\gamma,\theta)}\{\mathbb{E}_{\bar{\mu}(\cdot;\gamma,\theta)}\big(f\circ\psi_{\epsilon}^{K}(Z)\mid\epsilon,Z\big)\}$,
or more explicitely,
\[
\int\biggl\{\sum_{k=0}^{T}\int f\circ\psi_{\epsilon}^{k}(z)\cdot\frac{{\rm d}\mu^{\psi_{\epsilon}^{-k}}}{{\rm d}\bar{\mu}}(z;\gamma,\epsilon)\bar{\mu}\big({\rm d}z\mid\epsilon;\gamma\big)\biggr\}\nu_{\theta}({\rm d}\epsilon)=\mu(f;\gamma)\,.
\]
Given this property, our goal is therefore to construct the sequence
of target distributions $\{\bar{\mu}_{n}=\bar{\mu}({\rm \cdot};\gamma_{n},\theta_{n}),n\geq0\}$
for some sequence $\{(\gamma_{n},\theta_{n})\in[0,1]\times\Theta,n\geq0\}$
determined sequentially to satisfy specific local criteria. 

To optimize $\gamma$ we use the update in (\ref{eq:update-gamma}),
which departs from usual practice since one would normally consider
the ESS for weighted particles representing $\{\bar{\mu}_{n}=\bar{\mu}({\rm \cdot};\gamma_{n},\theta_{n}),n\geq0\}$,
not $\{\mu_{n},n\geq0\}$. This would however incur a substantial
computational cost since full snippets would have to be estimated
for each value of $\gamma$ used in the iterative algorithm used to
achieve (\ref{eq:update-gamma}).

\paragraph{Performance measure for $\epsilon$}

To optimise $\theta$ we use the straightforward adaptation of Proposition~\ref{prop:variance-reduc-barmu}
to the multivariate scenario and aim to maximise the expected conditional
variance of a user defined function $f\colon\mathsf{Z}\rightarrow\mathbb{R}^{m}$,
for some $m\in\mathbb{N}$, along snippets, that is with $\mathbb{E}_{\theta,\gamma}(\cdot):=\mathbb{E}_{\bar{\mu}(\cdot;\gamma,\theta)}(\cdot)$
we aim to maximise 
\[
\theta\mapsto\chi_{\theta}(\theta;\gamma):=\mathrm{Tr}\left\{ \mathbb{E}_{\gamma,\theta}\left({\rm var}_{\gamma,\theta}\big(\bar{f}(\epsilon,K,Z)\mid\epsilon,Z\big)\right)\right\} \,,
\]
with, now, $\bar{f}(k,\epsilon,z):=f\circ\psi_{\epsilon}^{k}(z)$
-- hereafter we focus on $f(z)=x$ for its intuitive interpretation,
but other choices may be more suitable in particular applications.
This quantity can be thought of as the averaged variance absorbed
by Rao-Blackwellization, interpretable here as Riemannian-like integration
along snippets. With this criterion in hand we can now describe a
new approach to tracking the optimal values of $\theta$. Let $v_{\gamma}(\epsilon,z):=\mathrm{Tr}\big[{\rm var}_{\gamma}\big(\bar{f}\mid\epsilon,z\big)\big]$,
which is indeed independent of $\theta$, leaving $f$ and $T$ implicit
for notational simplicity. Using the structure of $\bar{\mu}$ in
detailed in Appendix~\ref{appsubsec:structure-mother-dist} one obtains
\begin{align}
\chi_{\theta}(\theta;\gamma) & =\int v_{\gamma}(\epsilon,z)\,\bar{\mu}\big({\rm d}(\epsilon,z);\gamma,\theta\big)\nonumber \\
 & =\int v_{\gamma}(\epsilon,z)\,\bar{\mu}\big({\rm d}z\mid\epsilon;\gamma\big)\nu_{\theta}({\rm d}\epsilon)\,.\label{eq:def-chi-theta}
\end{align}
We briefly illustrate the relevance of this measure, that is its ability
to discriminate between good and badly performing stepsizes. To that
purpose we consider the following simple experiment, where we run
Hamiltonian Snippet on the Sonar logistic regression problem with
$N=2500$, $T=30$ and $\nu_{\theta}$ is the uniform distribution
over $18$ log-linearly spaced step sizes between $0.001$ and $10$;
$\theta$ has no use here and is therefore not adapted, but kept for
pure notational consistency. At each iteration of the SMC algorithm,
we estimate $\epsilon\mapsto\bar{v}_{\gamma}(\epsilon):=\int v_{\gamma}(\epsilon,z)\,\bar{\mu}\big({\rm d}z\mid\epsilon;\gamma\big)$
using (\ref{eq:explicit-express-v_epsilon}). These quantities are
displayed in Fig.~\ref{fig:barupsilon_function_of_epsilon} for four
different values of the tempering parameter $\gamma$, as the SMC
sampler progresses. Clearly some values of $\epsilon$ lead to a significantly
larger variance reduction than others and the set of effective values
evolves with $\gamma$, that is as the likelihood function is incorporated
in the sampling. Notice in addition the asymmetry of the criterion
around the mode: too large a stepsize leads to a sudden drop in performance
while for smallerl stepsizes deterioration is smoother. In the former
scenario the integrator diverges from the underlying dynamic while
in the latter the integrator is more precise but does not explore
much of the space around the seed particle. Fig.~\ref{fig:barupsilon_function_iteration}
displays the same criterion but now as a function of $\gamma$ for
all the values of $\epsilon$ considered. We see clearly here that
large values of $\epsilon$ are initially preferred for small values
of $\gamma$ as the target distribution is close to the prior distribution,
a normal density, leading to a highly faithful integrator even for
large stepsize values. Performance then drops sharply as the likelihood
is incorporated and more moderate values of $\epsilon$ perform better.
Interestingly the criterion is also able to discriminate between smaller
values of $\epsilon$ for which the integrator is faithful across
all temperatures $\gamma$, but among which some are more able to
explore the space than others.

We note that \cite{10.1214/13-BA814,buchholz2020adaptivetuninghamiltonianmonte}
have suggested an adaptive SMC relying on HMC which also relies on
a population of stepsizes. However, their procedures focus on tuning
parameters of a standard HMC-MH mutation kernel, not integrator snippets,
and depart from the standard SMC framework by adaptively updating
a population of stepsizes whose distribution is unspecified; we are
not aware of any result justifying this approach.

\begin{figure}
\begin{centering}
\includegraphics[width=0.8\textwidth]{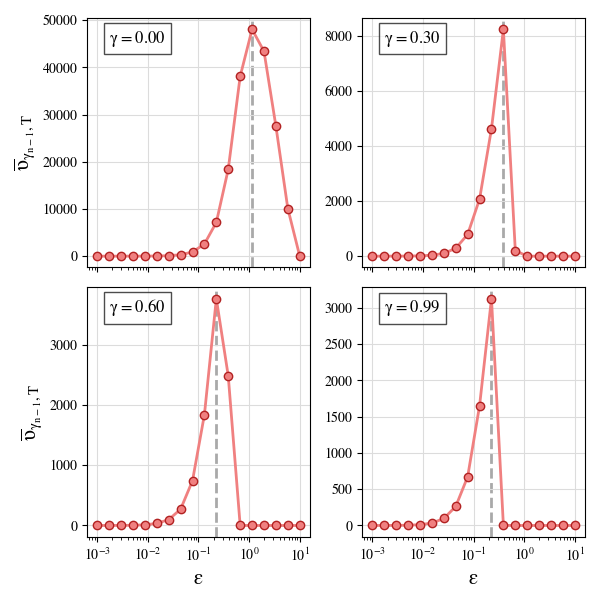}
\par\end{centering}
\caption{Estimates of $\epsilon\protect\mapsto\bar{v}_{\gamma}(\epsilon)=\protect\int v_{\gamma}(\epsilon,z)\,\bar{\mu}\big({\rm d}z\mid\epsilon;\gamma\big)$
for $\gamma=0.0,0.3,0.6,0.99$.}
\label{fig:barupsilon_function_of_epsilon}
\end{figure}

\begin{figure}
\begin{centering}
\includegraphics[width=0.8\textwidth]{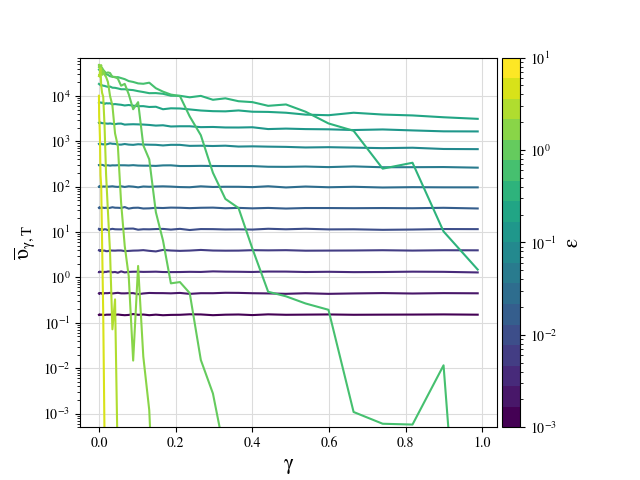}
\par\end{centering}
\caption{$\gamma\protect\mapsto\bar{v}_{\gamma}(\epsilon)$  for a range of
values of $\epsilon$.}
\label{fig:barupsilon_function_iteration}
\end{figure}

\paragraph{Optimising with Bayes' rule}

An outline of the proposed procedure to update $\nu_{\theta}$ is
provided in Alg.~\ref{alg:adapting-gamma-epsilon-outline} where
$\tilde{\mu}\big(\cdot;\gamma,\theta\big)$ is the instrumental skewed
probability distribution given by
\begin{equation}
\tilde{\mu}\big({\rm d}(\epsilon,z);\gamma,\theta\big)\propto v_{\gamma}(\epsilon,z)\nu_{\theta}({\rm d}\epsilon)\bar{\mu}\big({\rm d}z\mid\epsilon;\gamma\big)\,.\label{eq:def-mu-tilde-adaptation}
\end{equation}
In words, at iteration $n\geq1$ one projects $\tilde{\mu}(\cdot;\gamma_{n-1},\theta_{n-1})$,
a distribution designed to give more weights to the better performing
stepsizes, back onto the mother distribution $\{\bar{\mu}\big(\cdot;\gamma_{n-1},\theta\big),\theta\in\Theta\}$
by minimizing the KL divergence $\theta\mapsto{\rm KL}\big(\tilde{\mu}(\cdot;\gamma_{n-1},\theta_{n-1}),\bar{\mu}(\cdot;\gamma_{n-1},\theta)\big)$
to obtain $\theta_{n}$; again this formulation is possible thanks
to the conditional independence structure of $\bar{\mu}$. The intuition
behind this choice is as follows. Given our current best guess of
a distribution $\nu_{\theta_{n-1}}$ leading to a high average performance
function $\theta\mapsto\chi_{\theta}(\theta;\gamma_{n-1})$ one can
consider (and compute in practice) $\bar{\mu}_{n-1}(\cdot)=\bar{\mu}(\cdot;\gamma_{n-1},\theta_{n-1})$
and find a new value $\theta_{n}$ which, a posteriori, would have
led to a better average performance measure. This is achieved by adjusting
$\theta_{n}$ in order to allocate more mass to better performing
values of $\epsilon$. This corresponds to the gradient-free algorithm
recently studied in \cite{andrieu2024gradientfreeoptimizationintegration}
where convergence is established in the scenario where the fitness
function is time invariant and the variance of $\nu_{\theta}$ made
to vanish to ensure convergence to a point mass. In the present scenario
one can only hope to track a sequence of high performing values. 

We illustrate the optimisation strategy on Fig.~\ref{subfig:smoothing_v_with_nu_skewness1}
and \ref{subfig:smoothing_v_with_nu_skewness3} for $\nu_{\theta}$
an inverse Gaussian distribution of skewness $s=1$ and $s=3$ respectively
(see Appendix~\ref{app:inverse-gaussian-calculations} for details).
The blue pixelated plot is a colour-coded histogram of the number
of particles falling in cells of the $(\epsilon,\log\big(v_{\gamma_{n-1}}(\epsilon,z)\big))$
plane, superimposed with $\nu_{\theta_{n-1}}$ (yellow) and $\nu_{\theta_{n}}$
(red), obtained by fitting $\nu_{\theta}$ to $\nu_{\theta_{n-1}}$
skewed with $v_{\gamma_{n-1}}$ as in (\ref{eq:def-mu-tilde-adaptation}).
The parameter $s$ can be thought of as playing a rôle similar to
that of the magnitude of a stepsize in a learning algorithm: a higher
value can lead to faster exploration, but too high a value can also
cause instability. 

\begin{figure}
\centering
\hfill{}\subfloat[Inverse Gaussian with skewness $s=1$]{\includegraphics[width=0.45\textwidth]{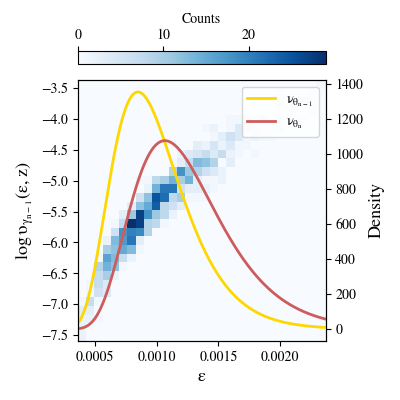}\label{subfig:smoothing_v_with_nu_skewness1}}\hfill{}\subfloat[Inverse Gaussian with skewness $s=3$]{\includegraphics[width=0.45\textwidth]{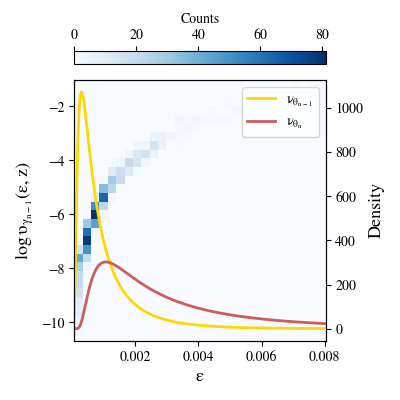}\label{subfig:smoothing_v_with_nu_skewness3}}\hfill{}

\caption{Illustration of the optimisation strategy.}
\label{fig:smoothign_v_with_nu}

\end{figure}

\LinesNumbered

\begin{algorithm}
For $n\geq0$, given the distribution $\bar{\mu}_{n}(\cdot)=\bar{\mu}(\cdot;\gamma_{n},\theta_{n})$
with $\bar{\mu}$ in (\ref{eq:target-mother-adaptive}).

Set $\gamma_{n+1}=\sup\{\gamma\colon\gamma_{n}\leq\gamma\leq1,\chi_{\gamma,n+1}(\gamma)\leq\bar{\chi}_{\gamma,n}\}\wedge1\,,$
with $\chi_{\gamma,n+1}$ in (\ref{eq:chi-gamma-only}).

Set 
\[
\theta_{n+1}={\rm \arg\min_{\theta}{\rm KL}}\big(\tilde{\mu}(\cdot;\gamma_{n},\theta_{n}),\bar{\mu}(\cdot;\gamma_{n},\theta)\big)\,.
\]

Set $\bar{\mu}_{n+1}(\cdot)=\bar{\mu}(\cdot;\gamma_{n+1},\theta_{n+1})$

\caption{Adapting $\epsilon$ and $\gamma$: sequential definition of $\{\bar{\mu}_{n}=\bar{\mu}\big({\rm d}(\epsilon,z){\rm \cdot};\gamma_{n},\theta_{n}\big),n\protect\geq0\}$ }
\label{alg:adapting-gamma-epsilon-outline}
\end{algorithm}
Implementational details are provided in Appendix~\ref{app-sec:details-imp-adapt-epsilon}.

\subsubsection{Simulations}

\paragraph{Logistic regression for the Sonar dataset}

We return to the Logistic Regression problem considered in Section~\ref{subsec:Adapting-gamma-only}.
In order to test performance of our adaptive algorithm, we set up
the following experiment. We fix $T=30$, $N=500$ and select $9$
values of stepsizes log-linearly spaced between $0.001$ and $10.0$
used to determine the initial stepsizes provided by the user to the
algorithm. For each of these values, we run $20$ independent runs
of two versions of our integrator snippet. One version, non-adaptive,
uses one of the $9$ values as unique stepsize for all particles and
keeps it fixed throughout the entire run. The second version, adaptive,
uses the same value as the initial mean of $\nu_{\theta}$, again
taken to be an inverse Gaussian with fixed skewness (see Appendix~\ref{app:inverse-gaussian-calculations})
whose parameter is then adapted throughout the algorithm; for the
inverse Gaussian this means that $\theta_{0}$ is set to those predefined
values. 

The results of this experiment are shown in Fig.~\ref{fig:boxplot_adaptive_vs_fixed_sonar_std_prop_mean}.
In the top pane we compare the box plots of the estimates of the log-normalising
constant for both algorithms, across all values of $\epsilon$ on
the x-axis. Remarkably, our adaptive algorithm is able to maintain
stable estimates over a range of initial values of $\epsilon$ that
spans five orders of magnitude. In contrast, the non-adaptive algorithm
presents precise estimates for $\epsilon=0.1,0.3$ but rapidly accumulates
bias and variance. The bottom pane displays the boxplots for the adaptive
algorithm only in order for the stability of the results to be better
appreciated. We notice that compared to Fig.~\ref{fig:sonar-normalizing-constants-boxplot},
we have accumulated a small bias but have gained remarkable robustness
with respect to our initial guess of suitable stepsize magnitudes.

\begin{figure}
\centering
\includegraphics[width=0.9\textwidth]{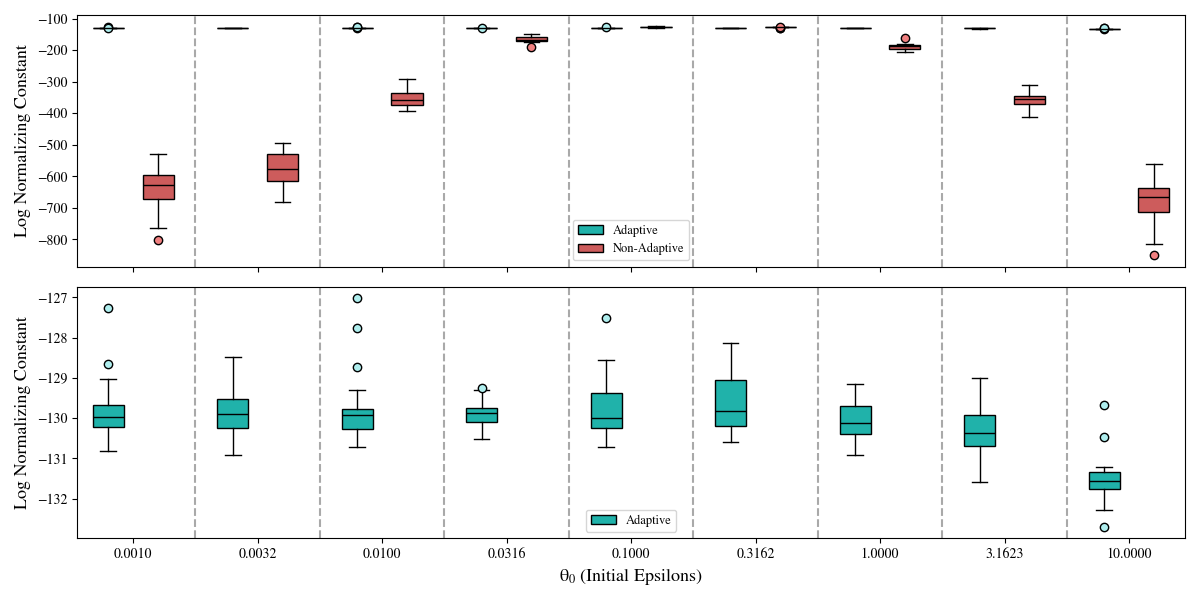}

\caption{Boxplots of the estimates of the log normalizing constant as a function
of different initial $\epsilon$ values provided by the user. Green
boxplots show the results for an adaptive integrator snippet, red
boxplots show estimates for the non-adaptive version.}
\label{fig:boxplot_adaptive_vs_fixed_sonar_std_prop_mean}
\end{figure}

Fig.~\ref{fig:final_eps_mean_sonar_adaptive_only} displays boxplots
of the final mean of $\nu_{\theta}$ for the adaptive algorithm. Despite
the wide range of initial stepsizes considered, the final means found
by the algorithm are exceptionally stable and settle around $0.175$.
Fig.~\ref{eq:def-mu-tilde-adaptation} displays boxplots representing
$\nu_{\theta}$ for different initializations of $\theta_{0}$, its
mean parameter. We emphasize here that the y-axis uses a log scale
and that our adaptive algorithm is able to recover the optimal step
size value across $9$ orders of magnitude. We note however, that
for the most extreme initialisation values, the normalising constant
estimates, whose computations relies on quantities computed at every
SMC iteration, were observed to be heavily biased and variable due
to bad performance in the early stages; these results are not reported
here.

\begin{figure}
\centering
\includegraphics[width=0.9\textwidth]{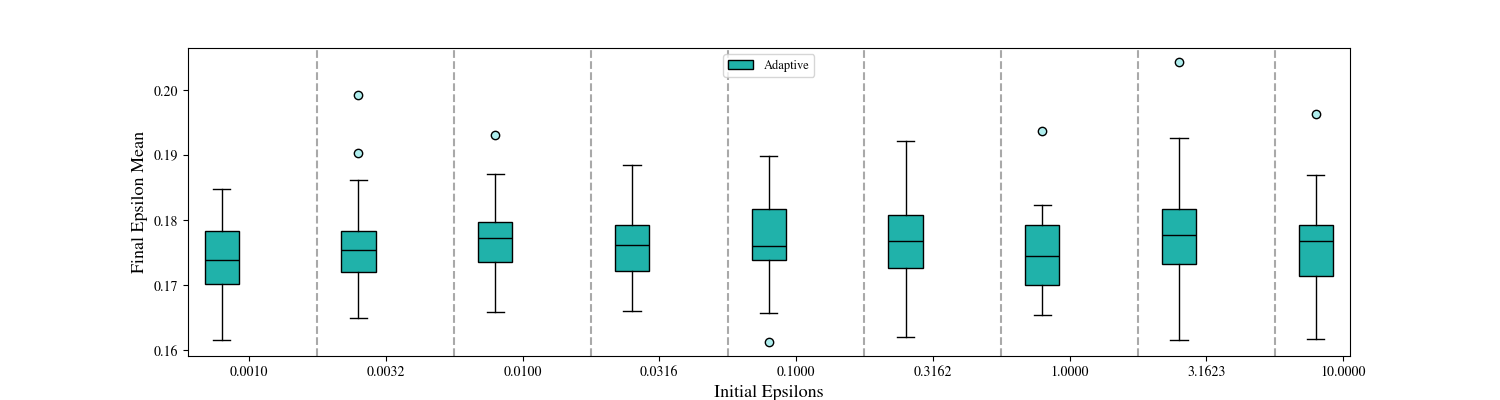}

\caption{Boxplots of the mean of the distribution $\nu_{\theta}$ on the final
iteration of the adaptive integrator snippet.}
\label{fig:final_eps_mean_sonar_adaptive_only}

\end{figure}

\begin{figure}
\centering
\includegraphics[width=0.9\textwidth]{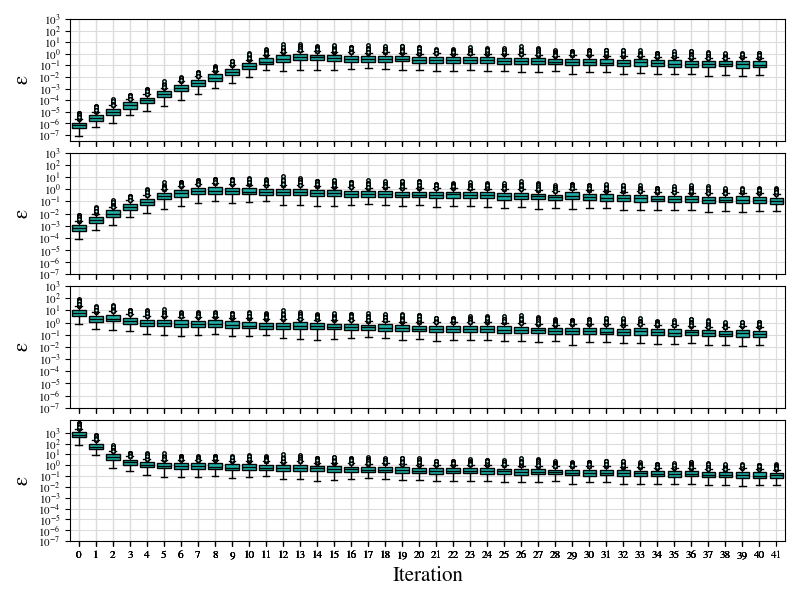}

\caption{Evolution of $\nu_{\theta}$ for different initializations of the
initial epsilon mean $\theta_{0}=10^{-6},10^{-3},10,10^{3}$}
\label{fig:distribution_epsilons_sonar}
\end{figure}

\paragraph{Log cox model for the Finnish Pines dataset}

In this paragraph we repeat the experiments above on a Log-Gaussian
Cox model for the the Finnish Pines dataset considered in \cite{girolami2011},
\cite{buchholz2020adaptivetuninghamiltonianmonte} and \cite{scaling_limits_cox_model}.
In the Leapfrog integrator, we use the following metric tensor
\begin{align*}
M_{\gamma}=\gamma\Lambda+\Sigma^{-1},\qquad\Lambda_{ii}=(1/d_{\text{grid}}^{2})\exp(\mu+\Sigma_{ii})
\end{align*}
where $\mu$ is the (scalar) mean of the Gaussian process, $\Sigma$
is the prior covariance matrix and $d_{\text{grid}}$ is the dimensionality
of our grid discretization. Fig.~\ref{fig:boxplot_adaptive_vs_fixed_logcox}
shows the estimates of the log normalizing constants for our adaptive
algorithm, using $N=500$, $T=30$ and a fixed skewness of $s=3$
(so that the standard deviation of $\nu_{\theta}$ is equal to its
mean parameter). We can see that estimates for our adaptive algorithm
are relatively stable for a wide range of initial mean parameters,
with performance decreasing only for $\theta_{0}=3.1623,10$. As for
the Sonar problem, Fig.~\ref{fig:boxplots_final_eps_mean_new_logcox}
shows that we consistently reach a value around $0.19$ for the final
mean parameter $\theta_{P}$, which is very close to the value of
$0.17$ set manually in \cite{heng2018unbiasedhamiltonianmontecarlo},
see \cite{unbiasedhmc_github_repo}. Furthermore, we briefly remark
that we have found the ESS criterion (of $\bar{\mu}_{n}$) to be a
highly unreliable metric for quantifying the performance of the algorithms,
as shown in Fig.~\ref{fig:boxplots_final_ess_logcox}, as according
to that metric $\theta_{0}=0.0032,0.01,0.0316$ would be the best
step sizes for our non-adaptive Hamiltonian Snippet, but this is in
stark contrast with Fig.~\ref{fig:boxplot_adaptive_vs_fixed_logcox}
where we see that the corresponding log normalizing constant estimates
are way off. Finally, Fig.\ref{fig:distribution_epsilons_over_iterations_1em6_1em3_10_100_new_logcox}
displays the distribution $\nu_{\theta_{n}}$ as a function $n\in\llbracket P\rrbracket$
for four values of $\theta_{0}$ spanning $10$ orders of magnitude.
All four plots present a clear progression towards the optimal value
and no oscillatory behavior, highlighting remarkable robustness of
this adaptation approach.

\begin{figure}
\centering
\includegraphics[width=0.8\textwidth]{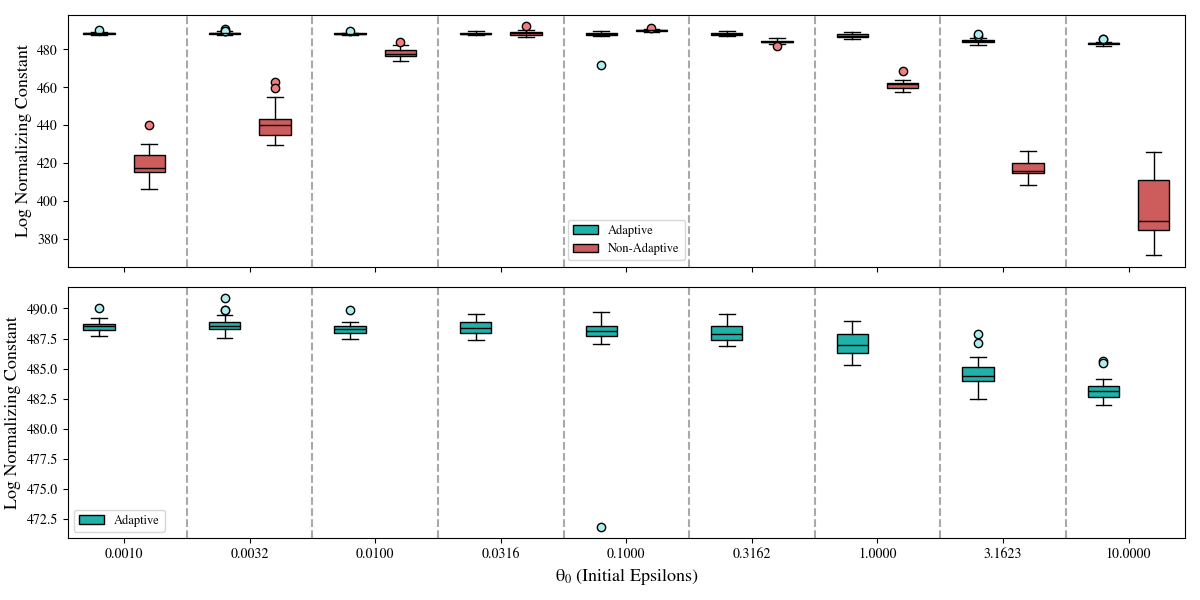}\caption{Boxplots of the estimates of the log normalizing constant for the
Log-Gaussian Cox problem as a function of different initial epsilons
provided by the user. Teal boxplots show the results for an adaptive
Hamiltonian snippet (with $\nu_{\theta}$ an Inverse Gaussian distribution
with fixed skewness $s=3$), red boxplots show estimates for the non-adaptive
version.}

\label{fig:boxplot_adaptive_vs_fixed_logcox}
\end{figure}

\begin{figure}
\centering
\includegraphics[width=0.8\textwidth]{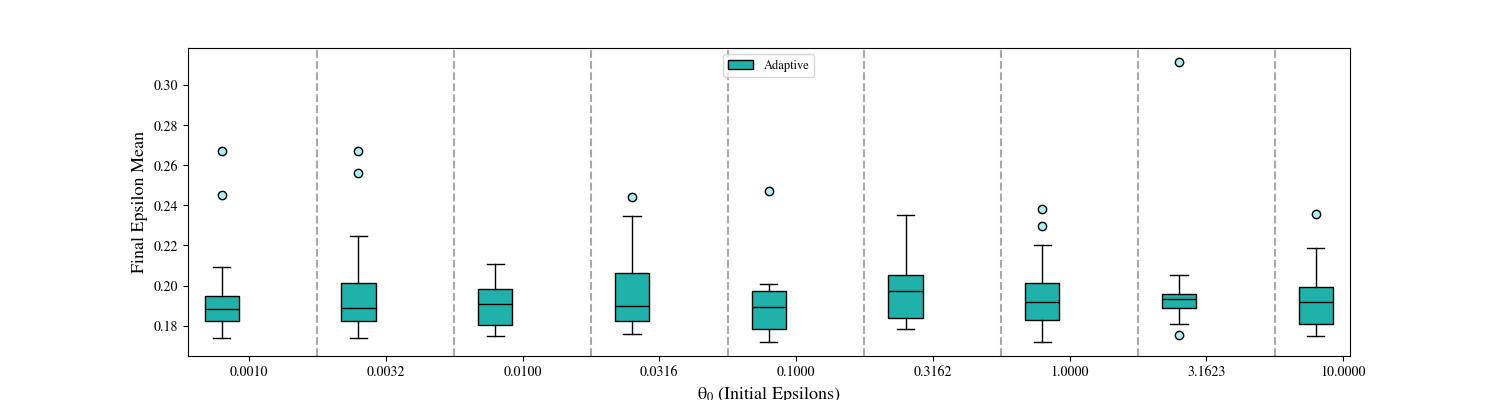}\caption{Boxplots of $\theta_{P}$, the mean parameter of the distribution
$\nu_{\theta}$ on the final iteration of the adaptive integrator
snippet, for the Log-Gaussian Cox problem.}

\label{fig:boxplots_final_eps_mean_new_logcox}
\end{figure}

\begin{figure}
\centering
\includegraphics[width=0.8\textwidth]{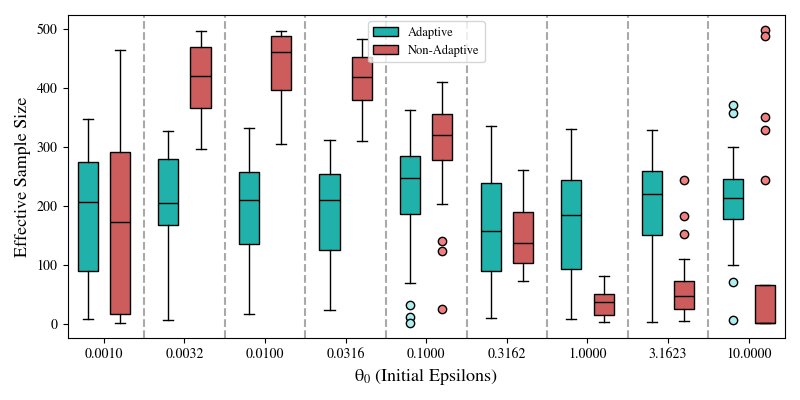}\caption{Boxplots of the final ESS for an adaptive and non-adaptive Hamiltonian
Snippet.}
\label{fig:boxplots_final_ess_logcox}
\end{figure}

\begin{figure}
\centering
\includegraphics[width=0.8\textwidth]{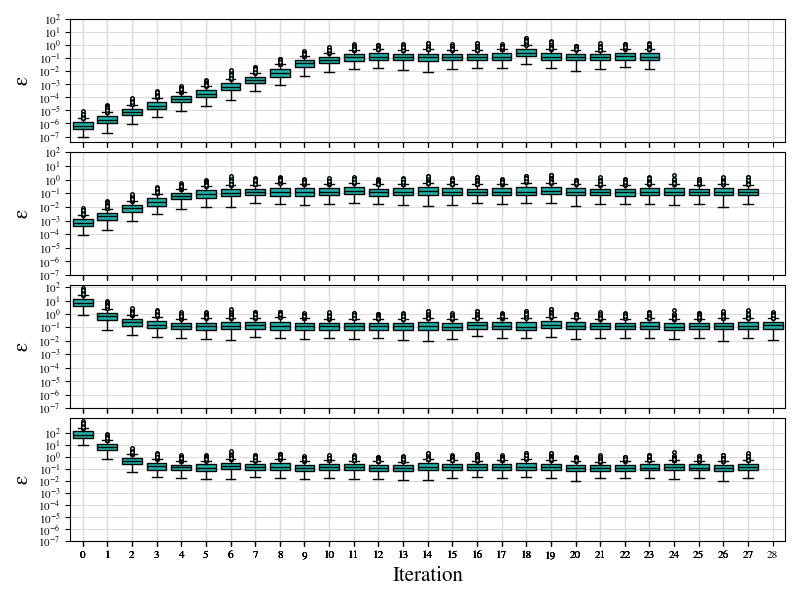}\caption{Boxplots of the estimates of the log normalizing constant for the
Log-Gaussian Cox problem as a function of different initial epsilons
provided by the user. Teal boxplots show the results for an adaptive
Hamiltonian snippet (with $\nu_{\theta}$ an Inverse Gaussian distribution
with fixed skewness $s=3$), red boxplots show estimates for the non-adaptive
version.}
\label{fig:distribution_epsilons_over_iterations_1em6_1em3_10_100_new_logcox}
\end{figure}

\subsection{Full adaptation \protect\label{sec:Adaptation-of-tau}}

Adapting $T$ offers the promise of reduced computations on an intrinsically
sequential step of the procedure, and hence not parallelisable. Conceptually
our focus is on determining an efficient effective integration time
$\tau$ associated with Hamilton's flow since it is independent of
$(\epsilon,T)$, and can therefore be estimated using any such available
pair.

\subsubsection{Methodology}

Our proposed tuning of $\epsilon$ for fixed $T$ relies on a variance
control argument. To choose $\tau$ we aim to optimise ergodicity
of the mutation kernel $\bar{M}_{n}$ in (\ref{eq:def-bar-M}), that
is its ability to forget initialisation. To that purpose our strategy
relies on a coupling argument which can be motivated by the theoretical
notion of Ricci curvature, which we review briefly. 

\paragraph{Motivation}

Consider the Wasserstein distance between probability distributions
$\upsilon_{1}$ and $\upsilon_{2}$ on $\mathsf{X}$,
\begin{equation}
W_{1}(\upsilon_{1},\upsilon_{2})=\min_{\upsilon\in\mathsf{C}(\upsilon_{1},\upsilon_{2})}\int\|y^{(1)}-y^{(2)}\|\upsilon\big({\rm d}(y^{(1)},y^{(2)})\big)\,,\label{eq:def-wasserstein}
\end{equation}
where $\mathsf{C}(\upsilon_{1},\upsilon_{2})$ is the set of couplings
of $(\upsilon_{1},\upsilon_{2})$, that is the set of distributions
$\upsilon$ of marginals $\upsilon_{1}$ and $\upsilon_{2}$. Given
a Markov kernel $Y\sim P(x,\cdot)$ its Ricci curvature is defined
as $\mathrm{Ric}(P)=1-\kappa$ where
\[
\kappa:=\sup_{x^{(1)}\neq x^{(2)}}\frac{W_{1}\big(P(x^{(1)},\cdot),P(x^{(2)},\cdot)\big)}{\|x^{(1)}-x^{(2)}\|}\,,
\]
see \cite{OLLIVIER2009810,ollivier2010survey} and also earlier ideas
\cite{Sunyach1975}, related thanks to Kantorovich-Rubinstein duality.
This can be thought of as a uniform measure of overlap between the
distributions $P(x^{(1)},\cdot)$ and $P(x^{(2)},\cdot)$ or a measure
of independence on $x^{(1)}$ and $x^{(2)}$ after one step of the
Markov chain. It is particularly well suited to studying Markov chains
defined via random mappings $Y^{(i)}=\varphi(x^{(i)},U^{(i)})$ for
random variables $U^{(i)}$, as suggested by the definition (\ref{eq:def-wasserstein}).
These ideas have been used to study HMC-MCMC \cite{chen2022optimal}
using coupling techniques; see \cite{benveniste2012adaptive} for
applications using an alternative formulation.

\paragraph{Coupling SMC particles}

Our criterion to estimate the effective integration time at SMC iteration
$n\geq1$ (see Alg.~\ref{alg:Folded-PDMP-SMC-1}) from $T_{n}$,
shared by all particles, and $\{\big(\epsilon{}_{n}^{(i)},z_{n}^{(i)}\big),i\in\llbracket N\rrbracket\}$
is motivated by the mapping contractivity ideas above and takes advantage
of the population of samples provided by SMC samplers. Full details
of the procedure are given in Alg.~\ref{alg:particle-coupling-T}
and we provide additional comments here. We describe a procedure based
on $\{\big(\epsilon{}_{n}^{(i)},z_{n}^{(i)}\big),i\in\llbracket N\rrbracket\}$
rather than $\{\big(\check{\epsilon}{}_{n}^{(i)},\check{z}_{n}^{(i)}\big),i\in\llbracket N\rrbracket\}$
as the distribution of these points turns out to be of limited importance
when interest is in a proxy for $\kappa$. We create $M\in\mathbb{N}$
pairs of distinct particles chosen at random and couple them by assigning
each pair with either of their velocities and stepsizes, say $\{(v_{j},\epsilon_{j}),j\in\llbracket M\rrbracket\}$,
as their common corresponding quantities; more sophisticated couplings
are possible but we have found this simple choice to work well for
our purpose. We then compute the local contraction coefficients $\{\kappa_{j,m},(j,m)\in\llbracket M\rrbracket\times\llbracket0,T_{n}\rrbracket\}$
as in (\ref{eq:local-contraction-coefficients}) in Alg.~\ref{alg:particle-coupling-T}
and the corresponding effective integration times $\{\tau_{j,m}=m\times\epsilon_{j},(j,m)\in\llbracket M\rrbracket\times\llbracket0,T_{n}\rrbracket\}$.
We note the following points. We here ignore the weights involved
in the definition of $\bar{M}_{n+1}$, and implicitly replace them
with uniform weights instead, for simplicity. This has nonetheless
led to satisfactory results and we have not tested a weighted version
of our work. In addition we conjecture that our choice has the additional
advantage that it further decouples the choice of $\tau$ from the
quality of the integrator, which is handled by the separate adaptation
of $\epsilon$. Naturally half of the snippets involved in the computation
of $\{\kappa_{j,m},(j,m)\in\llbracket M\rrbracket\times\llbracket0,T_{n}\rrbracket\}$
can be recycled and used in the next SMC iteration by ignoring either
of the coupled particles.

A direct approach to upperbounding $\kappa$ could consist of using,
for $k\in\llbracket0,T_{n}\rrbracket$ and $\Psi_{\epsilon,x}(z,\cdot):=\delta_{\psi_{\epsilon,x}(z)}(\cdot)$,
a Nadaraya-Watson estimator (or any nonlinear regression technique)
of 
\[
(\epsilon,k,x^{(1)},x^{(2)})\mapsto\kappa\big(\epsilon,k,x^{(1)},x^{(2)}\big):=\frac{\int\|\Psi_{\epsilon,x}^{k}(x^{(1)},v;\cdot),\Psi_{\epsilon,x}^{k}(x^{(2)},v;\cdot)\|\varpi({\rm d}v)}{\|x^{(1)}-x^{(2)}\|}\,,
\]
and aim to minimise, $(\epsilon,k)\mapsto\sup_{x^{(1)},x^{(2)}}\kappa\big(\epsilon,k,x^{(1)},x^{(2)}\big)$
on the set of available values of $k,\epsilon_{n}^{(i)}$ and $x_{n}^{(i_{j,1})},x_{n}^{(i_{j,2})}$.
This appears involved and potentially unreliable. Further, for a given
$\pi_{n}$, there is no reason for the leapfrog integrator to define
a uniformly contracting mapping for any $(k,\epsilon)$ combination,
although this is the case in the strongly convex, $L-$smooth scenario
\cite{chen2022optimal}.

\begin{algorithm}
Given $T_{n}$ and $\{\big(\epsilon_{n}^{(i)},z_{n}^{(i)}\big),i\in\llbracket N\rrbracket\}$
as in Alg.~\ref{alg:Folded-PDMP-SMC-1}.

\For{$j\in\llbracket M\rrbracket$}{

Draw $(i_{j,1},i_{j,2})\sim\mathcal{U}\big(\llbracket N\rrbracket^{2}\big)$
subject to $x_{n}^{(i_{j,1})}\neq x_{n}^{(i_{j,2})}$.

Define the instrumental particle pairs, for $v_{j}=v_{n}^{(i_{j,1})}$,
$\epsilon_{j}=\epsilon_{n}^{(i_{j,1})}$
\begin{align*}
\Big(\mathfrak{z}_{1,j} & =\big(\epsilon_{j},x_{n}^{(i_{j,1})},v_{j}\big),\mathfrak{z}_{2,j}=\big(\epsilon_{j},x_{n}^{(i_{j,2})},v_{j}\big)\Big)\,.
\end{align*}

\For{$m\in\llbracket0,T_{n}\rrbracket$}{

Compute $\tau_{j,m}=m\times\epsilon_{j}$ and the local contraction
coefficients
\begin{equation}
\kappa_{j,m}=m^{-1}\sum_{k=0}^{m}\|\psi_{n+1,x}^{k}(\mathfrak{z}_{1,j})-\psi_{n+1,x}^{k}(\mathfrak{z}_{2,j})\|/\|x_{n}^{(i_{j,1})}-x_{n}^{(i_{j,2})}\|\label{eq:local-contraction-coefficients}
\end{equation}
 } }

Define a grid of bin centres $\check{\tau}_{0}=-\check{\tau}_{1},\check{\tau}_{1}<\cdots<\check{\tau}_{b}<\check{\tau}_{b+1}=\check{\tau}_{b}>\max\{\tau_{j,m}\}$.

\For{$i\in\llbracket b\rrbracket$}{ Compute
\begin{equation}
\bar{\kappa}_{i}:=\frac{\sum_{(j,m)\in\llbracket M\rrbracket\times\llbracket0,T_{n}\rrbracket}\kappa_{j,m}\mathbf{1}\left\{ (\check{\tau}_{i-1}+\check{\tau}_{i})/2\leq\tau_{j,m}<(\check{\tau}_{i}+\check{\tau}_{i+1})/2\right\} }{\sum_{(j,m)\in\llbracket M\rrbracket\times\llbracket0,T_{n}\rrbracket}\mathbf{1}\left\{ (\check{\tau}_{i-1}+\check{\tau}_{i})/2\leq\tau_{j,m}<(\check{\tau}_{i}+\check{\tau}_{i+1})/2\right\} }\,\label{eq:bin-average-contractions}
\end{equation}

}

Set
\begin{align}
\tau_{n} & =\min\{\check{\tau}_{i}\colon i\in\arg\min\{\bar{\kappa}_{j}\colon j\in\llbracket b\rrbracket\}\,,\nonumber \\
T_{n+1} & =T_{{\rm max}}\wedge\lceil\tau_{n}/{\rm median}\big(\{\epsilon_{n}^{(j)},j\in\llbracket N\rrbracket\}\big)\rceil\,.\label{eq:tau-and-T-empirical}
\end{align}

\caption{Estimating $\tau_{n}$ and choosing $T_{n+1}$ }
\label{alg:particle-coupling-T}

\end{algorithm}

\paragraph{Binned Average of the Integration Times}

Our approach is as follows. Given the pairs $\{(\tau_{j,m},\kappa_{j,m}),(j,m)\in\llbracket M\rrbracket\times\llbracket0,T_{n}\rrbracket\}$
we would like to solve the regression problem $\tau\mapsto\kappa(\tau)$
and set 
\begin{equation}
\tau_{n}=\min\{\tau>0\colon\tau\in\arg\min\{\kappa(\tau')\colon\tau'>0\}\}\,,\label{eq:tau-n-ensemble}
\end{equation}
that is set $\tau_{n}$ to the smallest effective integration time
minimising $\kappa(\tau)$. As illustrated in Fig.~\ref{fig:contractivity_epsinit0dot001_third_iteration}-\ref{fig:contractivity_epsinit0dot1_third_iteration},
where curves $m\mapsto(\tau_{j,m},\kappa_{j,m})$ are displayed for
an example and four SMC iterations, the data is unbalanced in that
pairs with larger effective integration times $\tau_{j,m}$ are less
frequent. There is a multitude of estimation techniques for this type
of situations. Here we have focussed on a simple minded approach,
which has proved sufficiently robust for our purpose. Define a grid
of $b$ bin centres $\check{\tau}_{0}=-\check{\tau}_{1},\check{\tau}_{1}<\cdots<\check{\tau}_{b}<\check{\tau}_{b+1}=\check{\tau}_{b}>\max\{\tau_{j,m}\}$,
we then compute for $i\in\llbracket b\rrbracket$ the bin average
contraction coefficients $\bar{\kappa}_{i}$ in (\ref{eq:bin-average-contractions}).
We then set $\tau_{n}$ to the empirical version of (\ref{eq:tau-n-ensemble})
in (\ref{eq:tau-and-T-empirical}). Finally this now needs to be translated
into the number of integration steps $T_{n}$. Here the strategy consists
of choosing $T_{n+1}$ in such a way that the distance $|T_{n+1}\epsilon_{n}^{(i)}-\tau_{n}|$
is minimised in a statistical sense. A possible choice, presented
here when $N\rightarrow\infty$, is 
\[
\arg\min_{T>0}\int\left|T\epsilon-\tau_{n}\right|\nu_{\theta_{n}}\big({\rm d}\epsilon\big)\,,
\]
which is achieved for $\tau_{n}/T={\rm median}(\nu_{\theta_{n}})$,
therefore leading to a $T_{n+1}=\lceil\tau_{n}/{\rm median}(\nu_{\theta_{n}})\rceil$
for iteration $n+1$ of the SMC algorithm. The empirical version given
in (\ref{eq:tau-and-T-empirical}). In order to avoid incurring unecessary
high costs when the initial effective integration time provided by
the user is much smaller than the optimal one, we clip this value
$T_{n+1}$ to the maximum value $T_{\text{max}}$, corresponding to
the user's maximum budget.

\subsubsection{Simulations}

We ran experiments adapting both $\epsilon$ and $T$ on the Sonar
example of Section~\ref{subsec:Adapting-gamma-only}. Fig.~\ref{fig:contractivity_epsinit0dot001_third_iteration}
shows the contractivity curves as a function of the effective integration
time, as well as the binned average and the resulting minimum $\tau_{n}$
for the third iteration of a Hamiltonian Snippet initialized with
$N=500$, $T=100$, $\theta_{0}=0.001$ and skewness $3$. The left
pane demonstrates how the binned average is very effective at locating
the minimum, even when this minimum is reached only by a handful of
trajectories (see the histogram on the right hand side). Fig.~\ref{fig:contractivity_epsinit0dot001_30th_iteration}
shows the same curves at the $30^{\text{th}}$ iteration when much
more data has been assimilated through tempering and thus the target
is harder to navigate. Importantly, we see that even though some trajectories
do not contract, and even blow up, our binned average approach remains
stable and is still able to choose $\tau$ appropriately.

Due to estimation errors the procedure may mistakenly not choose the
first encountered trough, as shown in Fig.~\ref{fig:contractivity_epsinit0dot1_first_iteration}
where the initial $\theta_{0}$ was set to a large value. However,
as shown in Fig.~\ref{fig:contractivity_epsinit0dot1_second_iteration},
at the next iteration the algorithm has already identified the second
trough and by the third iteration in Fig.~\ref{fig:contractivity_epsinit0dot1_third_iteration}
has fully recovered.

Fig.~\ref{fig:boxplots_final_taus_N500_T80_skewness3_runs20} shows
the final values of $\tau$ obtained from $20$ runs of our adaptive
algorithm starting from $T_{0}=80$. Fig.~\ref{fig:evolution_of_T_log_scale}
shows the evolution of the optimal $T_{n}$ as a function of the tempering
parameter. Since the optimal stepsize is around $0.18$ we can see
that smaller stepsizes keep $T_{n}$ to its maximum value for a longer
time than larger stepsizes.

The strong regularity of the contractivity curves may seem surprising,
if not suspicious. However one can establish that the likelihood function
of a logistic regression model is $m-$strongly convex and $L-$smooth
\cite[Example 33]{10.1214/24-AAP2058}, the scenario covered by \cite{chen2022optimal};
this may be part of the explanation. We have observed similar regularity
on other models and are currently investigating this point separately,
included more localized adaptation of $T_{n}$.

\begin{figure}
\centering
\includegraphics[width=0.9\textwidth]{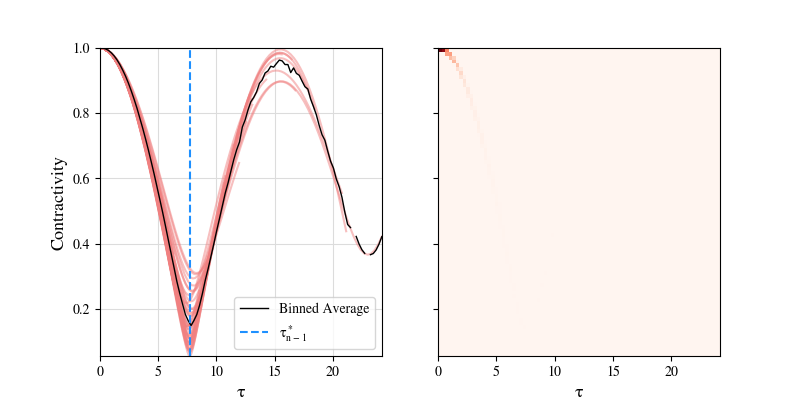}\caption{Left: Contractivity curves $m\protect\mapsto\kappa_{j,m}$ for all
$M=N/2$ pairs of coupled trajectories (red) binned averages $m\protect\mapsto\bar{\kappa}_{m}$
(black) and the minimum, $\tau_{n}$. Right: 2D histogram illustrating
the number of snippets achieving a given contraction level at time
$\tau$. SMC iteration 3, $\theta_{0}=0.001$.}

\label{fig:contractivity_epsinit0dot001_third_iteration}
\end{figure}

\begin{figure}
\centering
\includegraphics[width=0.9\textwidth]{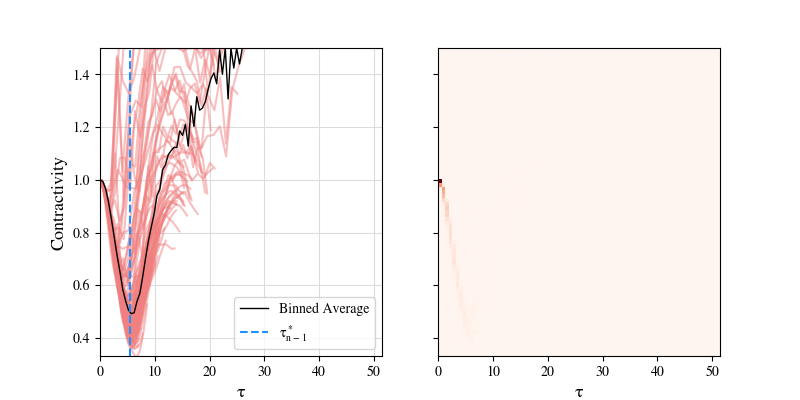}

\caption{Left: Contractivity curves $m\protect\mapsto\kappa_{j,m}$ for all
$M=N/2$ pairs of coupled trajectories (red) binned averages $m\protect\mapsto\bar{\kappa}_{m}$
(black) and the minimum, $\tau_{n}$. Right: 2D histogram illustrating
the number of snippets achieving a given contraction level at time
$\tau$. SMC iteration 30, $\theta_{0}=0.001$.}

\label{fig:contractivity_epsinit0dot001_30th_iteration}
\end{figure}

\begin{figure}
\centering
\includegraphics[width=0.9\textwidth]{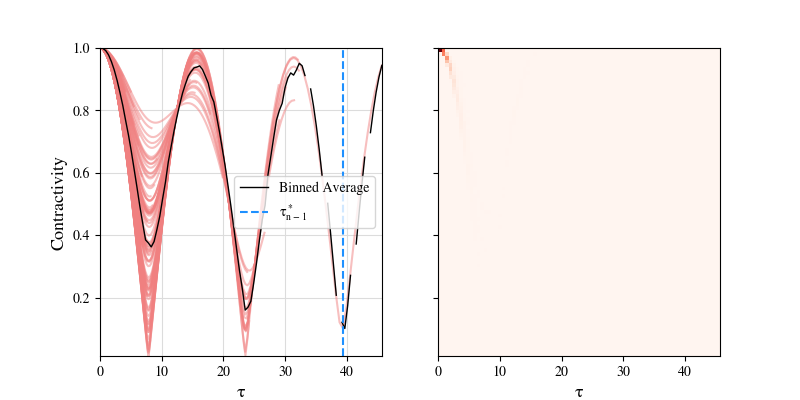}

\caption{Left: Contractivity curves $m\protect\mapsto\kappa_{j,m}$ for all
$M=N/2$ pairs of coupled trajectories (red) binned averages $m\protect\mapsto\bar{\kappa}_{m}$
(black) and the minimum, $\tau_{n}$. Right: 2D histogram illustrating
the number of snippets achieving a given contraction level at time
$\tau$. We can see that due to noise the binned average resulted
lower on the third trough whereas visually we can notice that the
first trough reaches a smaller contractivity. SMC iteration 1, $\theta_{0}=0.1$.}

\label{fig:contractivity_epsinit0dot1_first_iteration}
\end{figure}

\begin{figure}
\centering
\includegraphics[width=0.9\textwidth]{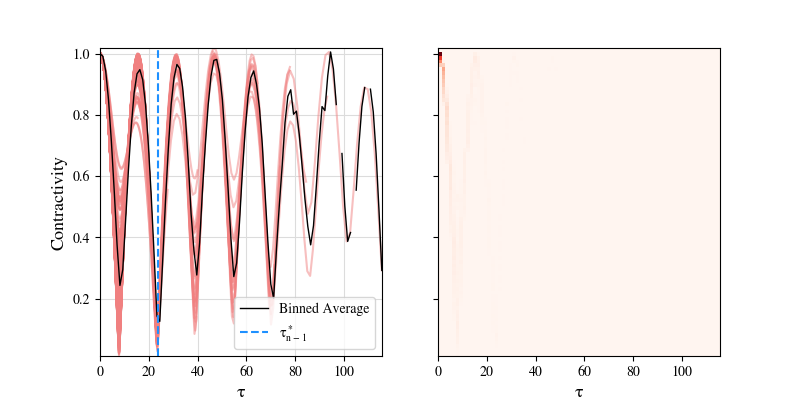}

\caption{Left: Contractivity curves $m\protect\mapsto\kappa_{j,m}$ for all
$M=N/2$ pairs of coupled trajectories (red) binned averages $m\protect\mapsto\bar{\kappa}_{m}$
(black) and the minimum, $\tau_{n}$. Right: 2D histogram illustrating
the number of snippets achieving a given contraction level at time
$\tau$. At the second iteration it has identified the second trough.
SMC iteration 2, $\theta_{0}=0.1$.}

\label{fig:contractivity_epsinit0dot1_second_iteration}
\end{figure}

\begin{figure}
\centering
\includegraphics[width=0.9\textwidth]{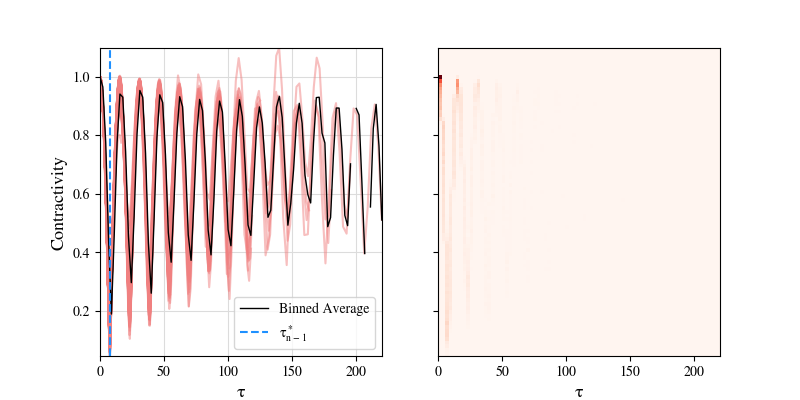}

\caption{Left: Contractivity curves $m\protect\mapsto\kappa_{j,m}$ for all
$M=N/2$ pairs of coupled trajectories (red) binned averages $m\protect\mapsto\bar{\kappa}_{m}$
(black) and the minimum, $\tau_{n}$. Right: 2D histogram illustrating
the number of snippets achieving a given contraction level at time
$\tau$. At the third iteration it has identified the first trough.
SMC iteration 3, $\theta_{0}=0.1$.}

\label{fig:contractivity_epsinit0dot1_third_iteration}
\end{figure}

\begin{figure}
\centering
\includegraphics[width=0.9\textwidth]{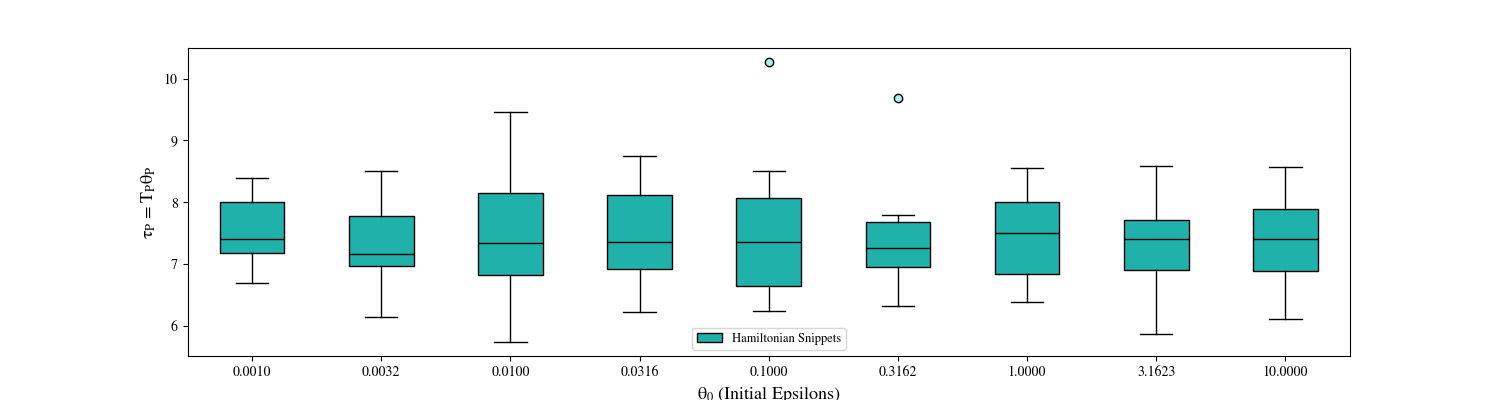}\caption{Boxplots of final integration time $\tau_{P}=T_{P}\times\theta_{P}$
over $20$ runs of our adaptive Hamiltonian Snippet, for different
initial values of $\theta_{0}$, the mean of $\nu_{\theta}$.}

\label{fig:boxplots_final_taus_N500_T80_skewness3_runs20}

\end{figure}

\begin{figure}
\centering
\includegraphics[width=0.9\textwidth]{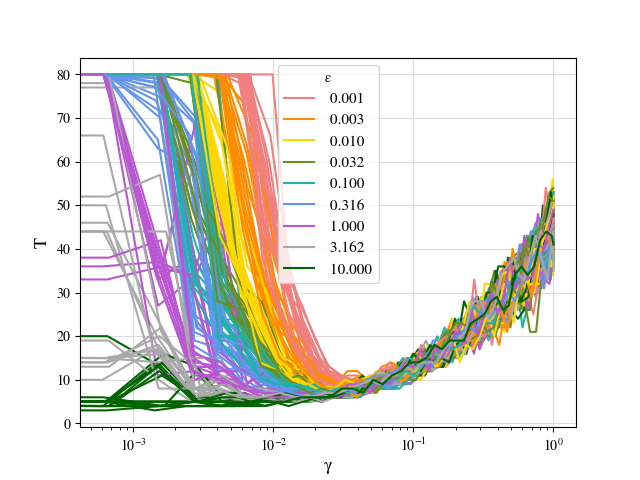}

\caption{Evolution of the number of leapfrog steps as a function of iteration,
for various different initial means of the epsilon distribution.}

\label{fig:evolution_of_T_log_scale}
\end{figure}

\section{Discussion \protect\label{sec:Discussion}}

We have shown how mappings used in various Monte Carlo schemes relying
on numerical integrators of ODE can be implemented to fully exploit
all computations to design robust and efficient sampling algorithms.
We have shown that the general framework we have identified, relying
on a mixture of distributions obtained as pushforwards of the distribution
of interest, facilitates the design of new algorithms, in particular
adaptive algorithms. Numerous questions remain open, including the
tradeoff between $N$ and $T$. A precise analysis of this question
is made particularly difficult by the fact that integration along
snippets is straightforwardly parallelizable, while resampling does
not lend itself to straightforward parallelisation.

Another point is concerned with the particular choice of mutation
Markov kernel $\bar{M}_{n}$, or $\bar{M}$, in (\ref{eq:def-bar-M})
or (\ref{eq:def-M-bar-nu}). Indeed such a kernel starts with a transition
from samples approximating the snippet distribution $\bar{\mu}_{n-1}$
to $\mu_{n-1}$, which is then followed by a reweighting of samples
leading to a representation of $\bar{\mu}_{n}$. Instead, for illustration,
one could suggest using an SMC sampler with (\ref{eq:pre-windows-of-states-kernel})
as mutation kernel. 

In relation to the discussion in Remark~\ref{rem:def-ESS-unfolded},
a natural question is how our scheme would compare with a ``Rao-Blackwellized''
SMC where weights of the type (\ref{eq:rao-blackwell-SMC}), derived
from (\ref{eq:def-Pnk-intro}) are used. 

We leave all these questions for future investigations. 

\section*{Acknowledgements}

The authors would like to thank Carl Dettmann for very useful discussions
on Boltzmann's conjecture. Research of CA and MCE supported by EPSRC
grant `CoSInES (COmputational Statistical INference for Engineering
and Security)' (EP/R034710/1), and EPSRC grant Bayes4Health, `New
Approaches to Bayesian Data Science: Tackling Challenges from the
Health Sciences' (EP/R018561/1). Research of CZ was supported by a
CSC Scholarship.

\appendix

\section{Notation and definitions \protect\label{sec:app-Notation}}

We will write $\mathbb{N}=\left\{ 0,1,2,\dots\right\} $ for the set
of natural numbers and $\mathbb{R}_{+}=\left(0,\infty\right)$ for
positive real numbers. Throughout this section $(\mathsf{E},\mathscr{E})$
is a generic measurable space. 
\begin{itemize}
\item For $A\subset\mathsf{E}$ we let $A^{\complement}$ be its complement.
\item $\mathcal{M}(\mathsf{E},\mathscr{E})$ (resp. $\mathcal{P}(\mathsf{E},\mathscr{E})$
is the set of measures (resp. probability distributions) on $(\mathsf{E},\mathscr{E})$
\item For a set $A\in\mathscr{E}$, its complement in $\mathsf{E}$ is denoted
by $A^{\complement}$. We denote the corresponding indicator function
by $\mathbf{1}_{A}:\mathsf{E}\to\left\{ 0,1\right\} $ and may use
the notation $\mathbf{1}\{z\in A\}:=\mathbf{1}_{A}(z)$.
\item For $\mu$ a probability measure on $(\mathsf{E},\mathscr{E})$ and
a measurable function $f\colon\mathsf{E}\rightarrow\mathbb{R}$ and
, we let $\mu(f):=\int f(x)\mu({\rm d}x)$.
\item For two probability measures $\mu$ and $\nu$ on $(\mathsf{E},\mathscr{E})$
we let $\mu\otimes\nu$ be a measure on $(\mathsf{E}\times\mathsf{E},\mathscr{E}\otimes\mathscr{E})$
such that $\mu\otimes\nu(A\times B)=\mu(A)\nu(B)$ for $A,B\in\mathscr{E}$. 
\item For a Markov kernel $P(x,{\rm d}y)$ on $\mathsf{E}\times\mathscr{E}$,
we write 
\begin{itemize}
\item $\mu\otimes P$ for the probability measure on $(\mathsf{E}\times\mathsf{E},\mathscr{E}\otimes\mathscr{E})$
such that for $\bar{A}\in\mathscr{E}\otimes\mathscr{E}$, the minimal
product $\sigma$-algebra, $\mu\otimes P(\bar{A})=\int_{\bar{A}}\mu({\rm d}x)P(x,{\rm d}y)$. 
\item $\mu\otimeswapped P$ for the probability measure on $(\mathsf{E}\times\mathsf{E},\mathscr{E}\otimes\mathscr{E})$
such that for $A,B\in\mathscr{E}$ $\mu\otimeswapped P(A\times B)=\mu\otimes P(B\times A)$.
\end{itemize}
\item For $\mu,\nu$ probability distributions on $(\mathsf{E},\mathscr{E})$
and kernels $M,L\colon\mathsf{E}\times\mathscr{E\rightarrow}[0,1]$
such that $\mu\otimes M\gg\nu\otimeswapped L$ then we denote
\[
\frac{{\rm d}\nu\otimeswapped L}{{\rm d}\mu\otimes M}(z,z')
\]
 the corresponding Radon-Nikodym derivative such that for $f\colon\mathsf{E}\times\mathsf{E}\rightarrow\mathbb{R}$,
\[
\int f(z,z')\frac{{\rm d}\nu\otimeswapped L}{{\rm d}\mu\otimes M}(z,z')\mu\otimes M\big({\rm d}(z,z')\big)=\int f(z,z')\nu\otimeswapped L\big({\rm d}(z',z)\big)\,.
\]
\item A point mass distribution at $x$ will be denoted by $\delta_{x}({\rm d}y)$;
it is such that for $f\colon\mathsf{E}\rightarrow\mathbb{R}$ 
\[
\int f(x)\delta_{x}({\rm d}y)=f(x)
\]
 
\item In order to alleviate notation, for $M\in\mathbb{N}$, $(z^{(i)},w_{i})\in\mathsf{E}\times[0,\infty)$,
$i\in\llbracket M\rrbracket$, we refer to $\big\{(z^{(i)},w_{i}),i\in\llbracket M\rrbracket\big\}$
as weighted samples to mean $\big\{(z^{(i)},\tilde{w}_{i}),i\in\llbracket M\rrbracket\big\}$
where $\tilde{w}_{i}\propto w_{i}$ but $\sum_{i=1}^{M}\tilde{w}_{i}=1$.
\item We say that a set of weighted samples, or particles, $\{(z_{i},w_{i})\in\mathsf{Z}\times\mathbb{R}_{+}\colon i\in\llbracket N\rrbracket\}$
for $N\geq1$ represents a distribution $\mu$ whenever for $f$ $\mu$-integrable
\[
\sum_{i=1}^{N}\frac{w_{i}}{\sum_{j=1}^{N}w_{j}}f(z_{i})\approx\mu(f)\,,
\]
in either in the $L^{p}$ sense for some $p\geq1$.
\item For $M\in\mathbb{N}$, $w_{i}\in[0,\infty)$, $i\in\llbracket M\rrbracket$,
we let $K\sim{\rm Cat}\left(w_{1},w_{2},\ldots,w_{M}\right)$ mean
that $\mathbb{P}(K=k)\propto w_{k}$.
\item For $M,N\in\mathbb{N}$, $w_{ij}\in[0,\infty)$, $i\in\llbracket M\rrbracket\times\llbracket N\rrbracket$,
we let $K\sim{\rm Cat}\left(w_{ij},i\in\llbracket M\rrbracket\times\llbracket N\rrbracket\right)$
mean that $\mathbb{P}(K=(k,l))\propto w_{kl}$.
\item for $f\colon\mathbb{R}^{m}\rightarrow\mathbb{R}^{n}$ we let 
\begin{itemize}
\item $\nabla\otimes f$ be the transpose of the Jacobian
\item for $n=1$ we let $\nabla f=(\nabla\otimes f)^{\top}$ be the gradient,
\item $\nabla\cdot f$ be the divergence.
\end{itemize}
\end{itemize}

\section{Background on Radon-Nikodym derivative \protect\label{sec:One-measure-theoretic}}

The general formalism required and used throughout the paper relies
on a unique measure theoretic tool, the Radon-Nikodym derivative.
We gather here definitions and intermediate results used throughout,
pointing out the simplicity of the tools involved and and the benefits
they bring.
\begin{defn}[Pushforward]
\label{def:pushforward} Let $\mu$ be a measure on $(\mathsf{Z},\mathscr{Z})$
and $\psi:(\mathsf{Z},\mathscr{Z})\to(\mathsf{Z}',\mathscr{F}')$
a measurable function. The pushforward of $\mu$ by $\psi$ is defined
by 
\[
\mu^{\psi}(A):=\mu(\psi^{-1}(A)),\qquad A\in\mathscr{Z}',
\]
where $\psi^{-1}(A)=\{z\in\mathsf{Z}:\psi(z)\in A\}$ is the preimage
of $A$ under $\psi$.
\end{defn}

If $\mu$ is a probability distribution then $\mu^{\psi}$ is the
probability measure associated with $\psi(Z)$ when $Z\sim\mu$.
\begin{defn}[Dominating and equivalent measures]
 For two measures $\mu$ and $\nu$ on the same measurable space
$(\mathsf{Z},\mathscr{Z})$, 
\begin{enumerate}
\item $\nu$ is said to dominate $\mu$ if for all measurable $A\in\mathscr{Z}$,
$\mu(A)>0\Rightarrow\nu(A)>0$ -- this is denoted $\nu\gg\mu$.
\item $\mu$ and $\nu$ are equivalent, written $\mu\equiv\nu$, if $\mu\gg\nu$
and $\nu\gg\mu$.
\end{enumerate}
\end{defn}

We will need the notion of Radon-Nikodym derivative \cite[Theorems 32.2 \& 16.11]{billingsley1995probability}:
\begin{thm}[Radon--Nikodym]
\label{thm:Radon-Nikodym}Let $\mu$ and $\nu$ be $\sigma$-finite
measures on $(\mathsf{Z},\mathscr{Z})$. Then $\nu\ll\mu$ if and
only if there exists an essentially unique, measurable, non-negative
function $f\colon\mathsf{Z}\rightarrow[0,\infty)$ such that
\[
\int_{A}f(z)\mu({\rm d}z)=\nu(A),\qquad A\in\mathscr{E}.
\]
Therefore we can view ${\rm d}\nu/{\rm d}\mu:=f$ as the density of
$\nu$ w.r.t $\mu$ and in particular if $g$ is integrable w.r.t.
$\nu$ then
\[
\int g(z)\frac{{\rm d}\nu}{{\rm d}\mu}(z)\mu({\rm d}z)=\int g(z)\nu({\rm d}z)\,.
\]
\end{thm}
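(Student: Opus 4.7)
The plan is to establish existence via the classical von Neumann Hilbert-space argument, reduce $\sigma$-finiteness to the finite case, and close with a standard uniqueness argument. The ``if'' direction is immediate: if $\nu(A)=\int_A f\,{\rm d}\mu$ and $\mu(A)=0$, then $\nu(A)=0$, so only the ``only if'' direction requires work.

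First I would reduce to the case where both $\mu$ and $\nu$ are finite. By $\sigma$-finiteness, write $\mathsf{Z}=\bigsqcup_{n\in\mathbb{N}}\mathsf{Z}_n$ with $\mu(\mathsf{Z}_n)+\nu(\mathsf{Z}_n)<\infty$, construct densities $f_n$ on each piece from the finite case, and glue via $f:=\sum_n f_n\mathbf{1}_{\mathsf{Z}_n}$. The main work is concentrated in the finite case; this reduction is essentially bookkeeping.

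Next, assuming $\mu,\nu$ finite with $\nu\ll\mu$, introduce $\lambda:=\mu+\nu$ and consider the bounded linear functional
\[
T\colon L^2(\lambda)\to\mathbb{R},\qquad T(h):=\int h\,{\rm d}\nu,
\]
whose boundedness follows from $|T(h)|\le\|h\|_{L^2(\lambda)}\sqrt{\nu(\mathsf{Z})}$ by Cauchy--Schwarz (using $\nu\le\lambda$). The Riesz representation theorem applied in the Hilbert space $L^2(\lambda)$ then yields $g\in L^2(\lambda)$ such that
\[
\int h\,{\rm d}\nu=\int hg\,{\rm d}\lambda\quad\text{for every }h\in L^2(\lambda).
\]
Testing on indicators $\mathbf{1}_{\{g<0\}}$ and $\mathbf{1}_{\{g>1\}}$ and using nonnegativity of $\mu,\nu$ shows $0\le g\le 1$ $\lambda$-almost everywhere, and rearranging the Riesz identity gives $\int h(1-g)\,{\rm d}\nu=\int hg\,{\rm d}\mu$ for every bounded measurable $h$.

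The crucial step, and the main obstacle, is handling the set $N:=\{g=1\}$, since this is the only point at which the hypothesis $\nu\ll\mu$ plays a role. Taking $h=\mathbf{1}_N$ in the rearranged identity yields $\mu(N)=0$, and absolute continuity then forces $\nu(N)=0$ as well. Outside $N$ define $f:=g/(1-g)$. Choosing $h_K:=\mathbf{1}_A\min\bigl((1-g)^{-1},K\bigr)$, which is bounded and hence in $L^2(\lambda)$, substituting into the identity, and letting $K\uparrow\infty$ by monotone convergence yields
\[
\nu(A)=\nu(A\cap N^{\complement})=\int_A f\,{\rm d}\mu
\]
for every $A\in\mathscr{Z}$, which is the desired representation. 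Uniqueness is routine: if $f_1,f_2\ge 0$ both satisfy the integral identity, then taking $A=\{f_1>f_2\}$ and $A=\{f_1<f_2\}$ and using $\sigma$-finiteness forces $\mu(\{f_1\neq f_2\})=0$. The concluding change-of-measure identity $\int g\,({\rm d}\nu/{\rm d}\mu)\,{\rm d}\mu=\int g\,{\rm d}\nu$ for $\nu$-integrable $g$ then follows by the standard simple-function-plus-monotone-convergence ladder starting from indicator functions.
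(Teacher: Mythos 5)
Your proof is correct. Note first that the paper does not actually prove this statement: it is quoted verbatim as background, with a citation to Billingsley (Theorems 32.2 and 16.11), so there is no in-paper argument to compare against step by step. What you have written is the classical von Neumann proof -- reduce to finite measures, apply Riesz representation in $L^{2}(\mu+\nu)$ to get $0\le g\le1$ with $\int h(1-g)\,{\rm d}\nu=\int hg\,{\rm d}\mu$, use $\nu\ll\mu$ exactly once to kill the set $\{g=1\}$, and recover $f=g/(1-g)$ by monotone convergence -- which is a genuinely different route from Billingsley's own proof (he constructs the density via a Hahn/Lebesgue-decomposition argument rather than Hilbert-space duality). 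The functional-analytic route buys a shorter existence argument at the price of invoking completeness of $L^{2}$; Billingsley's is more elementary but longer. All the steps you defer as routine (the $\sigma$-finite gluing, the uniqueness argument -- where one should restrict to sets on which both candidate densities have finite integral before subtracting, which your appeal to $\sigma$-finiteness covers -- and the indicator-to-integrable ladder for the final change-of-measure identity) are indeed standard and correctly indicated. One cosmetic remark: the statement as printed in the paper writes $A\in\mathscr{E}$ where it means $A\in\mathscr{Z}$; your proof correctly works over $\mathscr{Z}$.
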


If $\mu$ is a measure and $f$ a non-negative, measurable function
then $\mu\cdot f$ is the measure $(\mu\cdot f)(A)=\int{\bf 1}_{A}(z)f(z)\mu({\rm d}z)$,
i.e. the measure $\nu=\mu\cdot f$ such that $f$ is the Radon--Nikodym
derivative ${\rm d}\nu/{\rm d}\mu=f$.

The following establishes the expression of an expectation with respect
to the pushforward $\mu^{\psi}$ in terms of expectations with respect
to $\mu$ \cite[Theorem 16.13]{billingsley1995probability}.
\begin{thm}[Change of variables]
\label{thm:change-of-variables} A function $f:\mathsf{Z}'\to\mathbb{R}$
is integrable w.r.t. $\mu^{\psi}$ if and only if $f\circ\psi$ is
integrable w.r.t. $\mu$, in which case
\begin{equation}
\int_{\mathsf{Z}'}f(z)\mu^{\psi}({\rm d}z)=\int_{\mathsf{Z}}f\circ\psi(z)\mu({\rm d}z)\,.\label{eq:change-of-variable}
\end{equation}
\end{thm}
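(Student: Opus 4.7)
The plan is to follow the standard ``machine'' for integration against a pushforward measure, building up from indicators through simple functions to general measurable functions. I will rely on Definition~\ref{def:pushforward} of $\mu^{\psi}$ as the only non-trivial input, together with monotone convergence and linearity of the integral.

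First, I would verify the identity for indicator functions: if $f = \mathbf{1}_{A}$ for some $A \in \mathscr{Z}'$, then $f \circ \psi = \mathbf{1}_{\psi^{-1}(A)}$ (which is measurable with respect to $\mathscr{Z}$ since $\psi$ is measurable), and
\[
\int_{\mathsf{Z}'} \mathbf{1}_{A}(z)\, \mu^{\psi}({\rm d}z) = \mu^{\psi}(A) = \mu\bigl(\psi^{-1}(A)\bigr) = \int_{\mathsf{Z}} \mathbf{1}_{\psi^{-1}(A)}(z)\, \mu({\rm d}z),
\]
which is precisely \eqref{eq:change-of-variable} for $f = \mathbf{1}_{A}$. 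Linearity of the integral then extends this equality to all non-negative simple functions $f = \sum_{i=1}^{m} c_{i} \mathbf{1}_{A_{i}}$ on $\mathsf{Z}'$.

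Next I would pass to arbitrary non-negative measurable $f \colon \mathsf{Z}' \to [0,\infty]$ by choosing an increasing sequence of non-negative simple functions $f_{n} \uparrow f$ pointwise on $\mathsf{Z}'$. Then $f_{n} \circ \psi \uparrow f \circ \psi$ pointwise on $\mathsf{Z}$, and by the monotone convergence theorem applied on each side,
\[
\int_{\mathsf{Z}'} f(z)\, \mu^{\psi}({\rm d}z) = \lim_{n \to \infty} \int_{\mathsf{Z}'} f_{n}(z)\, \mu^{\psi}({\rm d}z) = \lim_{n \to \infty} \int_{\mathsf{Z}} f_{n} \circ \psi(z)\, \mu({\rm d}z) = \int_{\mathsf{Z}} f \circ \psi(z)\, \mu({\rm d}z),
\]
where the middle equality is the already-established identity for simple functions. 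In particular, applying this to $|f|$ yields the equivalence: $f$ is $\mu^{\psi}$-integrable iff $\int |f|\, {\rm d}\mu^{\psi} < \infty$ iff $\int |f \circ \psi|\, {\rm d}\mu < \infty$ iff $f \circ \psi$ is $\mu$-integrable.

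Finally, for a general measurable $f \colon \mathsf{Z}' \to \mathbb{R}$ assumed $\mu^{\psi}$-integrable, I would decompose $f = f^{+} - f^{-}$ with $f^{\pm} \geq 0$ measurable, apply the previous step to each of $f^{+}$ and $f^{-}$ (both integrable by the equivalence just noted), and subtract. This gives \eqref{eq:change-of-variable} in full generality. There is no serious obstacle here; the only subtlety to flag is that measurability of $f \circ \psi$ (used implicitly throughout) follows from the measurability of $\psi$ together with that of $f$, and the approximating simple functions $f_{n}$ on $\mathsf{Z}'$ pull back to simple functions $f_{n} \circ \psi$ on $\mathsf{Z}$ without any additional hypothesis.
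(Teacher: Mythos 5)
Your proof is correct and is the standard ``indicators $\to$ simple functions $\to$ monotone convergence $\to$ positive and negative parts'' argument; the paper does not prove this statement itself but simply cites it as Theorem 16.13 of Billingsley, whose proof is exactly the machine you have written out. Nothing is missing.
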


We now establish results useful throughout the manuscript. The central
identity used throughout the manuscript is a direct application of
Theorem~\ref{thm:change-of-variables} for $\psi\colon\mathsf{Z}\rightarrow\mathsf{Z}$
invertible
\[
\int_{\mathsf{Z}'}f\circ\psi(z)\mu^{\psi^{-1}}({\rm d}z)=\int_{\mathsf{Z}}f(z)\mu({\rm d}z)
\]
which seems tautological since it can be summarized as follows: for
$Z\sim\mu$, then $\psi^{-1}(Z)\sim\mu^{\psi^{-1}}$ and $\psi\circ\psi^{-1}(Z)\sim\mu$!
However the interest of the approach stems from the following properties. 
\begin{lem}
\label{thm:measure-theoretic-transform}Let $\psi\colon\mathsf{Z}\rightarrow\mathsf{Z}$
be measurable and integrable, $\mu$ and $\upsilon$ be $\sigma$-finite
measures on $(\mathsf{Z},\mathscr{Z})$ such that $\upsilon\gg\mu$
and $\upsilon\gg\upsilon^{\psi^{-1}}$. Then
\begin{enumerate}
\item $\upsilon^{\psi^{-1}}\gg\mu^{\psi^{-1}}$ and therefore $\upsilon\gg\mu^{\psi^{-1}}$,
\item for $\upsilon$-almost all $z\in\mathsf{Z}$,
\[
\frac{{\rm d}\mu^{\psi^{-1}}}{{\rm d}\upsilon}(z)=\frac{{\rm d}\mu}{{\rm d}\upsilon}\circ\psi(z)\frac{{\rm d}\upsilon^{\psi^{-1}}}{{\rm d}\upsilon}
\]
\item we have
\[
\mu\gg\mu^{\psi^{-1}}\iff\upsilon\big(\big\{ z\in\mathsf{Z}\colon{\rm d}\mu^{\psi^{-1}}/{\rm d}\upsilon(z)>0,{\rm d}\mu/{\rm d}\upsilon(z)=0\big\}\big)=0
\]
in which case for $\upsilon$-almost all $z\in\mathsf{Z}$
\[
\frac{{\rm d}\mu^{\psi^{-1}}}{{\rm d}\mu}(z)=\begin{cases}
\frac{\mu\circ\psi}{\mu}(z) & \mu(z)>0\\
0 & \text{otherwise}
\end{cases}
\]
and therefore
\[
\int_{\mathsf{Z}'}f\circ\psi(z)\mu^{\psi^{-1}}({\rm d}z)=\int_{\mathsf{Z}'}f\circ\psi(z)\frac{\mu\circ\psi(z)}{\mu(z)}\frac{{\rm d}\upsilon^{\psi}}{{\rm d}\upsilon}\mu({\rm d}z).
\]
\end{enumerate}
\end{lem}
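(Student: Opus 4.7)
The plan is to deploy only the Radon--Nikodym theorem (Theorem~\ref{thm:Radon-Nikodym}) and the change-of-variables formula (Theorem~\ref{thm:change-of-variables}) in careful succession; each of the three assertions reduces to re-expressing $\mu^{\psi^{-1}}(A)$ for an arbitrary $A\in\mathscr{Z}$ and invoking the essential uniqueness of Radon--Nikodym derivatives.

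For (1), I would fix $A\in\mathscr{Z}$ with $\mu^{\psi^{-1}}(A)>0$ and unfold the pushforward (Definition~\ref{def:pushforward}) to obtain $\mu(\psi(A))>0$. Since $\upsilon\gg\mu$ this forces $\upsilon(\psi(A))=\upsilon^{\psi^{-1}}(A)>0$, which gives $\upsilon^{\psi^{-1}}\gg\mu^{\psi^{-1}}$; composing with the hypothesis $\upsilon\gg\upsilon^{\psi^{-1}}$ yields $\upsilon\gg\mu^{\psi^{-1}}$ by transitivity of absolute continuity.

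For (2) the strategy is a two-step change of variables. Starting from
\[
\mu^{\psi^{-1}}(A)=\int \mathbf{1}_A\circ\psi^{-1}(z)\,\mu({\rm d}z)=\int \mathbf{1}_A\circ\psi^{-1}(z)\,\tfrac{{\rm d}\mu}{{\rm d}\upsilon}(z)\,\upsilon({\rm d}z)
\]
via Theorem~\ref{thm:change-of-variables} followed by Theorem~\ref{thm:Radon-Nikodym}, I would recognise the integrand as $h\circ\psi^{-1}$ with the auxiliary test function $h(y):=\mathbf{1}_A(y)\cdot\bigl(\tfrac{{\rm d}\mu}{{\rm d}\upsilon}\circ\psi\bigr)(y)$, so that a second application of Theorem~\ref{thm:change-of-variables} replaces $\upsilon$ by $\upsilon^{\psi^{-1}}$. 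Inserting ${\rm d}\upsilon^{\psi^{-1}}/{\rm d}\upsilon$ back via Radon--Nikodym produces
\[
\mu^{\psi^{-1}}(A)=\int_A\bigl(\tfrac{{\rm d}\mu}{{\rm d}\upsilon}\circ\psi\bigr)(z)\,\tfrac{{\rm d}\upsilon^{\psi^{-1}}}{{\rm d}\upsilon}(z)\,\upsilon({\rm d}z),
\]
and uniqueness of the Radon--Nikodym derivative delivers the $\upsilon$-a.e. pointwise formula.

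For (3) the equivalence is the standard fact that two measures dominated by a common $\sigma$-finite measure $\upsilon$ are absolutely continuous in one direction iff the $\upsilon$-support of the denominator density contains (up to $\upsilon$-null sets) that of the numerator. Assuming this condition, the density ${\rm d}\mu^{\psi^{-1}}/{\rm d}\mu$ is computed as the ratio of densities with respect to $\upsilon$ by directly substituting the formula from (2), with the convention that the ratio is $0$ on $\{{\rm d}\mu/{\rm d}\upsilon=0\}$. The final integral identity then follows immediately from Theorem~\ref{thm:Radon-Nikodym}. The only real risk of error lies in the bookkeeping between pushforwards by $\psi$ and by $\psi^{-1}$, and in defining the auxiliary $h$ so that $h\circ\psi^{-1}$ recovers the integrand; no deeper conceptual obstacle is anticipated.
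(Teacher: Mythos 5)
Your proposal is correct and follows essentially the same route as the paper: part (1) by the same set-wise argument via $\mu^{\psi^{-1}}(A)=\mu(\psi(A))$ and transitivity, part (2) by the same double change of variables combined with the Radon--Nikodym theorem (the paper integrates a generic bounded test function where you use $\mathbf{1}_A$ plus essential uniqueness, which is equivalent), and part (3) by reducing to the same standard fact about ratios of densities with respect to a common dominating measure, which the paper isolates and proves as Lemma~\ref{lem:Billingsley-problem324}. No gaps.
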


\begin{proof}
For the first part of the first statement, let $A\in\mathscr{Z}$
such that $\upsilon^{\psi^{-1}}(A)=\upsilon\big(\psi(A)\big)>0$,
then since $\psi(A)\in\mathscr{Z}$ and $\upsilon\gg\mu$ we deduce
$\mu\big(\psi(A)\big)=\mu^{\psi^{-1}}(A)>0$ and we conclude; the
second parts follows from $\upsilon\gg\upsilon^{\psi^{-1}}$. For
the second statement for $f\colon\mathsf{Z}\rightarrow\mathbb{R}$
bounded and measurable, 
\begin{align*}
\int f(z)\frac{{\rm d}\mu^{\psi^{-1}}}{{\rm d}\upsilon}(z)\upsilon({\rm d}z) & =\int f(z)\mu^{\psi^{-1}}({\rm d}z)\\
 & =\int f\circ\psi^{-1}(z)\mu({\rm d}z)\\
 & =\int f\circ\psi^{-1}(z)\frac{{\rm d}\mu}{{\rm d}\upsilon}(z)\upsilon({\rm d}z)\\
 & =\int f(z)\frac{{\rm d}\mu}{{\rm d}\upsilon}\circ\psi(z)\upsilon^{\psi^{-1}}({\rm d}z)\\
 & =\int f(z)\frac{{\rm d}\mu}{{\rm d}\upsilon}\circ\psi(z)\frac{{\rm d}\upsilon^{\psi^{-1}}}{{\rm d}\upsilon}(z)\upsilon({\rm d}z)
\end{align*}
The third statement is given as \cite[Problem 32.6.]{billingsley1995probability},
which we solve in Lemma~\ref{lem:Billingsley-problem324}.
\end{proof}
\begin{cor}
In the scenario when $\psi\colon\mathsf{Z}\rightarrow\mathsf{Z}$
and $\upsilon$ are such that $\upsilon^{\psi}=\upsilon$ then ${\rm d}\upsilon^{\psi}/{\rm d}\upsilon\equiv1$.
\end{cor}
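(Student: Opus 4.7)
The plan is to observe that the statement is an essentially immediate consequence of the essential uniqueness clause in Theorem~\ref{thm:Radon-Nikodym}, once the hypothesis $\upsilon^{\psi}=\upsilon$ is unpacked. The hypothesis asserts an equality of measures on $(\mathsf{Z},\mathscr{Z})$: for every $A\in\mathscr{Z}$,
\[
\upsilon^{\psi}(A)=\upsilon\bigl(\psi^{-1}(A)\bigr)=\upsilon(A).
\]
In particular $\upsilon\gg\upsilon^{\psi}$, so the Radon--Nikodym derivative ${\rm d}\upsilon^{\psi}/{\rm d}\upsilon$ exists (and is $\sigma$-finite since $\upsilon$ is, which one can take as a standing assumption consistent with Lemma~\ref{thm:measure-theoretic-transform}).

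Next I would exhibit an explicit candidate for the derivative, namely the constant function $f\equiv 1$. Indeed, for any $A\in\mathscr{Z}$,
\[
\int_{A} 1\,\upsilon({\rm d}z)=\upsilon(A)=\upsilon^{\psi}(A),
\]
so $f\equiv 1$ satisfies the defining relation of the Radon--Nikodym derivative of $\upsilon^{\psi}$ with respect to $\upsilon$. By the essential uniqueness clause of Theorem~\ref{thm:Radon-Nikodym}, any version of ${\rm d}\upsilon^{\psi}/{\rm d}\upsilon$ must then agree with $1$ outside a $\upsilon$-null set, which is precisely the conclusion ${\rm d}\upsilon^{\psi}/{\rm d}\upsilon\equiv 1$.

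There is no real obstacle here: the only point worth flagging is that the identity holds $\upsilon$-almost everywhere rather than pointwise, but this is the standard caveat attached to any Radon--Nikodym derivative and is already implicit in the ``$\equiv$'' notation used elsewhere in the paper. If desired, one could also record the symmetric statement ${\rm d}\upsilon^{\psi^{-1}}/{\rm d}\upsilon\equiv 1$ when $\psi$ is invertible, since $\upsilon^{\psi}=\upsilon$ combined with invertibility yields $\upsilon^{\psi^{-1}}=\upsilon$ as well, so that the formula in Lemma~\ref{thm:measure-theoretic-transform} reduces, in the measure-preserving case, to the simple identity ${\rm d}\mu^{\psi^{-1}}/{\rm d}\upsilon=({\rm d}\mu/{\rm d}\upsilon)\circ\psi$ exploited throughout the manuscript.
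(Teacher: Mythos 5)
Your proof is correct and matches the paper's (implicit) reasoning: the corollary is stated without proof precisely because, once $\upsilon^{\psi}=\upsilon$ is read as an equality of measures, the constant function $1$ trivially satisfies the defining relation of the Radon--Nikodym derivative and essential uniqueness does the rest. Your additional remark about the $\upsilon$-a.e. caveat and the symmetric statement for $\psi^{-1}$ is accurate and consistent with how the identity is used throughout the manuscript.
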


\begin{lem}[Billingsley, Problem 32.6.]
 \label{lem:Billingsley-problem324} Assume $\mu,\nu$ and $\upsilon$
are $\sigma$- finite and that $\upsilon\gg\nu,\mu$. Then $\mu\gg\nu$
if and only if $\upsilon\big(\big\{ z\in\mathsf{Z}\colon{\rm d}\nu/{\rm d}\upsilon(z)>0,{\rm d}\mu/{\rm d}\upsilon(z)=0\big\}\big)=0$,
in which case
\[
\frac{{\rm d}\nu}{{\rm d}\mu}(z)=\mathbf{1}\{z\in\mathsf{Z}:{\rm d}\mu/{\rm d}\upsilon(z)>0\}\frac{{\rm d}\nu}{{\rm d}\upsilon}/\frac{{\rm d}\mu}{{\rm d}\upsilon}(z)\,.
\]
\end{lem}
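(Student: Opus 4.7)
The plan is to reduce everything to manipulations of the $\upsilon$-densities $f := \mathrm{d}\nu/\mathrm{d}\upsilon$ and $g := \mathrm{d}\mu/\mathrm{d}\upsilon$, which exist and are essentially unique by Theorem~\ref{thm:Radon-Nikodym} under the hypotheses $\upsilon \gg \nu$ and $\upsilon \gg \mu$. Set $A := \{z : f(z) > 0,\, g(z) = 0\}$. Throughout, I will use only the defining property of the Radon--Nikodym derivative and the fact that a non-negative function with zero integral vanishes almost everywhere.

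For the forward implication, assume $\mu \gg \nu$. Then $\mu(A) = \int_A g\,\mathrm{d}\upsilon = 0$ by the very definition of $A$, so absolute continuity forces $\nu(A) = 0$; but $\nu(A) = \int_A f\,\mathrm{d}\upsilon$ with $f > 0$ on $A$, which forces $\upsilon(A) = 0$. For the converse, assume $\upsilon(A) = 0$ and take $B \in \mathscr{Z}$ with $\mu(B) = 0$. Then $\int_B g\,\mathrm{d}\upsilon = 0$ yields $\upsilon(B \cap \{g > 0\}) = 0$. Decomposing
\[
\nu(B) = \int_{B \cap \{g > 0\}} f\,\mathrm{d}\upsilon + \int_{B \cap \{g = 0,\, f > 0\}} f\,\mathrm{d}\upsilon + \int_{B \cap \{g = 0,\, f = 0\}} f\,\mathrm{d}\upsilon,
\]
the first term vanishes because the integration set is $\upsilon$-null, the second because it is contained in $A$ (also $\upsilon$-null), and the third because $f$ vanishes there. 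Hence $\nu(B) = 0$, establishing $\mu \gg \nu$.

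For the density formula, let $h(z) := \mathbf{1}\{g(z) > 0\}\, f(z)/g(z)$ (interpreted as $0$ when $g(z) = 0$) and verify the defining identity. For any $E \in \mathscr{Z}$,
\[
\int_E h(z)\,\mu(\mathrm{d}z) = \int_E \mathbf{1}\{g > 0\}\,\frac{f}{g}\,g\,\mathrm{d}\upsilon = \int_{E \cap \{g > 0\}} f\,\mathrm{d}\upsilon.
\]
It remains to show this equals $\nu(E) = \int_E f\,\mathrm{d}\upsilon$, i.e.\ that $\int_{E \cap \{g = 0\}} f\,\mathrm{d}\upsilon = 0$. But $E \cap \{g = 0\} = (E \cap \{g = 0,\, f = 0\}) \cup (E \cap A)$; the integrand is $0$ on the first piece, while the second piece is $\upsilon$-null by the hypothesis, so the integral vanishes. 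By essential uniqueness of the Radon--Nikodym derivative, $\mathrm{d}\nu/\mathrm{d}\mu = h$ $\mu$-a.e.

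I do not anticipate a genuine obstacle here: the result is purely measure-theoretic bookkeeping and the only subtlety worth being careful about is the set-theoretic partition used when $g$ vanishes, which is where the hypothesis $\upsilon(A) = 0$ is precisely what is needed to discard the problematic contribution both in the converse of the equivalence and in the density identity.
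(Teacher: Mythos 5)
Your proof is correct and follows essentially the same route as the paper's: both reduce to the $\upsilon$-densities and partition $\mathsf{Z}$ according to $\{{\rm d}\mu/{\rm d}\upsilon>0\}$ versus the exceptional set $S=\{{\rm d}\nu/{\rm d}\upsilon>0,\,{\rm d}\mu/{\rm d}\upsilon=0\}$, with the hypothesis $\upsilon(S)=0$ killing the problematic piece. The only cosmetic difference is that the paper derives everything from a single master decomposition of $\int f\,{\rm d}\nu$, whereas you verify the two implications and the density identity separately; the underlying argument is identical.
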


\begin{proof}
Let $S:=\big\{ z\in\mathsf{Z}\colon{\rm d}\nu/{\rm d}\upsilon(z)>0,{\rm d}\mu/{\rm d}\upsilon(z)=0\big\}$.
For $f\colon\mathsf{Z}\rightarrow\mathbb{R}$ integrable we always
have
\begin{align*}
\int f(z)\nu({\rm d}z) & =\int\mathbf{1}\{z\in S\}f(z)\frac{{\rm d}\nu}{{\rm d}\upsilon}(z)\upsilon({\rm d}z)+\int\mathbf{1}\big\{ z\in S^{\complement}\big\} f(z)\frac{{\rm d}\nu}{{\rm d}\upsilon}/\frac{{\rm d}\mu}{{\rm d}\upsilon}(z)\mu({\rm d}z)\,.
\end{align*}
Assume $\mu\gg\nu$ then from above for any $f\colon\mathsf{Z}\rightarrow\mathbb{R}$
integrable
\begin{align*}
\int\mathbf{1}\{z\in S\}f(z)\frac{{\rm d}\nu}{{\rm d}\mu}(z)\mu({\rm d}z)= & \int\mathbf{1}\{z\in S\}f(z)\frac{{\rm d}\nu}{{\rm d}\mu}(z)\frac{{\rm d}\mu}{{\rm d}\upsilon}(z)\upsilon({\rm d}z)\\
= & 0\,,
\end{align*}
and therefore $\upsilon\big(S\big)=0$ and we conclude from the first
identity above. Now assume that $\upsilon\big(S\big)=0$, then
\begin{align*}
\int f(z)\nu({\rm d}z) & =\int\mathbf{1}\{z\in S\}f(z)\frac{{\rm d}\nu}{{\rm d}\upsilon}(z)\upsilon({\rm d}z)+\int\mathbf{1}\big\{ z\in S^{\complement}\big\} f(z)\frac{{\rm d}\nu}{{\rm d}\upsilon}/\frac{{\rm d}\mu}{{\rm d}\upsilon}(z)\mu({\rm d}z)\\
 & =\int\mathbf{1}\big\{ z\in S^{\complement}\big\} f(z)\frac{{\rm d}\nu}{{\rm d}\upsilon}/\frac{{\rm d}\mu}{{\rm d}\upsilon}(z)\mu({\rm d}z)\,.
\end{align*}
The equivalence is therefore established and when either conditions
is satisfied we have
\[
\frac{{\rm d}\nu}{{\rm d}\mu}(z)=\mathbf{1}\big\{ z\in S^{\complement}\big\}\frac{{\rm d}\nu}{{\rm d}\upsilon}/\frac{{\rm d}\mu}{{\rm d}\upsilon}(z)
\]
and we conclude.
\end{proof}

\section{Proofs for Section~\ref{sec:A-simple-example:} \protect\label{app:Proofs-for-Section}}

The results of the following lemma can be deduced from Lemma~\ref{lem:barM-properties}
but we provide more direct arguments for the present scenario.
\begin{lem}
\label{lem:bar-weight-derivation} Assume $\mu_{n-1}\gg\bar{\mu}_{n}$,
$\mu_{n-1}R_{n}=\mu_{n-1}$ and let $\bar{M}_{n}\colon\mathsf{Z}\times\mathscr{Z}\rightarrow[0,1]$
be as in (\ref{eq:def-bar-M}). Then for $n\in\llbracket P\rrbracket$,
\begin{enumerate}
\item $\bar{\mu}_{n-1}\bar{M}_{n}=\mu_{n-1}$,
\item the near optimal kernel $\bar{L}_{n-1}\colon\mathsf{Z}\times\mathscr{Z}\rightarrow[0,1]$is
given for any $(z,A)\in\mathsf{Z}\times\mathscr{Z}$ by
\begin{equation}
\bar{L}_{n-1}(z,A):=\frac{{\rm d}\bar{\mu}_{n-1}\otimes\bar{M}_{n}(A\times\cdot)}{{\rm d}\bar{\mu}_{n-1}\bar{M}_{n}}(z)\,,\label{eq:backward-bar-L}
\end{equation}
is well defined $\bar{\mu}_{n-1}\bar{M}_{n}$-a.s.  
\item $\bar{L}_{n-1}$ in (\ref{eq:backward-bar-L}) is such that \textup{for
any} $A,B\in\mathscr{Z}$
\[
\bar{\mu}_{n-1}\otimes\bar{M}_{n}(A\times B)=\int_{B}\bar{\mu}_{n-1}\bar{M}_{n}({\rm d}z)\bar{L}_{n-1}(z,A)\,,
\]
\item the importance weight in (\ref{eq:weight-folded-HMC}) is well defined
and
\[
\bar{w}_{n}(z,z'):=\frac{{\rm d}\bar{\mu}_{n}}{{\rm d}\mu_{n-1}}(z')\,.
\]
\end{enumerate}
\end{lem}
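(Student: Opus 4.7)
The plan is to prove the four parts of the lemma in sequence, each one building on the previous. The underlying structure is simple: Part 1 identifies the pushforward measure $\bar{\mu}_{n-1}\bar{M}_n$; Part 2 checks that the backward kernel is well-defined; Part 3 is essentially tautological given Part 2 and the Radon--Nikodym theorem; Part 4 is then a direct cancellation using the disintegration from Part 3.

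For Part 1, the strategy is a direct computation. I would expand
\[
\bar{\mu}_{n-1}\bar{M}_n(A) = \int \bar{\mu}_{n-1}(\mathrm{d}z)\sum_{k=0}^T \bar{\mu}_{n-1}(k\mid z)R_n(\psi_{n-1,k}(z),A)\,,
\]
and then recognize the joint $\bar{\mu}_{n-1}(\mathrm{d}z)\bar{\mu}_{n-1}(k\mid z) = \frac{1}{T+1}\mu_{n-1,k}(\mathrm{d}z)$. I would then apply the change-of-variables formula (Theorem \ref{thm:change-of-variables}) to the pushforward $\mu_{n-1,k} = \mu_{n-1}^{\psi_{n-1,k}^{-1}}$ to obtain $\int R_n(\psi_{n-1,k}(z),A)\mu_{n-1,k}(\mathrm{d}z) = \mu_{n-1}R_n(A) = \mu_{n-1}(A)$, where the last equality invokes the stated invariance $\mu_{n-1}R_n = \mu_{n-1}$. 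Summing over $k$ yields $\mu_{n-1}(A)$.

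For Part 2, since $A\mapsto \bar{\mu}_{n-1}\otimes\bar{M}_n(A\times B)$ is bounded above by the marginal $\bar{\mu}_{n-1}\otimes\bar{M}_n(\mathsf{Z}\times B) = \bar{\mu}_{n-1}\bar{M}_n(B) = \mu_{n-1}(B)$, the set-function $B\mapsto \bar{\mu}_{n-1}\otimes\bar{M}_n(A\times B)$ is a finite measure absolutely continuous with respect to $\bar{\mu}_{n-1}\bar{M}_n = \mu_{n-1}$, so Theorem \ref{thm:Radon-Nikodym} gives existence and essential uniqueness of $\bar{L}_{n-1}(\cdot, A)$ for each fixed $A$. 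The measurability in both arguments and the Markov kernel property can be assembled from the definition (this is the standard disintegration construction; see Appendix \ref{sec:One-measure-theoretic}). Part 3 is then immediate from integrating the defining identity of the Radon--Nikodym derivative against $\mathbf{1}_B$.

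For Part 4, I would use Part 3 to write, as measures on $(z,z')$,
\[
\bar{\mu}_{n-1}\otimes\bar{M}_n(\mathrm{d}(z,z')) = \mu_{n-1}(\mathrm{d}z')\,\bar{L}_{n-1}(z',\mathrm{d}z)\,,
\]
while by definition of $\otimeswapped$ (see Appendix \ref{sec:app-Notation}),
\[
\bar{\mu}_n\otimeswapped\bar{L}_{n-1}(\mathrm{d}(z,z')) = \bar{\mu}_n(\mathrm{d}z')\,\bar{L}_{n-1}(z',\mathrm{d}z)\,.
\]
Under the hypothesis $\mu_{n-1}\gg\bar{\mu}_n$, the density $\mathrm{d}\bar{\mu}_n/\mathrm{d}\mu_{n-1}$ exists, and the two measures above are related by multiplication by this density, which depends only on $z'$. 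A short verification against test functions $f(z,z')$, followed by uniqueness of the Radon--Nikodym derivative, then yields $\bar{w}_n(z,z') = \mathrm{d}\bar{\mu}_n/\mathrm{d}\mu_{n-1}(z')$.

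The only subtlety I anticipate is formal: producing a jointly measurable version of $\bar{L}_{n-1}$ and verifying it is a Markov kernel, rather than just a family of Radon--Nikodym derivatives indexed by $A$. This is routine disintegration (and the pointwise statement suffices for the intended use of the weights) but should be flagged; all other steps are direct computations using pushforwards, change of variables, and the invariance $\mu_{n-1}R_n = \mu_{n-1}$.
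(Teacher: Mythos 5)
Your proposal is correct and follows essentially the same route as the paper: Part 1 via the unfolding identity $\int\sum_k g\circ\psi_{n-1,k}(z)\,\bar{\mu}_{n-1}(k\mid z)\,\bar{\mu}_{n-1}({\rm d}z)=\mu_{n-1}(g)$ applied to $g=R_n(\cdot,A)$ together with $\mu_{n-1}R_n=\mu_{n-1}$; Part 2 by dominating $\bar{\mu}_{n-1}\otimes\bar{M}_n(A\times\cdot)$ by the marginal $\bar{\mu}_{n-1}\bar{M}_n$ and invoking Radon--Nikodym; Parts 3 and 4 by the same disintegration-and-cancellation computation the paper carries out (the paper additionally records the Fubini/$\pi$-$\lambda$ step extending the rectangle identity to the product $\sigma$-algebra, which you use implicitly). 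The measurability caveat you flag is also left implicit in the paper, so nothing is missing.
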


\begin{proof}[Proof of Lemma~\ref{lem:bar-weight-derivation}]
The first statement, follows from the definition in (\ref{eq:def-bar-M})
of $\bar{M}_{n}$ and the identity (\ref{eq:unfolding-HMC-setup}).
The second statement is a consequence of the following classical arguments,
justifying Bayes' rule. For $A\in\mathscr{Z}$ fixed, consider the
measure
\[
\mathscr{Z}\ni B\mapsto\bar{\mu}_{n-1}\otimes\bar{M}_{n}\big(A\times B\big)\leq\bar{\mu}_{n-1}\otimes\bar{M}_{n}\big(\mathsf{Z}\times B\big)=\bar{\mu}_{n-1}\bar{M}_{n}(B)\,,
\]
implying $\bar{\mu}_{n-1}\bar{M}_{n}\gg\bar{\mu}_{n-1}\otimes\bar{M}_{n}\big(A\times\cdot)$
from which we deduce the existence of a Radon-Nikodym derivative such
that
\begin{align*}
\bar{\mu}_{n-1}\otimes\bar{M}_{n}\big(A\times B\big) & =\int_{B}\frac{{\rm d}\bar{\mu}_{n-1}\otimes\bar{M}_{n}\big(A\times\cdot\big)}{{\rm d}\bar{\mu}_{n-1}\bar{M}_{n}}(z')\,\bar{\mu}_{n-1}\bar{M}_{n}({\rm d}z')\,.
\end{align*}
For $(z,A)\in\mathsf{Z}\times\mathscr{Z}$, we let
\[
\bar{L}_{n-1}(z,A):=\frac{{\rm d}\bar{\mu}_{n-1}\otimes\bar{M}_{n}\big(A\times\cdot\big)}{{\rm d}\bar{\mu}_{n-1}\bar{M}_{n}}(z)\,,
\]
and note that almost surely $\bar{L}_{n-1}(z,A)\in[0,1]$ and $\bar{L}_{n-1}(z,\mathsf{Z})=1$.
For the third statement note that from the second statement, for $A,B\in\mathscr{Z}$
we have $\bar{\mu}_{n-1}\otimes\bar{M}_{n}\big(A\times B\big)=\bar{\mu}_{n-1}\bar{M}_{n}\otimeswapped\bar{L}_{n-1}\big(A\times B\big)$
and from Fubini's and the $\pi-\lambda$ theorem \cite[Theorems 3.1 and 3.2]{billingsley1995probability}
$\bar{\mu}_{n-1}\otimes\bar{M}_{n}$ and $\bar{\mu}_{n-1}\bar{M}_{n}\otimeswapped\bar{L}_{n-1}$
are probability distributions coinciding on $\mathscr{Z}\otimes\mathscr{Z}$.
To conclude proof of the third statement, for $f\colon\mathsf{Z}^{2}\rightarrow[0,1]$
measurable, we successively apply the definition of the Radon-Nikodym
derivative, Fubini's theorem, use that $\mu_{n-1}\gg\bar{\mu}_{n}$,
the first statement of this lemma, Fubini again, the second statement
\begin{align*}
\int f(z,z')\frac{{\rm d}\bar{\mu}_{n}}{{\rm d}\mu_{n-1}}(z,z')\bar{\mu}_{n-1}\otimes\bar{M}_{n}\big({\rm d}(z,z')\big) & =\int f(z,z')\frac{{\rm d}\bar{\mu}_{n}}{{\rm d}\mu_{n-1}}(z')\,\bar{\mu}_{n-1}\bar{M}_{n}\otimeswapped\bar{L}_{n}\big({\rm d}(z,z')\big)\\
 & =\int f(z,z')\frac{{\rm d}\bar{\mu}_{n}}{{\rm d}\mu_{n-1}}(z')\,\mu_{n-1}({\rm d}z')\bar{L}_{n}\big(z',{\rm d}z\big)\\
 & =\int f(z,z')\bar{\mu}_{n}({\rm d}z')\bar{L}_{n}\big(z',{\rm d}z\big)
\end{align*}
therefore establishing that $\bar{\mu}_{n-1}\otimes\bar{M}_{n}$-almost
surely
\[
\frac{{\rm d}\bar{\mu}_{n-1}\bar{M}_{n}\otimeswapped\bar{L}_{n-1}}{{\rm d}\bar{\mu}_{n-1}\otimes\bar{M}_{n}}(z,z')=\frac{{\rm d}\bar{\mu}_{n}}{{\rm d}\mu_{n-1}}(z,z')\,.
\]

\end{proof}
\begin{proof}[Proof of Proposition~\ref{prop:unfolded-folded-equivalent}]
 In Alg.~\ref{alg:Folded-PDMP-SMC-1} for any $A\in\mathscr{Z}^{\otimes N}$
we have with $(b_{1},\ldots,b_{N})$ the random vector taking values
in $\llbracket N\rrbracket^{N}$ involved in the resampling step,
\begin{align*}
\mathbb{P}_{3}\big(z_{n}^{(1:N)}\in A\mid z_{n-1}^{(1:N)}\big)= & \mathbb{E}_{3}\left\{ \mathbf{1}_{A}\{(\check{x}_{n,a_{i}}^{(i)},v_{n}^{(i)}),i\in\llbracket N\rrbracket\}\mathbf{1}\{\check{z}_{n}^{(i)}=z_{n-1}^{(b_{i})},i\in\llbracket N\rrbracket\}\mid z_{n-1}^{(j)},j\in\llbracket N\rrbracket\right\} \\
= & \mathbb{E}_{3}\left\{ \mathbf{1}_{A}\{(x_{n-1,a_{i}}^{(b_{i})},v_{n}^{(i)}),i\in\llbracket N\rrbracket\}\mid z_{n-1}^{(j)},j\in\llbracket N\rrbracket\right\} \,.
\end{align*}
Letting for $(\alpha,\beta)\in\llbracket0,T\rrbracket^{N}\times\llbracket N\rrbracket^{N}$
\[
E_{A}(\alpha,\beta)=E_{A}(\alpha,\beta,z_{n-1}^{(1:N)}):=\int\mathbf{1}_{A}\{(x_{n-1,\alpha_{i}}^{(\beta_{i})},v_{n}^{(i)}),i\in\llbracket N\rrbracket\}\varpi_{n}^{\otimes N}({\rm d}v_{n}^{(1:N)})
\]
from the tower property we have
\[
\mathbb{E}_{3}\left\{ \mathbf{1}_{A}\{(x_{n-1,a_{i}}^{(b_{i})},v_{n}^{(i)}),i\in\llbracket N\rrbracket\}\mid z_{n-1}^{(j)},b_{j}=\beta_{j},j\in\llbracket N\rrbracket\right\} =\sum_{\alpha\in\llbracket0,T\rrbracket^{N}}E_{A}(\alpha,\beta)\prod_{i=1}^{N}\frac{1}{T+1}\frac{{\rm d}\mu_{n,\alpha_{i}}}{{\rm d}\bar{\mu}_{n}}\big(z_{n-1}^{(\beta_{i})}\big).
\]
Since
\[
\mathbb{P}_{3}\left(b_{1}=\beta_{1},\ldots,b_{N}=\beta_{N}\mid z_{n-1}^{(i)},i\in\llbracket N\rrbracket\right)=\biggl(\sum_{j=1}^{N}\frac{{\rm d}\bar{\mu}_{n}}{{\rm d}\mu_{n-1}}\big(z_{n-1}^{(j)}\big)\biggr)^{-N}\prod_{i=1}^{N}\frac{{\rm d}\bar{\mu}_{n}}{{\rm d}\mu_{n-1}}\big(z_{n-1}^{(\beta_{i})}\big)
\]
we deduce
\begin{align*}
\mathbb{P}_{3}\big(z_{n}^{(1:N)}\in A\mid z_{n-1}^{(1:N)}\big)\propto & \sum_{\alpha,\beta}E_{A}(\alpha,\beta,z_{n-1}^{(1:N)})\prod_{i=1}^{N}\frac{{\rm d}\bar{\mu}_{n}}{{\rm d}\mu_{n-1}}\big(z_{n-1}^{(\beta_{i})}\big)\prod_{i=1}^{N}\frac{{\rm d}\mu_{n,\alpha_{i}}}{{\rm d}\bar{\mu}_{n}}\big(z_{n-1}^{(\beta_{i})}\big)\\
= & \sum_{\alpha,\beta}E_{A}(\alpha,\beta,z_{n-1}^{(1:N)})\prod_{i=1}^{N}\frac{{\rm d}\bar{\mu}_{n}}{{\rm d}\mu_{n-1}}\big(z_{n-1}^{(\beta_{i})}\big)\frac{{\rm d}\mu_{n,\alpha_{i}}}{{\rm d}\bar{\mu}_{n}}\big(z_{n-1}^{(\beta_{i})}\big)\\
= & \sum_{\alpha,\beta}E_{A}(\alpha,\beta,z_{n-1}^{(1:N)})\prod_{i=1}^{N}\frac{{\rm d}\mu_{n,\alpha_{i}}}{{\rm d}\mu_{n-1}}\big(z_{n-1}^{(\beta_{i})}\big)\,.
\end{align*}
Now notice that
\begin{align*}
\mathbb{P}_{2}\left(\bar{z}_{n}^{(j)}=z_{n-1,\alpha_{j}}^{(\beta_{j})}\mid z_{n-1}^{(1:N)}\right) & =\frac{\frac{{\rm d}\mu_{n,\alpha_{j}}}{{\rm d}\mu_{n-1}}\big(z_{n-1}^{(\beta_{j})}\big)}{\sum_{i,k}\frac{{\rm d}\mu_{n,k}}{{\rm d}\mu_{n-1}}\big(z_{n-1}^{(i)}\big)}\\
 & \propto\frac{\frac{{\rm d}\mu_{n,\alpha_{j}}}{{\rm d}\mu_{n-1}}\big(z_{n-1}^{(\beta_{j})}\big)}{\sum_{i=1}^{N}\frac{{\rm d}\bar{\mu}_{n}}{{\rm d}\mu_{n-1}}\big(z_{n-1}^{(i)}\big)}\,,
\end{align*}
and the first statement follows from conditional independence of $(b_{1},a_{1}),\ldots,(b_{N},a_{N})$
and the fact that $\mathbb{P}_{2}\big(z_{n}^{(1:N)}\in A\mid\bar{z}_{n-1}^{(1:N)}\big)=E_{A}(\alpha,\beta,\bar{z}_{n-1}^{(1:N)})$.
Since by construction $\mathbb{P}_{3}\big((z_{0}^{(1)},\ldots,z_{0}^{(N)})\in A\big)=\mathbb{P}_{2}\big((z_{0}^{(1)},\ldots,z_{0}^{(N)})\in A\big)$
for any $A\in\mathscr{Z}$ the second statement follows from a standard
Markov chain argument. Now for $g\colon\mathsf{Z}\rightarrow\mathbb{R}$
and $\iota\in\llbracket N\rrbracket$ 
\begin{align*}
\mathbb{E}_{3}\big(g(\check{z}_{n}^{(\iota)})\mid z_{n-1}^{(j)},j\in\llbracket N\rrbracket\big)= & \mathbb{E}_{3}\big(g(z_{n-1}^{(b_{\iota})})\mid z_{n-1}^{(j)},j\in\llbracket N\rrbracket\big)\\
= & \sum_{i=1}^{N}g(z_{n-1}^{(i)})\frac{\frac{{\rm d}\bar{\mu}_{n}}{{\rm d}\mu_{n-1}}\big(z_{n-1}^{(i)}\big)}{\sum_{j=1}^{N}\frac{{\rm d}\bar{\mu}_{n}}{{\rm d}\mu_{n-1}}\big(z_{n-1}^{(j)}\big)}\,.
\end{align*}
Therefore for any $k\in\llbracket0,T\rrbracket$, 
\begin{align*}
\mathbb{E}_{3}\left(\frac{{\rm d}\mu_{n,k}}{{\rm d}\bar{\mu}_{n}}(\check{z}_{n}^{(\iota)})f\circ\psi_{n,k}(\check{z}_{n}^{(\iota)})\mid z_{n-1}^{(j)},j\in\llbracket N\rrbracket\right) & =\sum_{i=1}^{N}\frac{{\rm d}\mu_{n,k}}{{\rm d}\bar{\mu}_{n}}(z_{n-1}^{(i)})f\circ\psi_{n,k}(z_{n-1}^{(i)})\frac{\frac{{\rm d}\bar{\mu}_{n}}{{\rm d}\mu_{n-1}}\big(z_{n-1}^{(i)}\big)}{\sum_{j=1}^{N}\frac{{\rm d}\bar{\mu}_{n}}{{\rm d}\mu_{n-1}}\big(z_{n-1}^{(j)}\big)}\\
 & =\sum_{i=1}^{N}\frac{\frac{{\rm d}\mu_{n,k}}{{\rm d}\mu_{n-1}}(z_{n-1}^{(i)})}{\sum_{j=1}^{N}\sum_{l=0}^{T}\frac{{\rm d}\mu_{n,l}}{{\rm d}\mu_{n-1}}\big(z_{n-1}^{(j)}\big)}f\circ\psi_{n,k}(z_{n-1}^{(i)})\,.
\end{align*}
The third statement follows.
\end{proof}

\section{Sampling Markov snippets\protect\label{sec:Sampling-a-mixture:}}

In this section we develop the Markov snippet framework, largely inspired
by the WF-SMC framework of \cite{dau2020waste} but provide here a
detailed derivation following the standard SMC sampler framework \cite{del2006sequential}
which allows us to consider much more general mutation kernels; integrator
snippet SMC is recovered as a particular case. Importantly we provide
recipes to compute some of the quantities involved using simple criteria
(see Lemma~\ref{lem:ruse-de-sioux} and Corollary~\ref{cor:correspond-ruse-de-sioux})
which allow us to consider unusual scenarios such as in Appendix~\ref{subsec:Sampling-randomized-integrator}.

\subsection{Markov snippet SMC sampler or waste free SMC with a difference \protect\label{sec:justification-waste-free}}

Given a sequence $\big\{\mu_{n},n\in\llbracket0,P\rrbracket\big\}$
of probability distributions defined on a measurable space $(\mathsf{Z},\mathscr{Z})$
introduce the sequence of distributions defined on $\big(\llbracket0,T\rrbracket\times\mathsf{Z}^{T+1},\mathscr{P}(\llbracket0,T\rrbracket)\otimes\mathscr{Z}^{\otimes(T+1)}\big)$
such that for any $(n,k,\mathsf{z})\in\llbracket0,P\rrbracket\times\llbracket0,T\rrbracket\times\mathsf{Z}$ 

\[
\bar{\mu}_{n}(k,{\rm d}\mathsf{z}):=\frac{1}{T+1}w_{n,k}(\mathsf{z})\mu_{n}\otimes M_{n}^{\otimes T}({\rm d}\mathsf{z})
\]
where for $M_{n},L_{n-1,k}\colon\mathsf{Z}\times\mathscr{Z}\rightarrow[0,1]$,
$k\in\llbracket0,P\rrbracket$ and any $\mathsf{z}=(z_{0},z_{1},\ldots,z_{T})\in\mathsf{Z}^{T+1}$,
\[
w_{n,k}(\mathsf{z}):=\frac{{\rm d}\mu_{n}\otimeswapped L_{n-1,k}}{{\rm d}\mu_{n}\otimes M_{n}^{k}}(z_{0},z_{k})\,,
\]
 is assumed to exist for now and $w_{n,0}(\mathsf{z}):=1$. This yields
the marginals
\begin{equation}
\bar{\mu}_{n}({\rm d}\mathsf{z}):=\sum_{k=0}^{T}\bar{\mu}_{n}(k,{\rm d}\mathsf{z})=\frac{1}{T+1}\sum_{k=1}^{n}w_{n,k}(\mathsf{z})\,\mu_{n}\otimes M_{n}^{\otimes T}({\rm d}\mathsf{z})\,.\label{eq:bar-mu-markov-snippet}
\end{equation}
Further, consider $R_{n}\colon\mathsf{Z}\times\mathscr{Z}\rightarrow[0,1]$
such that $\mu_{n-1}R_{n}=\mu_{n-1}$ and define $\bar{M}_{n}\colon\mathsf{Z}^{T+1}\times\mathscr{Z}^{\otimes(T+1)}\rightarrow[0,1]$
\[
\bar{M}_{n}(\mathsf{z},{\rm d}\mathsf{z}'):=\sum_{k=0}^{T}\bar{\mu}_{n-1}(k\mid\mathsf{z})R_{n}(z_{k},{\rm d}z'_{0})M_{n}^{\otimes T}\big(z'_{0},{\rm d}\mathsf{z}'_{-0}\big)\,.
\]
Note that \cite{dau2020waste} set $R_{n}=M_{n}$, which we do not
want in our later application and further assume that $M_{n}$ is
$\mu_{n-1}-$invariant, which is not necessary and too constraining
for our application in Appendix~\ref{sec:Sampling-HMC-trajectories}
(corresponding to the application in the introductory Subsection~\ref{subsec:Outline-justification}).
We only require the condition $\mu_{n-1}R_{n}=\mu_{n-1}$ is required.
As in Subsection~\ref{subsec:Outline-justification} we consider
the optimal backward kernel $\bar{L}_{n}\colon\mathsf{Z}^{T+1}\times\mathscr{Z}^{\otimes(T+1)}\rightarrow[0,1]$,
given for $(\mathsf{z},A)\in\mathsf{Z}^{T+1}\times\mathscr{Z}^{\otimes(T+1)}$
by
\[
\bar{L}_{n-1}(\mathsf{z},A)=\frac{{\rm d}\bar{\mu}_{n-1}\otimes\bar{M}_{n}(A\times\cdot)}{{\rm d}\bar{\mu}_{n-1}\bar{M}_{n}}(\mathsf{z})\,,
\]
and as established in Lemma~\ref{lem:unfolding-mixture} and Lemma~\ref{lem:barM-properties},
one obtains
\begin{align}
\bar{w}_{n}(\mathsf{z}') & =\frac{{\rm d}\bar{\mu}_{n}\otimeswapped\bar{L}_{n-1}}{{\rm d}\bar{\mu}_{n-1}\otimes\bar{M}_{n}}(\mathsf{z},\mathsf{z}')=\frac{{\rm d}\mu_{n}}{{\rm d}\mu_{n-1}}(z'_{0})\frac{1}{T+1}\sum_{k=0}^{T}w_{n,k}(\mathsf{z}')\,.\label{eq:expression-bar-w-waste-free}
\end{align}
The corresponding folded version of the algorithm is given in Alg.~\ref{alg:Folded-Waste-Free-general},
with $\check{\mathsf{z}}_{n}:=(\check{z}_{n,0},\check{z}_{n,1},\ldots,\check{z}_{n,T})$...: 
\begin{itemize}
\item $\big\{(\check{\mathsf{z}}_{n}^{(i)},1),i\in\llbracket N\rrbracket\big\}$
represents $\bar{\mu}_{n}({\rm d}\mathsf{z})$,
\item $\big\{(z_{n,k}^{(i)},w_{n+1,k}),(i,k)\in\llbracket N\rrbracket\times\llbracket0,T\rrbracket\big\}$
represent $\mu_{n}({\rm d}z)$, 
\item $\big\{\big(\mathsf{z}_{n}^{(i)},\bar{w}_{n+1}\big(\mathsf{z}_{n}^{(i)}\big)\big),i\in\llbracket N\rrbracket\big\}$
represents $\bar{\mu}_{n+1}({\rm d}\mathsf{z})$ and so do $\big\{(\check{\mathsf{z}}_{n+1}^{(i)},1),i\in\llbracket N\rrbracket\big\}$. 
\end{itemize}
\begin{algorithm}
sample $\check{\mathsf{z}}_{0}^{(i)}\overset{{\rm iid}}{\sim}\mu_{0}^{\otimes(T+1)}$,
set $w_{0,k}(\check{\mathsf{z}}_{0}^{(i)})=1$ for $(i,k)\in\llbracket N\rrbracket\times\llbracket T\rrbracket$.

\For{$n=0,\ldots,P-1$}{

\For{$i\in\llbracket N\rrbracket$}{

sample $a_{i}\sim{\rm Cat}\left(1,w_{n,1}(\check{\mathsf{z}}_{n}^{(i)}),w_{n,2}(\check{\mathsf{z}}_{n}^{(i)}),\ldots,w_{n,T}(\check{\mathsf{z}}_{n}^{(i)})\right)$

sample $z_{n,0}^{(i)}=z_{n}^{(i)}\sim R_{n+1}(\check{z}_{n,a_{i}},\cdot)$

\For{$k\in\llbracket T\rrbracket$}{

sample $z_{n,k}^{(i)}\sim M_{n+1}(z_{n,k-1}^{(i)},\cdot)$

compute
\[
w_{n+1,k}\big(\mathsf{z}_{n}^{(i)}\big)=\frac{{\rm d}\mu_{n}\otimeswapped L_{n}^{k}}{{\rm d}\mu_{n}\otimes M_{n+1}^{k}}(z_{n,0}^{(i)},z_{n,k}^{(i)})
\]

}

compute

\[
\bar{w}_{n+1}\big(\mathsf{z}_{n}^{(i)}\big)=\frac{{\rm d}\mu_{n+1}}{{\rm d}\mu_{n}}\big(z_{n,0}^{(i)}\big)\frac{1}{T+1}\sum_{k\in\llbracket T\rrbracket}w_{n+1,k}\big(\mathsf{z}_{n}^{(i)}\big)\,,
\]

}

\For{$j\in\mathbb{N}$}{

sample $b_{j}\sim{\rm Cat}\left(\bar{w}_{n+1}\big(\mathsf{z}_{n}^{(1)}\big),\bar{w}_{n+1}\big(\mathsf{z}_{n}^{(2)}\big),\ldots,\bar{w}_{n+1}\big(\mathsf{z}_{n}^{(N)}\big)\right)$

set $\check{\mathsf{z}}_{n+1}^{(j)}=\mathsf{z}_{n}^{(b_{j})}$

}

}

\caption{(Folded) Markov Snippet SMC algorithm \protect\label{alg:Folded-Waste-Free-general}}
\end{algorithm}

\begin{rem}
Other choices are possible and we detail here an alternative, related
to the Waste-free framework of \cite{dau2020waste}. With notation
as above here we take
\end{rem}

\[
\bar{\mu}_{n}(k,{\rm d}\mathsf{z}):=\frac{1}{T+1}w_{n,k}(\mathsf{z})\mu_{n-1}\otimes M_{n}^{\otimes T}({\rm d}\mathsf{z})
\]
with the weights now defined as
\[
w_{n,k}(\mathsf{z}):=\frac{{\rm d}\mu_{n}\otimeswapped L_{n-1,k}}{{\rm d}\mu_{n-1}\otimes M_{n}^{k}}(z_{0},z_{k})\,.
\]
With these choices we retain the fundamental property that for $f\colon\mathsf{Z}\rightarrow\mathbb{R}$,
with $\bar{f}(k,\mathsf{z}):=f(z_{k})$ then $\bar{\mu}_{n}\big(\bar{f}\big)=\mu_{n}(f)$.
Now with
\[
\bar{M}_{n}(\mathsf{z},{\rm d}\mathsf{z}'):=\sum_{k=0}^{T}\bar{\mu}_{n-1}(k\mid\mathsf{z})R_{n}(z_{k},{\rm d}z'_{0})M_{n}^{\otimes T}\big(z'_{0},{\rm d}\mathsf{z}'_{-0}\big)\,,
\]
assuming $\mu_{n-1}R_{n}=\mu_{n-1}$, we have the property that for
any $A\in\mathscr{Z}^{\otimes(T+1)}$ $\bar{\mu}_{n-1}\bar{M}_{n}(A)=\mu_{n-1}\otimes M_{n}^{\otimes T}\big(A\big)$,
yielding
\[
\bar{w}_{n}(\mathsf{z}')=\frac{1}{T+1}\sum_{k=0}^{T}w_{n,k}(\mathsf{z}')\,.
\]
  Finally choosing $L_{n,k}$ to be the optimized backward kernel,
the importance weight of the algorithm is, 
\[
w_{n,k}(\mathsf{z}')=\frac{{\rm d}\mu_{n}\otimeswapped L_{n-1,k}}{{\rm d}\mu_{n-1}\otimes M_{n}^{k}}(z_{0},z_{k})=\frac{{\rm d}\mu_{n}}{{\rm d}\mu_{n-1}}(z_{k})\,.
\]
 We note that continuous time Markov process snippets could also
be used. For example piecewise deterministic Markov processes such
as the Zig-Zag process \cite{bierkens2017piecewise} or the Bouncy
Particle Sampler \cite{bouchard2018bouncy} could be used in practice
since finite time horizon trajectories can be parametrized in terms
of a finite number of parameters. We do not pursue this here.

\subsection{Theoretical justification}

In this section we provide the theoretical justification for the correctness
of Alg.~\ref{alg:Folded-Waste-Free-general} and an alternative proof
for Alg.~\ref{alg:Folded-PDMP-SMC-1}, seen as a particular case
of Alg.~\ref{alg:Folded-Waste-Free-general}.

Throughout this section we use the following notation where $\mu$
(resp. $\nu$) plays the rôle of $\mu_{n+1}$ (resp. $\mu_{n}$),
for notational simplicity. Let $\mu\in\mathcal{P}\big(\mathsf{Z},\mathscr{Z}\big)$
and $M,L\colon\mathsf{Z}\times\mathscr{Z}\rightarrow[0,1]$ be two
Markov kernels such that the following condition holds. For $T\in\mathbb{N}\setminus\{0\}$
let $\mathsf{z}:=(z_{0},z_{1},\ldots,z_{T})\in\mathsf{Z}^{T+1}$ and
for $k\in\llbracket0,T\rrbracket$ assume that $\mu\otimes M^{k}\gg\mu\otimeswapped(L^{k})$,
implying the existence of the Radon-Nikodym derivatives 
\begin{equation}
w_{k}(\mathsf{z}):=\frac{{\rm d}\mu\otimeswapped(L^{k})}{{\rm d}\mu\otimes M^{k}}(z_{0},z_{k})\,,\label{eq:wk-mathsf-z}
\end{equation}
with the convention $w_{0}(\mathsf{z})=1$. We let $w(\mathsf{z}):=(T+1)^{-1}\sum_{k=0}^{T}w_{k}(\mathsf{z})$.
For $\mathsf{z}\in\mathsf{Z}^{T+1}$, define 
\begin{align*}
M^{\otimes T}(z_{0},{\rm d}\mathsf{z}_{-0}): & =\prod_{i=1}^{T}M(z_{i-1},{\rm d}z_{i})\,,
\end{align*}
and introduce the mixture of distributions, defined on $\big(\mathsf{Z}^{T+1},\mathscr{Z}^{\otimes(T+1)}\big)$,
\begin{equation}
\bar{\mu}({\rm d}\mathsf{z})=\sum_{k=0}^{T}\bar{\mu}(k,{\rm d}\mathsf{z})\,,\label{eq:def-barmu-mathsfz}
\end{equation}
where for $(k,\mathsf{z})\in\llbracket0,T\rrbracket\times\mathsf{Z}^{T+1}$
\[
\bar{\mu}(k,{\rm d}\mathsf{z}):=\frac{1}{T+1}w_{k}(\mathsf{z})\mu\otimes M^{\otimes T}({\rm d}\mathsf{z})\,,
\]
so that one can write $\bar{\mu}({\rm d}\mathsf{z})=w(\mathsf{z})\mu\otimes M^{\otimes T}({\rm d}\mathsf{z})$
with $w(\mathsf{z}):=(T+1)^{-1}\sum_{k=0}^{T}w_{k}(\mathsf{z})$. 

We first establish properties of $\bar{\mu}$, showing how samples
from $\bar{\mu}$ can be used to estimate expectations with respect
to $\mu$. 
\begin{lem}
\label{lem:unfolding-mixture} For any $f\colon\mathsf{Z}\rightarrow\mathbb{R}$
such that $\mu(|f|)<\infty$
\end{lem}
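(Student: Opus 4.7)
The plan is to prove an ``unfolding'' identity of the form
\[
\mu(f) = \int \sum_{k=0}^{T} f(z_k)\,\bar{\mu}(k\mid \mathsf{z})\,\bar{\mu}(\mathrm{d}\mathsf{z}) = \int f(z_k)\,\bar{\mu}\bigl(\mathrm{d}(k,\mathsf{z})\bigr),
\]
i.e.\ that expectations under $\mu$ can be computed by integrating $\bar{f}(k,\mathsf{z}) := f(z_k)$ against the snippet distribution $\bar{\mu}$ of \eqref{eq:def-barmu-mathsfz}. The argument should split into a per-$k$ identity followed by averaging over $k$.

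First I would exploit the product structure of $\mu\otimes M^{\otimes T}(\mathrm{d}\mathsf{z})$: for each fixed $k\in\llbracket 0,T\rrbracket$, integrating out the coordinates $z_1,\ldots,z_{k-1}$ via the Markov property collapses the first $k$ factors to $M^k(z_0,\mathrm{d}z_k)$, while integrating out $z_{k+1},\ldots,z_T$ simply contributes a factor of $1$. Since $w_k(\mathsf{z})$ depends on $\mathsf{z}$ only through $(z_0,z_k)$, this yields
\[
\int f(z_k)\,w_k(\mathsf{z})\,\mu\otimes M^{\otimes T}(\mathrm{d}\mathsf{z}) = \int f(z_k)\,w_k(z_0,z_k)\,\mu\otimes M^{k}\bigl(\mathrm{d}(z_0,z_k)\bigr).
\]
Next, applying the Radon--Nikodym definition \eqref{eq:wk-mathsf-z} of $w_k$ converts the right-hand side into an integral against $\mu\otimeswapped L^{k}$, and by the very definition of the swapped product its $z_k$-marginal is $\mu$; the $L^k(z_k,\mathrm{d}z_0)$ factor then integrates to one, giving $\mu(f)$. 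The boundary case $k=0$ is handled by the convention $w_0(\mathsf{z})=1$, which reproduces $\mu(f)$ directly.

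Finally, averaging the per-$k$ identity $\int f(z_k)\,w_k(\mathsf{z})\,\mu\otimes M^{\otimes T}(\mathrm{d}\mathsf{z}) = \mu(f)$ over $k\in\llbracket 0,T\rrbracket$ produces $\int \bar{f}(k,\mathsf{z})\,\bar{\mu}\bigl(\mathrm{d}(k,\mathsf{z})\bigr) = \mu(f)$, which (using $\bar{\mu}(k\mid\mathsf{z}) = w_k(\mathsf{z})/\sum_{l=0}^{T} w_l(\mathsf{z})$) rearranges into the quoted form. The integrability of the integrand follows from $\mu(|f|)<\infty$ together with Fubini applied to the nonnegative Radon--Nikodym derivatives, so the interchange of sum and integral is unproblematic.

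There is essentially no hard step: the entire result is a measure-theoretic unpacking of the SMC importance-sampling identity, replicated once for each $k$. The only minor subtlety is keeping track of which coordinate ($z_0$ or $z_k$) carries the $\mu$-marginal under $\mu\otimeswapped L^k$, and recognising that the factor $M^{T-k}(z_k,\mathsf{Z})=1$ makes the marginal of $\mu\otimes M^{\otimes T}$ onto $(z_0,z_k)$ coincide with $\mu\otimes M^k$; both are routine once the notation of Appendix~\ref{sec:app-Notation} is in hand.
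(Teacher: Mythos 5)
Your proposal is correct and follows essentially the same route as the paper: the per-$k$ identity is exactly the Radon--Nikodym/definition-of-$\otimeswapped$ computation in the paper's first statement, the reduction of $\mu\otimes M^{\otimes T}$ to $\mu\otimes M^{k}$ via the fact that $w_{k}$ depends only on $(z_{0},z_{k})$ is the paper's second step, and the averaging over $k$ together with the tower property gives the conditional form. The only difference is presentational (you fold the marginalisation and the Radon--Nikodym step into one pass rather than stating them as separate items).
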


\begin{enumerate}
\item we have for $k\in\llbracket0,T\rrbracket$
\[
\int f(z')\frac{{\rm d}\mu\otimeswapped(L^{k})}{{\rm d}\mu\otimes M^{k}}(z,z')\mu({\rm d}z)M^{k}(z,{\rm d}z')=\mu(f)\,,
\]
\item with $\bar{f}(k,\mathsf{z}):=f(z_{k})$, 
\begin{enumerate}
\item then
\begin{align*}
\bar{\mu}(\bar{f}) & =\mu(f)\,,
\end{align*}
\item in particular,
\begin{equation}
\int\sum_{k=0}^{T}\bar{f}(k,\mathtt{\mathsf{z}})\bar{\mu}(k\mid\mathsf{z})\bar{\mu}({\rm d}\mathsf{z})=\mu(f)\,.\label{eq:unfold-mixture-general}
\end{equation}
\end{enumerate}
\end{enumerate}
\begin{proof}
The first statement follows directly from the definition of the Radon-Nikodym
derivative
\begin{align*}
\int f(z')\frac{{\rm d}\mu\otimeswapped L^{k}}{{\rm d}\mu\otimes M^{k}}(z,z')\mu\otimes M^{k}\big({\rm d(}z,z')\big) & =\int f(z')\mu({\rm d}z')L^{k}(z',{\rm d}z)\\
 & =\int f(z')\mu({\rm d}z')\,.
\end{align*}
The second statement follows from the definition of $\bar{\mu}$ and
\[
\frac{1}{T+1}\sum_{k=0}^{T}\int\bar{f}(k,\mathsf{z})w_{k}(\mathsf{z})\mu\otimes M^{\otimes T}({\rm d}\mathsf{z})=\frac{1}{T+1}\sum_{k=0}^{T}\int f(z_{k})\frac{{\rm d}\mu\otimeswapped(L^{k})}{{\rm d}\mu\otimes M^{k}}(z_{0},z_{k})\mu\otimes M^{k}\big({\rm d}(z_{0},z_{k})\big)\,,
\]
and the first statement. The last statement follows from the tower
property for expectations.
\end{proof}
\begin{cor}
Assume that $\mathsf{z}\sim\bar{\mu}$, then
\[
\sum_{k=0}^{T}f(z_{k})\frac{w_{k}(\mathsf{z})}{\sum_{l=0}^{T}w_{l}(\mathsf{z})}
\]
is an unbiased estimator of $\mu(f)$ since we notice that for $k\in\llbracket0,T\rrbracket$,
\[
\bar{\mu}(k\mid\mathsf{z})=\frac{w_{k}(\mathsf{z})}{\sum_{l=0}^{T}w_{l}(\mathsf{z})}\,.
\]
This justifies algorithms which sample from the mixture $\bar{\mu}$
directly in order to estimate expectations with respect to $\mu$.
\end{cor}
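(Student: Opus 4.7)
The plan is to derive this corollary as a direct consequence of identity (\ref{eq:unfold-mixture-general}) once the conditional probability $\bar{\mu}(k\mid\mathsf{z})$ is made explicit. The lemma already does the substantive work; what remains is essentially a bookkeeping step combined with a Radon--Nikodym calculation for the conditional.

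First I would compute $\bar{\mu}(k\mid\mathsf{z})$ from the definitions. By construction, the joint measure on $\llbracket 0,T\rrbracket\times\mathsf{Z}^{T+1}$ is given by $\bar{\mu}(k,{\rm d}\mathsf{z})=(T+1)^{-1}w_{k}(\mathsf{z})\,\mu\otimes M^{\otimes T}({\rm d}\mathsf{z})$, while (\ref{eq:def-barmu-mathsfz}) yields the marginal $\bar{\mu}({\rm d}\mathsf{z})=w(\mathsf{z})\,\mu\otimes M^{\otimes T}({\rm d}\mathsf{z})$ with $w(\mathsf{z})=(T+1)^{-1}\sum_{l=0}^{T}w_{l}(\mathsf{z})$. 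Taking the ratio of these two densities with respect to the common dominating measure $\mu\otimes M^{\otimes T}$ gives the conditional mass function
\[
\bar{\mu}(k\mid\mathsf{z})=\frac{w_{k}(\mathsf{z})}{\sum_{l=0}^{T}w_{l}(\mathsf{z})}\,,
\]
which is well defined for $\bar{\mu}$-almost every $\mathsf{z}$, since $w(\mathsf{z})>0$ on the support of $\bar{\mu}$ by construction of the latter.

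Next I would substitute this identity into (\ref{eq:unfold-mixture-general}) from Lemma~\ref{lem:unfolding-mixture}, recalling $\bar{f}(k,\mathsf{z})=f(z_{k})$. The inner sum becomes
\[
\sum_{k=0}^{T}f(z_{k})\,\bar{\mu}(k\mid\mathsf{z})=\sum_{k=0}^{T}f(z_{k})\,\frac{w_{k}(\mathsf{z})}{\sum_{l=0}^{T}w_{l}(\mathsf{z})}\,,
\]
so the lemma yields directly $\mathbb{E}_{\bar{\mu}}\bigl[\sum_{k=0}^{T}f(z_{k})\,w_{k}(\mathsf{z})/\sum_{l=0}^{T}w_{l}(\mathsf{z})\bigr]=\mu(f)$, which is the unbiasedness claim.

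There is no real obstacle: the argument is a one-line consequence of Lemma~\ref{lem:unfolding-mixture} combined with the conditional-density calculation. The only point requiring care is to confirm that the ratio defining $\bar{\mu}(k\mid\mathsf{z})$ is unambiguous $\bar{\mu}$-a.s., which follows because the denominator $\sum_{l=0}^{T}w_{l}(\mathsf{z})$ coincides (up to the factor $T+1$) with the Radon--Nikodym derivative of $\bar{\mu}$ with respect to $\mu\otimes M^{\otimes T}$ and is therefore strictly positive on the support of $\bar{\mu}$.
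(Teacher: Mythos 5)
Your argument is correct and is exactly the one the paper intends: you identify $\bar{\mu}(k\mid\mathsf{z})$ as the ratio of the joint density $(T+1)^{-1}w_{k}(\mathsf{z})$ to the marginal density $w(\mathsf{z})$ with respect to the common dominating measure $\mu\otimes M^{\otimes T}$, and then invoke identity (\ref{eq:unfold-mixture-general}) of Lemma~\ref{lem:unfolding-mixture}. The added observation that the normalising denominator is strictly positive $\bar{\mu}$-almost surely is a worthwhile point of care that the paper leaves implicit.
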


Let $\nu\in\mathcal{P}\big(\mathsf{Z},\mathscr{Z}\big)$ and $\bar{\nu}\in\mathcal{P}\big(\mathsf{Z}^{T+1},\mathscr{Z}^{\otimes(T+1)}\big)$
be derived from $\nu$ in the same way $\bar{\mu}$ is from $\mu$
in (\ref{eq:def-barmu-mathsfz}), but for possibly different Markov
kernels $M_{\nu},L_{\nu}\colon\mathsf{Z}\times\mathscr{Z}\rightarrow[0,1]$.
Define now the kernel $\bar{M}\colon\mathsf{Z}^{T+1}\times\mathscr{Z}^{\otimes(T+1)}\rightarrow[0,1]$
be the Markov kernel
\begin{equation}
\bar{M}(\mathsf{z},{\rm d}\mathsf{z}'):=\sum_{k=0}^{T}\bar{\nu}(k\mid\mathsf{z})R(z_{k},{\rm d}z'_{0})M^{\otimes T}\big(z'_{0},{\rm d}\mathsf{z}'_{-0}\big)\,,\label{eq:def-M-bar-nu}
\end{equation}
with $R\colon\mathsf{Z}\times\mathscr{Z}\rightarrow[0,1]$. Remark
that $M,L,M_{\nu},L_{\nu}$ and $R$ can be made dependent on both
$\mu$ and $\nu$, provided they satisfy all the conditions stated
above and below.

The following justifies the existence of $\bar{w}_{n+1}$ in (\ref{eq:expression-bar-w-waste-free})
for a particular choice of backward kernel and provides a simplified
expression for a particular choices of $R$.
\begin{lem}
\label{lem:barM-properties} With the notation (\ref{eq:wk-mathsf-z})-(\ref{eq:def-barmu-mathsfz})
and (\ref{eq:def-M-bar-nu}),
\begin{enumerate}
\item there exists $\bar{M}^{*}\colon\mathsf{Z}^{T+1}\times\mathscr{Z}^{\otimes(T+1)}\rightarrow[0,1]$
such that for any $A,B\in\mathscr{Z}^{\otimes(T+1)}$
\[
\bar{\nu}\otimes\bar{M}(A\times B)=(\bar{\nu}\bar{M})\otimes\bar{M}^{*}(B\times A)
\]
\item we have
\[
\bar{\nu}\bar{M}=(\nu R)\otimes M^{\otimes T}\,,
\]
\item \textup{assuming} $\nu R\gg\mu$ then with the choice $\bar{L}=\bar{M}^{*}$
we have $\bar{\nu}\otimes\bar{M}\gg\bar{\mu}\otimeswapped\bar{L}$
and for $\mathsf{z},\mathsf{z}'\in\mathsf{\mathsf{Z}}^{T+1}$ almost
surely
\[
\bar{w}(\mathsf{z},\mathsf{z}')=\frac{{\rm d}\bar{\mu}\otimeswapped\bar{L}}{{\rm d}\bar{\nu}\otimes\bar{M}}(\mathsf{z},\mathsf{z}')=\frac{{\rm d}\mu}{{\rm d}\nu R}(z'_{0})\frac{1}{T+1}\sum_{k=0}^{T}w_{k}(\mathsf{z}')=\frac{{\rm d}\mu}{{\rm d}\nu R}(z'_{0})w(\mathsf{z}')=:\bar{w}(\mathsf{z}')\,,
\]
that is we introduce notation reflecting this last simplication. 
\item Notice the additional simplification when $\nu R=\nu$.
\end{enumerate}
\end{lem}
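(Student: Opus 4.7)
The plan is to handle the four items in order, largely mirroring the style of Lemma~\ref{lem:bar-weight-derivation} but lifted from $\mathsf{Z}$ to $\mathsf{Z}^{T+1}$. For item 1, I would invoke the standard Bayes'-type disintegration: fix $A\in\mathscr{Z}^{\otimes(T+1)}$ and note that $B\mapsto\bar{\nu}\otimes\bar{M}(A\times B)$ is a finite measure dominated by $\bar{\nu}\bar{M}(B)=\bar{\nu}\otimes\bar{M}(\mathsf{Z}^{T+1}\times B)$, so that the Radon--Nikodym derivative
\[
\bar{M}^{*}(\mathsf{z}',A):=\frac{{\rm d}\bar{\nu}\otimes\bar{M}(A\times\cdot)}{{\rm d}\bar{\nu}\bar{M}}(\mathsf{z}')
\]
exists, and a Fubini/$\pi$--$\lambda$ argument as in Lemma~\ref{lem:bar-weight-derivation} shows it is a bona fide Markov kernel and that the claimed identity holds. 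The key reformulation retained for use below is that item 1 is equivalent to $\bar{\nu}\otimes\bar{M}=\bar{\nu}\bar{M}\otimeswapped\bar{M}^{*}$.

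For item 2, I would unfold the definition of $\bar{M}$ to obtain
\[
\bar{\nu}\bar{M}({\rm d}\mathsf{z}')=\sum_{k=0}^{T}\int\bar{\nu}(k,{\rm d}\mathsf{z})\,R(z_{k},{\rm d}z'_{0})\,M^{\otimes T}(z'_{0},{\rm d}\mathsf{z}'_{-0}),
\]
and exploit the fact that the $z_{k}$-marginal of $\bar{\nu}(k,\cdot)$ equals $(T+1)^{-1}\nu$. This follows directly from $\bar{\nu}(k,{\rm d}\mathsf{z})=(T+1)^{-1}w_{\nu,k}(\mathsf{z})\,\nu\otimes M_{\nu}^{\otimes T}({\rm d}\mathsf{z})$: integrating out $z_{1},\ldots,z_{k-1},z_{k+1},\ldots,z_{T}$ leaves $(T+1)^{-1}w_{\nu,k}(z_{0},z_{k})\,\nu\otimes M_{\nu}^{k}({\rm d}(z_{0},z_{k}))$, which by the very definition of $w_{\nu,k}$ equals $(T+1)^{-1}\nu\otimeswapped L_{\nu}^{k}({\rm d}(z_{k},z_{0}))$ and has $z_{k}$-marginal $(T+1)^{-1}\nu$ by Lemma~\ref{lem:unfolding-mixture}(1). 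Substituting and summing over $k$ then shows that the $z'_{0}$-marginal of $\bar{\nu}\bar{M}$ is $\nu R$ while, conditionally on $z'_{0}$, the remaining coordinates $\mathsf{z}'_{-0}$ follow $M^{\otimes T}(z'_{0},\cdot)$; hence $\bar{\nu}\bar{M}=(\nu R)\otimes M^{\otimes T}$.

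For item 3, the clean observation is that with $\bar{L}:=\bar{M}^{*}$, item 1 reads $\bar{\nu}\otimes\bar{M}=\bar{\nu}\bar{M}\otimeswapped\bar{L}$, so both $\bar{\nu}\otimes\bar{M}$ and $\bar{\mu}\otimeswapped\bar{L}$ share the same conditional kernel $\bar{L}$ for the first argument given the second. Consequently their Radon--Nikodym derivative depends only on $\mathsf{z}'$ and coincides with ${\rm d}\bar{\mu}/{\rm d}(\bar{\nu}\bar{M})(\mathsf{z}')$; the required absolute continuity $\bar{\nu}\otimes\bar{M}\gg\bar{\mu}\otimeswapped\bar{L}$ is propagated from $\nu R\gg\mu$ via item 2 together with the common kernel $M^{\otimes T}$ on both sides. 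Since $\bar{\mu}({\rm d}\mathsf{z})=w(\mathsf{z})\,\mu\otimes M^{\otimes T}({\rm d}\mathsf{z})$, the chain rule combined with item 2 then yields
\[
\frac{{\rm d}\bar{\mu}}{{\rm d}\bar{\nu}\bar{M}}(\mathsf{z}')=w(\mathsf{z}')\,\frac{{\rm d}\mu}{{\rm d}\nu R}(z'_{0}),
\]
which is the stated formula, and item 4 is then immediate since $\nu R=\nu$ gives ${\rm d}\mu/{\rm d}\nu R={\rm d}\mu/{\rm d}\nu$. The principal obstacle I anticipate is item 2, where the indices and the swapped product $\otimeswapped$ hidden inside $w_{\nu,k}$ must be handled with care; a secondary subtlety is the rigorous justification that two joint laws sharing the same conditional of first given second admit a Radon--Nikodym derivative depending only on the second coordinate, which I would verify by a direct computation against test functions rather than invoke a disintegration theorem.
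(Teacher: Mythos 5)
Your proposal is correct and follows essentially the same route as the paper: the Bayes'-rule disintegration for the existence of $\bar{M}^{*}$, the unfolding identity of Lemma~\ref{lem:unfolding-mixture} (applied with $\nu$, $M_{\nu}$, $L_{\nu}$) to get $\bar{\nu}\bar{M}=(\nu R)\otimes M^{\otimes T}$, and then the observation that the Radon--Nikodym derivative ${\rm d}(\bar{\mu}\otimeswapped\bar{L})/{\rm d}(\bar{\nu}\otimes\bar{M})$ depends only on $\mathsf{z}'$ and equals ${\rm d}\bar{\mu}/{\rm d}(\bar{\nu}\bar{M})(\mathsf{z}')$, factored via $\bar{\mu}=w\cdot(\mu\otimes M^{\otimes T})$. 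Your more explicit marginal computation for item 2 and the test-function verification you flag for item 3 are exactly the calculations the paper carries out.
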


\begin{proof}
The first statement is a standard result and $\bar{M}^{*}$ is a conditional
expectation. We however provide the short argument taking advantage
of the specific scenario. For a fixed $A\in\mathscr{Z}^{\otimes(T+1)}$
consider the finite measure
\[
B\mapsto\bar{\nu}\otimes\bar{M}(A\times B)\leq\bar{\nu}\bar{M}(B)\,,
\]
such that $\bar{\nu}\bar{M}(B)=0\implies\bar{\nu}\otimes\bar{M}(A\times B)=0$,
that is $\bar{\nu}\bar{M}\gg\bar{\nu}\otimes\bar{M}(A\times\cdot)$.
Consequently we have the existence of a Radon-Nikodym derivative such
that
\[
\bar{\nu}\otimes\bar{M}(A\times B)=\int_{B}\frac{{\rm d}\bar{\nu}\otimes\bar{M}(A\times\cdot)}{{\rm d}(\bar{\nu}\bar{M})}(\mathsf{z})\bar{\nu}\bar{M}({\rm d}\mathsf{z})
\]
which indeed has the sought property. This is a kernel since for any
fixed $A\in\mathscr{Z}$, for any $B\in\mathscr{Z}$, $\bar{\nu}\otimes\bar{M}(A\times B)\leq1$
and therefore $\bar{M}^{*}(\mathsf{z},A)\leq1$ $\bar{\nu}\bar{M}$-a.s.,
with equality for $A=\mathsf{X}$. For the second statement we have,
for any $(\mathsf{z},A)\in\mathsf{\mathsf{Z}}^{T+1}\times\mathscr{Z}^{\otimes(T+1)}$,
\[
\bar{M}(\mathsf{z},A)=\sum_{k=0}^{T}\bar{\nu}(k\mid\mathsf{z})R\otimes M^{\otimes T}\big(z{}_{k},A\big),
\]
and we can apply (\ref{eq:unfold-mixture-general}) in Lemma~\ref{lem:unfolding-mixture}
with the substitutions $\mu\leftarrow\nu$ and $M\leftarrow M_{\nu},L\leftarrow L_{\nu}$
to conclude. For the third statement, since $\nu R\gg\mu$ then $\bar{\nu}\bar{M}=(\nu R)\otimes M^{\otimes T}\gg\mu\otimes M^{\otimes T}$
because for any $A\in\mathscr{Z}^{\otimes(T+1)}$,
\[
\int\mathbf{1}\{\mathsf{z}\in A\}\nu R({\rm d}z_{0})M^{\otimes T}(z_{0},{\rm d}\mathsf{z}_{-0})=0\implies\begin{cases}
\int\mathbf{1}\{\mathsf{z}\in A\}M^{\otimes T}(z_{0},{\rm d}\mathsf{z}_{-0})=0 & \nu R-a.s.\\
{\rm or}\\
\nu R\big(\mathfrak{P}(A)\big)=0
\end{cases}
\]
where $\mathfrak{P}\colon\mathsf{Z}^{T+1}\rightarrow\mathsf{Z}$ is
such that $\mathfrak{P}(\mathsf{z})=z_{0}$. In either case, since
$\nu R\gg\mu$, this implies that 
\[
\int\mathbf{1}\{\mathsf{z}\in A\}\mu({\rm d}z_{0})M^{\otimes T}(z_{0},{\rm d}\mathsf{z}_{-0})=0\,,
\]
that is $\bar{\nu}\bar{M}\gg\bar{\mu}$ from the definition of $\bar{\mu}$.
Consequently
\begin{align*}
\bar{\mu}\otimes\bar{M}^{*}(A\times B) & =\int_{A}\bar{M}^{*}(\mathsf{z},B)\frac{{\rm d}\bar{\mu}}{{\rm d}(\bar{\nu}\bar{M})}(\mathsf{z})\bar{\nu}\bar{M}({\rm d}\mathsf{z})\\
 & =\int_{A}\frac{{\rm d}\bar{\mu}}{{\rm d}(\bar{\nu}\bar{M})}(\mathsf{z})\bar{M}^{*}(\mathsf{z},B)\bar{\nu}\bar{M}({\rm d}\mathsf{z})\\
 & =\int\mathbf{1}\{\mathsf{z}\in A,\mathsf{z}'\in B\}\frac{{\rm d}\bar{\mu}}{{\rm d}(\bar{\nu}\bar{M})}(\mathsf{z})\bar{\nu}({\rm d}\mathsf{z}')\bar{M}(\mathsf{z}',{\rm d}\mathsf{z})
\end{align*}
and indeed from Fubini's and Dynkin's $\pi-\lambda$ theorems $\bar{\nu}\otimes\bar{M}\gg\bar{\mu}\otimeswapped\bar{L}$
and 
\[
\frac{{\rm d}\bar{\mu}\otimeswapped\bar{L}}{{\rm d}\bar{\nu}\otimes\bar{M}}(\mathsf{z},\mathsf{z}')=\frac{{\rm d}\bar{\mu}}{{\rm d}(\bar{\nu}\bar{M})}(\mathsf{z}')\,.
\]
Now for $f\colon\mathsf{Z}^{T+1}\rightarrow[0,1]$ and using the second
statement and $\nu R\gg\mu$,
\begin{align*}
\int f(\mathsf{z})\frac{{\rm d}\bar{\mu}}{{\rm d}(\bar{\nu}\bar{M})}(\mathsf{z})\,(\bar{\nu}\bar{M})({\rm d}\mathsf{z}) & =\int f(\mathsf{z})w(\mathsf{z})\mu\otimes M^{\otimes T}({\rm d}\mathsf{z})\\
 & =\int f(\mathsf{z})w(\mathsf{z})\frac{{\rm d}\mu}{{\rm d}(\nu R)}(z_{0})(\nu R)({\rm d}z_{0})M^{\otimes T}(z_{0},{\rm d}\mathsf{z}_{-0})\\
 & =\int f(\mathsf{z})w(\mathsf{z})\frac{{\rm d}\mu}{{\rm d}(\nu R)}(z_{0})\bar{\nu}\bar{M}({\rm d}\mathsf{z})\,.
\end{align*}
We therefore conclude that 
\[
\frac{{\rm d}\bar{\mu}\otimeswapped\bar{L}}{{\rm d}\bar{\nu}\otimes\bar{M}}(\mathsf{z},\mathsf{z}')=w(\mathsf{z}')\frac{{\rm d}\mu}{{\rm d}(\nu R)}(z'_{0})=w(\mathsf{z}')\frac{{\rm d}\mu}{{\rm d}\nu}(z'_{0})
\]
where the last inequality holds only when $\nu R=\nu$.
\end{proof}
The following result is important in two respects. First it establishes
that if $M,L$ satisfy a simple property then $w_{k}$ always have
a simple expresssion in terms of certain densities of $\mu$ and $\nu$
-- this implies in particular that in Appendix~\ref{sec:justification-waste-free}
the kernel $M_{n}$ is not required to leave $\mu_{n-1}$ invariant
to make the method implementable \cite{dau2020waste}. Second it provides
a direct justification of the validity of advanced schemes -- see
Example~\ref{exa:delayed-rejection-constrained-set}. This therefore
establishes that generic and widely applicable sufficient conditions
for $\bar{w}(\mathsf{z},\mathsf{z}')$ to be tractable are $\nu R=\nu$
and the notion of $(\upsilon,M,M^{*})$-reversibility.
\begin{lem}
\label{lem:ruse-de-sioux} Let $\mu,\nu\in\mathcal{P}\big(\mathsf{Z},\mathscr{Z}\big)$,
$\upsilon$ be a $\sigma$-finite measure on $\big(\mathsf{Z},\mathscr{Z}\big)$
such that $\upsilon\gg\mu,\nu$ and assume that we have a pair of
Markov kernels $M,M^{*}\colon\mathsf{Z}\times\mathscr{Z}\rightarrow[0,1]$
such that
\begin{align}
\upsilon({\rm d}z)M(z,{\rm d}z') & =\upsilon({\rm d}z')M^{*}(z',{\rm d}z)\,.\label{eq:nu-detailed-balance}
\end{align}
 We call this property $(\upsilon,M,M^{*})$-reversibility. Then for
$z,z'\in\mathsf{Z}$ such that $\nu(z):={\rm d}\nu/{\rm d}\upsilon(z)>0$
we have
\[
\frac{{\rm d}\mu\otimeswapped M^{*}}{{\rm d}\nu\otimes M}(z,z')=\frac{{\rm d}\mu/{\rm d}\upsilon(z')}{{\rm d}\nu/{\rm d}\upsilon(z)}\,.
\]
\end{lem}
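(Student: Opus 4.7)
The plan is to proceed directly from the definitions: write out $\mu\otimeswapped M^{*}$ using the $\upsilon$-density of $\mu$, invoke the $(\upsilon,M,M^{*})$-reversibility to swap the roles of $M^{*}$ and $M$, and then manufacture $\nu\otimes M$ by multiplying and dividing by ${\rm d}\nu/{\rm d}\upsilon(z)$ on the set $\{\nu(z)>0\}$. Since $\upsilon\gg\mu,\nu$ by assumption, the densities $\mu(z):={\rm d}\mu/{\rm d}\upsilon(z)$ and $\nu(z):={\rm d}\nu/{\rm d}\upsilon(z)$ exist by Theorem~\ref{thm:Radon-Nikodym}, and the computation is essentially a line-by-line manipulation.

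First I would test the candidate density against an arbitrary bounded measurable $f\colon\mathsf{Z}\times\mathsf{Z}\rightarrow[0,1]$ supported on $\{\nu(z)>0\}\times\mathsf{Z}$. Writing
\[
\int f(z,z')\,\mu\otimeswapped M^{*}({\rm d}(z,z'))=\int f(z,z')\,\mu(z')\,\upsilon({\rm d}z')M^{*}(z',{\rm d}z)\,,
\]
I apply the reversibility identity (\ref{eq:nu-detailed-balance}) pointwise in $f$, which turns the measure $\upsilon({\rm d}z')M^{*}(z',{\rm d}z)$ into $\upsilon({\rm d}z)M(z,{\rm d}z')$ and yields
\[
\int f(z,z')\,\mu(z')\,\upsilon({\rm d}z)M(z,{\rm d}z')=\int f(z,z')\,\frac{\mu(z')}{\nu(z)}\,\nu(z)\,\upsilon({\rm d}z)M(z,{\rm d}z')\,,
\]
where the insertion of $\nu(z)/\nu(z)$ is valid because $f$ vanishes on $\{\nu(z)=0\}$. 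The last expression is precisely $\int f(z,z')\,\frac{\mu(z')}{\nu(z)}\,\nu\otimes M({\rm d}(z,z'))$, and uniqueness of the Radon--Nikodym derivative gives the claim.

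Before concluding I would briefly justify that the desired Radon--Nikodym derivative is well defined on $\{\nu(z)>0\}\times\mathsf{Z}$, i.e. that $\mu\otimeswapped M^{*}$ restricted to this set is absolutely continuous with respect to $\nu\otimes M$. For $A\subset\{\nu(z)>0\}\times\mathsf{Z}$ with $\nu\otimes M(A)=0$, positivity of $\nu(z)$ forces $\int\mathbf{1}_{A}\,\upsilon({\rm d}z)M(z,{\rm d}z')=0$, which via the reversibility (\ref{eq:nu-detailed-balance}) and $\mu\ll\upsilon$ forces $\mu\otimeswapped M^{*}(A)=0$ as well.

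The only mildly subtle point — and the main place a reader might pause — is the handling of the set where $\nu(z)=0$: the stated identity is only asserted there where $\nu(z)>0$, which is exactly what the argument delivers; outside that set the Radon--Nikodym derivative is immaterial to the formula and can be set arbitrarily. Everything else is bookkeeping with Fubini and the defining property of Radon--Nikodym derivatives.
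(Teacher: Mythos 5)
Your proposal is correct and follows essentially the same route as the paper's proof: express $\mu({\rm d}z')M^{*}(z',{\rm d}z)$ via the $\upsilon$-density of $\mu$, apply the reversibility identity (\ref{eq:nu-detailed-balance}), and insert $\nu(z)/\nu(z)$ on the set where $\nu(z)>0$ to recognise $\nu\otimes M$. The only difference is presentational: you phrase the chain of equalities by integrating against test functions and add an explicit absolute-continuity check, whereas the paper writes the same manipulation directly in differential notation.
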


\begin{proof}
For $z,z'\in\mathsf{Z}$ such that ${\rm d}\nu/{\rm d}\upsilon(z)>0$
we have
\begin{align*}
\mu({\rm d}z')M^{*}(z',{\rm d}z) & =\frac{{\rm d}\mu}{{\rm d}\upsilon}(z')\upsilon({\rm d}z')M^{*}(z',{\rm d}z)\\
 & =\frac{{\rm d}\mu}{{\rm d}\upsilon}(z')\upsilon({\rm d}z)M(z,{\rm d}z')\\
 & =\frac{{\rm d}\mu/{\rm d}\upsilon(z')}{{\rm d}\nu/{\rm d}\upsilon(z)}\frac{{\rm d}\nu}{{\rm d}\upsilon}(z)\upsilon({\rm d}z)M(z,{\rm d}z')\\
 & =\frac{{\rm d}\mu/{\rm d}\upsilon(z')}{{\rm d}\nu/{\rm d}\upsilon(z)}\nu({\rm d}z)M(z,{\rm d}z')\,.
\end{align*}
\end{proof}
\begin{cor}
\label{cor:correspond-ruse-de-sioux} Let $\mu_{n-1},\mu_{n}\in\mathcal{P}\big(\mathsf{Z},\mathscr{Z}\big)$
and $\upsilon$ be a $\sigma$-finite measure such that $\upsilon\gg\mu_{n-1},\mu_{n}$,
let $M_{n},L_{n-1}\colon\big(\mathsf{Z},\mathscr{Z}\big)\rightarrow[0,1]$
such that
\[
\upsilon({\rm d}z)M_{n}(z,{\rm d}z')=\upsilon({\rm d}z')L_{n-1}(z',{\rm d}z)\,,
\]
then for any $z,z'\in\mathsf{Z}$ such that ${\rm d}\mu_{n}/{\rm d}\upsilon(z)>0$
and $k\in\mathbb{N}$
\begin{align*}
\frac{\mu_{n}\otimeswapped L_{n-1}^{k}}{\mu_{n}\otimes M_{n}^{k}}(z,z')=\frac{{\rm d}\mu_{n}/{\rm d}\upsilon(z')}{{\rm d}\mu_{n}/{\rm d}\upsilon(z)}\,.
\end{align*}
and provided $\mu_{n-1}R_{n}=\mu_{n-1}$ we can deduce
\begin{align*}
\bar{w}_{n}(\mathsf{z}) & =\frac{{\rm d}\mu_{n}/{\rm d}\upsilon(z_{0})}{{\rm d}\mu_{n-1}/{\rm d}\upsilon(z_{0})}\frac{1}{T+1}\sum_{k=0}^{T}\frac{{\rm d}\mu_{n}/{\rm d}\upsilon(z_{k})}{{\rm d}\mu_{n}/{\rm d}\upsilon(z_{0})}\\
 & =\frac{1}{T+1}\sum_{k=0}^{T}\frac{{\rm d}\mu_{n}/{\rm d}\upsilon(z_{k})}{{\rm d}\mu_{n-1}/{\rm d}\upsilon(z_{0})}
\end{align*}
where we have used Lemma ~\ref{lem:Billingsley-problem324}.
\end{cor}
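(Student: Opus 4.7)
The plan is to reduce this corollary to three ingredients already established: Lemma~\ref{lem:ruse-de-sioux}, Lemma~\ref{lem:barM-properties}(3), and Lemma~\ref{lem:Billingsley-problem324}. The first identity follows by lifting the one-step $(\upsilon, M_n, L_{n-1})$-reversibility to $k$ steps and then invoking Lemma~\ref{lem:ruse-de-sioux}. Specifically, I would establish by induction on $k \geq 1$ that
\[
\upsilon({\rm d}z)\,M_n^k(z,{\rm d}z') = \upsilon({\rm d}z')\,L_{n-1}^k(z',{\rm d}z).
\]
The induction step inserts an intermediate variable, applies the one-step hypothesis to swap $\upsilon$ past one $M_n$, and uses the inductive hypothesis on the remaining $k-1$ steps; this is a short Fubini/$\sigma$-finiteness calculation. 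Once this iterated reversibility is in place, Lemma~\ref{lem:ruse-de-sioux} applied with $\mu \leftarrow \mu_n$, $\nu \leftarrow \mu_n$, $M \leftarrow M_n^k$ and $M^* \leftarrow L_{n-1}^k$ immediately yields the stated ratio.

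For the expression of $\bar{w}_n(\mathsf{z})$, I would specialise Lemma~\ref{lem:barM-properties}(3) with $\mu \leftarrow \mu_n$, $\nu \leftarrow \mu_{n-1}$ and $R \leftarrow R_n$. The hypothesis $\mu_{n-1} R_n = \mu_{n-1}$ collapses the $\nu R$ term to $\mu_{n-1}$, giving
\[
\bar{w}_n(\mathsf{z}) \;=\; \frac{{\rm d}\mu_n}{{\rm d}\mu_{n-1}}(z_0)\cdot\frac{1}{T+1}\sum_{k=0}^{T}w_k(\mathsf{z}),
\]
where each $w_k(\mathsf{z})$ is precisely the Radon--Nikodym derivative computed in the first part of the corollary, i.e.\ $\big({\rm d}\mu_n/{\rm d}\upsilon(z_k)\big)/\big({\rm d}\mu_n/{\rm d}\upsilon(z_0)\big)$ on the set where ${\rm d}\mu_n/{\rm d}\upsilon(z_0) > 0$. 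Applying Lemma~\ref{lem:Billingsley-problem324} with the dominating measure $\upsilon$ rewrites the prefactor as ${\rm d}\mu_n/{\rm d}\mu_{n-1}(z_0) = \big({\rm d}\mu_n/{\rm d}\upsilon(z_0)\big)/\big({\rm d}\mu_{n-1}/{\rm d}\upsilon(z_0)\big)$ wherever ${\rm d}\mu_{n-1}/{\rm d}\upsilon(z_0) > 0$, which is the only region relevant for the weight. Substitution gives the first displayed line, and the second line follows by cancelling the factor ${\rm d}\mu_n/{\rm d}\upsilon(z_0)$ against the corresponding denominator inside the sum.

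The only non-routine step is the iterated reversibility; everything else is a direct specialisation of earlier results. Two minor points to handle carefully: first, the chain of Radon--Nikodym identities must be understood almost surely with respect to the appropriate measures (the densities can be taken to vanish off the support of $\mu_n$ without altering any identity), and second, the application of Lemma~\ref{lem:Billingsley-problem324} requires $\mu_{n-1} \gg \mu_n$ restricted to the support of $\mu_{n-1}$, which is consistent with $\bar{w}_n$ being a change-of-measure quantity meaningful only on that support.
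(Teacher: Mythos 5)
Your proposal is correct and follows exactly the route the paper intends: the corollary is stated as a direct consequence of Lemma~\ref{lem:ruse-de-sioux} (via the iterated reversibility $\upsilon({\rm d}z)M_{n}^{k}(z,{\rm d}z')=\upsilon({\rm d}z')L_{n-1}^{k}(z',{\rm d}z)$, which you rightly identify as the only step needing an argument), combined with Lemma~\ref{lem:barM-properties}(3) under $\mu_{n-1}R_{n}=\mu_{n-1}$ and Lemma~\ref{lem:Billingsley-problem324} to express the prefactor as a ratio of $\upsilon$-densities. Your two cautionary remarks (almost-sure interpretation of the identities, and the implicit absolute-continuity needed for the prefactor) are consistent with the standing assumptions of that section.
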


We have shown earlier that standard integrator based mutation kernels
used in the context of Monte Carlo method satisfy (\ref{eq:nu-detailed-balance})
with $\upsilon$ the Lebesgue measure but other scenarios involved
that of the preconditioned Crank--Nicolson (pCN) algorithm where
$\upsilon$ is the distribution of a Gaussian process.

\subsection{Revisiting sampling with integrator snippets \protect\label{sec:Sampling-HMC-trajectories}}

In this scenario we have $\mu({\rm d}z)=\pi({\rm d}x)\varpi({\rm d}v)$
assumed to have a density w.r.t. a $\sigma$-finite measure $\upsilon$,
and $\psi$ is an invertible mapping $\psi\colon\mathsf{Z}\rightarrow\mathsf{Z}$
such that $\upsilon^{\psi}=\upsilon$ and $\psi^{-1}=\sigma\circ\psi\circ\sigma$
with $\sigma\colon\mathsf{Z}\rightarrow\mathsf{Z}$ such that $\mu\circ\sigma(z)=\mu(z)$.
In his manuscript we focus primarily on the scenario where $\psi$
is a discretization of Hamilton's equations for a potential $U\colon\mathsf{X}\rightarrow\mathbb{R}$
e.g. a leapfrog integrator. We consider now the scenario where, in
the framework developed in Appendix~\ref{sec:justification-waste-free},
we let $M(z,{\rm d}z'):=\delta_{\psi(z)}\big({\rm d}z'\big)$ be the
deterministic kernel which maps the current state $z\in\mathsf{Z}$
to $\psi(z)$. Define $\Psi(z,{\rm d}z'):=\delta_{\psi(z)}({\rm d}z')$
and $\Psi^{*}(z,{\rm d}z'):=\delta_{\psi^{-1}(z)}({\rm d}z')$; we
exploit the ideas of \cite[Proposition 4]{andrieu2020general} to
establish that $\Psi^{*}$ is the $\upsilon-$adjoint of $\Psi$ if
$\upsilon$ is invariant under $\psi$. 
\begin{lem}
\label{lem:nu-Psi-Psi-star-triplet}Let $\mu$ be a probability measure
and $\upsilon$ a $\sigma$-finite measure, on $(\mathsf{Z},\mathscr{Z})$
such that $\upsilon\gg\mu$. Denote $\mu(z):={\rm d}\mu/{\rm d}\upsilon(z)$
for any $z\in\mathsf{Z}$. Let $\psi\colon\mathsf{Z}\rightarrow\mathsf{Z}$
be an invertible and volume preserving mapping, i.e. such that $\upsilon^{\psi}(A)=\upsilon\big(\psi^{-1}(A)\big)=\upsilon(A)$
for all $A\in\mathscr{Z}$, then 
\begin{enumerate}
\item $(\upsilon,\Psi,\Psi^{*})$ form a reversible triplet, that is for
all $z,z'\in\mathsf{Z}$,
\[
\upsilon({\rm d}z)\delta_{\psi(z)}({\rm d}z')=\upsilon({\rm d}z')\delta_{\psi^{-1}(z')}({\rm d}z),
\]
\item for all $z,z'\in\mathsf{Z}$ such that $\mu(z)>0$
\[
\mu({\rm d}z')\delta_{\psi^{-1}(z')}({\rm d}z)=\frac{\mu\circ\psi(z)}{\mu(z)}\mu({\rm d}z)\delta_{\psi(z)}({\rm d}z')\,.
\]
\end{enumerate}
\end{lem}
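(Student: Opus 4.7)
The plan is to prove the two statements in sequence, with the first statement doing essentially all the work and the second following by inserting a density.

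For part 1, I will test the claimed identity against an arbitrary bounded measurable $f\colon\mathsf{Z}\times\mathsf{Z}\rightarrow\mathbb{R}$. The left-hand side collapses via the point mass to $\int f(z,\psi(z))\,\upsilon({\rm d}z)$. I then apply the change-of-variables formula (Theorem~\ref{thm:change-of-variables}) to the pushforward of $\upsilon$ under $\psi$, using the hypothesis $\upsilon^{\psi}=\upsilon$ to rewrite this as $\int f(\psi^{-1}(z'),z')\,\upsilon({\rm d}z')$, which is precisely what the right-hand side gives after integrating out the point mass $\delta_{\psi^{-1}(z')}({\rm d}z)$. Since $f$ is arbitrary, the Fubini/$\pi$--$\lambda$ argument (as used in Lemma~\ref{lem:barM-properties}) yields equality of the product measures. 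Invertibility of $\psi$ is what makes $\psi^{-1}$ in $\Psi^*$ well defined and it is crucial that $\upsilon$ be invariant under $\psi$ rather than merely quasi-invariant.

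For part 2, the quickest route is to recognise it as an instance of Lemma~\ref{lem:ruse-de-sioux} applied to the triplet $(\upsilon,\Psi,\Psi^*)$ established in part 1, with both $\mu$ and $\nu$ taken to be the current $\mu$. That lemma immediately delivers, for $z$ with $\mu(z)>0$,
\[
\frac{{\rm d}\mu\otimeswapped\Psi^*}{{\rm d}\mu\otimes\Psi}(z,z')=\frac{\mu(z')}{\mu(z)},
\]
which on the support $z'=\psi(z)$ of $\Psi(z,\cdot)$ equals $\mu\circ\psi(z)/\mu(z)$, yielding the claim.

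Alternatively, and just as quickly, one can derive part 2 directly from part 1 by writing $\mu({\rm d}z')=\mu(z')\,\upsilon({\rm d}z')$, using part 1 to swap $\upsilon({\rm d}z')\delta_{\psi^{-1}(z')}({\rm d}z)=\upsilon({\rm d}z)\delta_{\psi(z)}({\rm d}z')$, and then collapsing $\mu(z')=\mu\circ\psi(z)$ on the support of $\delta_{\psi(z)}$; dividing and multiplying by $\mu(z)$ (legitimate precisely where $\mu(z)>0$) regroups $\mu(z)\upsilon({\rm d}z)$ into $\mu({\rm d}z)$. There is no real obstacle here; the only subtle point is the usual care needed with null sets when manipulating the point-mass kernels, and the qualifier $\mu(z)>0$ in the statement is exactly what avoids a $0/0$ ambiguity, in the spirit of Lemma~\ref{lem:Billingsley-problem324}.
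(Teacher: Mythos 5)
Your proposal is correct and follows essentially the same route as the paper: part 1 is verified by integrating a test function, collapsing the point masses and invoking $\upsilon^{\psi}=\upsilon$ via the change-of-variables theorem, and your second derivation of part 2 (inserting the density $\mu(z')$, swapping via part 1, and collapsing $\mu(z')=\mu\circ\psi(z)$ on the support of $\delta_{\psi(z)}$) is exactly the paper's computation. Your primary route for part 2 via Lemma~\ref{lem:ruse-de-sioux} with $\nu=\mu$ is only a repackaging of the same argument, since that lemma's proof is the identical density-insertion calculation.
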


\begin{proof}
For the first statement
\begin{align*}
\int f(z)g\circ\psi(z)\upsilon({\rm d}z) & =\int f\circ\psi^{-1}\circ\psi(z)g\circ\psi(z)\upsilon({\rm d}z)\\
 & =\int f\circ\psi^{-1}(z)g(z)\upsilon^{\psi}({\rm d}z)\\
 & =\int f\circ\psi^{-1}(z)g(z)\upsilon({\rm d}z).
\end{align*}
We have
\begin{align*}
\mu({\rm d}z')\delta_{\psi^{-1}(z')}({\rm d}z) & =\mu(z')\upsilon({\rm d}z')\delta_{\psi^{-1}(z')}({\rm d}z)\\
 & =\mu(z')\upsilon({\rm d}z)\delta_{\psi(z)}({\rm d}z')\\
 & =\mu\circ\psi(z)\upsilon({\rm d}z)\delta_{\psi(z)}({\rm d}z')\\
 & =\frac{\mu\circ\psi(z)}{\mu(z)}\mu(z)\upsilon({\rm d}z)\delta_{\psi(z)}({\rm d}z')\\
 & =\frac{\mu\circ\psi(z)}{\mu(z)}\mu({\rm d}z)\delta_{\psi(z)}({\rm d}z')
\end{align*}
\begin{align*}
\int f(z')g(z)\mu({\rm d}z')\delta_{\psi^{-1}(z')}({\rm d}z) & =\int f(z')g(z)\mu(z')\upsilon({\rm d}z')\delta_{\psi^{-1}(z')}({\rm d}z)\\
 & =\int f(z')g(z)\mu(z')\upsilon({\rm d}z)\delta_{\psi(z)}({\rm d}z')\\
 & =\int f\circ\psi(z)g(z)\mu\circ\psi(z)\upsilon({\rm d}z)\\
 & =\int f\circ\psi(z)g(z)\frac{\mu\circ\psi(z)}{\mu(z)}\mu(z)\upsilon({\rm d}z)\\
 & =\int f(z')g(z)\frac{\mu\circ\psi(z)}{\mu(z)}\mu({\rm d}z)\delta_{\psi(z)}({\rm d}z')
\end{align*}
As a result
\begin{align*}
\int f(z')\mu({\rm d}z') & =\int f(z')\mu({\rm d}z')\delta_{\psi^{-1}(z')}({\rm d}z)\\
 & =\int f(z')\frac{\mu\circ\psi(z)}{\mu(z)}\mu({\rm d}z)\delta_{\psi(z)}({\rm d}z')\\
 & =\int f\circ\psi(z)\frac{\mu\circ\psi(z)}{\mu(z)}\mu({\rm d}z)
\end{align*}
\end{proof}
\begin{cor}
\label{cor:weights-HMC-scenario} With the assumptions of Lemma~\ref{lem:nu-Psi-Psi-star-triplet}
above for $k\in\llbracket0,T\rrbracket$ the weight (\ref{eq:wk-mathsf-z})
for $M_{\mu}=\Psi^{k},L_{\mu}=\Psi^{-k}$ admits the expression
\[
w_{k}(\mathsf{z})=\frac{{\rm d}\mu\otimeswapped\Psi{}^{-k}}{{\rm d}\mu\otimes\Psi^{k}}(z_{0},z_{k})=\frac{\mu\circ\psi^{k}(z_{0})}{\mu(z_{0})}\,,
\]
Further, for $R$ a $\nu$\textup{-}invariant Markov kernel the expression
for the weight (\ref{eq:expression-bar-w-waste-free}) becomes 
\begin{align*}
\bar{w}(\mathtt{\mathsf{z}},\mathtt{\mathsf{z}}') & =\frac{\mu(z'_{0})}{\nu(z'_{0})}\frac{1}{T+1}\sum_{k=0}^{T}\frac{\mu\circ\psi^{k}(z'_{0})}{\mu(z'_{0})},
\end{align*}
hence recovering the expression used in Section~\ref{sec:A-simple-example:}.
This together with the results of Appendix~\ref{sec:justification-waste-free}
provides an alternative justification of correctness of Alg.~\ref{alg:Folded-PDMP-SMC-1}
and hence Alg.~\ref{alg:Unfolded-PDMP-SMC-1}. Note that this choice
for $L_{\mu}$ corresponds to the so-called optimal scenario; this
can be deduced from Lemma~\ref{lem:nu-Psi-Psi-star-triplet} or by
noting that
\begin{align*}
\int f(z,z')\mu\Psi({\rm d}z')\Psi^{-1}(z',{\rm d}z) & =\int f(\psi^{-1}(z'),z')\mu\Psi({\rm d}z')\\
 & =\int f(z,\psi(z))\mu({\rm d}z)\\
 & \int f(z,z')\mu({\rm d}z)\Psi(z,{\rm d}z')\,.
\end{align*}
\end{cor}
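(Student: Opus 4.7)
The plan is to reduce everything to the abstract Radon--Nikodym identity of Lemma~\ref{lem:ruse-de-sioux} applied to the deterministic pair $(M,M^*)=(\Psi^k,\Psi^{-k})$, and then to substitute the resulting expression into the general formula for $\bar w$ given by Lemma~\ref{lem:barM-properties}(3). So the proof is essentially a bookkeeping exercise that reuses the two key lemmas already proved in the section.

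First I would lift the $(\upsilon,\Psi,\Psi^*)$-reversibility of Lemma~\ref{lem:nu-Psi-Psi-star-triplet}(i) to the iterates, namely establish that $(\upsilon,\Psi^k,\Psi^{-k})$ is a reversible triplet for every $k\in\llbracket 0,T\rrbracket$. Since $\upsilon^{\psi}=\upsilon$ implies $\upsilon^{\psi^k}=\upsilon$, the change of variables $z\leftarrow\psi^{-k}(z)$ gives, for bounded measurable $f,g$,
\[
\int f(z)\,g\circ\psi^k(z)\,\upsilon({\rm d}z)=\int f\circ\psi^{-k}(z)\,g(z)\,\upsilon({\rm d}z),
\]
which is exactly the integrated form of the reversibility $\upsilon({\rm d}z)\delta_{\psi^k(z)}({\rm d}z')=\upsilon({\rm d}z')\delta_{\psi^{-k}(z')}({\rm d}z)$. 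Alternatively this follows by induction, using the $k=1$ case repeatedly and the semigroup property $\psi^{k+1}=\psi\circ\psi^k$.

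Next I would apply Lemma~\ref{lem:ruse-de-sioux} with the substitutions $\nu\leftarrow\mu$, $M\leftarrow\Psi^k$, $M^*\leftarrow\Psi^{-k}$, noting that $\upsilon\gg\mu$ is part of the standing assumption. This yields, on the set where ${\rm d}\mu/{\rm d}\upsilon(z_0)>0$,
\[
\frac{{\rm d}\mu\otimeswapped\Psi^{-k}}{{\rm d}\mu\otimes\Psi^{k}}(z_0,z_k)=\frac{{\rm d}\mu/{\rm d}\upsilon(z_k)}{{\rm d}\mu/{\rm d}\upsilon(z_0)}.
\]
Because $\mu\otimes\Psi^{k}$-almost surely $z_k=\psi^k(z_0)$, the right-hand side equals $\mu\circ\psi^k(z_0)/\mu(z_0)$, with the usual abuse of notation that identifies $\mu$ with its $\upsilon$-density. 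This is exactly the claimed formula for $w_k(\mathsf{z})$.

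For the second statement I would substitute this expression into the general weight formula from Lemma~\ref{lem:barM-properties}(3), which under the assumption $\nu R=\nu$ collapses the factor ${\rm d}\mu/{\rm d}(\nu R)(z'_0)$ to $\mu(z'_0)/\nu(z'_0)$, and then $w(\mathsf{z}')=(T+1)^{-1}\sum_{k=0}^T w_k(\mathsf{z}')$ produces the displayed sum. I do not anticipate a genuine obstacle here; the only point of care is the dual use of the symbol $\mu$ for both the measure and its $\upsilon$-density, together with tracking the deterministic identification $z_k=\psi^k(z_0)$ inside the Radon--Nikodym derivative.
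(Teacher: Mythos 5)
Your proposal is correct and follows essentially the same route as the paper: the paper derives the weight expression from the reversible-triplet property of Lemma~\ref{lem:nu-Psi-Psi-star-triplet} combined with the abstract weight formula of Lemma~\ref{lem:barM-properties}(3), and your detour through Lemma~\ref{lem:ruse-de-sioux} is the same computation in its general form (indeed Lemma~\ref{lem:nu-Psi-Psi-star-triplet}(2) is exactly the deterministic-kernel instance of Lemma~\ref{lem:ruse-de-sioux} with $\nu=\mu$, and your lifting to $(\upsilon,\Psi^k,\Psi^{-k})$ via $\upsilon^{\psi^k}=\upsilon$ is the intended, trivial step). The only piece you do not address is the closing remark that $L_\mu=\Psi^{-k}$ is the optimal backward kernel, but the paper supplies that one-line change-of-variables calculation inline in the corollary statement itself.
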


\subsection{More complex integrator snippets \protect\label{subsec:Sampling-randomized-integrator}}

Here we provide examples of more complex integrator snippets to which
earlier theory immediately applies thanks to the abstract point of
view adopted throughout. 
\begin{example}
\label{exa:delayed-rejection-generic} Let $\upsilon\gg\mu$, so that
$\mu(z):={\rm d}\mu/{\rm d}\upsilon(z)$ is well defined. Let, for
$i\in\{1,2\}$, $\psi_{i}\colon\mathsf{Z}\rightarrow\mathsf{Z}$ be
invertible and such that $\upsilon^{\psi_{i}}=\upsilon$, $\psi_{i}^{-1}=\sigma\circ\psi_{i}\circ\sigma$
for $\sigma\colon\mathsf{Z}\rightarrow\mathsf{Z}$ such that $\sigma^{2}={\rm Id}$
and $\upsilon^{\sigma}=\upsilon$. Then one can consider the delayed
rejection MH transition probability 
\[
M(z,{\rm d}z')=\alpha_{1}(z)\delta_{\psi_{1}(z)}\big({\rm d}z'\big)+\bar{\alpha}_{1}(z)\big[\alpha_{2}(z)\delta_{\psi_{2}(z)}({\rm d}z')+\bar{\alpha}_{2}(z)\delta_{z}({\rm d}z')\big]\,,
\]
with $\alpha_{i}(z)=1\wedge r_{i}(z)$ $\bar{\alpha}_{i}(z)=1-\alpha_{i}(z)$
and $z\in S_{1}\cap S_{2}$ given below
\[
r_{1}(z)=\frac{{\rm d}\upsilon^{\psi_{1}^{-1}}}{{\rm d}\upsilon}(z),\quad r_{2}(z)=\frac{\bar{\alpha}_{1}\circ\sigma\circ\psi_{2}(z)}{\bar{\alpha}_{1}(z)}\frac{{\rm d}\upsilon^{\psi_{2}^{-1}}}{{\rm d}\upsilon}(z)\,.
\]
A particular example is when $\upsilon$ is the Lebesgue measure on
$\mathsf{X}\times\mathsf{V}$ and $\psi_{1}$ is volume preserving,
for instance the leapfrog integrator for Hamilton's equations for
some potential $U$ or the bounce update (\ref{eq:def-bounce}). Following
\cite{andrieu2020general} notice that $\upsilon$ has density $\upsilon(z)=1/2$
with respect to the measure $\upsilon+\upsilon^{\psi_{i}}=2\upsilon$.
Now define $S_{1}:=\big\{ z\in\mathsf{Z}\colon\upsilon(z)\wedge\upsilon\circ\psi_{1}(z)>0\big\}$,
we have
\[
r_{1}(z):=\begin{cases}
\frac{\upsilon\circ\psi_{1}(z)}{\upsilon(z)}=1 & \text{for }z\in S_{1}\\
0 & \text{otherwise}
\end{cases},
\]
 and with $S_{2}:=\big\{ z\in\mathsf{Z}\colon\big[\bar{\alpha}_{1}(z)\,\upsilon(z)\big]\wedge\big[\bar{\alpha}_{1}\circ\sigma\circ\psi_{2}(z)\,\upsilon\circ\psi_{2}(z)\big]>0\big\}$
\[
r_{2}(z):=\begin{cases}
\frac{\bar{\alpha}_{1}\circ\sigma\circ\psi_{2}(z)\,\upsilon\circ\psi_{2}(z)}{\bar{\alpha}_{1}(z)\,\upsilon(z)}=1 & \text{for }z\in S_{2}\\
0 & \text{otherwise}
\end{cases}.
\]
For instance $\psi_{1}$ can be a HMC update, while 
\begin{equation}
\psi_{2}=\psi_{1}\circ{\rm b}\circ\psi_{1}\,\text{with}\,{\rm b}(x,v)=\big(x,v-2\bigl\langle v,n(x)\bigr\rangle n(x)\big)\label{eq:def-bounce}
\end{equation}
for some unit length vector field $n\colon\mathsf{X}\rightarrow\mathsf{X}$.
This can therefore be used as part of Alg.~\ref{alg:Folded-Waste-Free-general};
care must be taken when compute the weights $w_{n,k}$ and $\bar{w}_{n}$,
see Appendix~\ref{sec:Sampling-a-mixture:}-\ref{sec:Sampling-HMC-trajectories}
provide the tools to achieve this. For example, in the situation where
$\upsilon$ is the Lebesgue measure on $\mathsf{Z}=\mathbb{R}^{d}$
and $\psi_{2}={\rm Id}$ then we recover Alg.~\ref{alg:Folded-PDMP-SMC-1}
or equivalently Alg.~\ref{alg:Unfolded-PDMP-SMC-1}. 
\end{example}

\begin{example}
\label{exa:delayed-rejection-constrained-set} An interesting instance
of Example~\ref{exa:delayed-rejection-generic} is concerned with
the scenario where interest is in sampling $\mu$ constrained to some
set $C\subset\mathsf{Z}$ such that $\upsilon(C)<\infty$. Define
$\mu$ constrainted to $C$, $\mu_{C}(\cdot):=\mu(C\cap\cdot)/\mu(C)$
and similarly $\upsilon_{C}(\cdot):=\upsilon(C\cap\cdot)/\upsilon(C)$.
We let $M$ be defined as above but targeting $\upsilon_{C}$. Naturally
$\upsilon_{C}$ has a density w.r.t. $\upsilon$, $\upsilon_{C}(z)=\mathbf{1}_{C}(z)/\upsilon(C)$
for $z\in\mathsf{Z}$ and for $i\in\{1,2\}$ we have $\upsilon_{C}\circ\psi_{i}(z)=\mathbf{1}_{\psi_{i}^{-1}(C)}(z)$
, $\upsilon\circ\psi_{1}\circ\psi_{2}(z)=\mathbf{1}_{\psi_{2}^{-1}\circ\psi_{1}^{-1}(C)}(z)$.
Consequently $S_{1}:=\big\{ z\in\mathsf{Z}\colon\mathbf{1}_{C\cap\psi_{1}^{-1}(C)}(z)>0\big\}$
and $S_{2}=\big\{ z\in\mathsf{Z}\colon\mathbf{1}_{C\cap\psi_{1}^{-1}(C^{\complement})}(z)\mathbf{1}_{\psi_{2}^{-1}(C)\cap\psi_{2}^{-1}\circ\psi_{1}^{-1}(C^{\complement})}(z)>0\big\}$
and as a result
\begin{align*}
\alpha_{1}(z) & :=\mathbf{1}_{A\cap\psi_{1}^{-1}(C)}(z)\\
\alpha_{2}(z) & :=\mathbf{1}_{A\cap\psi_{1}^{-1}(C^{\complement})\cap\psi_{2}^{-1}(C)\cap\psi_{2}^{-1}\circ\psi_{1}^{-1}(C^{\complement})}(z)\,.
\end{align*}
The corresponding kernel $M^{\otimes T}$ is described algorithmically
in Alg.~\ref{alg:deterministic-DR-uniform-target}. In the situation
where $C:=\{x\in\mathsf{X}\colon c(x)=0\}$ for a continuously differentiable
function $c\colon\mathsf{X}\rightarrow\mathbb{R}$, the bounces described
in (\ref{eq:def-bounce}) can be defined in terms of the field $x\mapsto n(x)$such
that 
\begin{align*}
n(x) & :=\begin{cases}
\nabla c(x)/|\nabla c(x)| & \text{for}\,\nabla c(x)\neq0\\
0 & \text{otherwise.}
\end{cases}
\end{align*}
This justifies the ideas of \cite{betancourt2011nested}, where a
process of the type given in Alg.~\ref{alg:deterministic-DR-uniform-target}
is used as a proposal within a MH update, although the possibility
of a rejection after the second stage seems to have been overlooked
in that reference.
\end{example}

\begin{algorithm}
Given $z_{0}=z\in C\subset\mathsf{Z}$

\For{$k=1,\ldots,T$}{

\uIf{$\psi_{1}(z_{k-1})\in C$}{

$z_{k}=\psi_{1}(z_{k-1})$

}\uElseIf{$\psi_{2}(z_{k-1})\in C$ $\text{and}$ $\psi_{1}\circ\psi_{2}(z_{k-1})\notin C$}{

$z_{k}=\psi_{2}(z_{k-1})$

}\Else{

$z_{k}=z_{k-1}$.

}}

\caption{$M^{\otimes T}$ for $M$ the delayed rejection algorithm targetting
the uniform distribution on $C$ \protect\label{alg:deterministic-DR-uniform-target}}
\end{algorithm}
Naturally a rejection of both transformations $\psi_{1}$ and $\psi_{2}$
of the current state means that the algorithm gets stuck. We note
that it is also possible to replace the third update case with a full
refreshment of the velocity, which can be interpreted as a third delayed
rejection update, of acceptance probability one. 

\section{Early experimental results: HMC-IS \protect\label{app:early-experimental-HMC}}

\subsection{Numerical illustration: logistic regression \protect\label{subsec:first-example-simulations}}

In this section, we consider sampling from the posterior distribution
of a logistic regression model, focusing on the compution of the normalising
constant. We follow \cite{dau2020waste} and consider the sonar dataset,
previously used in \cite{chopin2017leave}. With intercept terms,
the dataset has responses $y_{i}\in\{-1,1\}$ and covariates $z_{i}\in\mathbb{R}^{p}$,
where $p=61$. The likelihood of the parameters $x\in\mathsf{X}:=\mathbb{R}^{p}$
is then given by 
\begin{equation}
L(x)=\prod_{i}^{n}F(z_{i}^{\top}x\cdot y_{i}),\label{Logistic=000020Regression=000020Llk}
\end{equation}
where $F(x):=1/(1+\exp(-x))$. We ascribe $x$ a product of independent
normal distributions of zero mean as a prior, with standard deviation
equal to 20 for the intercept and 5 for the other parameters. Denote
$p(dx)$ the prior distribution of $x$, we define a sequence of tempered
distributions of densities of the form $\pi_{n}(x)\propto p(dx)L(x)^{\lambda_{n}}$
for $\lambda_{n}\colon\llbracket0,P\rrbracket\rightarrow[0,1]$ non-decreasing
and such that $\lambda_{0}=0$ and $\lambda_{P}=1$. We apply both
Hamiltonian Snippet-SMC and the implementation of waste-free SMC of
\cite{dau2020waste} and compare their performance.

For both algorithms, we set the total number of particles at each
SMC step to be $N(T+1)=10,000$. For the waste-free SMC, the Markov
kernel is chosen to be a random-walk Metropolis-Hastings kernel with
covariances adaptively computed as $2.38^{2}/d\ \hat{\Sigma}$, where
$\hat{\Sigma}$ is the empirical covariance matrix obtained from the
particles in the previous SMC step. For the Hamiltonian Snippet-SMC
algorithm, we set $\psi_{n}$ to be the one-step leap-frog integrator
with stepsize $\epsilon$, $U_{n}(x)=-\log(\pi_{n}(x))$ and $\varpi$
a $\mathcal{N}(0,\mathrm{Id})$. To investigate the stability of our
algorithm, we ran Hamiltonian Snippet SMC with $\epsilon=0.05,0.1,0.2$
and $0.3$. For both algorithms, the temperatures $\lambda_{n}$ are
adaptively chosen so that the effective sample size (ESS) of the current
SMC step will be $\alpha ESS_{max}$, where $ESS_{max}$ is the maximum
ESS achievable at the current step. In our experiments, we have chosen
$\alpha=0.3,0.5$ and $0.7$ for both algorithms.

\subsubsection*{Performance comparison}

Figure~\ref{Sonar_LogNC_alpha50} shows the boxplots of estimates
of the logarithm of the normalising constant obtained from both algorithms,
for different choices of $N$ and $\epsilon$ for the Hamiltonian
Snippet SMC algorithm. The boxplots are obtained by running both algorithms
100 times for different of the algorithm parameters, with $\alpha=0.5$
in all setups. Several points are worth observing. For a suitably
choice of $\epsilon$, the Hamiltonian Snippet SMC algorithm can produce
stable and consistent estimates of the normalising constant with $10,000$
particles at each iteration. On the other hand, however, the waste-free
SMC algorithm fails to produce accurate results for the same computational
budget. It is also clear that with larger values of $N$ (meaning
smaller value of $T$ and hence shorter snippets), the waste-free
SMC algorithm produces results with larger biases and variability.
For Hamiltonian Snippet SMC algorithm, the results are stable both
for short and long snippets when $\epsilon$ is equal to $0.1$ or
$0.2$. Another point is that when $\epsilon=0.05$ or $0.3$, the
Hamiltonian Snippet SMC algorithm becomes unstable with short (i.e.
$\epsilon=0.05$) and long (i.e. $\epsilon=0.3$). Possible reasons
are for too small a stepsize the algorithm is not able to explore
the target distribution efficiently, resulting in unstable performances.
On the other hand, when the stepsize is too large, the leapfrog integrator
becomes unstable, therefore affecting the variability of the weights;
this is a common problem of HMC algorithms. Hence, a long trajectory
will result in detoriate estimations. Hence, to obtain the best performance,
one should find a way of tuning the stepsize and trajectory length
along with the SMC steps. 
\begin{figure}[htb!]
\includegraphics[width=1\textwidth]{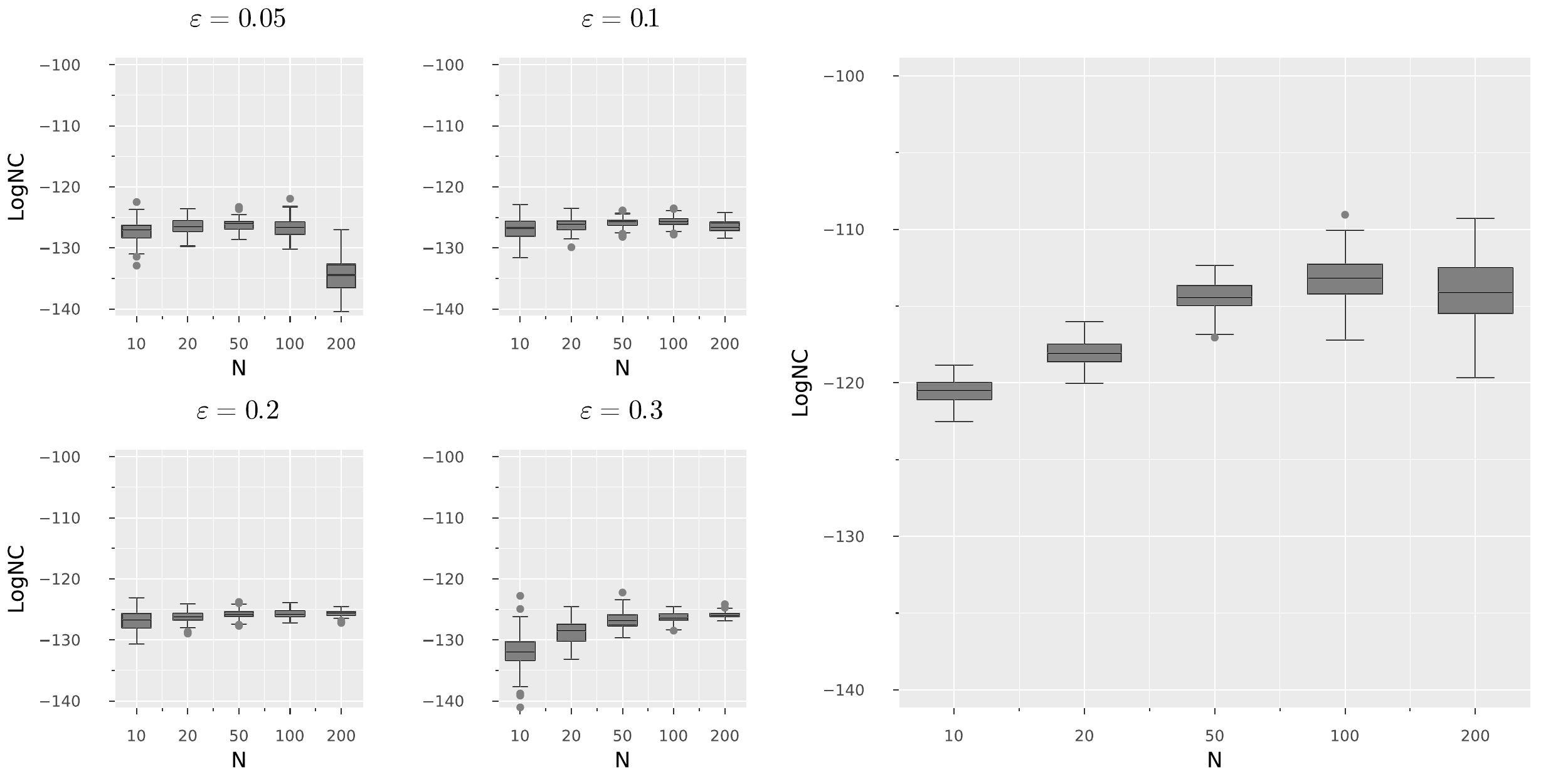} \caption{Estimates of the normalising constant (in log scale) obtained from
both Hamiltonian Snippet SMC and waste-free SMC algorithm. Left: Estimate
obtained from Hamiltonian Snippet SMC algorithm with different values
of $\epsilon$. Right: Estimates obtained from waste-free SMC algorithm.}
\label{Sonar_LogNC_alpha50}
\end{figure}

In Figure~\ref{Sonar_MM_alpha50} we display boxplots of the estimates
of the posterior expectations of the mean of all coefficients, i.e.
$\mathbb{E}{}_{\pi_{P}}(d^{-1}\sum_{i=1}^{d}x_{i})$. This quantity
is referred to as the \textit{mean of marginals} in \cite{dau2020waste}
and we use this terminology. One can see that the same properties
can be seen from the estimations of the mean of marginals, with the
unstability problems exacerbated with small and large stepsizes.

\begin{figure}[htb!]
\includegraphics[width=1\textwidth]{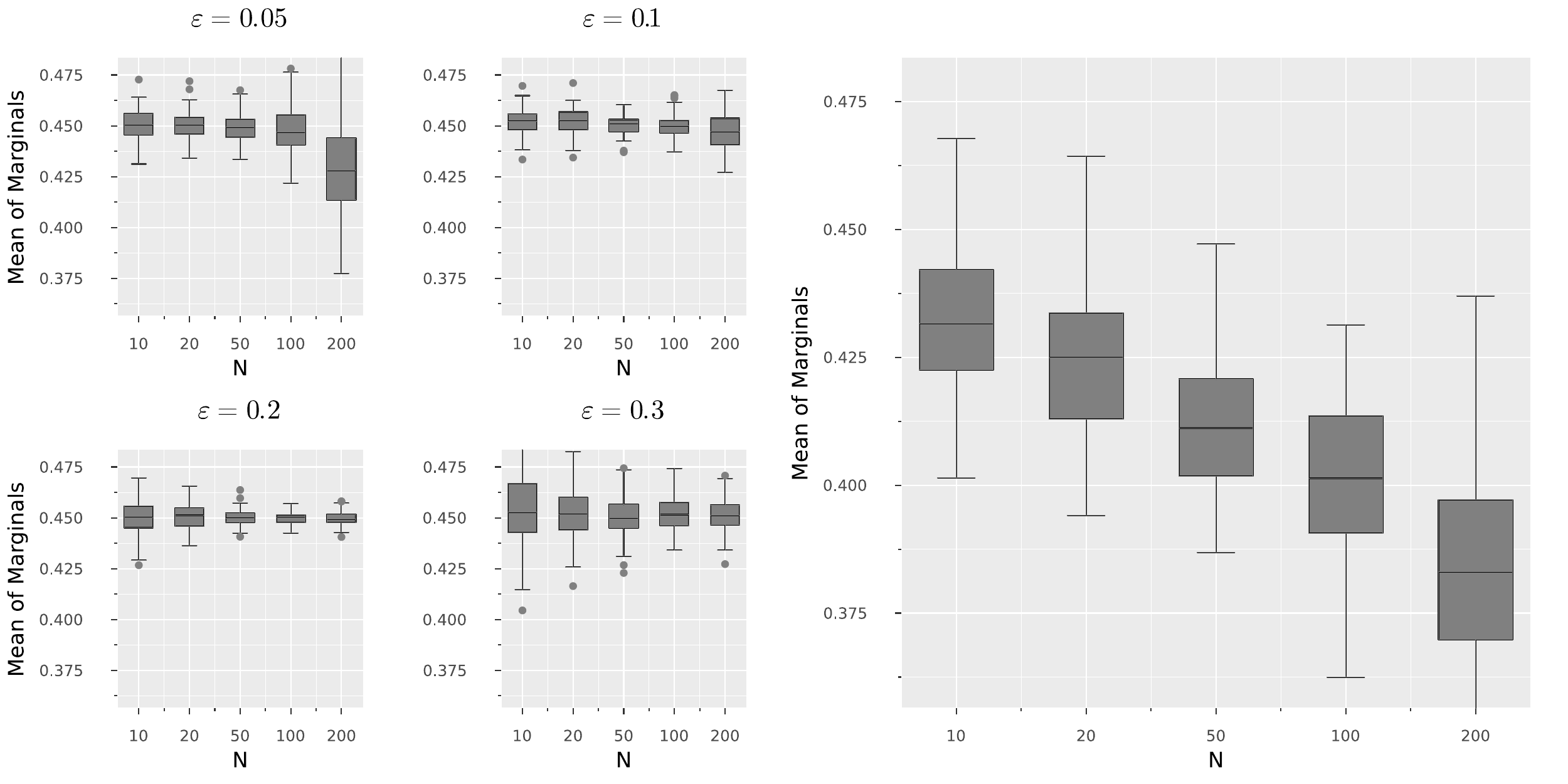} \caption{Estimates of the mean of marginals obtained from both Hamiltonian
Snippet SMC and the waste-free SMC algorithms. Left: Estimate obtained
from the Hamiltonian Snippet SMC algorithm with difference values
of $\epsilon$. Right: Estimates obtained from the waste-free SMC
algorithm.}
\label{Sonar_MM_alpha50}
\end{figure}

\subsubsection*{Computational Cost}

In this section, we compare the running time of both algorithms. Since
the calculations of the potential energy and its gradient often share
common intermediate steps, we can recycle these to save computational
cost. As the waste-free SMC also requires density evaluations, the
Hamiltonian Snippet SMC algorithm will not require significant additional
computations. Figure~\ref{Sonar_Simtime} shows boxplots of the simulation
time of both algorithms from 100 runs. The simulations were run on
an Apple M1-Pro CPU with 16G of memory. One can see that in comparison
to the waste-free SMC the additional computational time is only marginal
for the Hamiltonian Snippet SMC algorithm and mostly due to the larger
memory needed to store the intermediate values. 
\begin{figure}[htb!]
\includegraphics[width=1\textwidth]{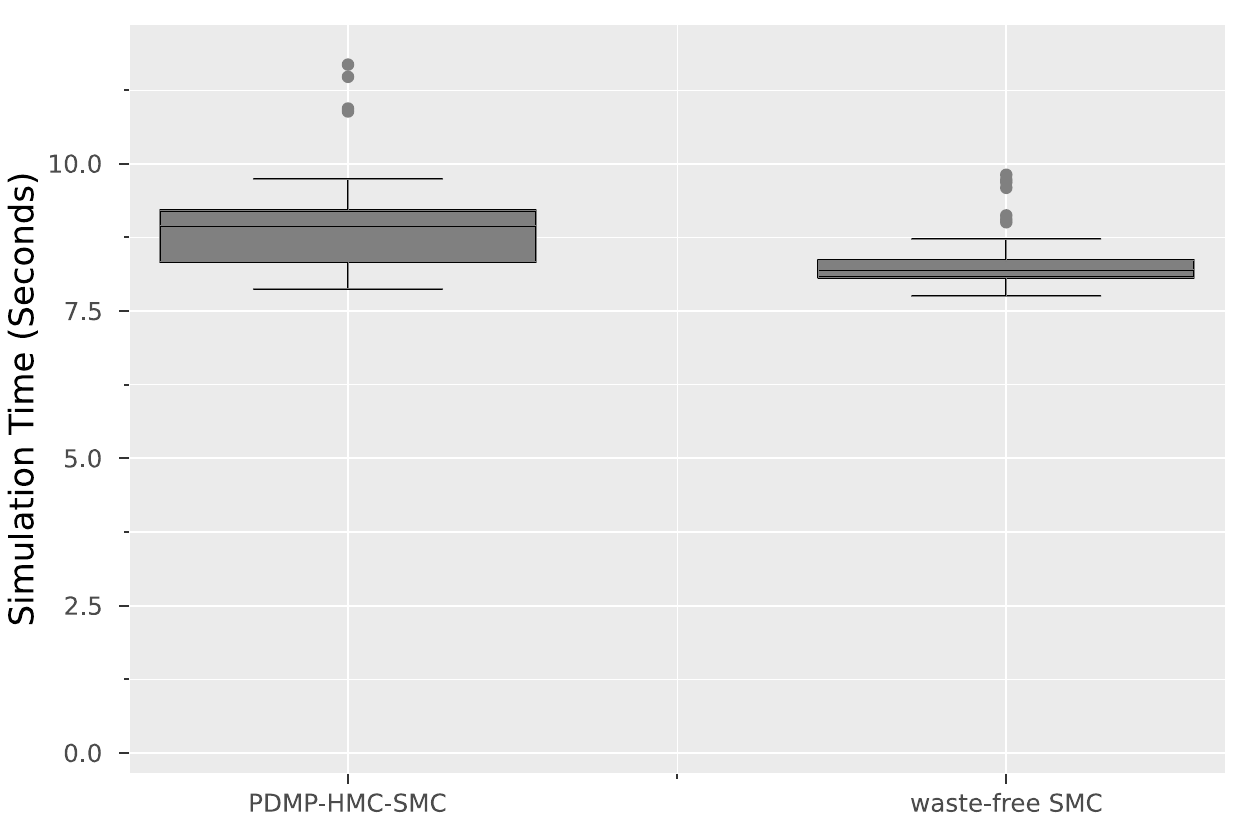} \caption{Simulation times for both algorithms. Left: Simulation time for Hamiltonian
Snippet SMC algorithm, with $N=100,\epsilon=0.2,\alpha=0.5$. Right:Simulation
times for the waste-free SMC with $N=100$}
\label{Sonar_Simtime}
\end{figure}

\subsection{Numerical illustration: orthant probabilities \protect\label{subsec:Numerical-illustration:-orthant}}

In this section, we consider the problem of calculating the Gaussian
orthant probabilities, which is given by 
\[
p(\mathbf{a},\mathbf{b},\Sigma):=\mathbb{P}(\mathbf{a}\leq\mathbf{X}\leq\mathbf{b}),
\]
where $\mathbf{a},\mathbf{b}\in\mathbb{R}^{d}$ are known vectors
of dimension $d$ and $\mathbf{X}\sim\mathcal{N}_{d}(\mathbf{0},\Sigma)$
with $\Sigma$ a covariance matrix of size $d\times d$. Consider
the Cholesky decomposition of $\Sigma$ which is given by $\Sigma:=LL^{\top}$,
where $L:=(l_{ij},i,j\in\llbracket d\rrbracket)$ is a lower triangular
matrix with positive diagonal entries. It is clear that $\mathbf{X}$
can be viewed as $\mathbf{X}:=L\mathbf{\eta}$, where $\mathbf{\eta}\sim\mathcal{N}_{d}(\mathbf{0},{\rm Id}_{d})$.
Consequently, one can instead rewrite $p(\mathbf{a},\mathbf{b},\Sigma)$
as a product of $d$ probabilities given by
\begin{align}
p_{1}=\mathbb{P}(a_{1}\leq l_{11}\eta_{1}\leq b_{1})=\mathbb{P}\left(a_{1}/l_{11}\leq\eta_{1}\leq b_{1}/l_{11}\right)\,,
\end{align}
and
\begin{align}
p_{n}=\mathbb{P}\left(a_{t}\leq{\textstyle \sum}_{j=1}^{n}l_{nj}\eta_{j}\leq b_{t}\right)=\mathbb{P}\left(\nicefrac{(a_{t}-\sum_{j=1}^{n-1}l_{nj}\eta_{j})}{l_{nn}}\leq\eta_{t}\leq\nicefrac{(b_{t}-\sum_{j=1}^{n-1}l_{nj}\eta_{j})}{l_{nn}}\right)\,,
\end{align}
for $n=2,\ldots,d$. For notational simplicity, we let $\mathcal{B}_{n}(\eta_{1:n-1})$
denote the interval $\left[\nicefrac{(a_{t}-\sum_{j=1}^{n-1}l_{nj}\eta_{j})}{l_{nn}},\nicefrac{(b_{t}-\sum_{j=1}^{n-1}l_{nj}\eta_{j})}{l_{nn}}\right],$
with the convention $\mathcal{B}_{1}(\eta_{1:0}):=[a_{1}/l_{11},b_{1}/l_{11}]$.
Then, $p(\mathbf{a},\mathbf{b},\Sigma)$ can be written as the product
of $p_{n}\,s$ for $n=1,2,\ldots,d$. Moreover, one could see that
$p_{n}$ is also the normalising constant of the conditional distribution
of $\eta_{n}$ given $\eta_{1:n-1}$. To calculate the orthant probability,
\cite{ridgway2016computation} have proposed an SMC algorithm targetting
the sequence of distributions $\pi_{n}(\eta_{1:n}):=\pi_{1}(\eta_{n})\prod_{k=2}^{n}\pi_{n}(\eta_{k}|\eta_{1:k-1})$
for $n\in\llbracket1,d\rrbracket$, given by 
\begin{align}
 & \pi_{1}(\eta_{1})\propto\phi(\eta_{1})\mathbf{1}\{\eta_{1}\in\mathcal{B}_{1}\}=\gamma_{1}(\eta_{1})\\
 & \pi_{n}(\eta_{n}|\eta_{1:n-1})\propto\phi(\eta_{n})\mathbf{1}\{\eta_{n}\in\mathcal{B}_{n}(\eta_{1:n-1})\}=\gamma_{n}(\eta_{n}|\eta_{1:n-1}).
\end{align}
where $\phi$ denotes the probability density of a $\mathcal{N}(0,1)$.
One could also note that 
\[
\pi_{n}(\eta_{n}|\eta_{1:n-1})=1/\Phi(\mathcal{B}_{n}(\eta_{1:n-1}))\phi(\eta_{n})\mathbf{1}\{\eta_{n}\in\mathcal{B}_{n}(\eta_{1:n-1})\}
\]
 and $\gamma_{n}(\eta_{1:n})=\phi(\eta_{1:n})\prod_{k=1}^{n}\mathbf{1}\{\eta_{k}\in\mathcal{B}_{k}(\eta_{1:k-1})\}$,
where $\Phi(\mathcal{B}_{n}(\eta_{1:n-1}))$ represents the probability
of a standard Normal random variable being in the region $\mathcal{B}_{n}(\eta_{1:n-1})$.
Therefore, the SMC algorithm proposed by \cite{ridgway2016computation}
then proceeds as follows. (1) At time $t$, particles $\eta_{1:t-1}^{n}$
are extended by sampling $\eta_{n}^{n}\sim\pi_{n}({\rm d}\eta_{t}|\eta_{1:t-1}^{(i)})$.
(2) Particles $\eta_{1:t}^{(i)}$ are then reweighted by multiplying
the incremental weights $\Phi(\mathcal{B}_{n}(\eta_{1:n-1}^{(i)}))$
to $w_{n-1}^{(i)}$. (3) If the ESS is below a certain threshold,
resample the particles and move them through an MCMC kernel that leaves
$\pi_{t}$ invariant for $k$ iterations. For the MCMC kernel, \cite{ridgway2016computation}
recommended using Gibbs sampler that leaves $\pi_{t}$ invariant to
move the particles at step (3). The orthant probability we are interested
in can then be viewed as the normalising constant of $\pi_{d}$ and
this can be estimated as a by-product of the SMC algorithm.

 Since we are trying to sample from the constrained Gaussian distributions,
the Hamiltonian equation can be solved exactly and $w_{n,k}$ is always
$1$. As a result, the incremental weights for the trajectories simplify
to $\Phi(\mathcal{B}_{n}(u_{n-1}))$ and each particle on the trajectory
starting from $z_{t}$ will have an incremental weight proportional
to $\Phi(\mathcal{B}_{n}(u_{n-1}))$. To obtain a trajectory, we follow
\cite{pakman2014exact} who perform HMC with reflections to sample.
As the dimension increases, the number of reflections performed under
$\psi_{n}$ will also increase given a fixed integration time. We
adaptively tuned the integrating time $\epsilon$ to ensure that the
everage number of reflections at each SMC step does not exceed a given
threshold. In our experiment we set this threshold to be $5$. To
show that the waste-recycling RSMC algorithm scales well in high dimension,
we set $d=150$, $a=(1.5,1.5,...)$ and $b=(\infty,\infty,...)$.
Also, we use the same covariance matrix in \cite{dau2020waste} and
perform variable re-ordering as suggested in \cite{ridgway2016computation}
before the simulation. 
\begin{figure}[htb!]
\includegraphics[width=1\textwidth]{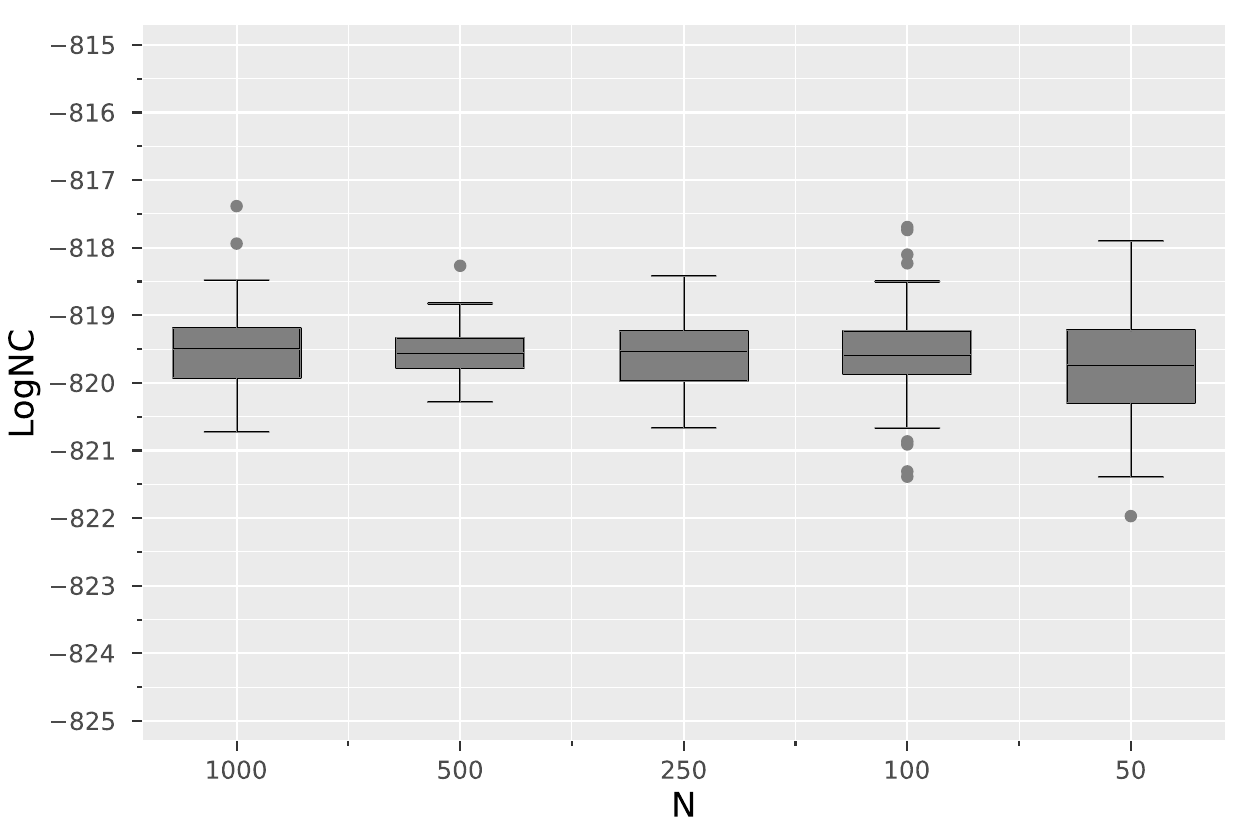} \caption{Orthant probability example: estimates of the normalising constant
(i.e. the orthant probability) obtained from the waste-recycling HSMC
algorithm with $N=50,000$.}
\label{orthant=000020lognc}
\end{figure}

\begin{figure}[htb!]
\includegraphics[width=1\textwidth]{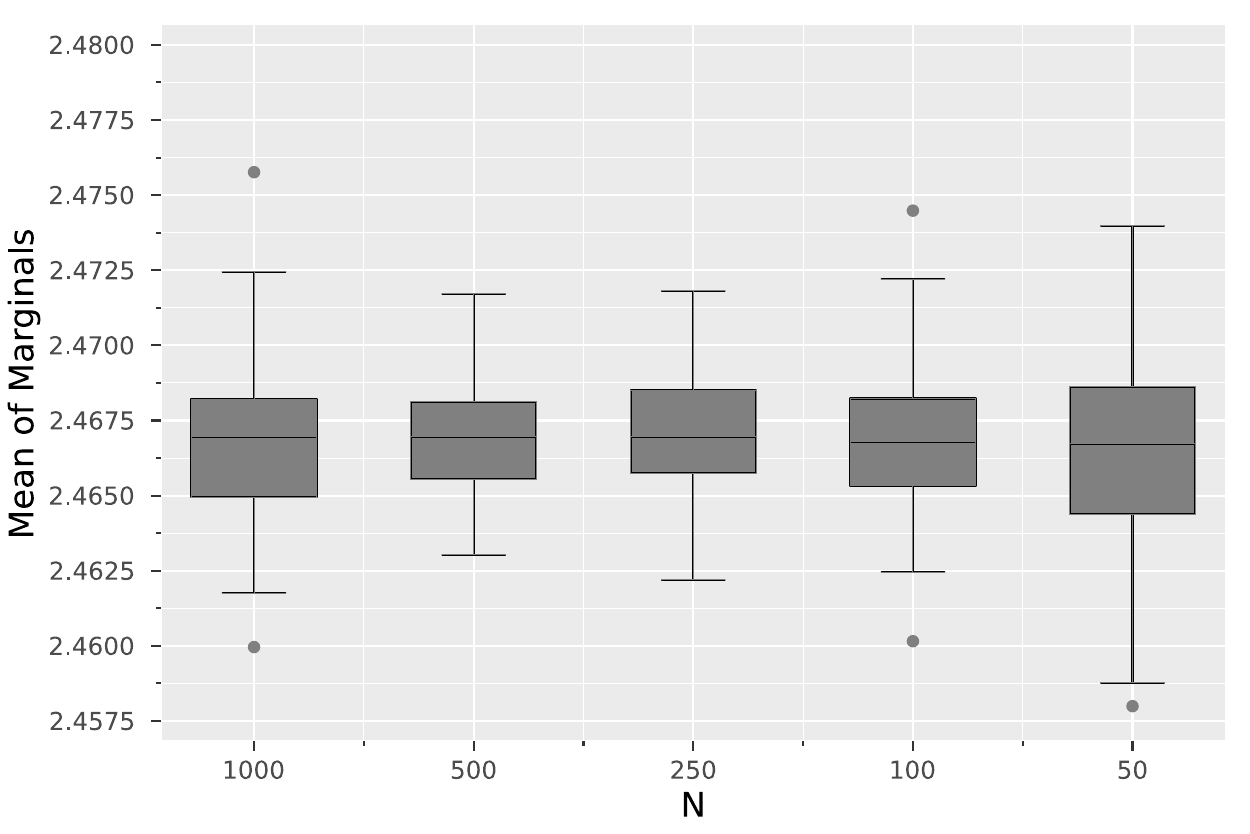} \caption{Orthant probability example: estimates of the mean of marginals obtained
from the integrator snippet SMC algorithm with $N=50,000$.}
\label{orthant=000020mm}
\end{figure}

Figures~\ref{orthant=000020lognc} and \ref{orthant=000020mm} show
the results obtained with $N\times(T+1)=50,000$ and various values
for $N$. With a quarter of the number of particles used in \cite{dau2020waste},
the waste-recycling HSMC algorithm achieves comparable performance
when estimating the normalising constant (i.e. the orthant probability).
Moreover, the estimates are stable for different choices of $N$ values,
although one observes that the algorithm achieves best performance
when $N=500$ (i.e. each trajectory contains $100$ particles). This
also suggests that the integrating time should be adaptively tuned
in a different way to achieve the best performance given a fixed computational
budget. Estimates of the function $\varphi(x_{0:d})=\mathbb{E}(1/d\sum_{i=1}^{d}x_{i})$
with respect to the Gaussian distribution $\mathcal{N}_{d}(\mathbf{0},\Sigma)$
truncated between $\mathbf{a}$ and $\mathbf{b}$ are also stable
for different choices of $N$, although they are more variable than
those obtained in \cite{dau2020waste}. This indicates that the waste-recycling
HSMC algorithm does scale well in high dimension. We note that this
higher variance compared to the waste-free SMC of \cite{dau2020waste},
is obtained in a scenario where they are able to exploit the particular
structure of the problem and implement an exact Gibbs sampler to move
the particles. The integrator snippet SMC we propose is however more
general and applicable to scenarios where such a structure is not
present.

\section{Early experimental results: filamentary distributions \protect\label{sec:Early-experimental-results-dilamentary}}

\subsection{Numerical illustration: simulating from filamentary distributions
\protect\label{subsec:example-filamentary-distributions}}

We now illustrate the interest of integrator snippets in a scenario
where the target distribution possesses specific geometric features.
Specifically, we focus here on distributions concentrated around a
manifold $\mathcal{M}\subset\mathsf{X}=\mathbb{R}^{d}$ defined as
the zero level set of a smooth function $\ell:\mathbb{R}^{d}\to\mathbb{R}^{m}$
\[
\mathcal{M}:=\left\{ x\in\mathsf{X}:\ell(x)=0\right\} ,
\]
sometimes referred to as filamentary distributions \cite{chang1997conditioning,maire2018markov,maire2022markov}.
Such distributions arise naturally in various scenarios, including
inverse problems or generalisations of the Gibbs sampler \cite{chang1997conditioning,maire2018markov,maire2022markov}
or as a relaxation of problems where the support of the distribution
of interest is included in $\mathcal{M}$ or for example generalisation
of the Gibbs sampler through disintegration \cite{chang1997conditioning}.
Such is the case for ABC methods in Bayesian statistics \cite{beaumont2010approximate}.
Assume that $\pi$ is a probability density with respect to the Lebesgue
measure defined on $\mathsf{X}$ and for $\epsilon>0$ consider a
``kernel'' function $k_{\epsilon}(u):=\epsilon^{-m}k(u/\epsilon)$
where $k:\mathbb{R}^{m}\to\mathbb{R}_{+}$ and define the probability
density 
\[
\pi_{\epsilon}(x)\propto k_{\epsilon}\circ\ell(x)\,\pi(x)\,.
\]
The corresponding probability distribution can be thought of as an
approximation of the probability distribution of density $\pi_{0}(x)\propto\pi(x)\mathbf{1}\{x\in\mathcal{M}\}$
with respect to the Hausdorff measure on $\mathcal{M}$. Typical choices
for the kernel are $k(u)=\mathbf{1}\{\|u\|\leq1\}$ or $k(u)=\mathcal{N}(u;0,\mathbf{I}_{m})$.
Strong anisotropy may result from such constraints and make exploration
of the support of such distributions awkward for standard Monte Carlo
algorithms. This is illustrated in Fig.~\ref{fig:hmc_fails_on_filamentary}
where a standard MH-HMC algorithms is used to sample from $\pi_{\epsilon}$
defined on $\mathbb{R}^{2}$, for three values $\epsilon=0.5,0.1,0.05$,
and performance is observed to deteriorate as $\epsilon$ decreases.
The samples produced are displayed in blue: for $\epsilon=0.5$ HMC-MH
mixes well and explores the support rapidly, but for $\epsilon=0.1$
the chain gets stuck in a narrow region of $\pi_{\epsilon}$ at the
bottom, near initialisation, while for $\epsilon=0.05$, no proposal
is ever accepted in the experiment. 

\begin{figure}
\begin{centering}
\includegraphics[width=1\textwidth]{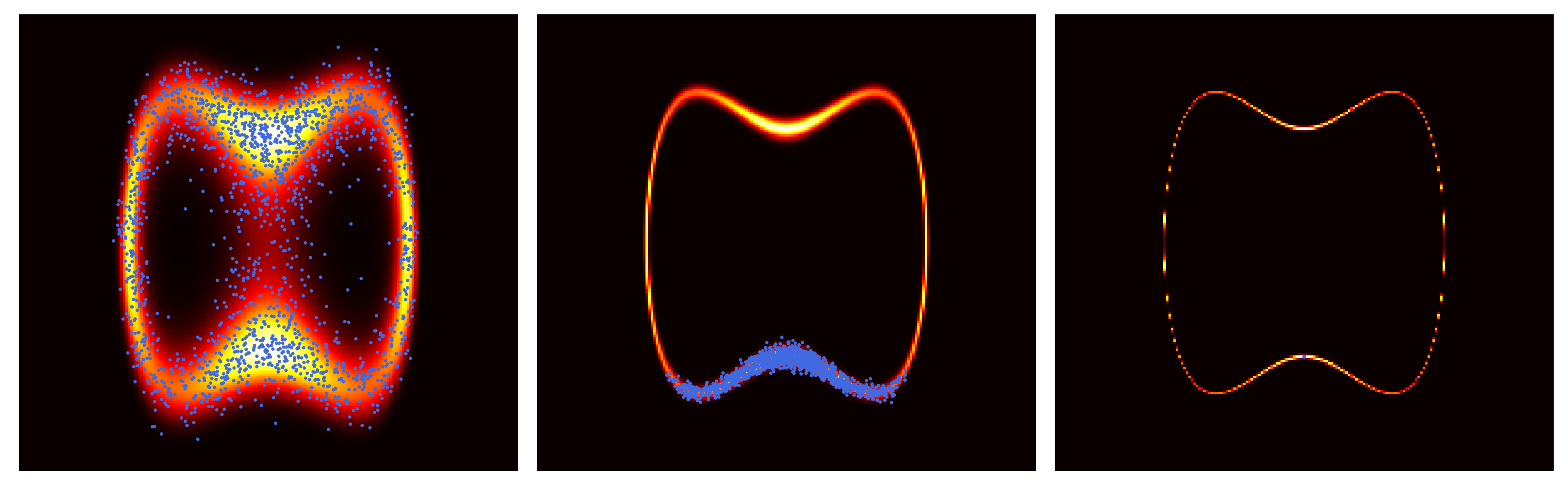}\caption{Heat map of $\pi_{\epsilon}(x)\propto\mathcal{N}(y\mid F(x),\epsilon^{2})\mathcal{N}(x\mid0,\mathbf{I}_{2})$
with $F(x_{1},x_{2}):=x_{2}^{2}+x_{1}^{2}\left(x_{1}^{2}-\frac{1}{2}\right)$,
defined on $\mathbb{R}^{2}$, for $y\in\mathbb{R}$ and $\epsilon=0.5,0.1,0.05$
(left to right) superimposed with $2000$ samples (in blue) obtained
from an MH-HMC with step size $0.05$ and involving $T=20$ leapfrog
steps.}
\par\end{centering}
\label{fig:hmc_fails_on_filamentary}
\end{figure}
To illustrate the properties of integrator snippet techniques we consider
the following toy example. For $\epsilon>0$ and $d\in\mathbb{N}_{*}$
let 
\[
\pi_{\epsilon}(x)\propto\frac{1}{\epsilon^{m}}\mathbf{1}\{\left\Vert \ell(x)\right\Vert \leq\epsilon\}\,\mathcal{N}(x;0,\mathrm{\mathbf{I}}_{d})\quad\text{with}\quad\ell(x)=x^{\top}\Sigma^{-1}x-c\,,
\]
for $\Sigma$ a $d\times d$ symmetric positive matrix and $c\in\mathbb{R}$,
that is we consider the restriction of a standard normal distribution
around an ellipsoid defined by $\ell(x)=0$.

\newcommand{\parallelsum}{\mathbin{\scalebox{0.4}{$\!/\mkern-5mu/\!$}}}

\newcommand*{\para}{\rotatebox[origin=c]{-45}{$\scalebox{0.8}{$\shortparallel$}$}}

\global\long\def\smallparallel{\parallelsum}%

\newcommand*{\ortho}{\rotatebox[origin=c]{180}{$\scalebox{0.45}{$\top$}$}}

\global\long\def\smallperp{\ortho}%

In order to explore the support of the target distribution we use
two mechanisms based on reflections either through tangent hyperplanes
of equicontours of $\ell(x)$ or through the corresponding orthogonal
complements. More specifically for $x\in\mathsf{X}$ such that $\nabla\ell(x)\neq0$
let $n(x):=\nabla\ell(x)/\|\nabla\ell(x)\|$ and define the tangential
HUG (THUG) and symmetrically-normal HUG (SNUG) as $\psi_{\smallparallel}:={}_{\textsc{a}}\psi\circ{\rm b}\circ{}_{\textsc{a}}\psi$,
, and $\psi_{\smallperp}:={}_{\textsc{a}}\psi\circ(-{\rm b})\circ{}_{\textsc{a}}\psi$
respectively, with $_{\textsc{a}}\psi$ and $\mathrm{b}$ as in Example~\ref{exa:verlet-HMC}
and (\ref{eq:def-bounce}). Both updates can be understood as discretisations
of ODEs and are volume preserving. Intuitively for $(x,v)\in\mathsf{Z}$
with $v_{\smallparallel}=v_{\smallparallel}(x+\epsilon v):=n(x+\epsilon v)n(x+\epsilon v)^{\top}v$
for $\epsilon>0$ and $v_{\smallperp}=v-v_{\smallparallel}$ we have
$\psi_{\smallparallel}(x,v)=(x+2\epsilon v_{\smallparallel},v_{\smallparallel}-v_{\smallperp})$
and $\psi_{\smallperp}(x,v)=(x+2\epsilon v_{\smallperp},v_{\smallperp}-v_{\smallparallel})$.
This is illustrated in Fig.~\ref{fig:fig:thug_snug} for $d=2$.
Further, for an initial state $z_{0}=(x_{0},v_{0})\in\mathsf{Z}$,
trajectories of the first component of $k\mapsto\psi_{\smallparallel}^{k}(z_{0})$
remain close to $\ell^{-1}(\{\ell(x_{0})\})$ while $k\mapsto\psi_{\smallperp}^{k}(z_{0})$
follows the gradient field $x\mapsto\nabla\ell(x)$ and leads to hops
across equicontours. 

\begin{figure}
\begin{centering}
\includegraphics[width=1\textwidth]{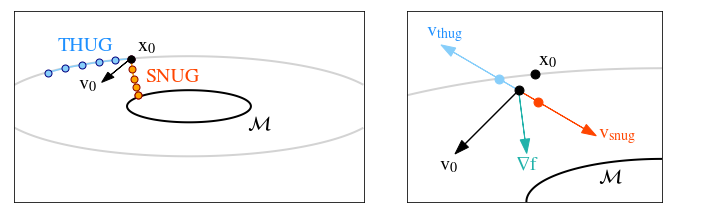}
\par\end{centering}
\caption{Black line: manifold of interest $\mathcal{M}=\ell^{-1}(0)$, grey
line: level set of $\ell^{-1}(\{\ell(x_{0})\})$ the $\ell$-level
set $x_{0}$ belongs to. Left: THUG trajectory (blue) remains close
to $\ell^{-1}(\{\ell(x_{0})\})$, SNUG trajectory (orange) explores
different contours of $\ell(x)$. Right: $v_{{\rm thug}}=v_{\protect\smallparallel}-v_{\protect\smallperp}$
and $v_{{\rm snug}}=-(v_{\protect\smallparallel}-v_{\protect\smallperp})$.}

\label{fig:fig:thug_snug}
\end{figure}
The sequence of target distributions on $(\mathsf{Z},\mathcal{Z})$
we consider is of the form
\[
\mu_{n}({\rm d}z)\propto\pi_{\epsilon_{n}}({\rm d}x)\mathcal{N}({\rm d}v;0,\mathbf{I}_{d}),\qquad\qquad\epsilon_{n}>0,n\in\llbracket0,P\rrbracket\,,
\]
and the Integrator Snippet SMC is defined through the mixture, for
$\alpha\in[0,1]$, 
\[
\bar{\mu}_{n}({\rm d}z)=\frac{\alpha}{T+1}\sum_{k=0}^{T}\mu_{n}^{\psi_{\smallparallel}^{-k}}({\rm d}z)+\frac{1-\alpha}{T+1}\sum_{k=0}^{T}\mu_{n}^{\psi_{\smallperp}^{-k}}({\rm d}z)\,.
\]
We compare performance of integrator snippet with an SMC Sampler relying
on a mutation kernel $M_{n}$ consisting of a mixture of two updates
targetting $\mu_{n-1}$ each iterating $T$ times a MH kernel applying
integrator $\psi_{\shortparallel}$ or $\psi_{\smallperp}$ once after
refreshing the velocity; the backward kernels are chosen to correspond
to the default choices discussed earlier. In the experiments below
we set $d=50$, $c=12$, and $\Sigma$ is the diagonal matrix alternating
$1$'s and $0.1$'s along the diagonal. We used $N=5000$ particles,
sampled from $\mathcal{N}(0,\mathrm{\mathbf{I}}_{d})$ at time zero
and $\epsilon_{0}$ is set to the maximum distance of these particles
in the $\ell$ domain from $0$ and $\alpha=0.8$ for both algorithms.
We compared the two samplers across three metrics; all results are
averaged over $20$ runs.

\textbf{Robustness and Accuracy}: we fix the step size for SNUG to
$\epsilon_{\smallperp}=0.1$ and run both algorithms for a grid of
values of $T$ and $\epsilon_{\smallparallel}$ using a standard adaptive
scheme based on ESS to determine $\{\epsilon_{n},n\in\llbracket P\rrbracket\}$
until a criterion described below is satisfied and retain the final
tolerance achieved, recorded in Fig.~\ref{fig:eps-grid}. As a result
the terminal value $\epsilon_{P}$ and computational costs are different
for both algorithms: the point of this experiment is only to demonstrate
robustness and potential accuracy of Integrator Snippets. Both the
SMC sampler and Integrator Snippet are stopped when the average probability
of leaving the seed particle drops below $0.01$. 

 Our proposed algorithm consistently achieves a two order magnitude
smaller final value of $\epsilon$ and is more robust to the choice
of stepsize.

\begin{figure}
\begin{centering}
\includegraphics[width=1\textwidth]{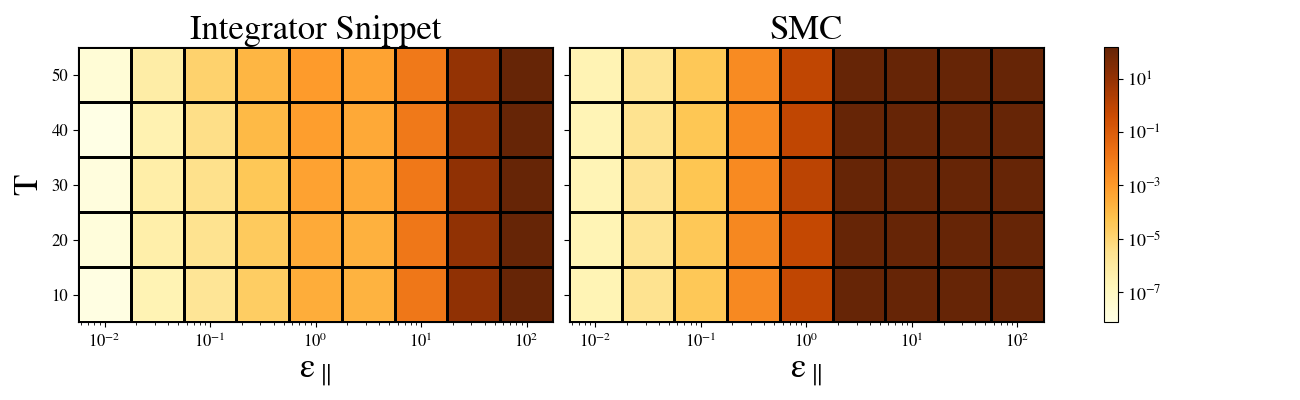}
\par\end{centering}
\caption{Final tolerances achieved by the two algorithms, averaged over $20$
runs.}
\label{fig:eps-grid}
\end{figure}

\textbf{Variance}: for this experiment we set $T=50$ steps, $\epsilon_{\smallparallel}=0.01$
and determine and compare the variances of the estimates of the mean
of $\pi_{\epsilon_{P}}$ for the final SMC step, for which $\epsilon_{P}=1\times10^{-7}$.
To improve comparison, and in particular ensure comparable computational
costs, both algorithms share the same schedule $\{\epsilon_{n},n\in\llbracket P\rrbracket\}$,
determined adaptively by the SMC algorithm in a separate pre-run.

The results are reported as componentwise boxplots in Fig.~\ref{fig:var-box-plot}
where we observe a significant variance reduction for comparable computational
cost.

\begin{figure}
\begin{centering}
\includegraphics[width=0.8\textwidth]{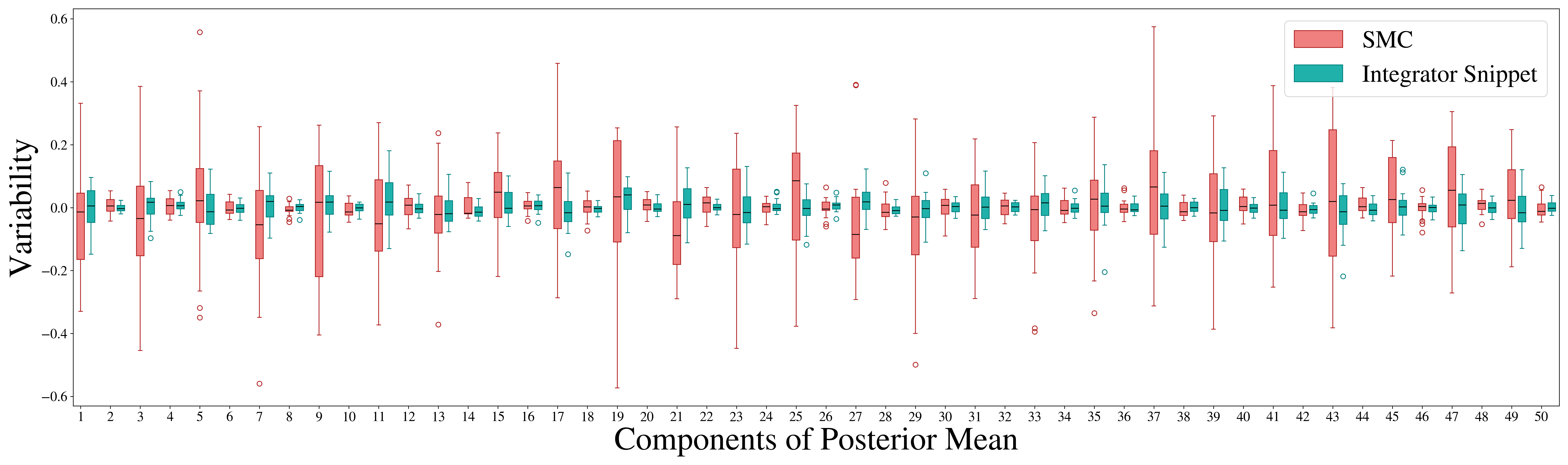}\caption{Variability of the estimates of the mean of $\pi_{\epsilon}$ for
the SMC sampler and the Integrator Snippet, using estimator $\hat{\mu}_{P}(f)$,
for the test function $f(x)=x$. Integrator snippet is able uniformly
achieve smaller variance, in particular on the odd components corresponding
to larger variances in $\Sigma$.}
\par\end{centering}
\label{fig:var-box-plot}
\end{figure}

\textbf{Efficiency}: we report the Expected Squared Jump Distance
(ESJD) as a proxy of distance travelled by the two algorithms. For
Integrator Snippets, it is possible to estimate this quantity as follows
for a function $f:\mathsf{Z}\to\mathbb{R}$
\[
\text{ESJD}_{n}(f)\approx\sum_{i=1}^{N}\sum_{k=0}^{T}\sum_{l=k+1}^{T}\left(f(Z_{n-1,l}^{(i)})\bar{W}_{n,l}^{(i)}-f(Z_{n-1,k}^{(i)})\bar{W}_{n,k}^{(i)}\right)^{2},\qquad\bar{W}_{n,k}^{(i)}=\frac{\bar{w}_{n,k}(Z_{n-1}^{(i)})}{{\displaystyle \sum_{l=0}^{T}\bar{w}_{n,l}(Z_{n-1,l}^{(i)})}}\,.
\]

We report the average of this metric in Table~\ref{table:esjd_table}
for the functions $f_{i}(x)=x_{i}$ $i\in\llbracket d\rrbracket$,
normalised by total runtime in seconds for all particles (first row),
for the particles using the THUG update (second row) and those using
the SNUG update (third row), with standard deviation shown in parenthesis.
Our proposed algorithm is several orders of magnitude more efficient
in the exploration of $\pi_{\epsilon}$ than its SMC counterpart which,
thanks to its ability to take full advantage of all the intermediary
states of the snippet. This is in contrast with the SMC sampler which
creates trajectories of random length. 

\begin{table}
\centering %
\begin{tabular}{c|c|c}
 & Integrator Snippet & SMC\tabularnewline
\hline 
 &  & \tabularnewline
$\text{ESJD}/s$ & \,\,$\mathbf{5.3\times10^{-5}}\,\,(\pm5.9\times10^{-6})$ \,\, & \,\,$1.2\times10^{-7}\,\,(\pm3.7\times10^{-9})$ \,\,\tabularnewline
$\text{ESJD-THUG}/s$ & \,\,$\mathbf{6.6\times10^{-5}}\,\,(\pm7.5\times10^{-6})$ \,\, & \,\,$1.7\times10^{-7}\,\,(\pm4.5\times10^{-9})$ \,\,\tabularnewline
$\text{ESJD-SNUG}/s$ & \,\,$\mathbf{2.7\times10^{-4}}\,\,(\pm3.2\times10^{-5})$ \,\, & \,\,$1.6\times10^{-20}\,\,(\pm6.0\times10^{-21})$ \,\,\tabularnewline
\end{tabular}\caption{$d^{-1}\sum_{i=1}^{d}\text{ESJD}_{n}(f_{i})$ normalised by time for
Integrator Snippet and an SMC sampler.}
\label{table:esjd_table}
\end{table}

In Fig.~\ref{fig:hug-scatter-plots-support} we illustrate robustness
of integrator snippet SMC to the choice of $\epsilon$ and $T$ through
the support covered; note that the computational cost of the three
algorithms considered is comparable.

\begin{figure}
\centering
\includegraphics[width=0.9\textwidth]{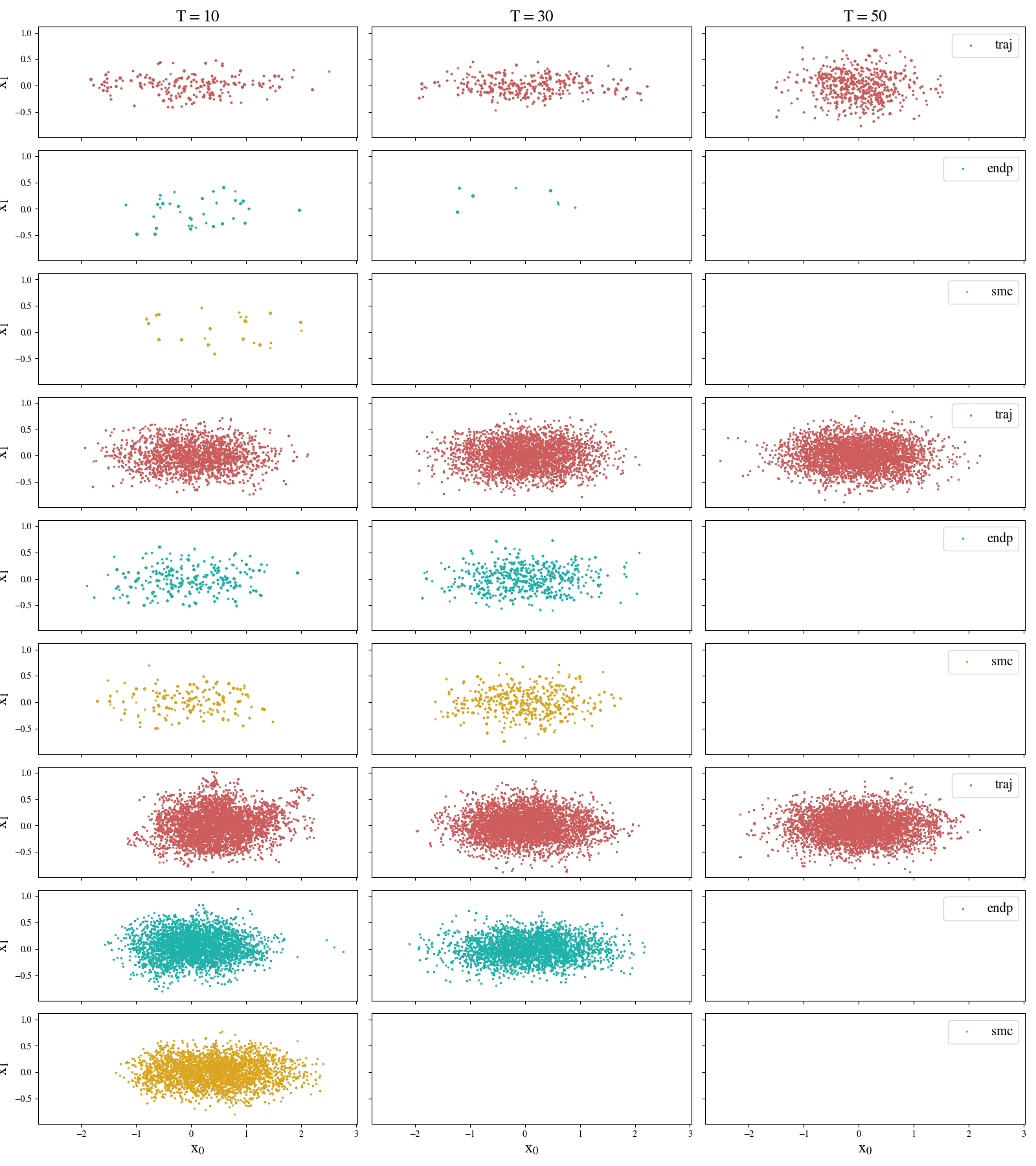}

\caption{Scatter plot of the first two coordinates of $\{x_{n,k}^{(i)},i\in\llbracket N\rrbracket,k\in\llbracket0,T\rrbracket\}$
for three SMC algorithms: Snippet SMC (traj), Snippet SMC using the
snippets endpoints only (endp) and the SMC sampler (smc). The stepsize
is decreased from a high value for rows 1-3 $(\epsilon=1.0)$, to
lower values for rows 4-6 ($\epsilon=0.1$) and rows 7-9 $(\epsilon=0.01)$.
The three columns correspond to different snippet length, $T=10,30,50$. }

\label{fig:hug-scatter-plots-support}
\end{figure}

\subsection{Going round the bend: bananas and pears}

In this section we illustrate how ODEs and associated integrators
can be used in order to update states involved in simulation algorithm
for probability distributions involving geometrically awkward densities.
The banana shaped distribution introduced by \cite{haario1999adaptive}
in the context of adaptive MCMC algorithm is a prototypical toy example
of dependencies encountered in some practical situation e.g. inverse
problems. The banana shaped distribution defined on $\mathsf{X}=\mathbb{R}^{d},$
has density with respect to the Lebesgue measure $\pi(x)\propto\mathcal{N}\big(\phi(x);0,\text{\ensuremath{\Sigma}}\big)$
where $\phi(x)=(x_{1},x_{2}+0.03(x_{1}^{2}-100),x_{2},\ldots,x_{n})$
and $\Sigma={\rm diag}(100,1,\ldots,1)$ -- see its contours for
$d=2$ in Fig.~\ref{fig:banana}. A first suggestion could be to
follow the contours of $\pi$ using the update described in the previous
section. Here we suggest the following flow, for $v\in\{-1,1\}$,
\begin{align*}
t & \mapsto\begin{cases}
x_{1}(t)=vt\\
x_{2}(t)=-0.03(x_{1}^{2}(t)-100)+x_{2}(0)+0.03(x_{1}^{2}(0)-100)\\
x_{i}(t)=x_{i}(0) & \text{for }i>2
\end{cases}
\end{align*}
Motivation for this is that $t\mapsto x_{2}(t)+0.03(x_{1}^{2}(t)-100)=x_{2}(0)+0.03(x_{1}^{2}(0)-100)$
is a constant, effectively unbending the banana and effectively applying
a standard Gibbs sampler \cite{doi:10.1137/1116083} on a Gaussian
in a different set of coordinates. This suggests a practical implementation
of generalized forms of the Gibbs sampler, which can be traced back
to \cite{https://doi.org/10.1111/1467-9574.00056} and were recently
revived in \cite{qin2023spectral}. Naturally in practice such updates
must be completed with other updates to ensure irreducibility.

In more general situations where $\phi(x)=(x_{1},u(x_{1},x_{2}),x_{3},\ldots)$
for some $u\colon\mathsf{X}^{2}\rightarrow\mathbb{R}$ such that $\pi$
is indeed a probability density, the idea can be generalized with
the ideal aim of keeping $t\mapsto u\big(x_{1}(t),x_{2}(t)\big)$
or $t\mapsto x_{1}(t)^{2}+u\big(x_{1}(t),x_{2}(t)\big)^{2}$ constant,
either using THUG or even Hamilton's equation as suggested by the
last contraint. Note however that non-separability may cause additional
numerical difficulties in the latter scenario. 

\begin{figure}
\centering{}\includegraphics[width=0.7\textwidth]{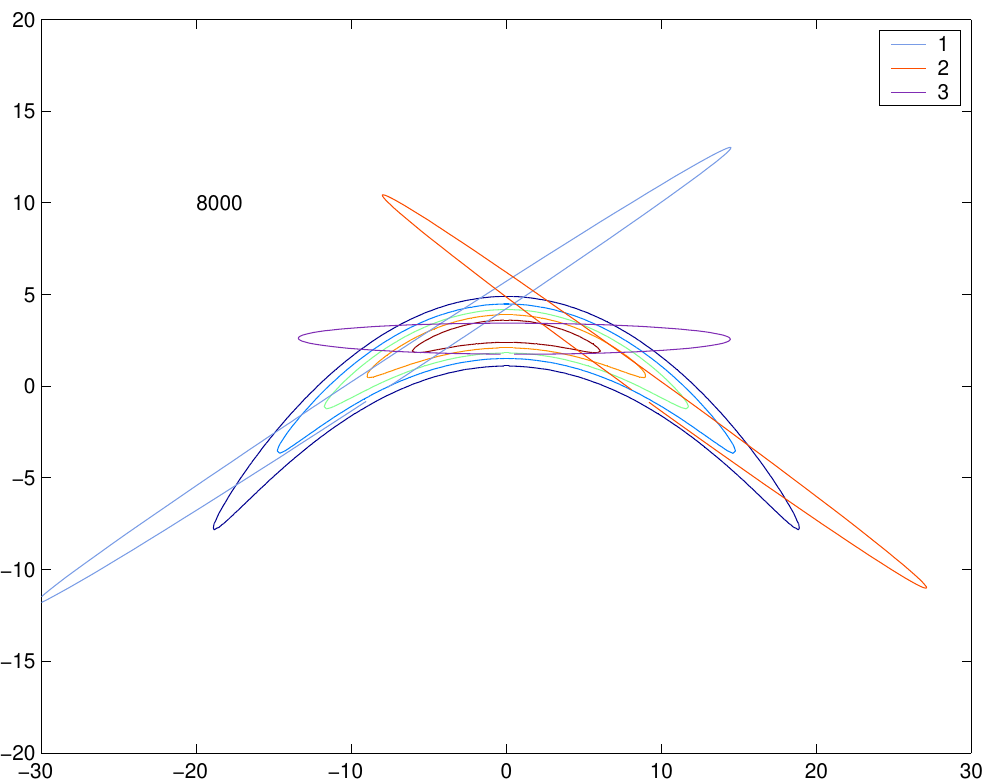}\caption{The banana distribution, $d=2$}
\label{fig:banana}
\end{figure}
Another popular illustrative toy example is Neal's funnel on which
numerous standard algorithms may struggle to converge. The funnel
consists of a mixture of normal distributions
\[
\pi(x,y)=\mathcal{N}(y;0,3)\mathcal{N}(x;0,\exp(y/2))\,,
\]
defining a probability density on $\mathbb{R}^{2}$ of contours shown
in Fig.~\ref{fig:funnel}--multiple $x$'s can be incorporated therefore
taking the form of a hierchical model. It exemplifies some of the
difficulties which may be encountered when sampling from hierarchical
models.

\begin{figure}
\centering{}\includegraphics[width=0.7\textwidth]{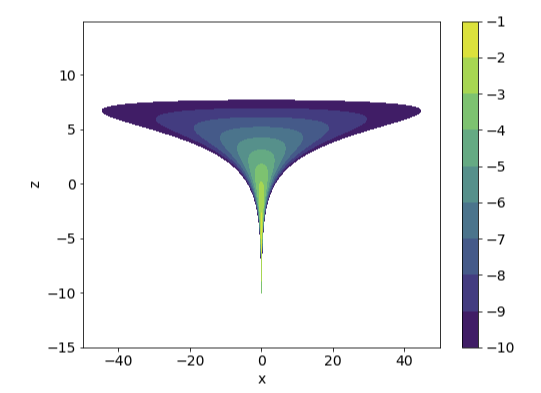}\caption{Neal's funnel}
\label{fig:funnel}
\end{figure}
The narrow funnel is the cause of failure of traditional algorithms.
A simple idea here consists of identifying a flow such that $t\mapsto\mathcal{N}(x_{t};0,\exp(y_{t}/2))\equiv\mathcal{N}(x_{0};0,\exp(y_{0}/2))$
for given $(x_{0},y_{0})$. A straightforward choice imposes $\dot{y}_{t}=v$
with $v\in\{-1,1\}$ and $x_{t}/\exp(y_{t}/4)=x_{0}/\exp(y_{0}/4)$
for $t>0$, therefore ensuring that $x_{t}\sim\mathcal{N}(0,\exp(y_{t}/2))$
for $t>0$. This can be solved analytically
\begin{align*}
x_{t} & =x_{0}\exp\left(y_{0}/4+y_{t}/4\right)=x_{0}\exp\left((y_{0}+vt)/4\right)\\
y_{t} & =vt\,.
\end{align*}
This corresponds to a simple reparametrizations, standard in the literature.
However our suggested use differs in that both components are updated
simultaneously within an MCMC kernel, and illustrate the potential
of this type of ideas. Such a dynamic should naturally be mixed with
other updates ensuring irreducibility.

\section{Adaptation: implementational details \protect\label{app-sec:details-imp-adapt-epsilon}}

\subsection{About the structure of the mother distribution \protect\label{appsubsec:structure-mother-dist}}

It is useful to provide details about the (hierchical) structure of
the target distributions useful for implementational purposes. The
mother distribution in (\ref{eq:target-mother-adaptive}) is given
by 
\[
\bar{\mu}\big(k,{\rm d}(\epsilon,z);\gamma,\theta\big):=\frac{1}{T+1}\mu^{\psi_{\epsilon}^{-k}}({\rm d}z;\gamma)\nu_{\theta}({\rm d}\epsilon)
\]
from which we notice that
\begin{align*}
\bar{\mu}\big(k,{\rm d}(\epsilon,z);\theta,\gamma\big) & =\bar{\mu}\big(k,{\rm d}z\mid\epsilon,\gamma\big)\nu_{\theta}({\rm d}\epsilon)\\
 & =\bar{\mu}\big(k\mid\epsilon,z;\gamma\big)\bar{\mu}\big({\rm d}z\mid\epsilon;\gamma\big)\nu_{\theta}({\rm d}\epsilon)
\end{align*}
that is the first conditionals do not depend on $\theta\in\Theta$.
Indeed, marginalising yields $\bar{\mu}\big({\rm d}\epsilon,z;\theta,\gamma\big)=\nu_{\theta}({\rm d}\epsilon)$,
implying

\begin{align*}
\bar{\mu}\big(k,{\rm d}z\mid\epsilon;\gamma,\theta\big) & =\frac{1}{T+1}\mu^{\psi_{\epsilon}^{-k}}({\rm d}z;\gamma)
\end{align*}
and justifying the notational simplifications, leading to
\begin{align*}
\bar{\mu}\big({\rm d}z\mid\epsilon;\gamma,\theta\big)=\bar{\mu}\big({\rm d}z\mid\epsilon;\gamma\big)=\frac{1}{T+1} & \sum_{l=0}^{T}\mu^{\psi_{\epsilon}^{-l}}({\rm d}z;\gamma)\\
\bar{\mu}\big(k\mid\epsilon,z;\gamma,\theta\big)=\bar{\mu}\big(k\mid\epsilon,z;\gamma\big)= & \frac{{\rm d}\mu^{\psi_{\epsilon}^{-k}}}{{\rm d}\big(\sum_{l=0}^{T}\mu^{\psi_{\epsilon}^{-l}}\big)}(z;\gamma)\,.
\end{align*}
With $\bar{f}=\bar{f}(\epsilon,K,Z)$ for $\bar{f}(\epsilon,k,z):=f\circ\psi_{\epsilon}^{k}(z)$
for $f\colon\mathsf{Z}\rightarrow\mathbb{R}^{m}$ for $m\in\mathbb{N}$,
\[
\bar{\mu}\big(\bar{f}\mid\epsilon,z;\gamma\big):=\sum_{k=0}^{T}\bar{\mu}\big(k\mid\epsilon,z;\gamma\big)f\circ\psi_{\epsilon}^{k}(z)\,.
\]

\subsection{Kullback-Leibler minimisation}

To minimize the KL divergence to determine $\theta_{n}$ we first
note that, with $C$ independent of $\theta$ and possibly changing
on each appearance,
\begin{align*}
\theta\mapsto{\rm {\rm KL}}\big(\tilde{\mu}(\cdot;\gamma_{n},\theta_{n}),\bar{\mu}(\cdot\mid\gamma_{n},\theta)\big)\\
=C+\int\log & \left(\frac{v_{\gamma_{n-1}}(\epsilon,z)\nu_{\theta_{n-1}}({\rm d}\epsilon)\bar{\mu}\big({\rm d}z\mid\epsilon;\gamma_{n1}\big)}{\nu_{\theta}({\rm d}\epsilon)\bar{\mu}\big({\rm d}z\mid\epsilon;\gamma_{n}\big)}\right)\tilde{\mu}\big({\rm d}(\epsilon,z);\gamma_{n},\theta_{n}\big)\\
=C-\int\log & \left(\nu_{\theta}(\epsilon)\right)\tilde{\mu}\big({\rm d}(\epsilon,z);\gamma_{n},\theta_{n}\big)\\
=C-\int\log & \left(\nu_{\theta}(\epsilon)\right)\,\tilde{\mu}\big({\rm d}\epsilon;\gamma_{n},\theta_{n}\big)\,.
\end{align*}
We remark that minimization of the above is easily achieved in scenarios
where $\nu_{\theta}$ is chosen in the exponential family, e.g. $\nu_{\theta}$
has density of the form $\nu_{\theta}(\epsilon)=\exp\big(\langle\theta,T(\epsilon)\rangle-A(\theta)\big)$,
in which case minimisation of (\ref{eq:def-chi-theta}) amounts to
solving 
\[
\theta\mapsto\mathbb{E}_{\nu_{\theta}}\left(T(\epsilon)\right)=\mathbb{E}_{\tilde{\mu}\big(\cdot;\gamma_{n},\theta_{n}\big)}\left(T(\epsilon)\right)\,.
\]
In practice the expectation with respect to $\tilde{\mu}\big(\cdot\gamma_{n},\theta_{n}\big)$
is obtained using self-normalised importance sampling of the seed
particles $\{(\check{z}_{n}^{(i)},\check{\epsilon}_{n}^{(i)}),i\in\llbracket N\rrbracket\}$
representing $\bar{\mu}_{n}$ or weighted samples $\{(z_{n-1}^{(i)},\epsilon_{n-1}^{(i)}),i\in\llbracket N\rrbracket\}$
(see Alg.~(\ref{alg:Folded-PDMP-SMC-1}) and surrounding comments),
leading in the exponential family scenario to solving
\[
\theta\mapsto\mathbb{E}_{\nu_{\theta}}\left(T(\epsilon)\right)=\frac{\sum_{i=1}^{N}T(\check{\epsilon}_{n}^{(i)})v(\check{\epsilon}_{n}^{(i)}),\check{z}_{n}^{(i)})}{\sum_{i=1}^{N}v(\check{\epsilon}_{n}^{(i)},\check{z}_{n}^{(i)})}\,.
\]
In the general scenario minimization is equivalent to maximum likehood
estimation with samples $\{(\check{z}_{n}^{(i)},\check{\epsilon}_{n}^{(i)}),i\in\llbracket N\rrbracket\}$
as data.

Now we derive a more expression for $v_{\gamma}(\epsilon,z):=\mathrm{Tr}\big[{\rm var}\big(\bar{f}\mid\epsilon,z;\gamma\big)\big]$
useful for implementation.
\begin{align}
v_{\gamma}(\epsilon,z) & =\mathrm{Tr}\{\mathbb{E}\big[\bar{(f}-\mathbb{E}(\bar{f}\mid\epsilon,z))\bar{(f}-\mathbb{E}(\bar{f}\mid\epsilon,z))^{\top}\mid\epsilon,z;\gamma\big]\}\nonumber \\
 & =\mathbb{E}\{{\rm Tr}\big[\bar{(f}-\mathbb{E}(\bar{f}\mid\epsilon,z))\bar{(f}-\mathbb{E}(\bar{f}\mid\epsilon,z))^{\top}\big]\mid\epsilon,z;\gamma\}\nonumber \\
 & =\mathbb{E}\{\|\bar{f}-\mathbb{E}(\bar{f}\mid\epsilon,z)\|^{2}\mid\epsilon,z;\gamma\}\nonumber \\
 & =\frac{1}{T+1}\sum_{k=0}^{T}\Bigl\Vert f\circ\psi_{\epsilon}^{k}(z)-\bar{\mu}(f\mid\epsilon,z;\gamma,\theta)\Bigr\Vert^{2}\bar{\mu}\big(k\mid\epsilon,z;\gamma,\theta\big)\,.\label{eq:explicit-express-v_epsilon}
\end{align}

\subsection{Calculations in the inverse Gaussian scenario\protect\label{app:inverse-gaussian-calculations}}

The Inverse Gaussian with mean $\theta>0$ and shape parameter $\lambda>0$
has skewness $s=3\sqrt{\theta/\lambda}$. Therefore for a fixed skewness
$s>0$ and mean parameter $\theta>0$ the IG distribution has density
\[
\nu_{\theta}(\epsilon;s)=\sqrt{\frac{9\theta}{2\pi\epsilon^{3}s^{2}}}\exp\left(-\frac{9\theta(\epsilon-\theta)^{2}}{2\theta^{2}\epsilon s^{2}}\right)\,,
\]
 w.r.t. the Lebesgue measure. We aim to minimise the Kullback-Leibler
divergence ${\rm \theta\mapsto KL}(\nu,\nu_{\theta})$ for a given
probability distribution. The corresponding log-likelihood is
\begin{align*}
\log\nu_{\theta}(\epsilon;s)=\log(3)+\frac{1}{2}\log(\theta)-\frac{1}{2}\log(2\pi)-\frac{3}{2}\log(\epsilon)-\log(s)-\frac{9(\epsilon^{2}+\theta^{2}-2\epsilon\theta)}{2\theta\epsilon s^{2}}\,,
\end{align*}
and its derivative w.r.t. $\theta>0$ is
\begin{align*}
\frac{{\rm d}}{{\rm d}\theta}\log\nu_{\theta}(\epsilon;s) & =\frac{1}{2\theta}-\frac{9}{2\epsilon s^{2}}\left[\frac{{\rm d}}{{\rm d}\theta}\left(\frac{\epsilon^{2}}{\theta}+\theta-2\epsilon\right)\right]\\
 & =\frac{1}{2\theta}-\frac{9}{2\epsilon s^{2}}\left[-\frac{\epsilon^{2}}{\theta^{2}}+1\right]\\
 & =\frac{1}{2\theta}-\frac{9(\theta^{2}-\epsilon^{2})}{2\epsilon s^{2}\theta^{2}}\,.
\end{align*}
Therefore
\[
\frac{{\rm d}}{{\rm d}\theta}{\rm KL}(\nu,\nu_{\theta})=\frac{1}{2\theta^{2}}\frac{9}{s^{2}}\left\{ \frac{s^{2}}{9}\theta-\theta^{2}\mathbb{E}_{\nu}[\epsilon^{-1}]+\mathbb{E}_{\nu}[\epsilon]\right\} =0\,,
\]
which has unique positive solution

\[
\hat{\theta}=\frac{{\displaystyle \frac{s^{2}}{9}+\sqrt{\frac{s^{4}}{81}+4\mathbb{E}_{\nu}[\epsilon^{-1}]\mathbb{E}_{\nu}[\epsilon]}}}{2\mathbb{E}_{\nu}[\epsilon^{-1}]}
\]
Using that $\mathbb{E}_{\nu}[\epsilon^{-1}]=\mathbb{E}_{\nu}[\epsilon]^{-1}+\lambda^{-1}$,
$\lambda=9\mathbb{E}_{\nu}[\epsilon]/s^{2}$ we deduce $\mathbb{E}[\epsilon^{-1}]\mathbb{E}[\epsilon]=(9+s^{2})/9$
yielding the simpler expression
\begin{align*}
\hat{\theta}=9\mathbb{E}_{\nu}[\epsilon]\left[\frac{{\displaystyle \frac{s^{2}}{9}+\sqrt{\frac{s^{4}}{81}+4\frac{(9+s^{2})}{9}}}}{2(9+s^{2})}\right]\\
=\mathbb{E}_{\nu}[\epsilon]\frac{s^{2}+\sqrt{(s^{2}+18)^{2}}}{2(9+s^{2})}\,.
\end{align*}
The constrained minimisation problem is therefore akin to a moment
matching problem.

\section{Additional theoretical properties \protect\label{sec:Preliminary-theoretical-characte}}

In this section we omit the index $n\in\llbracket0,P\rrbracket$ and
provide precise elements to understand the properties of the algorithms
and estimators considered in this manuscript.

\subsection{Variance using folded mixture samples}

 The following provides us with a notion of effective sample size
for estimators of the form $\check{\mu}(f)$ as defined in Proposition~\ref{prop:variance-reduc-barmu}.
\begin{thm}
\label{thm:ESS-mu-check}Assume that $\upsilon\gg\mu^{\psi^{-k}}$
for $k\in\llbracket T\rrbracket$, then with $\check{Z}\sim\bar{\mu}$,
\begin{enumerate}
\item for $f\colon\mathsf{Z}\rightarrow\mathbb{R},$
\[
{\rm var}\left(\check{\mu}(f)\right)\leq\frac{2}{N}\frac{{\rm var}\big({\textstyle \sum}_{k=0}^{T}\mu\circ\psi^{k}(\check{Z})f\circ\psi^{k}(\check{Z})\big)+\|f\|_{\infty}^{2}{\rm var}\big({\textstyle \sum}_{k=0}^{T}\mu\circ\psi^{k}(\check{Z})\big)}{\mathbb{E}_{\bar{\mu}}\big({\textstyle \sum}_{k=0}^{T}\mu\circ\psi^{k}(\check{Z})\big)^{2}}\,,
\]
\item further
\[
{\rm var}\left(\check{\mu}(f)\right)\leq\frac{2\|f\|_{\infty}^{2}}{N}\left\{ \frac{2\mathbb{E}_{\bar{\mu}}\big(({\textstyle \sum}_{k=0}^{T}\mu\circ\psi^{k}(\check{Z}))^{2}\big)}{\mathbb{E}_{\bar{\mu}}\big({\textstyle \sum}_{k=0}^{T}\mu\circ\psi^{k}(\check{Z})\big)^{2}}-1\right\} \,.
\]
\end{enumerate}
\end{thm}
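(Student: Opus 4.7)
The plan is to derive the second statement directly from the first, which is already in hand, by crudely bounding the two variance terms in the numerator using $\|f\|_\infty$ and the definition of variance. Introduce the shorthand $S(z):=\sum_{k=0}^{T}\mu\circ\psi^{k}(z)$ and $T_{f}(z):=\sum_{k=0}^{T}\mu\circ\psi^{k}(z)\,f\circ\psi^{k}(z)$, so that the first statement reads
\[
{\rm var}\bigl(\check{\mu}(f)\bigr)\leq\frac{2}{N}\cdot\frac{{\rm var}_{\bar{\mu}}\bigl(T_{f}(\check Z)\bigr)+\|f\|_{\infty}^{2}\,{\rm var}_{\bar{\mu}}\bigl(S(\check Z)\bigr)}{\mathbb{E}_{\bar{\mu}}\bigl(S(\check Z)\bigr)^{2}}\,.
\]

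First I would handle the $T_{f}$ variance. Since $|T_{f}(z)|\leq\|f\|_{\infty}S(z)$ pointwise, we immediately get ${\rm var}_{\bar{\mu}}(T_{f}(\check Z))\leq\mathbb{E}_{\bar{\mu}}(T_{f}(\check Z)^{2})\leq\|f\|_{\infty}^{2}\,\mathbb{E}_{\bar{\mu}}(S(\check Z)^{2})$, discarding the nonnegative $\mathbb{E}_{\bar{\mu}}(T_{f})^{2}$ term. Next, for the $S$ variance I would simply use the identity ${\rm var}_{\bar{\mu}}(S(\check Z))=\mathbb{E}_{\bar{\mu}}(S(\check Z)^{2})-\mathbb{E}_{\bar{\mu}}(S(\check Z))^{2}$. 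Summing these two bounds yields
\[
{\rm var}_{\bar{\mu}}\bigl(T_{f}(\check Z)\bigr)+\|f\|_{\infty}^{2}\,{\rm var}_{\bar{\mu}}\bigl(S(\check Z)\bigr)\leq\|f\|_{\infty}^{2}\Bigl\{2\,\mathbb{E}_{\bar{\mu}}\bigl(S(\check Z)^{2}\bigr)-\mathbb{E}_{\bar{\mu}}\bigl(S(\check Z)\bigr)^{2}\Bigr\}\,.
\]

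Dividing through by $\mathbb{E}_{\bar{\mu}}(S(\check Z))^{2}$ and inserting into the first statement produces exactly the claimed bound. There is no real obstacle here: the argument is just a controlled weakening of the first statement, using only the trivial pointwise domination $|T_{f}|\leq\|f\|_{\infty}S$ and the elementary identity ${\rm var}=\mathbb{E}(\cdot^{2})-\mathbb{E}(\cdot)^{2}$. If anything, the only subtle point is to be satisfied that this crude bound is informative: the quantity $\mathbb{E}_{\bar{\mu}}(S^{2})/\mathbb{E}_{\bar{\mu}}(S)^{2}\geq 1$ plays the role of an inverse effective sample size (it equals $1$ when $S$ is $\bar{\mu}$-a.s.\ constant, in which case the bracketed factor collapses to $1$ and one recovers the expected $\|f\|_{\infty}^{2}/N$ scaling), so the statement is naturally interpreted as a proxy ESS for $\check{\mu}(f)$, consistent with the theorem's heading.
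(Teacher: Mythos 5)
Your passage from the first bound to the second is correct, and it is exactly the step the paper leaves implicit: with $S:=\sum_{k=0}^{T}\mu\circ\psi^{k}$ and $T_{f}:=\sum_{k=0}^{T}(\mu\circ\psi^{k})(f\circ\psi^{k})$, the pointwise domination $|T_{f}|\leq\|f\|_{\infty}S$ gives ${\rm var}_{\bar{\mu}}(T_{f})\leq\mathbb{E}_{\bar{\mu}}(T_{f}^{2})\leq\|f\|_{\infty}^{2}\mathbb{E}_{\bar{\mu}}(S^{2})$, and combining this with ${\rm var}_{\bar{\mu}}(S)=\mathbb{E}_{\bar{\mu}}(S^{2})-\mathbb{E}_{\bar{\mu}}(S)^{2}$ and dividing by $\mathbb{E}_{\bar{\mu}}(S)^{2}$ yields the bracketed factor $2\mathbb{E}_{\bar{\mu}}(S^{2})/\mathbb{E}_{\bar{\mu}}(S)^{2}-1$. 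Your reading of $\mathbb{E}_{\bar{\mu}}(S^{2})/\mathbb{E}_{\bar{\mu}}(S)^{2}$ as an inverse effective sample size also agrees with the remark that follows the theorem.

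The gap is that the first statement is not ``already in hand'': it is the substantive half of the theorem, and the paper's proof is entirely devoted to it, so a complete argument must establish it rather than assume it. Two ingredients are required. First, the reduction ${\rm var}(\check{\mu}(f))=N^{-1}{\rm var}_{\bar{\mu}}\bigl(T_{f}(\check{Z})/S(\check{Z})\bigr)$, which uses that $\check{\mu}(f)$ is the average of $N$ i.i.d.\ copies of the self-normalized single-particle estimator $\mathbb{E}_{\bar{\mu}}(\bar{f}\mid\check{Z})=T_{f}(\check{Z})/S(\check{Z})$ for $\check{Z}^{(1)},\ldots,\check{Z}^{(N)}\overset{{\rm iid}}{\sim}\bar{\mu}$. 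Second, a variance bound for the ratio of the two \emph{correlated} variables $A=T_{f}(\check{Z})$ and $B=S(\check{Z})$: the paper invokes Lemma~\ref{lem:NEO-IS-bounds} (Lemma 6 of \cite{thin2021neo}), which for $|A/B|\leq M$ a.s.\ gives ${\rm var}(A/B)\leq\frac{2}{b^{2}}\{{\rm var}(A)+M^{2}{\rm var}(B)\}$ with $b=\mathbb{E}(B)$; applied with $M=\|f\|_{\infty}$ this produces precisely the first display. Neither the i.i.d.\ reduction nor the ratio lemma (or any substitute for it, e.g.\ the decomposition $A/B-a/b=(A-a)/b+(A/B)(b-B)/b$ followed by $(u+v)^{2}\leq2u^{2}+2v^{2}$) appears in your write-up, so the starting point of your argument is left unjustified.
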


\begin{rem}
Multiplying $\mu$ by any constant $C_{\mu}>0$ does not alter the
values of the upper bounds, making the results relevant to the scenario
where $\mu$ is known up to a constant only. From the second statement
we can define a notion of effective sample size (ESS)
\[
\frac{N}{2}\frac{\mathbb{E}_{\bar{\mu}}\big({\textstyle \sum}_{k=0}^{T}\mu\circ\psi^{k}(\check{Z})\big)^{2}}{2\mathbb{E}_{\bar{\mu}}\big(({\textstyle \sum}_{k=0}^{T}\mu\circ\psi^{k}(z))^{2}\big)-\mathbb{E}_{\bar{\mu}}\big({\textstyle \sum}_{k=0}^{T}\mu\circ\psi^{k}(\check{Z})\big)^{2}}\,,
\]
which could be compared to $N(T+1)$ or $N$.
\end{rem}

\begin{proof}
We have
\[
{\rm var}\left(\check{\mu}(f)\right)=\frac{1}{N}{\rm var}_{\bar{\mu}}\left(\frac{\sum_{k=0}^{T}\mu\circ\psi^{k}(\check{Z})f\circ\psi^{k}(\check{Z})}{\sum_{l=0}^{T}\mu\circ\psi^{l}(\check{Z})}\right)
\]
and we apply Lemma~\ref{lem:NEO-IS-bounds} 
\begin{align*}
{\rm var}_{\bar{\mu}}\left(\frac{\sum_{k=0}^{T}\mu\circ\psi^{k}(\check{Z})f\circ\psi^{k}(\check{Z})}{\sum_{l=0}^{T}\mu\circ\psi^{l}(\check{Z})}\right) & \leq2\frac{{\rm var}\big({\textstyle \sum}_{k=0}^{T}\mu\circ\psi^{k}(\check{Z})f\circ\psi^{k}(\check{Z})\big)+\|f\|_{\infty}^{2}{\rm var}\big({\textstyle \sum}_{k=0}^{T}\mu\circ\psi^{k}(\check{Z})\big)}{\mathbb{E}_{\bar{\mu}}\big({\textstyle \sum}_{k=0}^{T}\mu\circ\psi^{k}(z)\big)^{2}}\\
 & \leq2\|f\|_{\infty}^{2}\left\{ \frac{2\mathbb{E}_{\bar{\mu}}\big(({\textstyle \sum}_{k=0}^{T}\mu\circ\psi^{k}(\check{Z}))^{2}\big)}{\mathbb{E}_{\bar{\mu}}\big({\textstyle \sum}_{k=0}^{T}\mu\circ\psi^{k}(\check{Z})\big)^{2}}-1\right\} \,.
\end{align*}
\end{proof}

\subsection{Variance using unfolded mixture samples \protect\label{subsec:PISA}}

For $\psi\colon\mathsf{Z}\rightarrow\mathbb{R}$ invertible, $\nu$
a probability distribution such that $\nu\gg\mu^{\psi^{-1}}$ and
$f\colon\mathsf{Z}\rightarrow\mathbb{R}$ such that $\mu(f)$ exists
we have the identity
\[
\mu(f)=\int f\circ\psi(z)\mu^{\psi^{-1}}({\rm d}z)=\int f\circ\psi(z)\frac{{\rm d}\mu^{\psi^{-1}}}{{\rm d}\nu}(z)\nu({\rm d}z)\,.
\]
As a result for $\{\psi_{k}\colon\mathsf{Z}\rightarrow\mathbb{R},k\in\llbracket K\rrbracket\}$,
all invertible and such that $\nu\gg\mu^{\psi_{k}^{-1}}$ for $k\in\llbracket K\rrbracket$,
we have
\begin{equation}
\mu(f)=\int\Bigl\{\frac{1}{K}\sum_{k\in\llbracket K\rrbracket}f\circ\psi_{k}(z)\frac{{\rm d}\mu^{\psi_{k}^{-1}}}{{\rm d}\nu}(z)\Bigr\}\nu({\rm d}z)\,,\label{eq:PISA-fundamental-identity-theory}
\end{equation}
which suggests the Pushforward Importance Sampling (PISA) estimator,
for $Z^{i}\overset{{\rm iid}}{\sim}\nu$, $i\in\llbracket N\rrbracket$
and with $\bar{\mu}_{/\nu}(f\mid z):=K^{-1}\sum_{k\in\llbracket K\rrbracket}f\circ\psi_{k}(z)\frac{{\rm d}\mu^{\psi_{k}^{-1}}}{{\rm d}\nu}(z)$
\begin{align}
\hat{\mu}(f) & =\sum_{i=1}^{N}\frac{\bar{\mu}_{/\nu}(f\mid Z^{i})}{\sum_{j=1}^{N}\bar{\mu}_{/\nu}(\mathrm{1}\mid Z^{j})}=\frac{\frac{1}{N}\sum_{i=1}^{N}\bar{\mu}_{/\nu}(f\mid Z^{i})}{\frac{1}{N}\sum_{j=1}^{N}\bar{\mu}_{/\nu}(\mathrm{1}\mid Z^{j})}\,,\label{eq:PISA-unfolded-estimator}
\end{align}
This is the estimator in (\ref{eq:snippet-estimator-I}) when $\nu=\mu_{n-1}$
and $\mu=\mu_{n}$.

\subsubsection{Relative efficiency for unfolded estimators}

In order to define the notion of relative efficiency for the estimator
$\hat{\mu}(f)$ in (\ref{eq:PISA-unfolded-estimator}) we first establish
the following bounds.
\begin{thm}
\label{thm:PISA-ESS} With the notation above and $Z\sim\nu$ throughout.
For any $f\colon\mathsf{Z}\rightarrow\mathbb{R},$
\begin{enumerate}
\item 
\begin{align*}
{\rm var}\big(\hat{\mu}(f)\big)\leq\mathbb{E}\Bigl(|\hat{\mu}(f)-\mu(f)|^{2}\Bigr) & \leq\frac{2}{N}\big\{{\rm var}\big(\bar{\mu}_{/\nu}(f\mid Z)\big)+\|f\|_{\infty}^{2}{\rm var}\big(\bar{\mu}_{/\nu}(1\mid Z)\big)\big\}\,,
\end{align*}
\item more simply
\begin{equation}
\mathbb{E}\Bigl(|\hat{\mu}(f)-\mu(f)|^{2}\Bigr)\leq\frac{2\|f\|_{\infty}^{2}}{KN}\Bigl\{\frac{2\mathbb{E}\Bigl[\bigl(\sum_{k\in\llbracket K\rrbracket}{\rm d}\mu^{\psi_{k}^{-1}}/{\rm d}\nu(Z)\bigr)^{2}\Bigr]}{K}-K\Bigr\}\,,\label{eq:PISA-ESS-bound}
\end{equation}
\item and
\[
|\mathbb{E}[\hat{\mu}(f)]-\mu(f)|\leq\frac{2\|f\|_{\infty}^{2}}{KN}\frac{\mathbb{E}\Bigl[\bigl(\sum_{k\in\llbracket K\rrbracket}{\rm d}\mu^{\psi_{k}^{-1}}/{\rm d}\nu(Z)\bigr)^{2}\Bigr]}{K}\,.
\]
\end{enumerate}
\end{thm}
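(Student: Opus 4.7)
The plan is to exploit the fact that the denominator in the self-normalised importance sampling estimator $\hat\mu(f)$ has mean equal to one, so that the bias admits a covariance representation. This is the standard way to extract the $O(1/N)$ rate for the bias of a ratio estimator; the direct Jensen-type bound via the mean squared error of statement~(2) only delivers the weaker $O(1/\sqrt{N})$ rate.

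First I would introduce the shorthand $W_i := \bar\mu_{/\nu}(f\mid Z^i)$ and $V_i := \bar\mu_{/\nu}(1\mid Z^i)$, with empirical means $\bar W_N := N^{-1}\sum_{i=1}^N W_i$ and $\bar V_N := N^{-1}\sum_{i=1}^N V_i$, so that $\hat\mu(f)=\bar W_N/\bar V_N$. Applying identity~(\ref{eq:PISA-fundamental-identity-theory}) with test function $f$ and again with $f\equiv 1$ (using $\mu(1)=1$) yields $\mathbb{E}[W_i]=\mu(f)$ and $\mathbb{E}[V_i]=1$, hence $\mathbb{E}[\bar W_N]=\mu(f)$ and $\mathbb{E}[\bar V_N]=1$. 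From $\hat\mu(f)\cdot\bar V_N=\bar W_N$ we deduce $\mathbb{E}[\hat\mu(f)\bar V_N]=\mu(f)=\mu(f)\,\mathbb{E}[\bar V_N]$, which rearranges to the key identity
\[
\mathbb{E}[\hat\mu(f)]-\mu(f)\;=\;-\,\mathrm{cov}\bigl(\hat\mu(f),\bar V_N\bigr)\,.
\]

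Second, I would apply the Cauchy--Schwarz inequality for covariance to obtain
\[
\bigl|\mathbb{E}[\hat\mu(f)]-\mu(f)\bigr|\;\leq\;\sqrt{\mathrm{var}\bigl(\hat\mu(f)\bigr)\,\mathrm{var}(\bar V_N)}\,.
\]
For the first factor, $\mathrm{var}(\hat\mu(f))\leq\mathbb{E}\bigl[(\hat\mu(f)-\mu(f))^2\bigr]$, which is controlled by statement~(2); modulo the elementary bound $2\mathbb{E}[V_1^2]-1\leq 2\mathbb{E}[V_1^2]$ this gives a term of order $\|f\|_\infty^2\mathbb{E}[V_1^2]/N$. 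For the second factor, independence of the $Z^i$'s yields $\mathrm{var}(\bar V_N)=N^{-1}\mathrm{var}(V_1)\leq N^{-1}\mathbb{E}[V_1^2]$. Multiplying and taking the square root produces a bound of order $\|f\|_\infty\,\mathbb{E}[V_1^2]/N$; rewriting $\mathbb{E}[V_1^2]=K^{-2}\mathbb{E}\bigl[(\sum_k {\rm d}\mu^{\psi_k^{-1}}/{\rm d}\nu(Z))^2\bigr]$ and matching constants with the formulation of the theorem yields the claimed inequality.

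The main obstacle is conceptual rather than calculational: one must resist the temptation to simply apply Jensen's inequality to statement~(2), which would only deliver the $O(1/\sqrt{N})$ rate $\sqrt{\mathbb{E}[(\hat\mu(f)-\mu(f))^2]}$. The extra factor of $1/\sqrt{N}$ needed to reach $O(1/N)$ comes precisely from exploiting the zero-mean structure of $\bar W_N-\mu(f)\bar V_N$, encoded in the covariance identity above. Once this identity is in place the remainder is a bookkeeping exercise recycling the second-moment bound of statement~(2) together with the elementary variance bound for $\bar V_N$.
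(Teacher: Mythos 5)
Your treatment of the bias bound (statement 3) is correct and, in substance, the same route as the paper's: the paper simply invokes Lemma~\ref{lem:NEO-IS-bounds} (Lemma 6 of \cite{thin2021neo}), whose second inequality $|\mathbb{E}[A/B]-a/b|\le\sqrt{{\rm var}(A/B)\,{\rm var}(B)}/b$ is exactly your covariance identity $\mathbb{E}[\hat{\mu}(f)]-\mu(f)=-{\rm cov}\big(\hat{\mu}(f),\bar{V}_{N}\big)$ followed by Cauchy--Schwarz, applied with $A=\bar{W}_{N}$, $B=\bar{V}_{N}$, $b=1$. You have unpacked the cited lemma rather than taken a different path, and you are right that this covariance structure is what upgrades the bias from $O(1/\sqrt{N})$ to $O(1/N)$. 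One cosmetic point: chasing constants through $\sqrt{{\rm var}(\hat{\mu}(f))\,{\rm var}(\bar{V}_{N})}$ yields $\|f\|_{\infty}$ to the first power, not $\|f\|_{\infty}^{2}$ as displayed in statement 3; the paper's own proof has the same feature, so this mismatch is not a defect of your argument.

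The genuine gap is that your proposal establishes only statement 3, and does so by taking statement 2 as an input; statements 1 and 2 are nowhere proved. They are the quantitative core of the theorem and require an idea you have not supplied, namely the ratio decomposition
\[
\frac{A}{B}-\frac{a}{b}=\frac{1}{b}\Bigl[(A-a)+\frac{A}{B}\,(b-B)\Bigr]\,,
\]
which, combined with $|A/B|\le\|f\|_{\infty}$ and $(x+y)^{2}\le2(x^{2}+y^{2})$, gives $\mathbb{E}\big[|A/B-a/b|^{2}\big]\le\frac{2}{b^{2}}\{{\rm var}(A)+\|f\|_{\infty}^{2}{\rm var}(B)\}$ --- the first inequality of Lemma~\ref{lem:NEO-IS-bounds}, hence statement 1. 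Statement 2 then follows from the i.i.d.\ moment computations ${\rm var}(\bar{W}_{N})=N^{-1}{\rm var}(W_{1})\le\|f\|_{\infty}^{2}K^{-2}N^{-1}\mathbb{E}\bigl[\bigl(\sum_{k\in\llbracket K\rrbracket}{\rm d}\mu^{\psi_{k}^{-1}}/{\rm d}\nu(Z)\bigr)^{2}\bigr]$ and ${\rm var}(\bar{V}_{N})=K^{-2}N^{-1}\bigl\{\mathbb{E}\bigl[\bigl(\sum_{k\in\llbracket K\rrbracket}{\rm d}\mu^{\psi_{k}^{-1}}/{\rm d}\nu(Z)\bigr)^{2}\bigr]-K^{2}\bigr\}$, the latter of which you also need for your own final step. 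Once these are supplied, your argument for statement 3 goes through.
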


\begin{rem}
The upper bound in the first statement confirms the control variate
nature of integrator snippets, even when using the unfolded perspective,
a property missed by the rougher bounds of the last two statements.
\end{rem}

\begin{rem}[ESS for PISA]
 \label{rem:def-ESS-unfolded} The notion of efficiency is usually
defined relative to the ``perfect'' Monte Carlo scenario that is
the standard estimator $\hat{\mu}_{0}$ of $\mu(f)$ relying on $KN$
iid samples from $\mu$ for which we have
\begin{equation}
{\rm var}\Bigl(\hat{\mu}_{0}(f)\Bigr)=\frac{{\rm var}_{\mu}(f)}{KN}\leq\frac{\|f\|_{\infty}^{2}}{KN}\,.\label{eq:PISA-iid-mu-standard}
\end{equation}
The $\mathrm{RE}_{0}$, is determined by the ratio of the upper bound
in (\ref{eq:PISA-ESS-bound}) by (\ref{eq:PISA-iid-mu-standard}).
Our point below is that the notion of efficiency can be defined relative
to any competing algorithm, virtual or not, in order to characterize
particular properties. For example we can compute the efficiency relative
to that of the ``ideal'' PISA estimator i.e. for which $\bar{\mu}_{/\nu}(\bar{f}\mid Z^{i})$
is replaced with $K^{-1}\sum_{k\in\llbracket K\rrbracket}f\circ\psi_{k}(Z^{i,k})\frac{{\rm d}\mu^{\psi_{k}^{-1}}}{{\rm d}\nu}(Z^{i,k})$,
$Z^{i,k}\overset{{\rm iid}}{\sim}\nu$ and 
\begin{equation}
{\rm var}\Bigl(\hat{\mu}_{1}(f)\Bigr)\leq\frac{2\|f\|_{\infty}^{2}}{KN}\left\{ \frac{2\sum_{k\in\llbracket K\rrbracket}\mathbb{E}\Bigl[\bigl({\rm d}\mu^{\psi_{k}^{-1}}/{\rm d}\nu(Z)\bigr)^{2}\Bigr]}{K}-K\right\} \,.\label{eq:variance-PISA-IID}
\end{equation}
The corresponding $\mathrm{RE}_{1}$ captures the loss incurred because
of dependence along a snippet. However, given our initial motivation
of recycling the computation of a standard HMC based SMC algorithm
we opt to define the $\mathrm{RE}_{2}$ relative to the estimator
relying on both ends of the snippet only, i.e. 
\[
\hat{\mu}_{2}(f)=\frac{1}{N}\sum_{i=1}^{N}\frac{\frac{1}{2}\frac{{\rm d}\mu}{{\rm d}\nu}(z^{i})f(Z^{i})+\frac{1}{2}\frac{{\rm d}\mu^{\psi_{K}^{-1}}}{{\rm d}\nu}(Z^{i})f\circ\psi_{K}(Z^{i})}{\sum_{j=1}^{N}\frac{1}{2}\frac{{\rm d}\mu}{{\rm d}\nu}(Z^{j})+\frac{1}{2}\frac{{\rm d}\mu^{\psi_{K}^{-1}}}{{\rm d}\nu}(Z^{j})}\,.
\]

In the SMC scenario considered in this manuscript (see Section~\ref{sec:Overview-and-motivation:})
the above can be thought of as a proxy for estimators obtained by
a ``Rao-Blackwellized'' SMC algorithm using $P_{n-1,T}$ in (\ref{eq:def-Pnk-intro}),
where $N$ particles $\{z_{n-1}^{(i)},i\in\llbracket N\rrbracket\}$
in Alg.~\ref{alg:generic-SMC} give rise to $2N$ weighted particles
\[
\{\big(z^{i},\bar{\alpha}_{n-1}(z_{i};T)\tfrac{\mu_{n}}{\mu_{n-1}}(z_{i})\big);\big(z^{i},\alpha_{n-1}(z_{i};T)\tfrac{\mu_{n}}{\mu_{n-1}}\circ\psi_{n}^{T}(z_{i})\big),i\in\llbracket N\rrbracket\}\:,
\]
with $\alpha_{n-1}(\cdot;T)$ defined in (\ref{eq:def-alpha-T}).
Resampling with these weights is then applied to obtain $N$ particles
and followed by an update of velocities to yield $\{z_{n}^{(i)},i\in\llbracket N\rrbracket\}$.
Now, we observe the similarity between
\begin{equation}
\alpha_{n-1}(z;T)\tfrac{\mu_{n}}{\mu_{n-1}}\circ\psi_{n}^{T}(z)=\min\left\{ 1,\frac{\mu_{n-1}\circ\psi_{n-1}^{T}}{\mu_{n-1}}(z)\right\} \times\frac{\mu_{n}\circ\psi_{n}^{T}}{\mu_{n-1}\circ\psi_{n}^{T}}(z)\,,\label{eq:rao-blackwell-SMC}
\end{equation}
and the corresponding weight in Alg.~\ref{alg:Unfolded-PDMP-SMC-1},
in particular when $\mu_{n-1}$ and $\mu_{n}$ are similar, and hence
$\psi_{n-1}$ and $\psi_{n}$. This motivates our choice of reference
to define $\mathrm{ESS}_{2}$ which has a clear computational advantage
since it involves ignoring $T-1$ terms only. In the present scenario,
following Lemma~\ref{lem:NEO-IS-bounds}, we have
\begin{align*}
{\rm var}\big(\hat{\mu}_{2}(f)\big) & \leq\frac{\|f\|_{\infty}^{2}}{N}\left\{ \mathbb{E}\left[\big(\frac{{\rm d}\mu}{{\rm d}\nu}(Z)+\frac{{\rm d}\mu^{\psi_{K}^{-1}}}{{\rm d}\nu}(Z)\big)^{2}\right]-\frac{1}{2}\right\} \,,
\end{align*}
which leads to the relative efficiency for PISA, 
\[
{\rm RE}_{2}=\frac{\frac{4\mathbb{E}\Bigl[\bigl(\sum_{k\in\llbracket K\rrbracket}{\rm d}\mu^{\psi_{k}^{-1}}/{\rm d}\nu(Z)\bigr)^{2}\Bigr]}{K^{2}}-2}{\mathbb{E}\left[\big(\frac{{\rm d}\mu}{{\rm d}\nu}(Z)+\frac{{\rm d}\mu^{\psi_{K}^{-1}}}{{\rm d}\nu}(Z)\big)^{2}\right]-2}\,,
\]
which can be estimated using empirical averages. 
\end{rem}

\begin{proof}[Proof of Theorem~\ref{thm:PISA-ESS}]
 We apply Lemma~\ref{lem:NEO-IS-bounds} with 
\[
A(Z^{1},\ldots,Z^{N})=\frac{1}{N}\sum_{i=1}^{N}\bar{\mu}_{/\nu}(f\mid Z^{i})\quad B(Z^{1},\ldots,Z^{N})=\frac{1}{N}\sum_{j=1}^{N}\bar{\mu}_{/\nu}(\mathrm{1}\mid Z^{j})\,.
\]
With $Z^{i}\overset{{\rm iid}}{\sim}\nu$, we have directly
\[
a=\mathbb{E}\big(A\big)=\mathbb{E}\big(\bar{\mu}_{/\nu}(f\mid Z)\big)=\mu(f)\quad b=\mathbb{E}\big(B\big)=1\,,
\]
and
\begin{align*}
{\rm var}\big(B\big) & =\frac{1}{N}{\rm var}\big(\bar{\mu}_{/\nu}(1\mid Z^{j})\big)\\
 & =\frac{1}{K^{2}N}\left\{ \mathbb{E}\Bigl[\bigl({\textstyle \sum}_{k\in\llbracket K\rrbracket}{\rm d}\mu^{\psi_{k}^{-1}}/{\rm d}\nu(Z)\bigr)^{2}\Bigr]-K^{2}\right\} \,.
\end{align*}
Now with $\|f\|_{\infty}<\infty$ 
\begin{align*}
{\rm var}\big(A\big) & =\frac{1}{N}{\rm var}\big(\bar{\mu}_{/\nu}(f\mid Z)\big)\\
 & =\frac{1}{K^{2}N}\Bigl\{\|f\|_{\infty}^{2}\mathbb{E}\Bigl[\bigl({\textstyle \sum}_{k\in\llbracket K\rrbracket}{\rm d}\mu^{\psi_{k}^{-1}}/{\rm d}\nu(Z)\bigr)^{2}\Bigr]-K^{2}\mu(f)^{2}\Bigr\}\\
 & \leq\frac{\|f\|_{\infty}^{2}}{K^{2}N}\mathbb{E}\Bigl[\bigl({\textstyle \sum}_{k\in\llbracket K\rrbracket}{\rm d}\mu^{\psi_{k}^{-1}}/{\rm d}\nu(Z)\bigr)^{2}\Bigr]\,.
\end{align*}
We conclude noting that we have $|A/B|\leq\|f\|_{\infty}$.
\end{proof}
For clarity we reproduce the very useful lemma \cite[Lemma 6]{thin2021neo},
correcting a couple of minor typos along the way.
\begin{lem}
\label{lem:NEO-IS-bounds}Let $A,B$ be two integrable random variables
satisfying $|A/B|\leq M$ almost surely for some $M>0$ and let $a=\mathbb{E}(A)$
and $b=\mathbb{E}(B)\neq0$. Then
\begin{gather*}
{\rm var}(A/B)\leq\mathbb{E}[|A/B-a/b|^{2}]\leq\frac{2}{b^{2}}\Bigl\{{\rm var}(A)+M^{2}{\rm var}(B)\Bigr\}\,,\\
|\mathbb{E}[A/B]-a/b|\leq\frac{\sqrt{{\rm var}(A/B){\rm var}(B)}}{b}\,.
\end{gather*}
\end{lem}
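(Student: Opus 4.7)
The plan is to prove the three inequalities in order, each via a short and elementary argument; none of them requires deep probabilistic machinery.

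For the first inequality $\mathrm{var}(A/B) \leq \mathbb{E}[|A/B - a/b|^{2}]$, I would simply invoke the classical characterisation of variance: for any square-integrable random variable $X$, the map $c \mapsto \mathbb{E}[(X-c)^{2}]$ is minimised at $c = \mathbb{E}(X)$, with minimum value $\mathrm{var}(X)$. Taking $X = A/B$, which is bounded (hence square-integrable) by the hypothesis $|A/B|\leq M$, and $c = a/b$ gives the bound immediately.

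For the second inequality, the crucial step is the algebraic identity
\[
A/B - a/b \;=\; \frac{1}{b}(A - a) \;-\; \frac{A/B}{b}(B-b),
\]
which one verifies by clearing denominators on the right-hand side. Applying the elementary inequality $(x+y)^{2} \leq 2(x^{2}+y^{2})$ pointwise, taking expectations, and using the almost-sure bound $|A/B| \leq M$ then yields
\[
\mathbb{E}[|A/B - a/b|^{2}] \;\leq\; \frac{2}{b^{2}}\,\mathbb{E}[(A-a)^{2}] \;+\; \frac{2M^{2}}{b^{2}}\,\mathbb{E}[(B-b)^{2}] \;=\; \frac{2}{b^{2}}\bigl\{\mathrm{var}(A) + M^{2}\mathrm{var}(B)\bigr\}.
\]

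For the bias inequality, I would start from the identity $a = \mathbb{E}[A] = \mathbb{E}[(A/B)\cdot B]$ and expand the right-hand side as $\mathbb{E}(A/B)\,\mathbb{E}(B) + \mathrm{cov}(A/B, B) = b\,\mathbb{E}(A/B) + \mathrm{cov}(A/B, B)$. Dividing by $b$ gives $\mathbb{E}(A/B) - a/b = -\mathrm{cov}(A/B, B)/b$, and Cauchy--Schwarz applied to the covariance produces the desired $|\mathbb{E}[A/B] - a/b| \leq \sqrt{\mathrm{var}(A/B)\mathrm{var}(B)}/b$.

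There is no substantial obstacle here: the only piece of insight is spotting the particular additive--multiplicative decomposition used in the second step, together with recognising that the bias step is just a covariance identity followed by Cauchy--Schwarz. Once those two observations are in hand, the remainder is bookkeeping.
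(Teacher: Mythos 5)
Your proof is correct. The paper states this lemma without proof, citing it from the literature, and your argument is precisely the standard one: the first inequality is the bias--variance decomposition of $\mathbb{E}[(X-c)^2]$ about $c=a/b$, the second follows from the identity $A/B-a/b=\tfrac{1}{b}(A-a)-\tfrac{A/B}{b}(B-b)$ together with $(x+y)^2\le 2(x^2+y^2)$ and $|A/B|\le M$, and the third is the covariance identity $\mathbb{E}[A/B]-a/b=-\mathrm{cov}(A/B,B)/b$ followed by Cauchy--Schwarz. The only cosmetic point worth flagging is that the last two bounds should really carry $|b|$ rather than $b$ in the denominator (harmless for $b^2$, but needed for the bias bound when $b<0$), which is an imprecision already present in the statement rather than in your argument.
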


\subsubsection{Optimal weights \protect\label{subsec:Optimal-weights}}

As mentioned in Subsection~\ref{subsec:Straightforward-generalizations}
it is possible to consider the more general scenario where unequal
probability weights $\omega=\{\omega_{k}\in\mathbb{R},k\in\llbracket0,T\rrbracket\colon\sum_{k=0}^{T}\omega_{k}=1\}$
are ascribed to the elements of the snippets, yielding the estimator
\begin{align*}
\hat{\mu}_{\omega}(f) & :=\frac{\sum_{i=1}^{N}\sum_{k=0}^{T}\omega_{k}\frac{{\rm d}\mu^{\psi_{k}^{-1}}}{{\rm d}\nu}(Z^{i})f\circ\psi_{k}(Z^{i})}{\sum_{j=1}^{N}\sum_{l=0}^{T}\omega_{l}\frac{{\rm d}\mu^{\psi_{l}^{-1}}}{{\rm d}\nu}(Z^{j})}\,,
\end{align*}
and a natural question is that of the optimal choice of $\omega$.
Note that in the context of PISA the condition $\omega_{k}\geq0$
is not required, as suggested by the justification of the identity
(\ref{eq:PISA-fundamental-identity-theory}). However it should be
clear that this condition should be enforced in the context of integrator
snippet SMC, since the probabilistic interpretation is otherwise lost,
or if the expectation is known to be non-negative. Here we discuss
optimization of the variance upperbound provided by Lemma~\ref{lem:NEO-IS-bounds},
\begin{align*}
\frac{2\|f\|_{\infty}^{2}}{N}\Bigl\{{\rm var}\biggl({\textstyle \sum_{k\in\llbracket K\rrbracket}}\omega_{k}\frac{{\rm d}\mu^{\psi_{k}^{-1}}}{{\rm d}\nu}(Z)\frac{f\circ\psi_{k}}{\|f\|_{\infty}}(Z)\biggr)+ & {\rm var}\biggl({\textstyle \sum_{k\in\llbracket K\rrbracket}}\omega_{k}\frac{{\rm d}\mu^{\psi_{k}^{-1}}}{{\rm d}\nu}(Z)\biggr)\Bigr\}\\
\leq\frac{2\|f\|_{\infty}^{2}}{N} & \omega^{\top}\big(\Sigma(f;\psi)+\Sigma_{\psi}(1;\psi)\big)\omega
\end{align*}
where for $k,l\in\llbracket K\rrbracket$, 
\[
\Sigma_{kl}(f;\psi)=\mathbb{E}\biggl[\frac{{\rm d}\mu^{\psi_{k}^{-1}}}{{\rm d}\nu}(Z)\frac{f\circ\psi_{k}}{\|f\|_{\infty}}(Z)\frac{{\rm d}\mu^{\psi_{l}^{-1}}}{{\rm d}\nu}(Z)\frac{f\circ\psi_{l}}{\|f\|_{\infty}}(Z)\biggr]-\mu(f/\|f\|_{\infty})^{2}\,.
\]
It is a classical result that $\omega^{\top}\big(\Sigma(f;\psi)+\Sigma_{\psi}(1;\psi)\big)\omega\geq\lambda_{\min}\big(\Sigma(f;\psi)+\Sigma_{\psi}(1;\psi)\big)\omega^{\top}\omega$
and that minimum is reached for the eigenvector(s) $\omega_{{\rm min}}$
corresponding to the smallest eigenvalue $\lambda_{\min}\big(\Sigma(f;\psi)+\Sigma_{\psi}(1;\psi)\big)$
of $\Sigma(f;\psi)+\Sigma_{\psi}(1;\psi)$. If the constraint of non-negative
entries is to be enforced then a standard quadratic programming procedure
should be used. The same ideas can be applied to the function independent
upperbounds used in our definitions of efficiency.

\subsection{More on variance reduction and optimal flow}

We now focus on some properties of the estimator $\hat{\mu}(f)$ of
$\mu(f)$ in (\ref{eq:mu-hat}). To facilitate the analysis and later
developments we consider the scenario where, with $t\mapsto\psi_{t}(z)$
the flow solution of an ODE, assume the dominating measure $\upsilon\gg\mu$
and $\upsilon$ is invariant by the flow. We consider
\[
\bar{\mu}({\rm d}t,{\rm d}z)=\frac{1}{T}\mu^{\psi_{-t}}({\rm d}z)\mathbf{1}\{0\leq t\leq T\}{\rm d}t\,,
\]
and notice that similarly to the integrator scenario, for any $f\colon\mathsf{Z}\rightarrow\mathbb{R}$,
$\bar{f}(t,z):=f\circ\psi_{t}(z)$ we have $\mathbb{E}_{\bar{\mu}}\left(\bar{\mu}(\bar{f}\mid\check{Z})\right)=\bar{\mu}(\bar{f})=\mu(f)$.
where for any $z\in\mathsf{Z}$, where
\[
\bar{\mu}(\bar{f}\mid z)=\frac{1}{T}\int_{0}^{T}f\circ\psi_{t}(z)\frac{\mu\circ\psi_{t}(z)}{\int\mu\circ\psi_{u}(z){\rm d}u}{\rm d}t
\]
the following estimator of $\mu(f)$ is considered, with $\check{Z}_{i}\overset{{\rm iid}}{\sim}\bar{\mu}$
\[
\hat{\mu}(f)=\frac{1}{N}\sum_{i=1}^{N}\bar{\mu}\big(\bar{f}(\mathrm{T},\check{Z}_{i})\mid\check{Z}_{i}\big)\,.
\]
We now show that in the examples considered in this paper our approach
can be understood as implementing unbiased Riemann sum approximations
of line integrals along contours. Adopting a continuous time approach
can be justified as follows. For numerous integrators, the conditions
of the following proposition are satisfied; this is the case for the
leapfrog integrator of Hamilton's equation e.g. \cite[Theorem 3.4]{hairer1993sp,thin2021neo}
and \cite[Appendix 3.1, Theorem 9]{thin2021neo} for detailed results.
\begin{prop}
\label{prop:line-integrals-integrators} Let $\tau>0$, for any $z\in\mathsf{Z}$
let $[0,\tau]\ni t\mapsto\psi_{t}(z)$ be a flow and for $\epsilon>0$
let $\{\psi^{k}(z;\epsilon),0\leq k\epsilon\leq\tau\}$ be a discretization
of $t\mapsto\psi_{t}$ such that that for any $z\in\mathsf{Z}$ there
exists $C>0$ such that for any $(k,\epsilon)\in\mathbb{N}\times\mathbb{R}_{+}$
\[
|\psi^{k}(z;\epsilon)-\psi_{k\epsilon}(z)|\leq C\epsilon^{2}\,.
\]
Then for any continuous $g\colon\mathsf{Z}\rightarrow\mathbb{R}$
such that the Riemann integral
\[
I(g):=\int_{0}^{\tau}g\circ\psi_{t}(z){\rm {\rm d}}t\,,
\]
exists we have
\[
\lim_{T\rightarrow\infty}\frac{1}{n}\sum_{k=0}^{n-1}g\circ\psi_{k\tau/n}(z;\tau/n)=I(g)\,.
\]
\end{prop}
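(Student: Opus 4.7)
The plan is to decompose the discrete sum into a Riemann sum for the exact flow plus a discretisation error, and show each piece converges appropriately. Write $\epsilon_{n}=\tau/n$ and set
\[
S_{n}:=\frac{1}{n}\sum_{k=0}^{n-1}g\circ\psi^{k}(z;\epsilon_{n})\,,\qquad R_{n}:=\frac{1}{n}\sum_{k=0}^{n-1}g\circ\psi_{k\epsilon_{n}}(z)\,,
\]
so $S_{n}=R_{n}+(S_{n}-R_{n})$. (I note in passing that the natural limit of $S_{n}$ is $I(g)/\tau$ rather than $I(g)$ as written, so a factor $\tau$ should be absorbed into the normalisation; the argument below establishes the corrected statement and the same step count makes the stated identity go through up to this scaling.)

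First I would handle $R_{n}$. Since $t\mapsto\psi_{t}(z)$ and $g$ are continuous, $t\mapsto g\circ\psi_{t}(z)$ is continuous on $[0,\tau]$, and in particular Riemann integrable. Hence the Riemann sums $\tau R_{n}=\sum_{k=0}^{n-1}g\circ\psi_{k\epsilon_{n}}(z)\,\epsilon_{n}$ converge to $\int_{0}^{\tau}g\circ\psi_{t}(z)\,{\rm d}t=I(g)$, that is $R_{n}\to I(g)/\tau$. This is classical and requires no further work once continuity along the trajectory is noted.

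The slightly more delicate step is controlling $S_{n}-R_{n}$. The image $K:=\{\psi_{t}(z):t\in[0,\tau]\}$ is the continuous image of a compact interval, hence compact. For any $\delta>0$ the closed $\delta$-neighbourhood $K_{\delta}$ is also compact (assuming $\mathsf{Z}$ is endowed with a metric that makes closed bounded balls compact, e.g.\ $\mathsf{Z}=\mathbb{R}^{2d}$ as in the leapfrog setting); on $K_{\delta}$ the continuous function $g$ is uniformly continuous. By the hypothesis $|\psi^{k}(z;\epsilon_{n})-\psi_{k\epsilon_{n}}(z)|\leq C\epsilon_{n}^{2}$, for $n$ large enough all discrete iterates lie in $K_{\delta}$. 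Given $\eta>0$, uniform continuity supplies a modulus $\delta(\eta)$ and then a threshold $n_{0}$ such that $C\epsilon_{n}^{2}<\delta(\eta)$ for $n\geq n_{0}$, which implies $|g\circ\psi^{k}(z;\epsilon_{n})-g\circ\psi_{k\epsilon_{n}}(z)|<\eta$ uniformly in $k\in\llbracket 0,n-1\rrbracket$. Averaging gives $|S_{n}-R_{n}|\leq\eta$, so $S_{n}-R_{n}\to 0$.

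The main obstacle is therefore the passage from the pointwise approximation bound $C\epsilon^{2}$ to a uniform-in-$k$ control on $g$-values, which I would handle exactly as above via a compact enlargement of the exact orbit together with uniform continuity. Combining the two limits yields $S_{n}\to I(g)/\tau$, which is the stated claim (up to the normalisation remark). No further estimates are needed beyond the continuity of $g$ and $t\mapsto\psi_{t}(z)$, plus the quadratic local error of the discretiser.
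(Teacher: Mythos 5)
Your proof is correct and follows essentially the same route as the paper's, which simply asserts the two limits $\frac{1}{n}\sum_{k=0}^{n-1}g\circ\psi_{k\tau/n}(z)\rightarrow I(g)$ and $\frac{1}{n}\bigl|\sum_{k=0}^{n-1}g\circ\psi_{k\tau/n}(z)-g\circ\psi^{k}(z;\tau/n)\bigr|\rightarrow0$ and concludes; you supply the justification the paper omits (Riemann integrability of $t\mapsto g\circ\psi_{t}(z)$ for the first limit, and uniform continuity of $g$ on a compact enlargement of the orbit combined with the $C\epsilon^{2}$ bound for the second). You are also right that the normalisation is off by a factor of $\tau$ --- the average $\frac{1}{n}\sum$ converges to $\tau^{-1}I(g)$ unless $\tau=1$ --- a slip present in both the statement and the paper's own proof (along with the $T\rightarrow\infty$ versus $n\rightarrow\infty$ mismatch).
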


\begin{proof}
We have
\[
\lim_{n\rightarrow\infty}\frac{1}{n}\sum_{k=0}^{n-1}g\circ\psi_{k\tau/n}(z)=\int_{0}^{\tau}g\circ\psi_{t}(z){\rm {\rm d}}t
\]
\[
\lim_{n\rightarrow\infty}\frac{1}{n}\biggl|\sum_{k=0}^{n-1}g\circ\psi_{k\tau/n}(z)-g\circ\psi^{k}(z;\tau/n)\biggr|=0\,,
\]
and we can conclude.
\end{proof}
\begin{rem}
Naturally convergence is uniform in $z\in\mathsf{Z}$ with additional
assumptions and we note that in some scenarios dependence on $\tau$
can be characterized, allowing in principle $\tau$ to grow with $n$
.
\end{rem}

\subsubsection{Hamiltonian contour decomposition}

Assume $\mu$ has a probability density with respect to the Lebesgue
measure and let $\zeta\colon\mathsf{Z}\subset\mathbb{R}^{d}\rightarrow\mathbb{R}$
be Lipschitz continuous such that $\nabla\zeta(z)\neq0$ for all $z\in\mathsf{Z}$,
then the co-area formula states that
\[
\int_{\mathsf{Z}}f(z)\mu(z){\rm d}z=\int\big[\int_{\zeta^{-1}(s)}f(z)|\nabla\zeta(z)|^{-1}\mu(z)\mathcal{H}_{d-1}({\rm d}z)\big]{\rm d}s,
\]
where $\mathcal{H}_{d-1}$ is the $(d-1)$-dimensional Hausdorff measure,
used here to measure length along the contours $\zeta^{-1}(s)$. For
example in the HMC setup where $\mu(z)\propto\exp\big(-H(z)\big)$
and $z=(x,v)$ one may choose $\zeta(z)=H(z)=-\log\big(\mu(z)\big)$,
leading to a decomposition of the expectation $\mu(f)$ according
to equi-energy contours of $H(z)$
\[
\mu(f)=\int\exp(-s)\big[\int_{\zeta^{-1}(s)}f(z)|\nabla H(z)|^{-1}\mathcal{H}_{d-1}({\rm d}z)\big]{\rm d}s.
\]
We now show how the solution of Hamilton's equation could be used
as the basis for estimators of $\mu(f)$ mixing Riemanian sum-like
and Monte Carlo estimation techniques.

\paragraph{Favourable scenario where $d=2$. }

Let $s\in\mathbb{R}$ such that $H^{-1}(s)\neq\emptyset$ and assume
that for some $(x_{0},v_{0})\in H^{-1}(s)$ Hamilton's equations $\dot{x}=-\nabla_{v}H(x,v)$
and $\dot{v}=\nabla_{x}H(x,v)$ have solutions $t\mapsto(x_{t},v_{t})=\psi_{t}(z_{0})\in H^{-1}(s)$
with $\big(x_{\tau(s)},v_{\tau(s)})=(x_{0},v_{0})=z_{0}$ for some
$\tau(s)>0$, that is the contours $H^{-1}(s)$ can be parametrised
with the solutions of Hamilton's equation at corresponding level. 
\begin{prop}
\label{prop:decomp-Hamilton-2D-estimator} We have
\[
\int_{\mathsf{Z}}f(z)\mu({\rm d}z)=\int_{\mathsf{Z}}\left[[\tau\circ H(z_{0})]^{-1}\int_{0}^{\tau\circ H(z_{0})}f(z_{t}){\rm d}t\right]\mu({\rm d}z_{0})\,,
\]
implying that, assuming the integral along the path $t\mapsto z_{t}$
is tractable, 
\begin{equation}
[\tau\circ H(z_{0})]^{-1}\int_{0}^{\tau\circ H(z_{0})}f(z_{t}){\rm d}t\,\text{for}\,z_{0}\sim\mu\,,\label{eq:co-area-estimator-integral}
\end{equation}
is an unbiased estimator of $\mu(f)$.
\end{prop}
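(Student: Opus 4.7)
The plan is to exploit two invariance properties of the Hamiltonian flow $\{\psi_t\}$. First, because Hamilton's flow is symplectic and $H\circ\psi_t = H$, the probability measure $\mu \propto e^{-H}$ is preserved, i.e.\ $\mu^{\psi_t}=\mu$ for all $t$. Second, since $\tau$ depends on $z_0$ only through the conserved quantity $H(z_0)$, the upper limit $\tau\circ H(z_0)$ is itself flow-invariant. The $d=2$ assumption is used only to guarantee that the flow traces closed orbits on each level set $H^{-1}(s)$ with well-defined period $\tau(s)$; the argument itself is essentially dimension-free once this structure is available.

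The core computation proceeds in three short steps. First, I rewrite the variable-upper-limit integral with an indicator,
\[
\int_{0}^{\tau\circ H(z_{0})} f(\psi_t(z_{0}))\,{\rm d}t \;=\; \int_{0}^{\infty} f(\psi_t(z_{0}))\,\mathbf{1}\{t\leq \tau\circ H(z_{0})\}\,{\rm d}t,
\]
and apply Fubini--Tonelli to swap ${\rm d}t$ with $\mu({\rm d}z_{0})$. Next, for each fixed $t$, I apply the change of variables $z' = \psi_t(z_{0})$ (Theorem~\ref{thm:change-of-variables}) using $\mu$-invariance of $\psi_t$: because $H(\psi_{-t}(z')) = H(z')$ and therefore $\tau\circ H(\psi_{-t}(z')) = \tau\circ H(z')$, the inner integral transforms into
\[
\int \frac{f(z')\,\mathbf{1}\{t \leq \tau\circ H(z')\}}{\tau\circ H(z')}\,\mu({\rm d}z').
\]
Finally, swapping the two integrals back and carrying out the ${\rm d}t$ integration over $[0,\tau\circ H(z')]$ cancels the factor $[\tau\circ H(z')]^{-1}$, leaving $\int f(z')\,\mu({\rm d}z') = \mu(f)$. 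The unbiasedness claim follows immediately: for $z_0 \sim \mu$, the displayed random quantity has expectation equal to the right-hand side of the identity, hence to $\mu(f)$.

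The main obstacle is organisational rather than technical: one must cleanly establish $\mu^{\psi_t}=\mu$ (symplecticity of Hamilton's flow combined with energy conservation) and verify Fubini's hypotheses under, say, $\mu(|f|)<\infty$ together with $\tau\circ H$ bounded away from $0$ on the support of $\mu$. An alternative route via the co-area formula recalled just above the proposition would also work: parameterising the contour $H^{-1}(s)$ by the flow, the arc-length element is $|\nabla H(\psi_t(z))|\,{\rm d}t$, from which $\int_{H^{-1}(s)}|\nabla H|^{-1}\,{\rm d}\mathcal{H}_{1} = \tau(s)$ and analogously for $f$; the identity then reduces to equality of two expressions of the same $s$-integral. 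The flow-invariance approach above is however shorter and avoids any explicit reference to the Hausdorff measure.
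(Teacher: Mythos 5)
Your proof is correct, but it follows a genuinely different route from the paper's. The paper proves the identity via the co-area formula: it decomposes $\mu(f)$ into an integral over energy levels $s$, uses rectifiability of $H^{-1}(s)$ to rewrite the contour integral $\int_{H^{-1}(s)}g\,\mathcal{H}_{d-1}$ as $\int_0^{\tau(s)}g(z_t)\lvert\nabla H(z_t)\rvert\,{\rm d}t$ (this is where the assumption that the orbit traverses the whole level set, i.e.\ the favourable $d=2$ structure, enters), and then exploits the fact that the time average $z_0\mapsto[\tau\circ H(z_0)]^{-1}\int_0^{\tau\circ H(z_0)}f(z_t)\,{\rm d}t$ is constant on each level set to replace the $s$-integral by an integral against $\mu({\rm d}z_0)$. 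You instead argue directly from $\mu$-invariance of the flow (Liouville plus energy conservation for $\mu\propto e^{-H}$) together with flow-invariance of $\tau\circ H$, and then a Fubini swap; the $[\tau\circ H]^{-1}$ factor cancels against the length of the $t$-interval and the Hausdorff measure never appears. Your hypotheses are weaker: you do not need the orbit to cover the level set, nor $\tau(s)$ to be the exact period --- any positive measurable flow-invariant time horizon works --- and the integrability condition for Fubini reduces, by the same computation applied to $\lvert f\rvert$, to $\mu(\lvert f\rvert)<\infty$ (the "bounded away from $0$" caveat on $\tau\circ H$ is not actually needed, since the $1/\tau$ factor cancels before any division issue arises). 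What the paper's longer route buys is the explicit identification of the time average with the microcanonical contour average, which is the point of the surrounding discussion on equi-energy decompositions and Boltzmann's conjecture; your route buys brevity and generality but does not by itself exhibit that link. You correctly anticipated the co-area alternative in your closing remark.
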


\begin{proof}
Since $H^{-1}(s)$ is rectifiable we have that \cite[Theorem 2.6.2]{burago2022course}
for $g\colon\mathsf{Z}\rightarrow\mathbb{R}$,
\[
\int_{H^{-1}(s)}g(z)\mathcal{H}_{d-1}({\rm d}z)=\int_{0}^{\tau(s)}g(z_{t})|\nabla H(z_{t})|{\rm d}t\,.
\]
Consequently
\begin{align}
\int_{\mathsf{Z}}f(z)\mu({\rm d}z) & =\int\tau(s)\exp(-s)\left[\tau(s)^{-1}\int_{0}^{\tau(s)}f(z_{s,t}){\rm d}t\right]{\rm d}s\,,\label{eq:decomposition-Ham-2D}
\end{align}
where the notation now reflects dependence on $s$ of the flow and
notice that since 
\[
H^{-1}(s)\ni z_{0}\mapsto[\tau\circ H(z_{0})]^{-1}\int_{0}^{\tau\circ H(z_{0})}f(z_{t})
\]
is constant, $z_{0,s}\in H^{-1}(s)$ can be arbitrary. since indeed
from (\ref{eq:decomposition-Ham-2D})
\begin{align*}
\int_{\mathsf{Z}}\mathbf{1}\{H(z)\in A\}\mu({\rm d}z)= & \int\tau(s)\exp(-s)\left[\tau(s)^{-1}\int_{0}^{\tau(s)}\mathbf{1}\{s\in A\}{\rm d}t\right]{\rm d}s\\
= & \int\tau(s)\exp(-s)\mathbf{1}\{s\in A\}{\rm d}s.
\end{align*}
Note that we can write, since for $s\in\mathbb{R}$, $H^{-1}(s)\ni z_{0}\mapsto[\tau\circ H(z_{0})]^{-1}\int_{0}^{\tau\circ H(z_{0})}f(z_{t})$
is constant,
\begin{align*}
\int_{\mathsf{Z}}\left[[\tau\circ H(z_{0})]^{-1}\int_{0}^{\tau\circ H(z_{0})}f(z_{t}){\rm d}t\right]\mu({\rm d}z_{0})\\
=\int\tau(s)\exp(-s) & \left\{ \tau(s)^{-1}\int_{0}^{\tau(s)}\left[\big(\tau\circ H(z_{s,0})\big)^{-1}\int_{0}^{\tau\circ H(z_{0})}f(z_{s,t}){\rm d}t\right]\right\} {\rm d}t{\rm d}s\\
=\int\tau(s)\exp(-s) & \left[\tau(s)^{-1}\int_{0}^{\tau(s)}f(z_{s,t}){\rm d}t\right]{\rm d}t{\rm d}s\,.
\end{align*}
\end{proof}
A remarkable point is that the strategy developed in this manuscript
provides a general methodology to implement numerically the ideas
underpinning the decomposition of Proposition~\ref{prop:decomp-Hamilton-2D-estimator}
by using the estimator $\check{\mu}(f)$ in (\ref{eq:mu-check}) and
invoking Proposition~\ref{prop:line-integrals-integrators}, assuming
$s\mapsto\tau(s)$ to be known. This point is valid outside of the
SMC framework and it is worth pointing out that $\check{\mu}(f)$
in (\ref{eq:mu-check}) is unbiased if the samples used are marginally
from $\bar{\mu}$.

Remark the promise of dimension free estimators if the one dimensional
line integrals in (\ref{eq:co-area-estimator-integral}) were easily
computed and sampling from the one-dimensional energy distribution
was routine -- however the scenario $d\geq3$ is more subtle.

\paragraph*{General scenario}

In the scenario where $d\geq3$ the co-area decomposition still holds,
but the solution to Hamilton's equation can in general not be used
to compute integrals over the hypersurface $H^{-1}(s)$. This would
require a form of ergodicity \cite{szasz-2000boltzmann,tupper2005ergodicity}
of the form, for $z_{0}\in H^{-1}(s)$,
\[
\lim_{\tau\rightarrow\infty}\frac{1}{\tau}\int_{0}^{\tau}f\circ\psi_{t}(z_{0}){\rm d}t=\bar{f}(z_{0})=\frac{\int_{\zeta^{-1}(s)}f(z)|\nabla H(z)|^{-1}\mathcal{H}_{d-1}({\rm d}z)}{\int_{\zeta^{-1}(s)}|\nabla H(z)|^{-1}\mathcal{H}_{d-1}({\rm d}z)}\,,
\]
where the limit always exists in the ${\rm L}^{2}(\mu)$ sense and
constitutes Von Neumann's mean ergodic theorem \cite{vonneumann:1932,szasz-2000boltzmann},
and the rightmost equality forms Boltzmann's conjecture. An interesting
property in the present context is that $\mathbb{E}_{\bar{\mu}}\bigl(\bar{f}(Z)\bigr)=\mathbb{E}_{\mu}\bigl(f(Z)\bigr)$
for $f\in\mathrm{L}^{2}(\mu)$ and one could replicate the variance
reduction properties developed earlier. Boltzmann's conjecture has
long been disproved, as numerous Hamiltonians can be shown not to
lead to ergodic systems, although some sub-classes do. However a weaker,
or local, form or ergodicity can hold on sets of a partition of $\mathsf{Z}$
\begin{example}[Double well potential]
\label{exa:double-well} Consider the scenario where $\mathsf{X}=\mathsf{V}=\mathbb{R}$,
$U(x)=(x^{2}-1)^{2}$ and kinetic energy $v^{2}$ \cite{stoltz2010free}.
Elementary manipulations show that satisfying Hamilton's equations
(\ref{eq:hamiltons-equations}) imposes $t\mapsto H(x_{t},\dot{x}_{t})=(x_{t}^{2}-1)^{2}+\dot{x}_{t}^{2}=s>0$
and therefore requires $t\mapsto x_{t}\in[-\sqrt{1+\sqrt{s}},-\sqrt{1-\sqrt{s}}]\cup[\sqrt{1-\sqrt{s}},\sqrt{1+\sqrt{s}}]$
-- importantly the intervals are not connected for $s<1$. Rearranging
terms any solution of (\ref{eq:hamiltons-equations}) must satisfy
\begin{align*}
\dot{x}_{t} & =\pm\sqrt{s-(x_{t}^{2}-1)^{2}}\,,
\end{align*}
that is the velocity is a function of the position in the double well,
maximal for $x_{t}^{2}=1$, vanishing as $x_{t}^{2}\rightarrow1\pm\sqrt{s}$
and a sign flip at the endpoints of the intervals. Therefore the system
is not ergodic, but ergodicity trivially occurs in each well.
\end{example}

In general this partitioning of $\mathsf{Z}$ can be intricate but
it should be clear that in principle this could reduce variability
of an estimator. In the toy Example~\ref{exa:double-well}, a purely
mathematical algorithm inspired by the discussions above would choose
the right or left well with probability $1/2$ and then integrate
deterministically, producing samples taking at most two values which
could be averaged to compute $\mu(f)$. We further note that in our
context a relevant result would be concerned with the limit, for any
$x_{0}\in\mathsf{X}$,
\[
\lim_{\tau\rightarrow\infty}\int\big[\frac{1}{\tau}\int_{0}^{\tau}f\circ\psi_{t}(x_{0},\rho\mathbf{e}){\rm d}t\big]\varpi_{\mathbf{e}}({\rm d}\mathbf{e})\,,
\]
where we have used a polar reparametrization $v=\rho\mathbf{e}$ $(\rho,\mathbf{e})\in\mathbb{R}_{+}\times\mathcal{S}_{d-1}$,
and whether a marginal version of Boltzmann's conjecture holds for
this limit. Example~\ref{exa:double-well} indicates that this is
not true in general but the cardinality of the partition of $\mathsf{X}$
may be reduced.

\subsubsection{Advantage of averaging and control variate interpretation \protect\label{subsec:control-variate-interpretation}}

Consider the scenario where $(\mathsf{T},\check{Z})\sim\bar{\mu}(t,{\rm d}z)=\frac{1}{\tau}\mu^{\psi_{-t}}({\rm d}z)\mathbf{1}\{0\leq t\leq\tau\}$,
$[0,\tau]\ni t\mapsto\psi_{t}$ for some $\tau>0$ is the flow solution
of an ODE of the form $\dot{z}_{t}=F\circ z_{t}$ for some field $F\colon\mathsf{Z}\rightarrow\mathsf{Z}$.
For $f\colon\mathsf{Z}\rightarrow\mathbb{R}$ we have

\[
{\rm var}_{\mu}\left(f(Z)\right)={\rm var}_{\bar{\mu}}\left(\mathbb{E}_{\bar{\mu}}\big(\bar{f}(\mathsf{T},\check{Z})\mid\check{Z}\big)\right)+\mathbb{E}_{\bar{\mu}}\big({\rm var}_{\bar{\mu}}(\bar{f}(\mathsf{T},\check{Z})\mid\check{Z})\big)\,.
\]
and we are interested in determining $t\mapsto\psi_{t}$ (or $F$)
in order to minimize ${\rm var}_{\bar{\mu}}\left(\mathbb{E}_{\bar{\mu}}\big(\bar{f}(\mathsf{T},\check{Z})\mid\check{Z}\big)\right)$
and maximize improvement over simple Monte Carlo. We recall that the
Jabobian determinant of the flow $t\mapsto\psi_{t}(z)$ is given by
\cite[p. 174108-5]{10.1063/1.4874000}
\[
J_{t}(z)=|\det\big(\nabla\otimes\psi_{t}(z)\big)|=\mathcal{J}\circ\psi_{t}(z)\;\text{with}\;\mathcal{J}(z):=\exp\left(\int_{0}^{t}(\nabla\cdot F)(z)\right)\,.
\]

\begin{lem}
\label{lem:var-barmu-autocor}Let $t\mapsto\psi_{t}$ be a flow solution
of $\dot{z}_{t}=F\circ z_{t}$ and assume that with $\upsilon$ the
Lebesgue measure, $\upsilon\gg\mu$. Then
\begin{align*}
\mathbb{E}_{\bar{\mu}}\left(\mathbb{E}_{\bar{\mu}}\big(\bar{f}(\mathsf{T},\check{Z})\mid\check{Z}\big)^{2}\right) & =2\int_{0}^{\tau}\left\{ \int_{0}^{\tau-u}[\bar{\mu}\circ\psi_{-t}(z)]^{-1}{\rm d}t\right\} \Bigl\{\int\mu({\rm d}z)f(z)f\circ\psi_{u}(z)\mu\circ\psi_{u}(z)\mathcal{J}\circ\psi_{u}(z)\Bigr\}{\rm d}u\\
 & \leq2\left\{ \int_{0}^{\tau}[\bar{\mu}\circ\psi_{-t}(z)]^{-1}{\rm d}t\right\} \int_{0}^{\tau}\Bigl\{\int\mu({\rm d}z)f(z)f\circ\psi_{u}(z)\mu\circ\psi_{u}(z)\mathcal{J}\circ\psi_{u}(z)\Bigr\}{\rm d}u\,,
\end{align*}
with
\begin{align*}
\bar{\mu}\circ\psi_{-t}(z) & =\int_{-t}^{\tau-t}\mu\circ\psi_{u}(z)\,\mathcal{J}\circ\psi_{u}(z){\rm d}u\,.
\end{align*}
In particular for $t\mapsto\psi_{t}$ the flow solution of Hamilton's
equations associated with $\mu$ we have
\begin{align*}
\mathbb{E}_{\bar{\mu}}\left(\mathbb{E}_{\bar{\mu}}\big(\bar{f}(\mathsf{T},\check{Z})\mid\check{Z}\big)^{2}\right) & =2\int\big(1-t/\tau\big)\int\mu({\rm d}z)f(z)f\circ\psi_{t}(z){\rm d}t\,\cdot
\end{align*}
\end{lem}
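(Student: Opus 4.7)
My plan proceeds in three stages: make the conditional expectation explicit via Bayes' rule, reduce the squared time-integral using symmetry and the substitution $u=t-s$, and finally execute a spatial change of variable $y=\psi_s(z)$ to put the integrand in the asserted form.

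First, starting from the joint law on $[0,\tau]\times\mathsf{Z}$ with density proportional to $\mu\circ\psi_t(z)\,\mathcal{J}\circ\psi_t(z)$ with respect to $\mathrm{d}t\otimes\upsilon(\mathrm{d}z)$, a direct Bayes step gives
\[
\mathbb{E}_{\bar{\mu}}\bigl(\bar f(\mathsf{T},\check Z)\mid\check Z=z\bigr)=\frac{h(z)}{\bar{\mu}(z)},\quad h(z):=\int_0^\tau f\circ\psi_t(z)\,\mu\circ\psi_t(z)\,\mathcal{J}\circ\psi_t(z)\,\mathrm{d}t,
\]
where $\bar{\mu}(z)=h(z)|_{f\equiv 1}$ in the normalisation consistent with the closed form asserted for $\bar{\mu}\circ\psi_{-t}(z)$. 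Squaring and integrating against the marginal density of $\check Z$, and then carefully tracking the $\tau^{-1}$ factor from the uniform prior on $\mathsf{T}$, reduces the claim to analysing $\int h(z)^2/\bar{\mu}(z)\,\upsilon(\mathrm{d}z)$.

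Second, I would expand $h(z)^2$ as a double integral over $[0,\tau]^2$ and exploit the $(s,t)\leftrightarrow(t,s)$ symmetry to restrict to $s\le t$, producing the factor $2$ in the claim. The change of variable $u=t-s$ then rewrites the time-domain as $u\in[0,\tau]$, $s\in[0,\tau-u]$, so that $u$ takes on the role of an autocorrelation lag. The main technical step is next: for each fixed $(s,u)$ I would perform the spatial substitution $y=\psi_s(z)$. Two identities drive the simplification --- the change-of-variable $J_s(z)\,\upsilon(\mathrm{d}z)=\upsilon(\mathrm{d}y)$ and the cocycle $J_{s+u}(z)=J_u(\psi_s(z))\cdot J_s(z)$ --- which together collapse the three Jacobian factors into the single term $\mathcal{J}\circ\psi_u(y)$. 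The orbit factors become $f(y)\,f\circ\psi_u(y)\,\mu(y)\,\mu\circ\psi_u(y)$, while the normaliser transforms as $\bar{\mu}(z)\mapsto\bar{\mu}\circ\psi_{-s}(y)$ with precisely the closed form $\int_{-s}^{\tau-s}\mu\circ\psi_w(y)\,\mathcal{J}\circ\psi_w(y)\,\mathrm{d}w$ obtained by applying the same substitution inside the definition of $\bar{\mu}(z)$. Recognising $\mu(y)\upsilon(\mathrm{d}y)=\mu(\mathrm{d}y)$ and swapping the orders of the $s$- and $y$-integrals delivers the first equality.

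The upper bound is then immediate from non-negativity of $[\bar{\mu}\circ\psi_{-t}(z)]^{-1}$: extending the inner $t$-interval from $[0,\tau-u]$ to $[0,\tau]$ decouples the $t$- and $u$-integrals, giving a product. For the Hamiltonian specialisation, volume preservation yields $\mathcal{J}\equiv 1$ and $\psi_t$-invariance of $\mu$ yields the pointwise identity $\mu\circ\psi_t(z)=\mu(z)$ (via Lemma~\ref{thm:measure-theoretic-transform}). Hence $\bar{\mu}\circ\psi_{-t}(z)=\tau\mu(z)$ and $\mu\circ\psi_u(z)\,\mathcal{J}\circ\psi_u(z)=\mu(z)$, so the inner $t$-integral becomes $(\tau-u)/(\tau\mu(z))$, the two factors of $\mu(z)$ cancel, and one reads off $2\int_0^\tau(1-u/\tau)\int f(z)f\circ\psi_u(z)\,\mu(\mathrm{d}z)\,\mathrm{d}u$. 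The principal obstacle throughout is the Jacobian bookkeeping in the spatial substitution: one must apply the cocycle identity consistently alongside the inverse-Jacobian factors generated by the change of variable and the definition of $\bar{\mu}$, and reconcile the normalisation so that the $\bar{\mu}\circ\psi_{-s}(y)$ arising in the calculation coincides with the object in the statement.
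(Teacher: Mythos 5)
Your proposal is correct and follows essentially the same route as the paper: expand the square as a double time integral, use symmetry to restrict to $t'\geq t$ (producing the factor $2$), substitute $u=t'-t$, and shift the base point along the orbit so that the outer integral is against $\mu$ while the normaliser becomes $\bar{\mu}\circ\psi_{-t}(z)$. The only cosmetic difference is that the paper performs the spatial step abstractly via the pushforward identity $\int g\circ\psi_{t}(z)\,\mu^{\psi_{-t}}({\rm d}z)=\int g(z)\,\mu({\rm d}z)$, whereas you carry out the equivalent explicit change of variable with the Jacobian cocycle.
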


\begin{proof}
Using Fubini's theorem we have
\begin{align*}
\int\left\{ \int_{0}^{\tau}f\circ\psi_{t}(z)\frac{{\rm d}\mu^{\psi_{-t}}}{{\rm d}\bar{\mu}}(z){\rm d}t\right\} ^{2}\bar{\mu}({\rm d}z) & =\\
=2\int_{0}^{\tau}\int_{0}^{\tau}\mathbf{1}\{t'\geq t\}\int\bar{\mu}({\rm d}z) & f\circ\psi_{t}(z)\frac{{\rm d}\mu^{\psi_{-t}}}{{\rm d}\bar{\mu}}(z)f\circ\psi_{t'-t+t}(z)\frac{{\rm d}\mu^{\psi_{-t'}}}{{\rm d}\bar{\mu}}\circ\psi_{-t}\circ\psi_{t}(z){\rm d}t'{\rm d}t\\
=2\int_{0}^{\tau}\int_{0}^{\tau}\mathbf{1}\{t'\geq t\}\int\mu({\rm d}z) & f(z)f\circ\psi_{t'-t}(z)\frac{{\rm d}\mu^{\psi_{-t'}}}{{\rm d}\bar{\mu}}\circ\psi_{-t}(z){\rm d}t'{\rm d}t\,.
\end{align*}
Further, we have
\begin{align*}
\frac{{\rm d}\mu^{\psi_{-t'}}}{{\rm d}\bar{\mu}}(z) & =\frac{\mu\circ\psi_{t'}(z)\,\mathcal{J}\circ\psi_{t'}(z)}{\bar{\mu}(z)}\,,
\end{align*}
with $\bar{\mu}(z)=\int_{0}^{\tau}\mu\circ\psi_{u}(z)\,\mathcal{J}\circ\psi_{u}(z){\rm d}u$.
It is straightforward that
\begin{align*}
\bar{\mu}\circ\psi_{-t}(z) & =\int_{0}^{\tau}\mu\circ\psi_{u-t}(z)\,\mathcal{J}\circ\psi_{u-t}(z){\rm d}u\\
 & =\int_{-t}^{\tau-t}\mu\circ\psi_{u'}(z)\,\mathcal{J}\circ\psi_{u'}(z){\rm d}u'\,.
\end{align*}
Consequently
\begin{align*}
\int\left\{ \int_{0}^{\tau}f\circ\psi_{t}(z)\frac{{\rm d}\mu^{\psi_{-t}}}{{\rm d}\bar{\mu}}(z){\rm d}t\right\} ^{2}\bar{\mu}({\rm d}z)=\\
=2\int_{0}^{\tau}\int_{0}^{\tau} & \frac{\mathbf{1}\{\tau-t\geq u\geq0\}}{\bar{\mu}\circ\psi_{-t}(z)}\int\mu({\rm d}z)f(z)f\circ\psi_{u}(z)\mu\circ\psi_{u}(z)\mathcal{J}\circ\psi_{u}(z){\rm d}u{\rm d}t\\
=2\int_{0}^{\tau} & \left\{ \int_{0}^{\tau-u}[\bar{\mu}\circ\psi_{-t}(z)]^{-1}{\rm d}t\right\} \int\mu({\rm d}z)f(z)f\circ\psi_{u}(z)\mu\circ\psi_{u}(z)\mathcal{J}\circ\psi_{u}(z){\rm d}u
\end{align*}
For the second statement we have $\bar{\mu}\circ\psi_{-t}(z)=\tau\,\mu(z)$
and
\begin{align*}
\int\left\{ \int_{0}^{\tau}f\circ\psi_{t}(z)\frac{{\rm d}\mu^{\psi_{-t}}}{{\rm d}\bar{\mu}}(z){\rm d}t\right\} ^{2}\bar{\mu}({\rm d}z)=\\
=2\int_{0}^{\tau} & \bigl(1-u/\tau\bigr)\int\mu({\rm d}z)f(z)f\circ\psi_{u}(z){\rm d}t''
\end{align*}
which is akin to the integration correlation time encountered in MCMC. 
\end{proof}
This is somewhat reminiscent of what is advocated in the literature
in the context of HMC or randomized HMC where the integration time
$t$ (or $T$ when using an integrator) is randomized \cite{neal2011mcmc,hoffman2021adaptive}. 
\begin{example}
\label{exa:hamilton-different-scales}Consider the Gaussian scenario
where $\mathsf{X}=\mathsf{V}=\mathbb{R}^{d}$, $\pi(x)=\mathcal{N}(x;0,\Sigma)$
with diagonal covariance matrix such that for $i\in\llbracket d\rrbracket$,
$\Sigma_{ii}=\sigma_{i}^{2}$ and $\varpi(v)=\mathcal{N}(v;0,{\rm Id})$.
Using the reparametrization $(x_{0}(i),v_{0}(i))=(a_{i}\sigma_{i}\sin(\phi_{i}),a_{i}\cos\phi_{i})$
the solution of Hamilton's equations is given for $i\in\llbracket d\rrbracket$
by
\begin{align*}
x_{t}(i) & =a_{i}\sigma\sin(t/\sigma_{i}+\phi_{i})\\
v_{t}(i) & =a_{i}\cos(t/\sigma_{i}+\phi_{i})\,.
\end{align*}
We have for each component $\mathbb{E}_{\mu}[X_{0}(i)X_{t}(i)]=\sigma_{i}^{2}\cos(t/\sigma_{i})$,
which clearly does not vanish as $t$, or $\tau$, increase: this
is a particularly negative result for standard HMC where the final
state of the computed integrator is used. Worse, as noted early on
in \cite{neal2011mcmc} this implies that for heterogeneous values
$\{\sigma_{i}^{2},i\in\llbracket d\rrbracket\}$ no integration time
may be suitable simultaneously for all coordinates. This motivated
the introduction of random integration times \cite{neal2011mcmc}
which leads to the average correlation
\[
\mathbb{E}_{\mu}\left\{ \frac{1}{\tau}\int_{0}^{\tau}X_{0}(i)X_{t}(i){\rm d}t\right\} =\frac{\sin(\tau/\sigma_{i})}{\tau/\sigma_{i}}\,,
\]
where it is assumed that $\tau$ is independent of the initial state.
This should be contrasted with the fixed integration time scenario
since as $\tau$ increases this vanishes and is even negative for
some values of $\tau$ (the minimum is here reached for about $\tau_{i}=4.5\sigma_{i}$
for a given component). 
\end{example}

The example therefore illustrates that our approach implements this
averaging feature, and therefore shares its benefits, within the context
of an iterative algorithm. The example also highlights a control variate
technique intepretation. More specifically in the discrete time scenarios
$\{f\circ\psi^{k}(z),k\in\llbracket P\rrbracket\}$ can be interpreted
as control variates, but can induce both positive and negative correlations.

\subsubsection{Towards an optimal flow?}

In this section we are looking to determine a flow for some $\tau>0$
$[0,\tau]\ni t\mapsto\psi_{t}$ of an ODE of the form $\dot{z}_{t}=F\circ z_{t}$
for some field $F\colon\mathsf{Z}\rightarrow\mathsf{Z}$ which defines
as above the probability model
\[
\bar{\mu}({\rm d}t,{\rm d}z)=\frac{1}{\tau}\mu^{\psi_{-t}}({\rm d}z)\mathbf{1}\{0\leq t\leq\tau\}{\rm d}t\,,
\]
which has the property that for any $f\colon\mathsf{Z}\rightarrow\mathbb{R}$,
defining $\bar{f}(t,z):=f\circ\psi_{t}(z)$, then $\bar{\mu}(\bar{f})=\mu(f)$.
This suggests the use of a Rao-Blackwellized estimator inspired by
$\mathbb{E}_{\bar{\mu}}\left(\bar{\mu}(\bar{f}\mid\check{Z})\right)$.
Assuming $\mathsf{Z}=\mathbb{R}^{d}\times\mathbb{R}^{d}$ and that
$\mu$ has a density with respect to the Lebesgue measure, then for
$z\in\mathsf{Z}$
\begin{equation}
\bar{\mu}(\bar{f}\mid z)=\frac{1}{\tau}\int_{0}^{\tau}f\circ\psi_{t}(z)\frac{\mu\circ\psi_{t}(z)\,\mathcal{J}\circ\psi_{t}(z)}{\int\mu\circ\psi_{u}(z)\,\mathcal{J}\circ\psi_{u}{\rm d}u}{\rm d}t\label{eq:bar-mu-estimator-optim-section}
\end{equation}

In the light of Lemma~\ref{lem:var-barmu-autocor} we aim to find
for any $z\in\mathsf{Z}$ the flow solutions $t\mapsto\psi_{t}(z)$
of ODEs $\dot{z}_{t}=F_{z}(z_{t})$ such that the function $t\mapsto\int\mu({\rm d}z)f(z)f\circ\psi_{t}(z)\mu\circ\psi_{t}(z)\,\mathcal{J}\circ\psi_{t}(z)$
decreases as fast as possible. This is motivated by the fact that
the integral on $[0,\tau]$ of this mapping appears in the variance
upper bound for (\ref{eq:bar-mu-estimator-optim-section}) in Lemma~\ref{lem:var-barmu-autocor},
which we want to minimize. Note that we also expect this mapping to
be smooth under general conditions not detailed in this preliminary
work. For smooth enough flow and $f$ we have, with $g:=f\times\mu\times\mathcal{J}$,
\begin{align*}
\frac{{\rm d}}{{\rm d}t}\mathbb{E}_{\mu}\left[f(Z)\frac{g\circ\psi_{t}(Z)}{\bar{\mu}(Z)}\right] & =\mathbb{E}_{\mu}\left[\frac{f(Z)}{\bar{\mu}(Z)}\langle\nabla g\circ\psi_{t}(Z),\dot{\psi}_{t}(Z)\rangle\right]
\end{align*}
Pointwise, the steepest descent direction is given by
\[
\dot{\psi}_{t}(z)=-C_{t}(z)\frac{f(z)\,\nabla g\circ\psi_{t}(z)}{|f(z)\,\nabla g\circ\psi_{t}(z)|}
\]
for a positive function $C_{t}(z)$ to be determined optimally. In
this scenario we therefore have
\[
\frac{{\rm d}}{{\rm d}t}\mathbb{E}_{\mu}\left[f(Z)\frac{g\circ\psi_{t}(Z)}{\bar{\mu}(Z)}\right]=-\mathbb{E}_{\mu}\left[\frac{C_{t}(z)}{\bar{\mu}(Z)}|f(Z)\,\nabla g\circ\psi_{t}(Z)|\right]
\]
and by Cauchy-Schwartz the (positive) expectation is maximized for
\[
C_{t}(z)=C_{t}\frac{|f(z)\,\nabla g\circ\psi_{t}(z)|}{\bar{\mu}(z)}\,.
\]
and the trajectories we are interested in must be such that, for some
$C>0$,
\[
\dot{\psi}_{t}(z)=-\frac{C}{\bar{\mu}(z)}f(z)\,\nabla g\circ\psi_{t}(z)=-\frac{C}{\bar{\mu}(z)}f(z)F\circ\psi_{t}(z)\,.
\]
Note that for any $z$ the term $|f(z)/\bar{\mu}(z)|$ is only a change
of speed and that the trajectory of $\mathbb{R}_{+}\ni t\mapsto\psi_{t}(z)$
is independent of this factor, despite the remarkable fact that $\bar{\mu}(z)$
depends on this flow. The result seems fairly natural.

\section{MCMC with integrator snippets\protect\label{sec:MCMC-with-integrator}}

We restrict this discussion to integrator snippet based algorithms,
but more general Markov snippet algorithms could be considered. Consider
again the target distribution
\[
\bar{\mu}({\rm d}z)=\sum_{k=0}^{T}\bar{\mu}(k,{\rm d}z)\,,
\]
with
\[
\bar{\mu}(k,{\rm d}z)=\frac{1}{T+1}\mu_{k}({\rm d}z)\,,
\]
where $\mu_{k}({\rm d}z):=\mu^{\psi^{-k}}({\rm d}z)$ for $k\in\llbracket0,T\rrbracket$. 

\subsection{Windows of states}

Assume that we are in the context of Example~\ref{exa:verlet-HMC},
dropping $n$ for simplicity, the HMC algorithm using the integrator
$\psi^{s}$ is as follows
\begin{equation}
\bar{P}(z,{\rm d}z')=\alpha(z)\delta_{\psi^{s}(z)}({\rm d}z)+\bar{\alpha}(z)\delta_{\sigma(z)}({\rm d}z')\label{eq:pre-windows-of-states-kernel}
\end{equation}
with here
\[
\alpha(z)=\min\left\{ 1,\frac{{\rm d}\bar{\mu}^{\sigma\circ\psi^{s}}}{{\rm d}\bar{\mu}}(z)\right\} =\min\left\{ 1,\frac{\bar{\mu}\circ\psi^{w}(z)}{\bar{\mu}(z)}\right\} 
\]
where the last equality holds when $\upsilon\gg\bar{\mu}$, $\upsilon^{\psi^{s}}=\upsilon$
and we let $\bar{\mu}(z)={\rm d}\bar{\mu}/{\rm d}\upsilon(z)$. In
other works the snippet $\mathsf{z}=\big(z,\psi(z),\psi^{2}(z),\ldots,\psi^{T}(z)\big)$
is shifted along the orbit $\{\psi^{k}(z),k\in\mathbb{Z}\}$ by $\psi^{s}$.
Naturally this needs to be combined with updates of the velocity to
lead to a viable ergodic MCMC algorithm. This can be achieved with
the following kernel, with $\psi^{k}(z)=\big(\psi_{x}^{k}(z),\psi_{v}^{k}(z)\big)$
and $A\in\mathscr{Z}$,
\[
\bar{Q}(z,A):=\sum_{k,l=0}^{T}\bar{\mu}(k\mid z)\frac{1}{T+1}\int\mathbf{1}\{\psi^{-l}\big(\psi_{x}^{k}(z),v'\big)\in A\}\varpi({\rm d}v')\,,
\]
described algorithmically in Alg.~\ref{alg:bar-Q}. Indeed for any
$(l,v')\in\llbracket0,T\rrbracket\times\mathsf{V}$, using identity
(\ref{eq:unfolding-HMC-setup}),
\[
\int\sum_{k=0}^{T}\bar{\mu}(k\mid z)\int\mathbf{1}\{\psi^{-l}\big(\psi_{x}^{k}(z),v'\big)\in A\}\bar{\mu}({\rm d}z)=\int\sum_{k=0}^{T}\int\mathbf{1}\{\psi^{-l}\big(x,v'\big)\in A\}\mu({\rm d}z)\,,
\]
and hence
\begin{align*}
\bar{\mu}\bar{Q}(A) & =\frac{1}{T+1}\sum_{l=0}^{T}\int\mathbf{1}\{\psi^{-l}\big(x,v'\big)\in A\}\varpi({\rm d}v')\mu({\rm d}z)\\
 & =\frac{1}{T+1}\sum_{l=0}^{T}\int\mathbf{1}\{\psi^{-l}\big(x,v\big)\in A\}\mu({\rm d}z)\\
 & =\int\mathbf{1}\{z\in A\}\frac{1}{T+1}\sum_{l=0}^{T}\mu^{\psi^{-l}}({\rm d}z)\\
 & =\bar{\mu}(A)\,.
\end{align*}

Again samples from this MCMC targetting $\bar{\mu}$ can be used to
estimate expectations with respect to $\mu$ using the identity (\ref{eq:unfolding-HMC-setup}).
This is closely related to the ``windows of states'' approach of
\cite{neal1994improved,neal2011mcmc,QIN2001827}, where a window of
states is what we call a Hamiltonian snippet in the present manuscript.
Indeed the windows of states approach corresponds to the Markov update
\begin{equation}
P(z,A)=\sum_{k,l=0}^{T}\frac{1}{T}\int\bar{P}\big(\psi^{-k}(z),{\rm d}z'\big)\bar{\mu}(l\mid z')\mathbf{1}\{\psi^{l}(z')\in A\}\,,\label{eq:windows-of-states-kernel}
\end{equation}
which, we show below, leaves $\mu$ invariant. Indeed, note that for
$k,l\in\llbracket0,T\rrbracket$ and $A\in\mathscr{Z}$,
\begin{align*}
\int\mu({\rm d}z)\int\bar{P}\big(\psi^{-k}(z),{\rm d}z'\big)\bar{\mu}(l\mid z')\mathbf{1}\{\psi^{l}(z')\in A\} & =\int_{A}\mu^{\psi^{-k}}({\rm d}z)\int\bar{P}\big(z,{\rm d}z'\big)\bar{\mu}(l\mid z')\mathbf{1}\{\psi^{l}(z')\in A\}\,,
\end{align*}
and therefore
\begin{align*}
\int\mu({\rm d}z)\int P(z,{\rm d}z')\mathbf{1}\{z'\in A\}= & \int\bar{\mu}({\rm d}z)\bar{P}(z,{\rm d}z')\sum_{l=0}^{T}\bar{\mu}(l\mid z')\mathbf{1}\{\psi^{l}(z')\in A\}\\
= & \int\bar{\mu}({\rm d}z')\sum_{l=0}^{T}\bar{\mu}(l\mid z')\mathbf{1}\{\psi^{l}(z')\in A\}\\
= & \int\mu({\rm d}z)\mathbf{1}\{z\in A\}\,,
\end{align*}
where we obtain the last line from (\ref{eq:unfolding-HMC-setup}).
$P$ is not $\mu$-reversible in general, making theoretical comparisons
challenging.

\begin{algorithm}
Given $z$

Sample $k\sim\bar{\mu}(\cdot\mid z)$ and $l\sim\mathcal{U}(\llbracket0,T\rrbracket)$.

Compute $\psi^{k}(z)=\big(\psi_{x}^{k}(z),\psi_{v}^{k}(z)\big)$.

Sample $v'\sim\varpi(\cdot)$

Return $\psi^{-l}(\psi_{x}^{k}(z),v')$

\caption{Kernel $\bar{Q}$}
\label{alg:bar-Q}
\end{algorithm}

\subsection{Multinomial HMC}

Multinomial HMC is another related MCMC update suggested in \cite{betancourt2017conceptual}
which can be rephrased in terms of our framework as follows. For $(z,A)\in\mathsf{Z}\times\mathscr{Z}$
define the kernel
\[
P(z,A)=\frac{1}{T+1}\sum_{k,l\in\llbracket0,T\rrbracket}\bar{\mu}\big(k\mid\psi^{-l}(z)\big)\mathbf{1}\big\{\psi^{k-l}(z)\in A\big\}\,.
\]
Introducing velocity refreshment before returning $\psi^{k-l}(z)$,
can be reframed in terms $\bar{M}\colon\mathsf{Z}\times\mathscr{Z}$
as defined in (\ref{eq:unfolding-HMC-setup}) leading the the Markov
kernel
\[
P'(z,A)=\frac{1}{T+1}\sum_{l\in\llbracket0,T\rrbracket}\bar{M}\big(\psi^{-l}(z),A\big)\,.
\]
We focus below on the version with no refreshment, $P$. Establishing
$\mu$-invariance is direct since with $z\sim\mu$ and $l\sim\mathcal{U}\big(\llbracket0,T\rrbracket\big)$
then $\psi^{-l}(z)\sim\bar{\mu}$ and with $k\sim\bar{\mu}\big(\cdot\mid\psi^{-l}(z)\big)$
we know that $\psi^{k-l}(z)\sim\mu$ from (\ref{eq:unfolding-HMC-setup}).
More formally
\begin{align*}
\mu P(A) & =\frac{1}{T+1}\sum_{k\in\llbracket0,T\rrbracket}\int\mu({\rm d}z)\sum_{l\in\llbracket0,T\rrbracket}\bar{\mu}\big(k\mid\psi^{-l}(z)\big)\mathbf{1}\big\{\psi^{k}\circ\psi^{-l}(z)\in A\big\}\\
 & =\sum_{k\in\llbracket0,T\rrbracket}\int\bar{\mu}({\rm d}z)\bar{\mu}\big(k\mid z\big)\mathbf{1}\big\{\psi^{k}(z)\in A\big\}\\
 & =\mu(A)\,.
\end{align*}
This Markov kernel is in fact reversibility, that is for $f,g\colon\mathsf{Z}\rightarrow\mathbb{R}$
\begin{align*}
\sum_{k,l\in\llbracket0,T\rrbracket}\int f(z)\mu({\rm d}z)\bar{\mu}\big(k\mid\psi^{-l}(z)\big)g\circ\psi^{k-l}(z)\\
=\frac{1}{T+1}\sum_{k,l\in\llbracket0,T\rrbracket}\int & f\circ\psi^{l}(z)\mu^{\psi^{-l}}({\rm d}z)\frac{{\rm d}\mu^{\psi^{-k}}}{{\rm d}\bar{\mu}}\big(z\big)g\circ\psi^{k}(z)\\
=\frac{1}{T+1}\sum_{k,l\in\llbracket0,T\rrbracket}\int & f\circ\psi^{l}(z)\frac{{\rm d}\mu^{\psi^{-l}}}{{\rm d}\bar{\mu}}(z)\mu^{\psi^{-k}}\big({\rm d}z\big)g\circ\psi^{k}(z)\\
=\frac{1}{T+1}\sum_{k,l\in\llbracket0,T\rrbracket}\int & f\circ\psi^{l-k}(z)\frac{{\rm d}\mu^{\psi^{-l}}}{{\rm d}\bar{\mu}}\circ\psi^{-k}(z)\mu\big({\rm d}z\big)g(z)\\
=\sum_{k,l\in\llbracket0,T\rrbracket}\int & f\circ\psi^{l-k}(z)\mu({\rm d}z)\bar{\mu}\big(l\mid\psi^{-k}(z)\big)g(z)
\end{align*}
A standard SMC implementation relying on this kernel and corresponding
near optimal backward kernel would lead to an algorithm different
from ours. Indeed in this scenario, for $n\in\llbracket P\rrbracket$,
for particles $\{\dot{z}_{n}^{(i)},i\in\llbracket N\rrbracket\}$
the weights are of the standard form $\{{\rm d}\mu_{n+1}/{\rm d}\mu_{n}(\dot{z}_{n+1}^{(i)}),i\in\llbracket N\rrbracket\}$,
which should be contrasted with the weights in Alg.~(\ref{eq:def-Pnk-intro})
$\big\{{\rm d}\bar{\mu}_{n+1}/{\rm d}\mu_{n}\big(z_{n}^{(i)}\big),i\in\llbracket N\rrbracket\big\}$
where resampling ``looks ahead'' and uses the information provided
by the snippets stemming from $\big\{ z_{n}^{(i)},i\in\llbracket N\rrbracket\big\}$.
Note that there is no indexing mistake here and that the additional
dots have been added for coherence: in Alg.~(\ref{eq:def-Pnk-intro})
the output of iteration $n+1$ are $\{\check{z}_{n+1}^{(i)},i\in\llbracket N\rrbracket\}$
and $\big\{ z_{n}^{(i)},i\in\llbracket N\rrbracket\big\}$ intermediate
states to which resampling is applied. One expects robustness to arise
from this difference.

\bibliographystyle{plain}
\bibliography{bib-waste-recycling}

\begin{thebibliography}{10}

\bibitem{andrieu2024gradientfreeoptimizationintegration}
Christophe Andrieu, Nicolas Chopin, Ettore Fincato, and Mathieu Gerber.
\newblock Gradient-free optimization via integration, 2024.

\bibitem{andrieu2020general}
Christophe Andrieu, Anthony Lee, and Sam Livingstone.
\newblock A general perspective on the {M}etropolis-{H}astings kernel.
\newblock {\em arXiv preprint arXiv:2012.14881}, 2020.

\bibitem{10.1214/24-AAP2058}
Christophe Andrieu, Anthony Lee, Sam Power, and Andi~Q. Wang.
\newblock {Explicit convergence bounds for Metropolis Markov chains:
  Isoperimetry, spectral gaps and profiles}.
\newblock {\em The Annals of Applied Probability}, 34(4):4022 -- 4071, 2024.

\bibitem{andrieu2018sampling}
Christophe Andrieu, James Ridgway, and Nick Whiteley.
\newblock Sampling normalizing constants in high dimensions using inhomogeneous
  diffusions, 2018.

\bibitem{beaumont2010approximate}
Mark~A Beaumont.
\newblock Approximate bayesian computation in evolution and ecology.
\newblock {\em Annual review of ecology, evolution, and systematics},
  41:379--406, 2010.

\bibitem{benveniste2012adaptive}
Albert Benveniste, Michel M{\'e}tivier, and Pierre Priouret.
\newblock {\em Adaptive algorithms and stochastic approximations}, volume~22.
\newblock Springer Science \& Business Media, 2012.

\bibitem{beskos2013optimal}
Alexandros Beskos, Natesh Pillai, Gareth Roberts, Jesus-Maria Sanz-Serna, and
  Andrew Stuart.
\newblock Optimal tuning of the hybrid {M}onte {C}arlo algorithm.
\newblock {\em Bernoulli}, 19(5A):1501--1534, 2013.

\bibitem{betancourt2011nested}
Michael Betancourt.
\newblock Nested sampling with constrained {H}amiltonian {M}onte {C}arlo.
\newblock In {\em AIP Conference Proceedings}, volume 1305, pages 165--172.
  American Institute of Physics, 2011.

\bibitem{betancourt2017conceptual}
Michael Betancourt.
\newblock A conceptual introduction to hamiltonian monte carlo.
\newblock {\em arXiv preprint arXiv:1701.02434}, 2017.

\bibitem{bierkens2017piecewise}
Joris Bierkens and Gareth Roberts.
\newblock A piecewise deterministic scaling limit of lifted
  {M}etropolis--{H}astings in the {C}urie--{W}eiss model.
\newblock {\em The Annals of Applied Probability}, 27(2):846--882, 2017.

\bibitem{billingsley1995probability}
P~Billingsley.
\newblock Probability and measure. 3rd wiley.
\newblock {\em New York}, 1995.

\bibitem{bouchard2018bouncy}
Alexandre Bouchard-C{\^o}t{\'e}, Sebastian~J Vollmer, and Arnaud Doucet.
\newblock The bouncy particle sampler: A nonreversible rejection-free {M}arkov
  chain monte carlo method.
\newblock {\em Journal of the American Statistical Association},
  113(522):855--867, 2018.

\bibitem{buchholz2020adaptivetuninghamiltonianmonte}
Alexander Buchholz, Nicolas Chopin, and Pierre~E. Jacob.
\newblock Adaptive tuning of {H}amiltonian {M}onte {C}arlo within sequential
  {M}onte {C}arlo, 2020.

\bibitem{burago2022course}
Dmitri Burago, Yuri Burago, and Sergei Ivanov.
\newblock {\em A course in metric geometry}, volume~33.
\newblock American Mathematical Society, 2022.

\bibitem{calvo2021hmc}
Mari~Paz Calvo, Daniel Sanz-Alonso, and Jes{\'u}s~Mar{\'\i}a Sanz-Serna.
\newblock {HMC}: reducing the number of rejections by not using leapfrog and
  some results on the acceptance rate.
\newblock {\em Journal of Computational Physics}, 437:110333, 2021.

\bibitem{campos2015extra}
C{\'e}dric~M Campos and Jes{\'u}s~Mar{\'\i}a Sanz-Serna.
\newblock Extra chance generalized hybrid {M}onte {C}arlo.
\newblock {\em Journal of Computational Physics}, 281:365--374, 2015.

\bibitem{carpenter2017stan}
Bob Carpenter, Andrew Gelman, Matthew~D Hoffman, Daniel Lee, Ben Goodrich,
  Michael Betancourt, Marcus~A Brubaker, Jiqiang Guo, Peter Li, and Allen
  Riddell.
\newblock Stan: A probabilistic programming language.
\newblock {\em Journal of statistical software}, 76, 2017.

\bibitem{https://doi.org/10.1111/1467-9574.00056}
J.~T. Chang and D.~Pollard.
\newblock Conditioning as disintegration.
\newblock {\em Statistica Neerlandica}, 51(3):287--317, 1997.

\bibitem{chang1997conditioning}
Joseph~T Chang and David Pollard.
\newblock Conditioning as disintegration.
\newblock {\em Statistica Neerlandica}, 51(3):287--317, 1997.

\bibitem{chen2022optimal}
Zongchen Chen and Santosh~S Vempala.
\newblock Optimal convergence rate of {H}amiltonian {M}onte {C}arlo for
  strongly logconcave distributions.
\newblock {\em Theory of Computing}, 18(1):1--18, 2022.

\bibitem{chopin2020introduction}
N.~Chopin and O.~Papaspiliopoulos.
\newblock {\em An Introduction to Sequential Monte Carlo}.
\newblock Springer Series in Statistics. Springer International Publishing,
  2020.

\bibitem{chopin_github_repo}
Nicolas Chopin.
\newblock Particles package.
\newblock \url{https://github.com/nchopin/particles}, 2023-24.

\bibitem{chopin2015leavepimaindiansalone}
Nicolas Chopin and James Ridgway.
\newblock Leave {P}ima {I}ndians alone: binary regression as a benchmark for
  {B}ayesian computation, 2015.

\bibitem{chopin2017leave}
Nicolas Chopin and James Ridgway.
\newblock Leave pima indians alone: binary regression as a benchmark for
  {B}ayesian computation.
\newblock {\em Statistical Science}, 32(1):64--87, 2017.

\bibitem{scaling_limits_cox_model}
Ole~F. Christensen, Gareth~O. Roberts, and Jeffrey~S. Rosenthal.
\newblock {Scaling Limits for the Transient Phase of Local Metropolis--Hastings
  Algorithms}.
\newblock {\em Journal of the Royal Statistical Society Series B: Statistical
  Methodology}, 67(2):253--268, 03 2005.

\bibitem{dau2020waste}
Hai-Dang Dau and Nicolas Chopin.
\newblock Waste-free sequential {M}onte {C}arlo.
\newblock {\em arXiv preprint arXiv:2011.02328}, 2020.

\bibitem{del2004feynman}
Pierre Del~Moral and Pierre Del~Moral.
\newblock {\em Feynman-kac formulae}.
\newblock Springer, 2004.

\bibitem{del2006sequential}
Pierre Del~Moral, Arnaud Doucet, and Ajay Jasra.
\newblock Sequential {M}onte {C}arlo samplers.
\newblock {\em Journal of the Royal Statistical Society: Series B (Statistical
  Methodology)}, 68(3):411--436, 2006.

\bibitem{duane1987hybrid}
Simon Duane, Anthony~D Kennedy, Brian~J Pendleton, and Duncan Roweth.
\newblock Hybrid {M}onte {C}arlo.
\newblock {\em Physics letters B}, 195(2):216--222, 1987.

\bibitem{mauro-phd}
Mauro~Camara Escudero.
\newblock {\em Approximate Manifold Sampling: Robust Bayesian Inference for
  Machine Learning}.
\newblock PhD thesis, School of Mathematics, January 2024.

\bibitem{10.1063/1.4874000}
Youhan Fang, J.~M. Sanz-Serna, and Robert~D. Skeel.
\newblock {Compressible generalized hybrid Monte Carlo}.
\newblock {\em The Journal of Chemical Physics}, 140(17):174108, 05 2014.

\bibitem{10.1214/13-BA814}
Paul Fearnhead and Benjamin~M. Taylor.
\newblock {An Adaptive Sequential Monte Carlo Sampler}.
\newblock {\em Bayesian Analysis}, 8(2):411 -- 438, 2013.

\bibitem{girolami2011}
Mark Girolami and Ben Calderhead.
\newblock Riemann manifold {L}angevin and {H}amiltonian {M}onte {C}arlo
  methods.
\newblock {\em Journal of the Royal Statistical Society: Series B (Statistical
  Methodology)}, 73(2):123--214, 2011.

\bibitem{haario1999adaptive}
Heikki Haario, Eero Saksman, and Johanna Tamminen.
\newblock Adaptive proposal distribution for random walk metropolis algorithm.
\newblock {\em Computational statistics}, 14:375--395, 1999.

\bibitem{hairer1993sp}
Ernst Hairer, SP~N{\"o}rsett, and G.~Wanner.
\newblock {\em Solving ordinary differential equations I, Nonstiff problems}.
\newblock Springer-Verlag, 1993.

\bibitem{heng2018unbiasedhamiltonianmontecarlo}
J~Heng and P~E Jacob.
\newblock {Unbiased Hamiltonian Monte Carlo with couplings}.
\newblock {\em Biometrika}, 106(2):287--302, 02 2019.

\bibitem{hoffman2021adaptive}
Matthew Hoffman, Alexey Radul, and Pavel Sountsov.
\newblock An adaptive-{MCMC} scheme for setting trajectory lengths in
  {H}amiltonian {M}onte {C}arlo.
\newblock In {\em International Conference on Artificial Intelligence and
  Statistics}, pages 3907--3915. PMLR, 2021.

\bibitem{hoffman2014no}
Matthew~D Hoffman, Andrew Gelman, et~al.
\newblock The {N}o-{U}-{T}urn sampler: adaptively setting path lengths in
  {H}amiltonian {M}onte {C}arlo.
\newblock {\em J. Mach. Learn. Res.}, 15(1):1593--1623, 2014.

\bibitem{hoffman2022tuning}
Matthew~D Hoffman and Pavel Sountsov.
\newblock Tuning-{F}ree {G}eneralized {H}amiltonian {M}onte {C}arlo.
\newblock In {\em International Conference on Artificial Intelligence and
  Statistics}, pages 7799--7813. PMLR, 2022.

\bibitem{unbiasedhmc_github_repo}
Pierre~E. Jacob.
\newblock Debiased {HMC}.
\newblock
  \url{https://github.com/pierrejacob/debiasedhmc/blob/master/inst/coxprocess/hmc.repeats.R},
  2017.

\bibitem{livingstone2022barker}
Samuel Livingstone and Giacomo Zanella.
\newblock The {B}arker proposal: combining robustness and efficiency in
  gradient-based {MCMC}.
\newblock {\em Journal of the Royal Statistical Society Series B: Statistical
  Methodology}, 84(2):496--523, 2022.

\bibitem{mackenze1989improved}
Paul~B Mackenze.
\newblock An improved hybrid {M}onte {C}arlo method.
\newblock {\em Physics Letters B}, 226(3-4):369--371, 1989.

\bibitem{maire2018markov}
Florian Maire and Pierre Vandekerkhove.
\newblock On markov chain monte carlo for sparse and filamentary distributions.
\newblock {\em ArXiv e-prints}, 2018.

\bibitem{maire2022markov}
Florian Maire and Pierre Vandekerkhove.
\newblock Markov kernels local aggregation for noise vanishing distribution
  sampling.
\newblock {\em SIAM Journal on Mathematics of Data Science}, 4(4):1293--1319,
  2022.

\bibitem{Marzouk2016}
Youssef Marzouk, Tarek Moselhy, Matthew Parno, and Alessio Spantini.
\newblock {\em Sampling via Measure Transport: An Introduction}, pages 1--41.
\newblock Springer International Publishing, Cham, 2016.

\bibitem{c61adb08-fb3c-3ba4-9867-a07d61c50217}
Xiao-Li Meng and Stephen Schilling.
\newblock Warp bridge sampling.
\newblock {\em Journal of Computational and Graphical Statistics},
  11(3):552--586, 2002.

\bibitem{neal1994improved}
Radford~M Neal.
\newblock An improved acceptance procedure for the hybrid {M}onte {C}arlo
  algorithm.
\newblock {\em Journal of Computational Physics}, 111(1):194--203, 1994.

\bibitem{neal2011mcmc}
Radford~M. Neal.
\newblock {\em MCMC Using Hamiltonian Dynamics}, chapter~5.
\newblock CRC Press, 2011.

\bibitem{OLLIVIER2009810}
Yann Ollivier.
\newblock Ricci curvature of {M}arkov chains on metric spaces.
\newblock {\em Journal of Functional Analysis}, 256(3):810--864, 2009.

\bibitem{ollivier2010survey}
Yann Ollivier.
\newblock A survey of {R}icci curvature for metric spaces and {M}arkov chains.
\newblock In {\em Probabilistic approach to geometry}, volume~57, pages
  343--382. Mathematical Society of Japan, 2010.

\bibitem{pakman2014exact}
Ari Pakman and Liam Paninski.
\newblock Exact {H}amiltonian {M}onte {C}arlo for truncated multivariate
  {G}aussians.
\newblock {\em Journal of Computational and Graphical Statistics},
  23(2):518--542, 2014.

\bibitem{qin2023spectral}
Qian Qin, Nianqiao Ju, and Guanyang Wang.
\newblock Spectral gap bounds for reversible hybrid gibbs chains, 2023.

\bibitem{QIN2001827}
Zhaohui~S. Qin and Jun~S. Liu.
\newblock Multipoint {M}etropolis method with application to {H}ybrid {M}onte
  {C}arlo.
\newblock {\em Journal of Computational Physics}, 172(2):827--840, 2001.

\bibitem{ridgway2016computation}
James Ridgway.
\newblock Computation of gaussian orthant probabilities in high dimension.
\newblock {\em Statistics and computing}, 26(4):899--916, 2016.

\bibitem{rotskoff2019dynamical}
Grant~M Rotskoff and Eric Vanden-Eijnden.
\newblock Dynamical computation of the density of states and {B}ayes factors
  using nonequilibrium importance sampling.
\newblock {\em Physical review letters}, 122(15):150602, 2019.

\bibitem{stoltz2010free}
Gabriel Stoltz, Mathias Rousset, et~al.
\newblock {\em Free energy computations: A mathematical perspective}.
\newblock World Scientific, 2010.

\bibitem{Sunyach1975}
C.~Sunyach.
\newblock Une classe de cha{\^\i}nes de {M}arkov r{\'e}currentes sur un espace
  m{\'e}trique complet.
\newblock {\em Annales de l'I.H.P. Probabilit{\'e}s et statistiques},
  11(4):325--343, 1975.

\bibitem{szasz-2000boltzmann}
D~Szasz.
\newblock {\em Boltzmann's ergodic hypothesis, a conjecture for centuries?},
  chapter Hard ball systems and the Lorentz gas, pages 421--446.
\newblock Springer, 2000.

\bibitem{tabak2010density}
Esteban~G Tabak and Eric Vanden-Eijnden.
\newblock Density estimation by dual ascent of the log-likelihood.
\newblock {\em Communications in Mathematical Sciences}, 8(1):217--233, 2010.

\bibitem{thiede2016eigenvector}
Erik~H Thiede, Brian Van~Koten, Jonathan Weare, and Aaron~R Dinner.
\newblock Eigenvector method for umbrella sampling enables error analysis.
\newblock {\em The Journal of chemical physics}, 145(8), 2016.

\bibitem{thin2021neo}
Achille Thin, Yazid Janati El~Idrissi, Sylvain Le~Corff, Charles Ollion, Eric
  Moulines, Arnaud Doucet, Alain Durmus, and Christian~X Robert.
\newblock Neo: Non equilibrium sampling on the orbits of a deterministic
  transform.
\newblock {\em Advances in Neural Information Processing Systems},
  34:17060--17071, 2021.

\bibitem{TORRIE1977187}
G.M. Torrie and J.P. Valleau.
\newblock Nonphysical sampling distributions in {M}onte {C}arlo free-energy
  estimation: {U}mbrella sampling.
\newblock {\em Journal of Computational Physics}, 23(2):187--199, 1977.

\bibitem{tupper2005ergodicity}
Paul~F Tupper.
\newblock Ergodicity and the numerical simulation of {H}amiltonian systems.
\newblock {\em SIAM Journal on Applied Dynamical Systems}, 4(3):563--587, 2005.

\bibitem{doi:10.1137/1116083}
V.~F. Turchin.
\newblock On the computation of multidimensional integrals by the monte-carlo
  method.
\newblock {\em Theory of Probability \& Its Applications}, 16(4):720--724,
  1971.

\bibitem{vonneumann:1932}
J.~von Neumann.
\newblock Proof of the quasi-ergodic hypothesis.
\newblock {\em Proc. Natl. Acad. Sci. USA}, 18:70--82, 1932.

\bibitem{chang-2022}
Chang Zhang.
\newblock {\em On the Improvements and Innovations of Monte Carlo Methods}.
\newblock PhD thesis, School of Mathematics,
  https://research-information.bris.ac.uk/en/studentTheses/on-the-improvements-and-innovations-of-monte-carlo-methods,
  June 2022.

\end{thebibliography}

\end{document}